\documentclass{article}

\usepackage{tikz}
\usepackage{amsmath}
\usepackage{amssymb, amsthm, amsfonts, tikz-cd, mathrsfs,mathtools,stmaryrd,enumitem, algorithmic,float,comment,tikz}
\usetikzlibrary{backgrounds}
\usetikzlibrary{decorations.pathreplacing}
\usepackage[colorlinks,citecolor=blue,linkcolor=blue,urlcolor=red]{hyperref}
\usepackage{fullpage}
\usepackage{xcolor}
\usepackage{bbm}
\usepackage[ruled,vlined,linesnumbered]{algorithm2e}
\usepackage{thm-restate}

\usepackage[capitalize]{cleveref}
\crefname{theorem}{Theorem}{Theorems}
\crefname{lemma}{Lemma}{Lemmas}
\crefname{claim}{Claim}{Claims}
\crefname{observation}{Observation}{Observations}
\newtheoremstyle{dotless}{}{}{\itshape}{}{\bfseries}{}{ }{}
\theoremstyle{dotless}
\newtheorem{theorem}{Theorem}[section]
\newtheorem{lemma}[theorem]{Lemma}

\newtheorem{claim}[theorem]{Claim}
\newtheorem{observation}[theorem]{Observation}

\newtheoremstyle{dotlessdef}{}{}{\normalfont}{}{\bfseries}{}{ }{}
\theoremstyle{dotlessdef}
\newtheorem{definition}[theorem]{Definition}

\newcommand{\graph}{\text{\normalfont Graph}}
\newcommand{\eps}{\varepsilon}
\newcommand{\dist}{\text{\normalfont dist}}

\usepackage{todonotes}

\title{Covering Approximate Shortest Paths with DAGs}
\author{
Sepehr Assadi\thanks{\url{sepehr@assadi.info}. This work was initiated while the author was at Rutgers University. Sepehr Assadi is supported in part by a Sloan Research Fellowship, an
NSERC Discovery Grant, and a Faculty of Math
Research Chair grant.}
\\University of Waterloo 
\and
Gary Hoppenworth\thanks{\url{garytho@umich.edu}. This work was supported by NSF:AF 2153680.}\\University of Michigan
\and 
Nicole Wein\thanks{\url{nswein@umich.edu}. This work was initiated while the author was at Rutgers University supported by a grant to DIMACS from the Simons Foundation (820931), and continued while the author was at the Simons Institute.}\\University of Michigan}
\date{}

\begin{document}

\maketitle

\begin{abstract}
    We define and study analogs of \emph{probabilistic tree embedding} and \emph{tree cover} for directed graphs. We define the notion of a \emph{DAG cover} of a general directed graph $G$: a small collection $D_1,\dots D_g$ of DAGs so that for all pairs of vertices $s,t$, some DAG $D_i$ provides low distortion for $\dist(s,t)$; i.e. $ \dist_G(s, t) \le \min_{i \in [g]} \dist_{D_i}(s, t) \leq \alpha \cdot \dist_G(s, t)$, where $\alpha$ is the distortion. 
    
    As a trivial upper bound, there is a DAG cover with $n$ DAGs and $\alpha=1$ by taking the shortest-paths tree from each vertex. When each DAG is restricted to be a subgraph of $G$, there is a simple matching lower bound (via a directed cycle) that $n$ DAGs are necessary, even to preserve reachability. Thus, we allow the DAGs to include a limited number of additional edges not from the original graph.

    When $n^2$ additional edges are allowed, there is a simple upper bound of two DAGs and $\alpha=1$. Our first result is an almost-matching lower bound that even for $n^{2-o(1)}$ additional edges, at least $n^{1-o(1)}$ DAGs are needed, even to preserve reachability. However, the story is different when the number of additional edges is $\tilde{O}(m)$, a natural setting where the sparsity of the DAG collection nearly matches that of the original graph. Our main upper bound is that there is a near-linear time algorithm to construct a DAG cover with  $\tilde{O}(m)$ additional edges, polylogarithmic distortion, and only $O(\log n)$ DAGs. This is similar to known results for undirected graphs: the well-known FRT probabilistic tree embedding implies a tree cover where both the number of trees and the distortion are logarithmic. Our algorithm also extends to a certain probabilistic embedding guarantee. Lastly, we complement our upper bound with a lower bound showing that achieving a DAG cover with no distortion and $\tilde{O}(m)$ additional edges requires a \emph{polynomial} number of DAGs. 
\end{abstract}
\pagenumbering{gobble}

\newpage
 \tableofcontents

\vfill

\pagenumbering{gobble}
\pagebreak
 \pagenumbering{arabic}

\section{Introduction}

\emph{Probabilistic tree embedding}, first explicitly introduced by Alon, Karp, Peleg, and West~\cite{MR1313480}, and the related notion of \emph{tree cover}, are powerful graph primitives with wide-ranging applications. The goal of a probabilistic tree embedding is to embed an ($n$-vertex, $m$-edge) undirected graph into a distribution over trees with low expected distance distortion. (Formally, for each vertex pair $s,t$ in the original graph $G$, (1)  every tree $T$ in the distribution $\mathcal{D}$ satisfies $\dist_G(s,t)\leq \dist_T(s,t)$, and (2) $\mathbb{E}_{T\sim\mathcal{D}}[\dist_T(s,t)]\leq\alpha\cdot\dist_G(s,t)$ where $\alpha$ is the \emph{distortion}.) Tree covers provide a somewhat more relaxed guarantee wherein the goal is to obtain a small collection of trees so that for each pair $u,v$ of vertices, \emph{some} tree in the collection provides low distortion for $\dist(s,t)$.

Famously, building upon prior work~\cite{karp19892k,MR1313480,MR1450616,MR1731572}, Fakcharoenphol, Rao, and Talwar~\cite{MR2087946}
obtained the asymptotically optimal bound of $O(\log n)$ distortion for probabilistic tree embeddings. Fast computation of this construction has been extensively studied~\cite{MR3685766,mendel2009fast}, as well as extensions to
various settings such as 
dynamic~\cite{MR4262508},
online~\cite{MR1450616,MR4141277,bhore2024online},
distributed~\cite{khan2008efficient,MR3284084}, parallel~\cite{blelloch2012parallel,MR3882586},
hop-constrained~\cite{filtser2022hop,MR4398848},
and derandomized~\cite{MR1731567}.
Another line of work has focused on subgraph versions of probabilistic tree embedding, where the distribution is over subtrees of the input graph~\cite{elkin2008lower, abraham2008nearly,abraham2012using,abraham2020ramsey}. 
Probabilistic tree embedding and its many variations have enjoyed a strikingly broad range of applications such as k-median, buy-at-bulk network design~\cite{blelloch2012parallel},
generalized Steiner forest, minimum routing cost spanning tree, multi-source shortest paths~\cite{khan2008efficient}, distance oracles~\cite{MR3685766}, linear systems~\cite{MR3238960}, minimum bisection~\cite{MR3042131},
oblivious routing~\cite{MR2582666,harrelson2003polynomial,racke2002minimizing},
metric labeling~\cite{MR2145145}, and
group Steiner tree~\cite{MR1783249}.

A probabilistic tree embedding implies a tree cover with the same distortion and $O(\log n)$ trees, by sampling trees from the distribution. For tree cover there are also additional trade-offs between distortion and number of trees~\cite{MR1157580, awerbuch1994buffer,thorup2005approximate, MR3984837,MR4446771}: for any integer $k\geq 1$, one can achieve distortion $2k-1$ with $O(n^{1/k}\log^{1-1/k}n)$ trees. This result has applications, for instance, to routing~\cite{awerbuch1994buffer,MR1157580,awerbuch1991efficient} and
distributed directories~\cite{awerbuch1995online}.
Tree covers have also been fruitful for metrics with geometric structure such as Euclidean space, doubling metrics, ultrametrics, and planar graphs~\cite{chang2023covering,arya1995euclidean,MR4446771,MR3549603,MR2183280,MR2293956}.
Applications in these settings include, for instance, routing~\cite{gupta2005traveling,kahalon2022can}, 
spanners~\cite{arya1995euclidean,kahalon2022can,MR4490062}, 
and distance labeling~\cite{gupta2005traveling}. 

While both probabilistic tree embeddings and tree covers have experienced wide adoption, the success of these techniques has been limited to \emph{undirected} graphs. 
The contribution of the current work is to define and prove bounds for \emph{directed analogs} of these primitives. 

Before elaborating on the details, we discuss the wider context of this work.

\paragraph{Graph Simplification for Distance Preservation.}

This work falls under the umbrella of \emph{graph simplification} for distance preservation, where the goal is to obtain a simplified representation of a graph $G$ while (approximately) preserving its distances. In addition to probabilistic tree embeddings and tree covers, a number of other such structures exist, including spanners, emulators, distance preservers, hopsets, and distance oracles (see e.g.~the survey~\cite{MR4103322}). However, distances in directed graphs suffer from a relatively sparse toolkit of structural primitives. 
Each of the above structures can be defined for both undirected and directed graphs, however they are either provably nonexistent for directed graphs (e.g.~spanners, distance oracles), or generally less-understood (e.g.~distance preservers, hopsets). For hopsets on directed graphs, there has been recent significant progress~\cite{MR4415092, MR4538206, MR4720288, MR4699339, hoppenworth2025new}, though there still remain polynomial gaps between upper and lower bounds (in contrast to undirected graphs~\cite{ElkinN19,HuangP19}). 

Furthermore, this lack of structural primitives for directed graphs is perhaps a contributing factor to our gaps in understanding of basic algorithmic problems regarding distances and reachability in directed graphs. As a few examples, there are large gaps in our understanding of even the single-source reachability problem in various common settings such as streaming, distributed, and parallel; in addition, the following basic problems are less-understood for directed graphs than undirected: diameter approximation~\cite{MR3478403}, disjoint shortest paths~\cite{MR4262445}, not-shortest path~\cite{MR2095359}. See Section 1.3 of~\cite{MR4640631} for more detail. 
The current work can be viewed as a step towards remedying the lack of structural primitives for distances in directed graphs.

\paragraph{DAGs.}
Our goal is to define a directed analog of probabilistic tree embedding and tree cover. We will use directed acyclic graphs (DAGs) as our directed analog of trees. This is a natural choice, not only because DAGs are among the most natural directed analogs of trees,\footnote{Another natural directed analog of a tree is an arborescence, which is not meaningful in this context because a trivial lower bound shows that for a directed complete bipartite graph $A\rightarrow B$, one would need $n$ arborescences to preserve even the reachability for all pairs of vertices.} but also because many problems are much easier on DAGs than general directed graphs. This is important because a central purpose of graph simplification for distance preservation is to get better algorithms for distance-related problems, by first simplifying the graph, and then applying an algorithm to the simplified graph. This approach only works when the simplified graph indeed allows for more efficient algorithms. 

There are many examples of distance-related problems that are much easier on DAGs than general directed graphs, but we will specify a few. Single-source shortest paths with negative weights in DAGs in linear time is an undergraduate exercise, while for general directed graphs it  is a notorious problem that has witnessed several recent breakthroughs~\cite{MR4537239,MR4720280,MR4764788,huang2024faster}, and remains open. For distance preservers, the best known results on DAGs are better than those for general directed graphs \cite{bodwin2017linear}. For hopsets in DAGs, upper bounds for the simpler problem of shortcut  sets tend to extend quite easily, whereas this is not the case for general directed graphs~\cite{MR4415092,MR4538206}. 

\paragraph{Defining the problem.}
For simplicity, we will focus our initial discussion on directed analogs of tree cover (rather than probabilistic tree embedding).
Our goal is to obtain a small collection of DAGs such that for each ordered pair $s,t$ of vertices, some DAG in the collection provides low distortion for $\dist(s,t)$. 

Our first observation is that there is a trivial upper bound of $n$ DAGs with no distortion, because one can always return the collection of shortest paths DAGs, one from each of the $n$ vertices. Ultimately, we would like to do much better, and achieve a bound similar to what~\cite{MR2087946} yields: polylogarithmic bounds on both the distortion and the number of DAGs. 

Suppose we require each DAG in the collection to be a subgraph of the original graph. Then, a simple construction shows that no non-trivial upper bound is possible: Consider a directed cycle. Suppose $s$ and $t$ are adjacent vertices where $s$ appears right after $t$ on the cycle. Then, the only way to preserve the reachability from $s$ to $t$ is to include the unique path from $s$ to $t$ as a DAG in the collection; note that this path includes every edge in the cycle except $(t,s)$. Thus, to have finite distortion, the DAG collection must include such a path for all $n$ pairs $s,t$ of adjacent vertices on the cycle. 

Thus, we need to relax the problem to allow the DAGs to contain extra edges not in the original graph. If we allow the DAGs to have arbitrarily many edges, this trivializes the problem: There is a simple upper bound achieving only two DAGs and no distortion: Consider an arbitrary ordering of the vertices. Construct one DAG by including all $n \choose 2$ forward edges with respect to the ordering, where the weight of each edge $(u,v)$ is the distance $\dist(u,v)$ in the original graph. Construct the other DAG in the same way, but for the opposite ordering of vertices. This way, for each pair of vertices $s,t$, at least one of the two DAGs witnesses no distortion of the distance $\dist(s,t)$. Beyond trivializing the problem, allowing arbitrarily many extra edges complicates the graph, which is the opposite of our goal of graph simplification. 

Thus, we allow the DAGs to include a \emph{limited} number of edges not from the original graph. 
The number of additional edges can be expressed in terms of $n$ and/or $m$. 
Now, we have arrived at the main definition of the problem:

\begin{definition}[DAG cover] Let $G$ be an $n$-node, $m$-edge weighted, directed graph. We define a DAG cover of $G$ with $f = f(m, n)$ additional edges and $g = g(m, n)$ DAGs as a collection $D_1, \dots, D_g$ of DAGs such that
\begin{itemize}
    \item each DAG $D_i$ is defined over the vertex set $V(G)$ of $G$, and
    \item the collection of DAGs contains at most $f$ additional edges not in $G$, i.e., $|(\cup_iE(D_i)) \setminus E(G)| \leq f$.\footnote{We could instead require that each DAG $D_i$ uses at most $f$ additional edges (i.e., $|E(D_i) \setminus E(G)| \leq f$). When $g$ is small, these two definitions are similar. Since we are most interested in the setting where $g = n^{o(1)}$, we ignore this distinction. } 
\end{itemize}

We say that a DAG cover $D_1, \dots, D_g$ of $G$ is \emph{reachability-preserving} if for all nodes $s, t \in V(G)$, $s$ can reach $t$ in $G$ if and only if there exists a DAG $D_i$ such that $s$ can reach $t$ in $D_i$.\\

We say a DAG cover $D_1, \dots, D_g$ of $G$ is \emph{$\alpha$-distance-preserving} if for all nodes $s, t \in V(G)$, 
$$
\dist_G(s, t) \leq \min_{i \in [g]} \dist_{D_i}(s, t) \leq  \alpha \cdot \dist_G(s, t).
$$
\end{definition}

\subsubsection*{Our Results.}

Our first result is essentially the strongest possible impossibility result for the regime where the number of additional edges is in terms of $n$: Even reachability-preserving DAG covers with almost $n^2$ additional edges require almost $n$ DAGs (where $n$ DAGs is a trivial upper bound):

\begin{restatable}{theorem}{fnlower}
        There exists a family of $n$-node directed graphs $G$ such that any reachability-preserving DAG cover of $G$ with at most $n^{2-o(1)}$ additional edges requires at least $n^{1 - o(1)}$ DAGs.  
    \label{thm:dag_cover_n_setting}
\end{restatable}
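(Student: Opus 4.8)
The plan is to exhibit a hard instance whose reachability structure forces any small DAG cover to ``spend'' a large number of additional edges per DAG. The natural starting point is a dense bipartite-like reachability pattern. Consider a graph $G$ on vertex set $A \cup B$ with $|A| = |B| = n/2$, where the reachability relation from $A$ to $B$ is governed by a combinatorial object of high ``DAG-complexity'': I want a family of bipartite reachability relations $R \subseteq A \times B$ such that $R$ has very few edges (say $m = n^{1+o(1)}$, or even $m = O(n)$ after subdividing), yet $R$ cannot be written as the union of few ``DAG-realizable'' sub-relations without introducing many new vertex pairs. Concretely, in any DAG $D_i$ the set of pairs $(a,b)$ with $a$ reaching $b$ is a transitively-closed relation restricted to $A \times B$; if $D_i$ uses few additional edges then the reachability it contributes is ``close to'' a bipartite graph that is a subgraph of $R$, and such subgraphs of $R$ are structurally simple.

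The key steps, in order, are: (1) Fix a gadget encoding a permutation-like or ``staircase'' structure. A clean choice: let the reachability from $A$ to $B$ be the \emph{complement of a perfect matching} or, better, a Hadamard-type / Kneser-type incidence so that $R$ itself has small description but every large ``combinatorial rectangle-free'' cover of $R$ is large. Actually the cleanest is to take $R$ to be the edge set of a bipartite graph that is (a) sparse and (b) has the property that every biclique in $R$ is tiny --- for instance $R$ is $C_4$-free with $m = \Theta(n^{3/2})$ edges, or a Zarankthis-type construction pushed to $n^{2-o(1)}$ via high-girth incidence geometry. (2) Show the lower bound on the \emph{number of bicliques / DAG-pieces} needed: each DAG $D_i$, after accounting for its $f_i$ additional edges, contributes to the cover only the $A$-to-$B$ pairs it already realizes, and one argues that the pairs realized ``for free'' (without new edges) by a single DAG that is otherwise a subgraph of $G$ form a union of few bicliques of $R$, hence cover at most $O(\text{something small})$ of the $|R|$ required pairs; the additional edges can create at most, roughly, (number of new edges) $\times$ (max out-/in-reach) new pairs, which we bound by choosing the gadget so that reach fan-out is controlled. (3) Combine: if the total additional edge budget is $n^{2-o(1)}$ and there are $g$ DAGs, then the pairs covered are at most $g \cdot (\text{structural term}) + n^{2-o(1)} \cdot (\text{amplification})$, and forcing this to be $\ge |R| = n^{2-o(1)}$ while the amplification is $n^{o(1)}$ yields $g \ge n^{1-o(1)}$.

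The subtle point --- and the main obstacle --- is controlling how much ``reachability amplification'' a small number of additional edges can buy. A single new edge $(u,v)$ in $D_i$ can, in the worst case, connect an entire set of predecessors of $u$ to an entire set of successors of $v$, potentially creating $\Theta(n^2)$ new reachable pairs from one edge, which would kill the argument. The fix is to design $G$ so that it is a \emph{subdivided / layered} graph in which every vertex has bounded reach on ``one side'' unless it passes through designated bottleneck vertices; equivalently, build $G$ as a union of many internally-disjoint short paths arranged so that any DAG subgraph has limited transitive closure and any added edge has limited amplification (e.g.\ a butterfly/sorting-network-like layered graph, or Hesse-diagram-type posets used in prior shortcut-set lower bounds). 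I would likely adapt the lower-bound framework for shortcut sets / reachability preservers (the Hesse-diagram and path-system constructions in the directed hopset literature cited above) since those precisely control ``each added edge shortcuts few critical pairs.'' The remaining work is then bookkeeping: verify the edge count is $n^{2-o(1)}$ (or subdivide to make $m$ small), verify acyclicity is compatible with the instance (the instance should itself be ``almost a DAG'' or carefully cyclic so that no single DAG captures it), and push the $n^{o(1)}$ factors through.

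Secondarily, I should double-check the acyclicity constraint is doing real work: if $G$ were itself a DAG the problem might be easy, so the hard instance must be genuinely non-DAG-like in a way that no single DAG piece can absorb --- e.g.\ a ``layered cycle'' where the forward layers form the dense relation $R$ and a single back-edge layer makes it cyclic, so that realizing all of $R$ in one acyclic graph is impossible and one is forced to split $R$ across many DAGs, each of which is then charged against the biclique-cover lower bound for $R$.
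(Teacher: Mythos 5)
The proposal identifies the right ingredients at a high level (a path-system base graph from the reachability-preserver/shortcut-set lower bound literature, plus the correct worry that a single extra edge can amplify reachability catastrophically), but it is missing the \emph{central mechanism} that the paper's proof actually turns on, and the biclique-cover direction you gesture at is unlikely to work as stated.

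The paper's argument starts from a DAG $G$ with $n^{2-o(1)}$ pairwise edge-disjoint paths $\Pi$, each of length $\ge 10\log n$ and each the \emph{unique} path between its endpoints (\cref{lem:rp_lbs}). The crucial step is then to replace each vertex $v$ by a 2-clique $K_2^v$ with a bidirectional edge, and to \emph{randomly} assign each incident edge of $v$ to one of the two copies. This single move does all the work: any fixed DAG $D$ must pick, for each $K_2^v$, one of the two orientations of the internal edge. A fixed path $\pi^*$ survives in $D\cap G^*$ only if, at every intermediate clique, the (random) entry/exit copies are consistent with $D$'s orientation — an event of probability at most $3/4$ per step, hence $(3/4)^{\Theta(\log n)}$ per path, hence essentially no DAG can host more than $n$ of the $n^{2-o(1)}$ paths (\cref{clm:small_paths}, \cref{claim:dag_cover}, \cref{lem:path_coverage}). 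Your proposal never introduces an analogue of this gadget; the layered/butterfly/sorting-network idea controls fan-out but does not by itself create forced cycles between intersecting paths. Without a source of mutual incompatibility, a single DAG could simply be the forward orientation of a DAG-like $G$ and capture essentially all of $R$, and no biclique-covering argument will prevent that.

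The second point you raise — bounding amplification from the $n^{2-o(1)}$ added edges — is also resolved in the paper by a structural property you don't quite pin down: because the base paths are pairwise edge-disjoint and each is the unique path between its endpoints (and after vertex-splitting, each pair of paths meets in at most one vertex $v$, i.e.\ at most one internal edge $(v_1,v_2)$), any added edge $e\in TC(G^*)$ lies in $TC(\pi^*)$ for at most one path $\pi^*\in\Pi^*$. That is why $|\Pi^*|/2 = n^{2-o(1)}$ additional edges can "rescue" only half the paths. Your framing (charging new pairs to an amplification factor like max out-/in-reach) does not give this per-edge, per-path disjointness and would leave an $n^{o(1)}$ slack you cannot afford at the $n^{2-o(1)}$ budget. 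So the gap is concrete: you need (i) the random clique-replacement gadget to make paths incompatible inside any fixed DAG, and (ii) the "unique path + edge-disjoint + single-vertex-intersection" property to make each added edge help at most one path. The combinatorial-rectangle / biclique-cover line of attack is not the route the paper takes and does not obviously close either gap.
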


An implication of \cref{thm:dag_cover_n_setting} is to rule out a certain approach for obtaining hopsets from shortcut sets. \cite{MR4415092} obtained a breakthrough shortcut set construction and extended it to a hopset construction for DAGs with the same bounds; they also extended it to general directed graphs, but with worse bounds. Subsequently,~\cite{MR4538206} obtained an involved algorithm yielding a hopset for general directed graphs with bounds matching the original bound of~\cite{MR4415092}. If there existed a $(1+\eps)$-distance-preserving DAG cover with $\tilde{O}(n)$ additional edges and a polylogarithmic number of DAGs, then one could skip the involved algorithm of \cite{MR4538206}, and obtain this result as an immediate corollary of~\cite{MR4415092}. However, \cref{thm:dag_cover_n_setting} implies that such an approach is too good to be true.

\Cref{thm:dag_cover_n_setting} effectively rules out almost any non-trivial construction when the number of additional edges is in terms of $n$. However, as an aside, we show that the term ``almost" here is crucial; there is indeed a non-trivial construction when the number of edges is slightly less than $n^2$ additional edges: 
\begin{restatable}{theorem}{fnupper}
    Every $n$-node directed graph $G$ admits an exact distance-preserving DAG cover of $G$ with  $O\left( \frac{n^2 (\log \log n)^4}{\log ^2 n } \right)$ additional edges and $n^{o(1)}$ DAGs. 
    \label{thm:f_n_upper}
\end{restatable}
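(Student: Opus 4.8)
The plan is to work throughout with the \emph{shortest-path metric} of $G$: the complete weighted digraph $\bar G$ on $V(G)$ in which the arc $(u,v)$ has weight $\dist_G(u,v)$ (and weight $\infty$ if $v$ is unreachable from $u$). Every arc we ever insert into one of our DAGs will be either an original arc of $G$ (free, not counted against the additional-edge budget) or a \emph{metric arc} $(u,v)$ of weight exactly $\dist_G(u,v)$. Since in both cases the weight of an arc is at least the $G$-distance between its endpoints, and $G$-distances obey the triangle inequality, no path in any such DAG can beat $\dist_G(s,t)$; hence the left inequality in the definition of a distance-preserving DAG cover is automatic, and it suffices to guarantee that for every ordered pair $(s,t)$ some DAG $D_i$ contains a path from $s$ to $t$ of weight exactly $\dist_G(s,t)$ --- equivalently, that $D_i$ contains \emph{some} shortest $s$-$t$ path of $G$ together with the metric arcs we add. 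We fix, once and for all, a consistent system of canonical shortest paths $P_{s,t}$ (e.g.\ from a single lexicographic tie-breaking rule, so that for each $s$ the paths $\{P_{s,\cdot}\}$ form an out-tree and likewise into each $t$).

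The construction is driven by a dichotomy on the \emph{hop length} of $P_{s,t}$. For pairs whose canonical shortest path uses at most a slowly growing number $h = (\log\log n)^{O(1)}$ of hops we add no arcs at all: we take a \emph{sequence covering array} of strength $h+1$, i.e.\ a family $\Sigma$ of $O\!\left((h+1)!\cdot h\log n\right) = n^{o(1)}$ total orders on $V(G)$ such that every sequence of at most $h+1$ distinct vertices occurs as an increasing subsequence of some $\sigma\in\Sigma$ (a random family of this size works, by a union bound), and for each $\sigma$ we let $D_\sigma$ consist of all arcs of $G$ that go forward along $\sigma$; then $D_\sigma \supseteq P_{s,t}$ for some $\sigma$, and $D_\sigma$ witnesses $\dist_G(s,t)$. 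For the remaining (``deep'') pairs we use a hitting set: a uniformly random $R\subseteq V(G)$ of size $\Theta(n\log n/\ell)$ with $\ell = \Theta(\log^2 n)$ meets every window of $\ell$ consecutive vertices on every canonical shortest path (by a union bound over the $\le n^2\cdot\mathrm{poly}\log n$ such windows), so $|R| = \Theta(n/\log n)$; we insert all $\binom{|R|}{2} = O(n^2/\log^2 n)$ metric arcs within $R$ --- this is the dominant contribution to the additional-edge count. A deep $P_{s,t}$ then decomposes as a \emph{stub} $s\to\cdots\to r$ of at most $\ell$ hops, an $R$-internal portion $r\to\cdots\to r'$ (where $r,r'\in R$ are the first and last vertices of $R$ on $P_{s,t}$), and a stub $r'\to\cdots\to t$ of at most $\ell$ hops; replacing the $R$-internal portion by the single metric arc $(r,r')$ turns $P_{s,t}$ into a path of $O(\ell)$ hops whose two stubs are shortest paths with one endpoint in $R$. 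We then recurse on these one-sided sub-problems, the goal being to drive the hop budget down until the leaf instances are genuinely shallow and handled by the sequence-covering-array construction; across the $O(\log\log n)$ recursion levels the additional arcs inserted form a near-geometric series dominated by the first term $O(n^2/\log^2 n)$, and the slack in the $(\log\log n)^4$ factor is there to absorb both this $O(\log\log n)$-fold accumulation and the polylogarithmic factors in the union bounds defining the hitting sets.

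I expect the main obstacle to be \textbf{coordination}: for a deep pair we must place the prefix stub $s\to\cdots\to r$, the metric arc $(r,r')$, and the suffix stub $r'\to\cdots\to t$ \emph{inside a single DAG}, yet the naive union of DAGs chosen independently for the three segments need not be acyclic --- a non-$R$ vertex can occur in the prefix stub of one pair and in the suffix stub of another, and together with the $R$-internal arcs this can close a directed cycle. The intended fix is to build the deep-pair DAGs as a bounded number of \emph{compatible combinations} of a ``towards-$R$'' family, an ``inside-$R$'' family, and an ``away-from-$R$'' family that all refer to one common orientation of the relevant arcs, so that the required unions are acyclic by construction; since each family has $n^{o(1)}$ DAGs --- indeed only $\exp\!\big((\log\log n)^{O(1)}\big)$ of them --- and the recursion has only $O(\log\log n)$ levels, the resulting product of families still has $n^{o(1)}$ DAGs. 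The delicate point --- the technical heart of the argument --- is the interplay between the additional-edge budget (which forces $|R|\le O(n/\log n)$ and hence hitting-set windows of size $\ge \Omega(\log^2 n)$) and the need for the hop budget at the leaves to be small enough that the sequence covering arrays have size $n^{o(1)}$; making these two requirements compatible is exactly what the layered recursion, together with the $(\log\log n)^4$ slack, is for.
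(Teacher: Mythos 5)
Your high-level idea — shrink the hop diameter with a sparse set of metric arcs and then finish with random orderings, so that a path survives some ordering once it has few hops — is exactly the shape of the paper's proof. But the paper implements the first step in one stroke by invoking an existing \emph{exact hopset} result (Lemma~\ref{lem:hopset}, from Berman, Raskhodnikova, and Ruan): for any $d$ there is a set $H\subseteq TC(G)$ of only $O(n^2/d^2)$ arcs such that every pair in $TC(G)$ has a shortest path in $G\cup H$ of at most $d$ hops. Taking $d=\log n/(\log\log n)^2$ gives $|H|=O(n^2(\log\log n)^4/\log^2 n)$ and $(d+1)!\cdot 10\log n = 2^{O(\log n/\log\log n)}\cdot\log n = n^{o(1)}$ random orderings, and since $H$ is a \emph{fixed set of arcs} added before any ordering is drawn, there is no coordination problem at all: each DAG is simply the forward edges of $G\cup H$ under one random permutation, and a union bound over the $O(n^2)$ pairs finishes the proof.

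Your route has two genuine gaps. First, the hitting-set bootstrap does not obviously close. A random $R$ of size $\Theta(n\log n/\ell)$ that hits every length-$\ell$ window, together with all $\binom{|R|}{2}$ metric arcs inside $R$, gives a $(2\ell+1)$-hop hopset of size $O(n^2\log^2 n/\ell^2)$, which is a $\log^2 n$ factor worse than Lemma~\ref{lem:hopset}; that extra factor is precisely what forces $\ell$ up to $\approx \log^2 n/(\log\log n)^2$ under your edge budget, and then $(\ell+1)!$ is far larger than $n^{o(1)}$. The recursion is supposed to bridge this, but as described it cannot: to reduce the hop count for the stubs you need either a denser hitting set (whose internal clique overshoots the edge budget — at the bottom level you would be adding $\Theta(n^2)$ arcs) or a hopset result as strong as the one you are trying to avoid citing. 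Your ``near-geometric series dominated by the first term'' claim is the opposite of what happens: as window sizes shrink, the hitting sets grow, and so do the cliques on them. Second, the coordination/acyclicity obstacle you flag as ``the technical heart'' is a self-inflicted complication: if you treat the hopset as an edge set $H$ (not as a collection of DAG fragments to be glued), and only then draw random orderings of $V(G)$ and keep forward arcs of $G\cup H$, acyclicity is automatic and there is nothing to coordinate. What is genuinely missing from your write-up is a construction of a $d$-hop exact hopset with $O(n^2/d^2)$ rather than $O(n^2\log^2 n/d^2)$ arcs; that is exactly the external lemma the paper leans on.
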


Ultimately, the achievable bounds are unsatisfactory when the number of added edges is expressed in terms of $n$, so we turn our attention to expressing this quantity in terms of $m$. In particular, in line with our goal of graph simplification, we pay special attention to the natural regime of $\tilde{O}(m)$ additional edges, where the sparsity of the DAG collection is nearly the same as that of the original graph.\footnote{We use $\tilde{O}(\cdot)$ to hide logarithmic factors.} Perhaps surprisingly, this distinction between $n$ and $m$ is crucial for obtaining better bounds. We see a separation between the $n$ and $m$ regimes by observing that every graph $G$ admits a reachability-preserving DAG cover with $O(m)$ additional edges and only two DAGs.
\begin{observation}
    Every graph $G$ admits a reachability-preserving DAG cover with $O(m)$ additional edges and only two DAGs.
    \label{obs:m_reach}
\end{observation}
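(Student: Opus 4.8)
The plan is to push all reachability information through per-SCC representatives, dedicating one DAG to ``forward'' reachability and using the second DAG only to repair the within-SCC ordered pairs that the first one misses. First I would compute the strongly connected components $C_1,\dots,C_\ell$ of $G$ and index them so that the condensation $\hat G$ (which is a DAG, with $|E(\hat G)| \le m$) has every edge directed from a lower to a higher index; equivalently $C_1,\dots,C_\ell$ is a topological order of $\hat G$. Inside each $C_i$ I would fix an arbitrary ordering $u^i_1,\dots,u^i_{k_i}$ of its vertices and designate $u^i_1$ as the \emph{entry} vertex and $u^i_{k_i}$ as the \emph{exit} vertex of $C_i$.

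I would then define $D_1$ to consist of: (i) the directed path $u^i_1 \to u^i_2 \to \cdots \to u^i_{k_i}$ inside each $C_i$; and (ii) for every edge $(C_i,C_j)$ of $\hat G$, the single edge $(u^i_{k_i}, u^j_1)$ from the exit of $C_i$ to the entry of $C_j$. I would define $D_2$ to consist only of the reversed within-SCC paths $u^i_{k_i} \to u^i_{k_i-1} \to \cdots \to u^i_1$. That $D_2$ is a DAG is immediate since it is a vertex-disjoint union of paths. That $D_1$ is a DAG I would verify by exhibiting a topological order: the concatenation of the within-SCC orderings taken in the order $C_1,C_2,\dots,C_\ell$. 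Each path edge in (i) goes forward inside its block, and each cross edge $(u^i_{k_i},u^j_1)$ in (ii) goes forward because $(C_i,C_j)\in E(\hat G)$ forces $i<j$, so block $C_i$ precedes block $C_j$ entirely, with $u^i_{k_i}$ last in its block and $u^j_1$ first in its block.

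Next I would check reachability preservation in both directions. For soundness (no DAG should exhibit reachability absent from $G$), note that every edge I added joins two vertices already connected by a directed path in $G$: each within-SCC path edge joins two vertices of one strongly connected component, and a cross edge $(u^i_{k_i},u^j_1)$ is included only when some vertex of $C_i$ has a $G$-edge to some vertex of $C_j$, so $u^i_{k_i}$ reaches $u^j_1$ in $G$; hence $D_1$- or $D_2$-reachability implies $G$-reachability. For completeness, take $s,t$ with $s \rightsquigarrow t$ in $G$. If $s,t$ lie in a common $C_i$, then $s=u^i_a$, $t=u^i_b$, and the forward path (in $D_1$, if $a<b$) or the reversed path (in $D_2$, if $a>b$) realizes $s\rightsquigarrow t$. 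If $s\in C_i$, $t\in C_j$ with $i\ne j$, then necessarily $C_i \rightsquigarrow C_j$ in $\hat G$, and I would trace that condensation path through $D_1$: from $s$ follow the forward path of $C_i$ to its exit $u^i_{k_i}$, take the cross edge to the entry of the next SCC, traverse its forward path to its exit, and so on, finally arriving at the entry of $C_j$ and following $C_j$'s forward path to $t$. Finally, the additional-edge count: $D_1$ uses at most $\sum_i (k_i-1)\le m$ path edges (a nontrivial SCC $C_i$ already contains at least $k_i$ edges of $G$, so $\sum_{k_i\ge 2}(k_i-1) \le m$) plus at most $|E(\hat G)|\le m$ cross edges, and $D_2$ uses at most $\sum_i(k_i-1)\le m$ more, giving $O(m)$ additional edges overall.

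The only mildly delicate point — and the step I would be most careful about — is simultaneously keeping $D_1$ acyclic while still letting a single DAG carry every cross-SCC path. This is exactly why the cross edges must go ``exit-to-entry'' (from the sink-end of the source SCC to the source-end of the target SCC): that orientation is consistent with the global topological order, while the exit vertex is still reachable from all of its SCC and the entry vertex can still reach all of its SCC, so the end-to-end routing goes through. Everything else is routine bookkeeping.
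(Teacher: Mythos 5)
Your construction matches the paper's: both route through SCCs with a forward ``path + exit-to-entry cross edge'' DAG and a reverse-path DAG, splitting the two cases (same SCC vs.\ different SCCs) identically. Your write-up just fills in the routine details the paper's sketch omits (the topological-order argument for acyclicity and the $O(m)$ edge count), so it is a correct proof by essentially the same approach.
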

\begin{proof}[Proof sketch of Observation \ref{obs:m_reach}]
Let $[1, k]$ denote the set of strongly connected components (SCCs) in graph $G$.  For each SCC $i$ of $G$, pick an arbitrary ordering of its vertices, and let $f_i$ and $\ell_i$ be the first and last vertices in this ordering. Define one of the two DAGs as a directed path through each SCC according to its ordering, unioned with the following edges: for each SCC $i$, (1) an edge from $\ell_i$ to $f_j$ for all $j$ such that $G$ has an edge from SCC $i$ to SCC $j$, and (2) an edge to $f_i$ from $\ell_j$ for all $j$ such that $G$ has an edge from SCC $j$ to SCC $i$. Define the other DAG as a directed path through each SCC according to the reverse of its ordering. This is a reachability-preserving DAG cover because the first DAG preserves reachability for vertices in different SCCs, while for vertices in the same SCC, one of the two DAGs preserves reachability depending on which vertex comes first in the SCC's ordering. 
\end{proof}
\begin{figure}[H]
\begin{center}
    \includegraphics[width=0.80\textwidth]{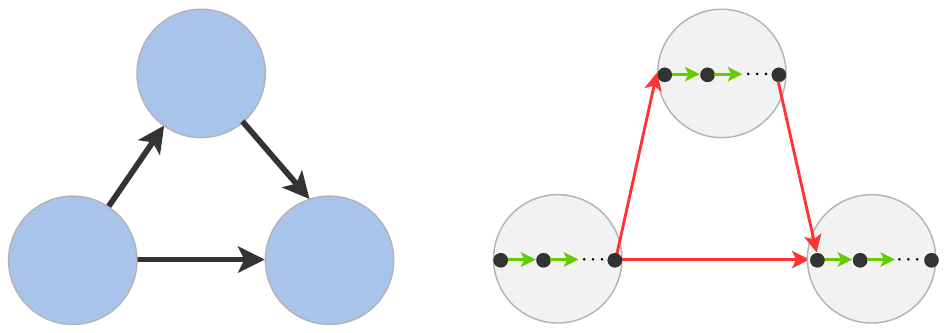}\caption{On the left is a directed graph $G$, with strongly connected components shown in blue. On the right is one of the two DAGs constructed in our reachability-preserving DAG cover of $G$ in Observation \ref{obs:m_reach}. The other DAG in our  cover can be obtained by reversing all the (green) edges between vertices in the same SCC of $G$, and removing all of the (red) edges between vertices in different SCCs of $G$. 
    }\label{fig:diam}
    \end{center}
\end{figure}

Thus, the interesting regime for $\tilde{O}(m)$ additional edges is (approximate)-distance-preserving DAG covers.
\begin{center}
\emph{Question: How many DAGs are required for $\alpha$-distance-preserving DAG covers with $\tilde{O}(m)$ additional edges? For which values of $\alpha$ is it polynomial versus polylogarithmic?}
\end{center}

We make progress on this question from both the upper and lower bound fronts: the answer is \emph{logarithmic} for certain polylogarithmic values of $\alpha$, and the answer is \emph{polynomial} for exact distances ($\alpha=1$). Our bounds are stated in~\cref{thm:dag_cover_intro,thm:lb_m}.

Our upper bound in~\cref{thm:dag_cover_intro} achieves similar bounds to what~\cite{MR1450616} yields for undirected graphs: polylogarithmic bounds on both the distortion and the number of DAGs. Furthermore, the fact that only $\tilde{O}(m)$ edges are added means that the sparsity of the DAG collection is asymptotically the same as that of the original graph, which satisfies our original goal of graph simplification. Additionally, such a DAG cover can be computed in near-linear time.

\begin{restatable}{theorem}{dagcover}
  \label{thm:dag_cover}
\label{thm:dag_cover_intro}
    Let $G$ be an $n$-node weighted, directed graph with polynomially-bounded positive integer edge weights. Then there exists an $O(\log^3n \cdot  \log\log n)$-approximate DAG cover with $O((m+n) \log^3 n)$ additional edges and $O(\log n)$ DAGs. Moreover, this DAG cover can be computed w.h.p.  in $\widetilde{O}(m)$ time.  
\end{restatable}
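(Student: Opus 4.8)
The plan is to mimic the ``FRT $\Rightarrow$ tree cover'' pipeline. First I would build a distribution $\mathcal{D}$ over DAGs on $V(G)$ — each with $\widetilde O(m)$ extra edges and constructible in $\widetilde O(m)$ time — with the property that $\dist_D(s,t)\ge\dist_G(s,t)$ always and, for every fixed ordered pair $(s,t)$, $\Pr_{D\sim\mathcal D}[\dist_D(s,t)\le\alpha\cdot\dist_G(s,t)]\ge\Omega(1)$ for $\alpha=O(\log^3 n\log\log n)$. Then I would sample $g=O(\log n)$ DAGs $D_1,\dots,D_g\sim\mathcal D$ independently and union-bound over the $n^2$ pairs, so that with high probability every pair is $\alpha$-preserved by some $D_i$; the total extra-edge count is $g\cdot\widetilde O(m)=\widetilde O(m)$ and the time is $g$ times the cost of one sample. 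This simultaneously yields the claimed probabilistic-embedding statement (in the appropriate ``constant success probability'' sense), since the per-sample guarantee \emph{is} a probabilistic embedding into a DAG. So the whole theorem reduces to constructing $\mathcal D$.

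To build a single sample $D\sim\mathcal D$, I would use the polynomial weight bound to reduce to $L=O(\log n)$ distance scales $\Delta_i=2^i$. The engine is a \emph{directed low-diameter decomposition} (directed ball-carving): for a target scale $\Delta$, one can randomly partition $V$ into clusters each of weak diameter $\widetilde O(\Delta)$ in $G$, such that every edge $e$ stays inside a single cluster except with probability $\widetilde O(w(e)/\Delta)$, and this is computable in near-linear time. Run this at all scales with coupled randomness so the partitions are nested into a laminar family $\mathcal P_0\succeq\mathcal P_1\succeq\cdots\succeq\mathcal P_L$, mirroring the FRT laminar family; a shortest $s$–$t$ path of length $d$ then survives uncut inside a common cluster at scale $\Delta_i$ with good probability once $\Delta_i\gtrsim d\cdot\mathrm{polylog}(n)$.

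The new difficulty, and where I expect the main obstacle, is turning this laminar decomposition into an honest \emph{DAG} while approximately preserving directed distances. Unlike an HST, a DAG has a single topological order, so (i) it can only serve pairs ordered consistently with that order, and (ii) — more seriously — the FRT trick of routing $s\leadsto t$ through the leader of their least-common-ancestor cluster does not port over, since a generic vertex $s$ cannot both ``reach up'' to a cluster leader and have that leader ``reach down'' to $t$ along consistently oriented edges. I would resolve (i) by also building the analogous DAG on the reverse graph $G^{\mathrm{rev}}$ and reversing it, so the sample covers both orderings; and resolve (ii) by choosing cluster leaders by a random global priority $\sigma:V\to[0,1]$ (so all added edges respect $\sigma$ and acyclicity is automatic) and equipping each cluster with a \emph{hub gadget} — an oriented, $\sigma$-monotone sparsifier of the transitive tournament on its leaders — so that with constant probability over $\sigma$ a fixed pair $(s,t)$ admits a path $s\to h\to t$ through a hub $h$ of a common cluster of weak diameter $\widetilde O(\dist_G(s,t))$. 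Sparsifying these gadgets with an interval/range-shortcut structure ($O(k\log k)$ edges on $k$ leaders, $O(\log k)$ hops between any ordered pair) keeps each DAG to $\widetilde O(m)$ edges at the cost of an extra $\log$ factor in distortion, and running Dijkstra/BFS from cluster leaders to weight the hub edges keeps the construction near-linear. The distortion bookkeeping combines: $O(\log n)$ from the scale at which $(s,t)$ is first separated, $O(\log n)$ from the directed-LDD padding quality needed to keep the $s$–$t$ path uncut, $O(\log n)$ from the hub-gadget overhead, and an extra $O(\log\log n)$ lost inside the state-of-the-art directed decomposition, for a total of $O(\log^3 n\log\log n)$.

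Finally I would verify the routine points: $\dist_D(s,t)\ge\dist_G(s,t)$ holds because every added edge $(u,v)$ carries weight at least $\dist_G(u,v)$ (it is either an original edge or a shortcut certified by a shortest-path computation); the running time is $O(\log n)$ scales times one near-linear directed LDD and one shortest-path computation each, times $O(\log n)$ independent samples; and the edge budget is $O(\log n)$ scales $\times$ $O(\log n)$ hub-gadget blow-up $\times$ $O(m+n)$, times $O(\log n)$ samples, all within $\widetilde O(m)$. The main risk in this plan is that the hub-gadget design must make the random priority $\sigma$ simultaneously (a) select a good hub for $(s,t)$ inside a not-too-large common cluster and (b) orient that hub's local edges so that $s$ precedes $h$ precedes $t$; arranging that both events occur with $\Omega(1)$ probability — rather than $1/\mathrm{polylog}(n)$, which would force $\mathrm{polylog}(n)$ DAGs — is the delicate part of the argument.
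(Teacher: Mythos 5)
Your proposal gets the overall shape right --- recursive directed LDD at geometric scales, a within-cluster shortcut/hub structure, two DAGs per random sample to cover both order directions, and $O(\log n)$ independent samples with a union bound over pairs --- and this does match the paper's plan. But there are two genuine gaps in how you turn the hierarchical decomposition into a DAG. The first is the choice of a random priority $\sigma$ drawn independently of the LDD. The paper's total order $<_D$ is derived \emph{from} the hierarchical LDD: it is a topological order of $G_z$ that is simultaneously consistent with the SCC-DAG of every intermediate level $G_i$ (Claim \ref{clm:total_order} and the claim following it). This is precisely what lets the entire surviving edge set $E(D^*)=E(G_z)$ go into $D_1$: every surviving edge automatically respects $<_D$. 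With an independent $\sigma$, a single LDD-surviving edge respects $\sigma$ only with probability $1/2$, so a shortest $s$--$t$ path that crosses $q$ SCC boundaries is $\sigma$-monotone only with probability roughly $2^{-q}$, which can be $2^{-\Omega(n)}$. That destroys the $\Omega(1)$ per-pair success probability your plan requires.

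The second gap is that your construction has no analog of the paper's ``bypass'' edges, which is where most of the $\widetilde{O}(m)$ budget and most of the analysis lives. Your hub gadget (an $O(k\log k)$-edge, $\sigma$-monotone sparsifier of the transitive tournament) corresponds to the paper's shortcut set $H$ of Claim \ref{clm:hopedge}, which handles only a pair that falls inside a \emph{single} SCC at some level. But once $(s,t)$ survives uncut to a level $i$ whose cluster then splits into many sub-SCCs at level $i+1$, the $s$--$t$ path enters and exits those sub-SCCs at arbitrary points that need not be aligned with the global order. The paper handles this via Step~2 of the $D_1$ construction: for every surviving edge $(u,v)\in E(D^*)$ and every pair of levels $(i,j)$, add edges between representatives $r_i^k(u)$ and $r_j^{k'}(v)$ of the corresponding SCCs, for a total of $O(m\log^2 n)$ edges, and then argue via the four-way case analysis of Lemma \ref{lem new X j} and the induction of Lemma \ref{lem:induction} that the expected cost of re-routing at each SCC boundary is controlled. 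Without this machinery you cannot route a pair whose common cluster has non-trivial internal SCC structure, which is the typical case. A smaller issue: the paper's second DAG $D_2$ is not a full construction on $G^{\mathrm{rev}}$ but merely the reverse of $H$ under the \emph{same} order $<_D$; this suffices because if $s>_D t$ and the path survives to level $i$ then $s$ and $t$ lie in a single SCC of $G_i$ (proof of Lemma \ref{lem:s_geq_t}). Running an independent LDD on $G^{\mathrm{rev}}$ produces an order uncorrelated with $<_D$, so there is no guarantee the two DAGs jointly cover all ordered pairs.
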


To complement our upper bound, and to show a separation between reachability versus exact shortest paths for $\tilde{O}(m)$ additional edges, we obtain the following polynomial lower bound on the number of DAGs.

\begin{restatable}{theorem}{lbm}\label{thm:lb_m}
There exists a family of $n$-node, $m$-edge  directed graphs $G$ with polynomially-bounded positive integer edge weights such that any exact distance-preserving DAG cover of $G$ with $m^{1+\varepsilon}$ additional edges  requires $\Omega(n^{1/6})$ DAGs, for a sufficiently small constant $\varepsilon > 0$.      
\end{restatable}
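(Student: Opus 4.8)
The plan is to build an instance $G$ whose \emph{exact} shortest-path structure is so ``cyclically entangled'' that reproducing it requires many DAGs, and then to show that a budget of only $m^{1+\varepsilon}$ additional edges is far too small to untangle it. Concretely, I will exhibit a family of $n$-vertex, $m$-edge weighted digraphs $G$ with polynomially bounded positive integer weights, together with a set $\mathcal{P}\subseteq V(G)\times V(G)$ of \emph{critical pairs}, with three properties: (i) every $(s,t)\in\mathcal{P}$ has a \emph{unique} shortest path $\pi_{s,t}$ in $G$; (ii) the family $\{\pi_{s,t}\}_{(s,t)\in\mathcal{P}}$ is \emph{entangled}, meaning that whenever a subfamily $\mathcal{F}$ has acyclic union $\bigcup_{\pi\in\mathcal{F}}\pi$, we have $|\mathcal{F}|=O(|\mathcal{P}|\,n^{-1/6})$; and (iii) the instance is \emph{shortcut-robust}: $|\mathcal{P}|=m^{1+\varepsilon_0}$ for a fixed constant $\varepsilon_0>0$, and the critical paths are arranged so that any single edge added to $G$ can serve as the ``rigid shortcut'' (defined below) for at most $m^{o(1)}$ critical pairs. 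The natural candidate for $G$ is a composition of many small ``rotation gadgets'' wired together cyclically, with edge weights chosen generically but within a polynomial range so that each $\pi_{s,t}$ is forced to be unique; spelling out this construction and verifying (i)--(iii) is the technical core.

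\textbf{Rigidity of covered pairs.} Suppose $D_1,\dots,D_g$ is an exact distance-preserving DAG cover of $G$ with at most $m^{1+\varepsilon}$ additional edges in total, let $(s,t)\in\mathcal{P}$ be covered by $D_i$, and let $Q$ be a shortest $s$--$t$ path in $D_i$ (so its length equals $\dist_G(s,t)$). For any edge $(x,y)$ of $Q$ with weight $w(x,y)$ we have $\dist_{D_i}(s,x)+w(x,y)+\dist_{D_i}(y,t)=\dist_G(s,t)$; since the cover never underestimates distances, $\dist_{D_i}(s,x)\ge\dist_G(s,x)$, $\dist_{D_i}(y,t)\ge\dist_G(y,t)$, and $w(x,y)\ge\dist_G(x,y)$, whence
\[
\dist_G(s,x)+\dist_G(x,y)+\dist_G(y,t)\ \le\ \dist_G(s,t)\ \le\ \dist_G(s,y)+\dist_G(y,t)\ \le\ \dist_G(s,x)+\dist_G(x,y)+\dist_G(y,t),
\]
so every inequality is an equality. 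By uniqueness of $\pi_{s,t}$ this forces $x$ and $y$ to lie on $\pi_{s,t}$ (with $x$ before $y$) and $w(x,y)=\dist_G(x,y)$ to equal the length of the $x$--$y$ subpath of $\pi_{s,t}$. Two consequences: first, if $D_i$ covers $(s,t)$ using only original edges of $G$, then (again using uniqueness to rule out chords) $\pi_{s,t}$ is entirely contained in $D_i$ --- call such a pair \emph{native} for $D_i$; second, any \emph{additional} edge used on $Q$ is a ``rigid shortcut'' for $(s,t)$: it equals in length, and joins two vertices of, a subpath of $\pi_{s,t}$.

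\textbf{Counting.} By the first consequence the paths $\{\pi_{s,t}:(s,t)\text{ native for }D_i\}$ all lie inside the DAG $D_i$, so their union is acyclic; by entanglement, at most $O(|\mathcal{P}|\,n^{-1/6})$ critical pairs are native for $D_i$, hence at most $O(g\,|\mathcal{P}|\,n^{-1/6})$ are native for some $D_i$. Every remaining covered pair $(s,t)$ --- a \emph{shortcut pair} --- is covered by some $D_i$ whose $s$--$t$ path uses an additional edge, which by the second consequence is a rigid shortcut for $(s,t)$; by shortcut-robustness, each additional edge is a rigid shortcut for at most $m^{o(1)}$ critical pairs, so the number of shortcut pairs is at most $m^{1+\varepsilon+o(1)}=o(|\mathcal{P}|)$ once $\varepsilon<\varepsilon_0$. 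Since every pair of $\mathcal{P}$ is native or a shortcut pair, $|\mathcal{P}|\le O(g\,|\mathcal{P}|\,n^{-1/6})+o(|\mathcal{P}|)$, which rearranges to $g=\Omega(n^{1/6})$.

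\textbf{Main obstacle.} The difficulty is concentrated entirely in the construction, because properties (ii) and (iii) pull in opposite directions: strong entanglement wants many critical shortest paths that heavily overlap and cyclically interlock, whereas shortcut-robustness wants the critical paths ``spread out'' so that no single added edge lies rigidly across many of them. Engineering enough cyclic interlocking to kill all but an $n^{-1/6}$ fraction of every linearly-orderable subfamily, while keeping the overlap local enough (and the weights within a polynomial range) that $m^{1+\varepsilon}$ edges cannot buy back a constant fraction of the pairs, is exactly what pins the bound at $n^{1/6}$ for a correspondingly small constant $\varepsilon$, and is the part that requires a genuinely new gadget rather than an off-the-shelf hard instance.
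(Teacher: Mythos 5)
Your high-level template—critical pairs with unique shortest paths, an entanglement property bounding how many critical paths a single DAG can natively contain, and a shortcut-robustness property bounding how much an added edge can help—does capture the right shape of the argument, and your rigidity lemma (that every edge on a shortest $s$--$t$ path in $D_i$ with $\dist_{D_i}(s,t)=\dist_G(s,t)$ must span, and equal in length, a subpath of the unique $\pi_{s,t}$) is correct and is used in essentially the same way in the paper. But there are two genuine gaps.

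First, and most fundamentally, you explicitly defer the construction and the verification of properties (ii) and (iii), acknowledging they require ``a genuinely new gadget.'' That is not a small remaining step: it is where essentially all of the work lies, and without it the proposal is a proof schema rather than a proof. The paper spends five subsections on exactly this: a Szemer\'edi--Trotter based ``base graph'' of points and lines, an ``expansion'' lemma showing that any small subfamily of the lines hits mostly low-degree points, a directed product of two copies of the base graph, and then a randomized step replacing every vertex by a small bidirected clique $K_c$ with random edge routing, followed by a union bound over the finite family of edge-maximal DAGs obtainable from the clique orderings. None of this is off-the-shelf, and ``rotation gadgets wired together cyclically with generic weights'' is too vague to assess.

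Second, your property (iii) as stated—each added edge is a rigid shortcut for at most $m^{o(1)}$ critical pairs—is stronger than what the paper's construction achieves, and plausibly stronger than what is achievable at all in a graph dense enough to have $m^{1+\varepsilon_0}$ critical paths. In the paper's instance, an added ``short edge'' between two adjacent cliques can be a rigid shortcut for up to $\Theta(d/c)=n^{\Theta(1)}$ critical paths, which is polynomial in $m$, not $m^{o(1)}$. The paper circumvents this not by limiting per-edge help, but by bounding the \emph{total} number of path--edge incidences $|\mathcal{I}_{E'_S,\Pi^*}|$ and then using a Markov-type averaging argument to discard the minority of paths that meet too many short edges; on the surviving paths it runs a matching-decomposition argument (a collection of edge-disjoint matchings in an auxiliary graph, each encoding pairwise-independent ``incompatibility'' events) that makes the probability of a DAG covering many of them exponentially small even after conditioning on the adversary's placement of short edges. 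Your counting step would need to be rebuilt along these lines: bound total incidences rather than per-edge multiplicity, split the analysis into long edges (help at most one path, as you intend) and short edges (can help many paths but only locally), and prove the entanglement bound via a union bound over a finite, explicitly enumerated family of DAGs rather than as a deterministic property of a fixed instance.
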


So far we have focused on DAG covers as an analog to tree covers, but there is also a corollary of \cref{thm:dag_cover_intro} that analogizes probabilistic tree embedding. Recall that the guarantee of a probabilistic tree embedding is that for each pair of vertices, the expected distortion of their distance over the distribution of trees, is bounded. It is impossible to achieve a precisely analogous guarantee for a distribution over DAGs because in any DAG, for every pair of vertices $s,t$, either $\dist(s,t)$ or $\dist(t,s)$ is infinite. Instead, we define a distribution $\mathcal{D}$ over DAGs so that for all pairs of nodes $(s,t)$ in the transitive closure of $G$, if a DAG is chosen from $\mathcal{D}$, the reachability from $s$ to $t$ is preserved with probability $1/2$, and conditioned on that, the expected distortion is polylogarithmic:
\begin{restatable}{theorem}{embed}\label{thm:embed}
      Let $G$ be an $n$-node weighted, directed graph with positive integer edge weights and polynomial aspect ratio. Then there exists a distribution $\mathcal{D}$ of DAGs over vertex set $V(G)$ satisfying the following properties:
      \begin{enumerate}
          \item  The support of distribution $\mathcal{D}$ contains at most $O((m+n) \log^3 n)$ additional edges.
          \item Let $D \sim \mathcal{D}$ be a DAG sampled from distribution $\mathcal{D}$. For all pairs of nodes $(s, t) \in TC(G)$, $s$ can reach $t$ in $D$ with probability $1/2$. Moreover,
          $$
            \mathbb{E}[\dist_D(s, t) \mid s \leadsto_D t] = O(\log^3 n \cdot \log \log n \cdot \dist_G(s, t)),
            $$
            where $s \leadsto_D t$ denotes the event that $s$ can reach $t$ in $D$. 
        \item We can sample a DAG from distribution $\mathcal{D}$ in $\widetilde{O}(m)$ time.
      \end{enumerate}
\end{restatable}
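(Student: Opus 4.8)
The plan is to derive \Cref{thm:embed} as an almost-immediate corollary of the construction behind \Cref{thm:dag_cover_intro}. Recall that a probabilistic tree embedding with distortion $\alpha$ and a tree cover with $O(\log n)$ trees and distortion $\alpha$ are essentially two faces of the same coin: one obtains the cover by sampling $O(\log n)$ trees from the distribution, and conversely a cover of size $g$ where each pair is ``served'' by one tree can be turned into a distribution (uniform over the $g$ trees) for which each pair gets distortion $\alpha$ with probability $1/g$. My expectation is that the algorithm proving \Cref{thm:dag_cover_intro} does \emph{not} just magically output $O(\log n)$ DAGs; rather, it builds each DAG by making independent random choices (almost certainly a random hierarchical/low-diameter decomposition in the style of FRT, adapted to the directed setting), and the ``$O(\log n)$ DAGs'' come from taking $O(\log n)$ independent samples and arguing via a Chernoff/union bound that every pair in $TC(G)$ is served by at least one sample w.h.p. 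So the first step is to isolate the \emph{single-sample distribution} $\mathcal{D}$ implicit in that algorithm: one run of the randomized construction, producing a single DAG $D$.

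The second step is to verify the three bullet points. Bullet 1 (support has $O((m+n)\log^3 n)$ additional edges): every DAG in the support is a possible output of one run of the algorithm, and \Cref{thm:dag_cover_intro} already guarantees each individual DAG uses only $O((m+n)\log^3 n)$ additional edges — so this is immediate provided the per-DAG additional-edge bound in the proof of \Cref{thm:dag_cover_intro} is stated per-DAG and not merely for the union of $O(\log n)$ of them (which it should be, up to the $\log n$ factor already absorbed). Bullet 3 (sampling in $\widetilde O(m)$ time): one run of the near-linear-time algorithm of \Cref{thm:dag_cover_intro} is by definition one sample from $\mathcal{D}$, so this is also immediate. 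Bullet 2 is the substantive one: I need that for each fixed $(s,t)\in TC(G)$, a single random $D$ preserves $s\leadsto t$ with probability \emph{exactly} $1/2$ (not just $\ge 1/g$), and that conditioned on reachability the expected distance is $O(\log^3 n\cdot\log\log n)\cdot\dist_G(s,t)$.

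The main obstacle is precisely getting the ``probability exactly $1/2$'' clause and the conditional expectation bound. In the undirected FRT world, a single sampled tree \emph{always} connects $s$ and $t$ (trees are connected), and the expectation bound is unconditional; here the whole point is that a DAG cannot simultaneously keep $s\to t$ and $t\to s$ reachable, so each single DAG must ``choose a direction,'' and the natural symmetric choice gives probability $1/2$ in each direction. Concretely, I anticipate that the construction orients things according to a uniformly random linear order (or random $\pm$ sign) on some set of ``pivot''/center vertices — exactly the device used in \Cref{obs:m_reach} and in the two-DAG exact construction — so that for any pair, $s$ comes before $t$ with probability $1/2$, and when it does, the hierarchical decomposition machinery that proves the distortion bound of \Cref{thm:dag_cover_intro} kicks in to give the $O(\log^3 n\cdot\log\log n)$ stretch. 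So the argument is: (i) by symmetry of the random order, $\Pr[s\leadsto_D t]=1/2$; (ii) conditioned on that event, $D$ restricted to the relevant ``forward'' part behaves like the distance-preserving DAG from the proof of \Cref{thm:dag_cover_intro}, whose distortion analysis is exactly an FRT-style expectation bound, yielding $\E[\dist_D(s,t)\mid s\leadsto_D t]=O(\log^3 n\cdot\log\log n\cdot\dist_G(s,t))$. The one point needing genuine care is checking that the distortion analysis underlying \Cref{thm:dag_cover_intro} is in fact an ``in expectation over a single sample'' statement (as FRT-type analyses are) rather than merely an ``exists a good DAG among $O(\log n)$'' statement; if the proof of \Cref{thm:dag_cover_intro} already argues $\E[\dist_D(s,t)]\le \alpha\cdot\dist_G(s,t)$ for one $D$ and then applies Markov plus repetition to get the cover, then \Cref{thm:embed} falls out with essentially no new work — I would simply repackage that intermediate claim, replace the unconditional expectation with the expectation conditioned on the $1/2$-probability reachability event, and we are done.
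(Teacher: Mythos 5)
Your proposal is essentially correct and tracks the paper's actual argument closely. The paper defines the single-sample distribution exactly as you anticipate: one run of the recursive-LDD construction produces the pair $D_1, D_2$ (where $D_1$ respects the induced total order $<_D$ and $D_2$ respects its reverse), and a DAG is sampled by choosing uniformly from $\{D_1, D_2\}$. Properties 1 and 3 are immediate from the per-iteration size and time bounds established for \Cref{thm:dag_cover_intro}, and the distortion analysis there is indeed carried out via FRT-style single-sample expectation bounds (Claim \ref{clm:s_leq_t} and Lemma \ref{lem:s_geq_t}, proved en route to Lemma \ref{lem:exp_err}), so Property 2 is obtained by repackaging those bounds after conditioning on the reachability event.

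The one place where your framing needs care is the ``probability exactly $1/2$'' claim. It does \emph{not} come from the linear order on pivots being uniformly random: the total order $<_D$ is induced by the hierarchical decomposition, and for pairs $(s,t) \in TC(G)$ with $(t,s) \notin TC(G)$ one has $s <_D t$ deterministically, so $\Pr[s <_D t]$ can equal $1$ rather than $1/2$. What gives exactly $1/2$ is the separate, independent fair coin choosing $D \in \{D_1, D_2\}$: for every realization of the LDD randomness, exactly one of $D_1, D_2$ has $s \leadsto t$, so the coin flip yields $1/2$ unconditionally. Your parenthetical ``or random $\pm$ sign'' is the right mechanism; the ``uniformly random linear order'' variant you also floated would not give probability $1/2$. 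Similarly, you speak of ``one run producing a single DAG,'' but one run produces the \emph{pair} $D_1, D_2$ and the single sample is the result of then flipping the orientation coin — a minor bookkeeping distinction, but it is where the $1/2$ lives. With that clarification, the rest of your plan goes through as described.
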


Finally, we return to the case of DAG covers with \emph{no} additional edges, for graphs with bounded diameter. We note that the above trivial lower bound of $n$ DAGs does not apply for graphs of low diameter, since the construction is simply a cycle of diameter $n-1$. In fact, there is a simple upper bound for low diameter graphs: an exactly distance-preserving DAG cover whose number of DAGs is logarithmic in $n$ but factorial in the diameter $D$. To see this, randomly order the vertices in $G$ and construct a DAG that includes every edge in $G$ that goes ``forward'' with respect to the ordering. A given shortest path is included in this DAG if its (at most $D+1$) vertices are ordered consistently with the random ordering, which occurs with probability $\geq 1/(D+1)!$. Then, the number of random trials (and thus number of DAGs) needed to include all $\Theta(n^2)$ shortest paths in some DAG with high probability is $O((D+1)! \cdot \log n)$.

Is exponential dependence on $D$ necessary? For our final result, we show that it is:

\begin{restatable}{theorem}{diam}\label{thm:diam}
    There exists a family of $n$-node directed graphs $G$ with diameter $O(\log n)$ such that any reachability-preserving DAG cover of $G$ with no additional edges requires $n$ DAGs.
\end{restatable}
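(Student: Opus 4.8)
I would first reduce the statement to a combinatorial one about linear orders. If $D\subseteq G$ is acyclic with topological order $\pi$, every edge of $D$ is $\pi$-increasing, so $D$ is contained in the acyclic graph $D^{\pi}:=\{(u,v)\in E(G):\pi(u)<\pi(v)\}$, which preserves at least as much reachability as $D$; hence in any reachability-preserving DAG cover with no additional edges we may assume every DAG has this maximal form. Thus such a cover is, without loss of generality, a family of linear orders $\pi_1,\dots,\pi_g$ of $V(G)$ such that every reachable pair $(s,t)$ of $G$ has an $s\to t$ path all of whose edges are $\pi_i$-increasing, for some $i$. The key observation is that a path $P$ is $\pi$-increasing exactly when $\pi$ restricted to $V(P)$ is the order in which $P$ visits its vertices; equivalently, two paths $P,Q$ are simultaneously $\pi$-increasing for some $\pi$ if and only if $P\cup Q$ is acyclic.

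\textbf{Reformulation.} It then suffices to construct an $n$-vertex digraph $G$ of diameter $O(\log n)$ together with $n$ reachable pairs $(s_1,t_1),\dots,(s_n,t_n)$ that are \emph{pairwise order-incompatible}: for every $i\ne j$, every $s_i\to t_i$ path $P$, and every $s_j\to t_j$ path $Q$, the graph $P\cup Q$ contains a directed cycle. For such a $G$, each $\pi_k$ in a cover admits a monotone path for at most one of the $n$ pairs, so $g\ge n$. The directed cycle $C_n$ realizes this with the pairs $(v_{i+1},v_i)$: each has a unique (Hamiltonian) realizing path, and the union of any two of them is all of $C_n$. But $\mathrm{diam}(C_n)=n-1$, and adding a chord to $C_n$ to lower the diameter seems to always create an order-consistent alternative routing of some hard pair that destroys incompatibility, so a different graph is needed.

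\textbf{Construction.} Instead, I would aim for a graph $G$ that retains a global ``cyclic rigidity skeleton''---realizing hard pair $i$ by \emph{any} $G$-path forces the ambient linear order to agree, on a large set of vertices, with a fixed order cyclically rotated according to $i$---while an $O(\log n)$-depth hierarchy of auxiliary shortcut edges, all oriented consistently with that skeleton, pulls the diameter down to $O(\log n)$. Since such a shortcut edge can lie on a $\pi$-increasing $s_i\to t_i$ path only when $\pi$ is already the $i$-th rotation, it is never usable to preserve two hard pairs with a single DAG, so the $n$ pairs stay pairwise order-incompatible. A natural realization is recursive: build $G$ from two half-size copies of the construction, glue them along their skeletons so the diameter grows by only $O(1)$ per level, and arrange both that $G$ stays strongly connected with diameter $O(\log n)$ and that each hard pair of a sub-copy remains order-incompatible with \emph{all} other hard pairs of $G$, not merely those in its own sub-copy. (The latter is essential: if the $n$ pairs split into $n/k$ groups with conflicts confined to each group, one gets only $O(k)$ DAGs.)

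\textbf{Main obstacle.} The hard part will be this cross-sub-copy order-incompatibility in the presence of enough connectivity to force small diameter: I would need to show that the (necessarily plentiful) gluing and shortcut edges \emph{never} provide a new order-consistent $s_i\to t_i$ routing---that is, that reachability of each hard pair stays ``rigid'' even though $G$ is now low-diameter and densely connected. Certifying that the recursion maintains pairwise order-incompatibility of all $n$ pairs simultaneously, rather than just within each sub-copy, is where essentially all of the technical work lies.
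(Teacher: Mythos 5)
Your reformulation is correct and matches the paper's implicit setup: a DAG cover with no extra edges is, without loss of generality, a family of linear orders, and two reachable pairs can be served by the same order precisely when some realizing paths for them have an acyclic union; so it suffices to exhibit $n$ reachable pairs whose realizing paths are pairwise cycle-forming. You also correctly identify the cycle $C_n$ as the natural starting point (with pairs $(v_{i+1},v_i)$) and that the task is to add a hierarchy of shortcuts to drive the diameter down to $O(\log n)$ without breaking pairwise order-incompatibility. This is indeed the shape of the paper's construction.

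However, there is a real gap: you leave the construction unspecified (the ``recursive gluing of two half-size copies'' is never defined, and it is not clear what gluing would preserve the cyclic rigidity you need), and you leave the entire order-incompatibility argument as an acknowledged hole. Your heuristic for why shortcuts are harmless --- that a shortcut edge ``can lie on a $\pi$-increasing $s_i\to t_i$ path only when $\pi$ is already the $i$-th rotation'' --- is also not the right mechanism and is overstated: in the correct construction, shortcut edges \emph{are} usable on realizing paths under many different linear orders. What actually works is the following, non-recursive, direct construction: split $C_n$ into two arcs, $P_A$ from $s$ to $t$ and $P_B$ from $t$ to $s$, and on each arc add ``nested'' shortcut edges following a balanced binary decomposition of the arc into subpaths, each shortcut going from the first to the last vertex of its subpath. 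The hard pairs are each vertex and its cycle-predecessor. The crucial structural fact is not about forcing the whole rotation, but local: any $G$-path from $v$ to $\mathrm{pred}(v)$ must pass through both $s$ and $t$ in a fixed order determined by which arc $v$ lies on, and more strongly must pass through the endpoints of every nested subpath containing the edge $(\mathrm{pred}(v),v)$, with all subpath-end vertices preceding $t$ preceding $s$ preceding all subpath-start vertices. For two hard vertices $u,v$ on the same arc, the least common ancestor subpath of their edges supplies a vertex $z$ that one path must visit before $t$ and the other after $s$, and since $t$ precedes $s$ on both paths, the union contains a directed cycle through $z,t,s$. Pinning down this ``forced vertices in forced order'' lemma, rather than a rotation argument, is the missing step, and without it (or the explicit construction) the proposal does not yet constitute a proof.
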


\paragraph{Additional Related Work.}
A recently developing area of research focuses on understanding the \emph{structure} of shortest paths in both directed and undirected graphs~\cite{amiri2020disjoint,MR4695168, MR4640631, MR4430290, MR4520059,MR3909594,balzotti2022non,MR4481424}. See Section 1.4 of~\cite{MR4695168} for more details. Our work can be viewed as a contribution to this undertaking.

Additionally, the following papers are related to the wider context of our work: embedding directed planar graphs into directed $\ell_1$~\cite{MR4399708}, an extension of treewidth to DAG-width~\cite{MR2249395}, and shortest-path preservers with minimum aspect ratio~\cite{MR4695168}.

\section{Technical Overview}

\subsection{Lower Bounds} First we will outline some of the ideas behind~\cref{thm:dag_cover_n_setting} (our lower bound that any reachability-preserving DAG cover with almost $n^2$ additional edges requires almost $n$ DAGs), and then we will turn to~\cref{thm:lb_m} (our lower bound that any distance-preserving DAG cover with $m^{1+\varepsilon}$ additional edges requires $\Omega(n^{1/6})$ DAGs), 
which builds upon these ideas. 

\paragraph{Number of additional edges in terms of $n$.} Our starting point is a family of graphs whose variants are commonly used for obtaining lower bounds for related structures including hopsets/shortcut sets, spanners/emulators, and reachability/distance preservers, first used by~\cite{hesse2003directed}. Each graph in this family is a DAG that contains a \emph{large} collection of \emph{long} paths that are pairwise \emph{edge-disjoint}, and each path is the \emph{unique} path between its endpoints. 

Of course, such a graph does not yield a lower bound for DAG covers because the graph itself is already a DAG. Instead, the intuition is that we can modify the above ``base'' graph so that a constant fraction of pairs of intersecting paths are ``incompatible'', i.e.,~cannot appear together in the same DAG. To do so, we replace every vertex $v$ in the base graph with two vertices $v_1$, $v_2$ that have a bidirectional edge between them. Then, we randomize which of these two vertices constitute the entry and exit points of each path passing through $v$. That is, for every edge in the base graph that enters (respectively, exits) $v$, we assign it to enter (respectively, exit) either $v_1$ or $v_2$ with probability $1/2$ each. Then, for any pair of paths that intersect at $v$, the probability that they traverse the edge $(v_1,v_2)$ in opposite directions is 1/4. If this happens, the union of these two paths contains a cycle, making them incompatible. A probabilistic analysis of all intersections between paths yields a lower bound on the number of DAGs needed to accommodate every path. 

We are not done, however, because there are added edges. In particular, these added edges could bypass the $(v_1,v_2)$ bidirectional edge, allowing two paths that intersect at $v$ to become compatible. A key property of the base graph is that each pair of paths intersects on at most one vertex. This means that any added edge can only ``help'' one path. 
Furthermore, the parameters of the base graph are set so that the number of paths is a constant factor larger than the number of added edges. This means that a constant fraction of the paths are not helped by any added edge. Then, we can again perform a probabilistic analysis to calculate a bound on the number of DAGs needed to accommodate all of the non-helped paths. To obtain this bound, it is important that the paths are long enough (a property of the base graph) because for a DAG to accommodate a path, the path needs to randomly chose entry/exit points so that it is consistent with the DAG at \emph{every} step.

This analysis allows us to obtain essentially the strongest possible impossibility result for the regime where the number of additional edges is in terms of $n$: reachability-preserving DAG covers with almost $n^2$ additional edges require almost $n$ DAGs.

\paragraph{Number of additional edges in terms of $m$.} 
One issue towards obtaining a lower bound for $m^{1+\eps}$ added edges is the previously mentioned upper bound of a reachability-preserving DAG cover with two DAGs. That is, in contrast to the case of $n^{2-o(1)}$ added edges, any lower bound for $m^{1+\eps}$ added edges must hold only for exact or approximate shortest paths, and not for reachability. 

A related issue is that if the SCCs are of polylogarithmic size, there is an upper bound for exact-distance-preserving DAG covers: For each SCC $S$ and each vertex $v\in S$, add (appropriately weighted) edges from $v$ to the union of the out-neighbors of $S$, and add edges to $v$ from the union of in-neighbors of $S$. This DAG handles vertex pairs in two different SCCs, and a polylogarithmic number of additional DAGs handle vertex pairs in the same SCC. 

Therefore, to achieve a lower bound, the SCCs must be super-polylogarithmic in size. Specifically, each of our SCCs will be a clique of size $c = n^{\delta}$ 
with bidirectional edges. Similar to our construction for $n^{2-o(1)}$ added edges, every vertex in the original graph is replaced with a clique, and every incoming and outgoing edge to/from the clique picks a \emph{random} clique vertex to enter or exit, respectively. Now, a pair of paths is only incompatible if each traverses the same clique edge in opposite directions. This occurs much more rarely than in our previous construction with cliques of size 2, but we show that this construction can still be used towards getting a polynomial lower bound on the number of DAGs.

However, there is a major issue with the construction so far: in the previous construction it was important for the number of paths to be larger than the number of added edges (since each added edge could only ``help'' one path). However, since the paths in the base graph are edge-disjoint, the number of paths is necessarily less than the number $m^{1+\eps}$ of added edges. For this reason, we need to relax the edge-disjointness condition. This has been done in prior work  (first by~\cite{hesse2003directed}) for other problems (e.g.~for shortcut sets with $m$ added edges), by taking a certain type of \emph{product} of two copies of the base graph. We use such a product graph, which allows pairs of paths to overlap on a single edge (rather than just a vertex). Using this product graph as our new base graph, we perform the previously described clique-replacement step. However, the use of a product graph significantly complicates the proof for two reasons:

(1)  Since pairs of paths can overlap on an edge in the product graph, there are two types of added edges: ``long'' ones that only help one path as before, and ``short'' ones between adjacent cliques, which could help \emph{many} paths that all share an edge between those two cliques. 

(2) In our previous construction (for $n^{2-o(1)}$ added edges), we used the fact that each path picks its entry/exit points ($v_1$ or $v_2$) \emph{independently}. This is no longer the case. These choices are made independently for each edge, but since pairs of paths can share an edge, there are correlations between whether or not paths are consistent with a fixed DAG.

To circumvent this correlation issue, we restrict our attention to \emph{edge-disjoint} sets of paths going through the same clique. For such paths, their entry-exit points to/from the clique are chosen independently and we use this to calculate their probability of compatibility. Such a set of paths going through a given clique can be viewed as a \emph{matching} in an appropriate auxiliary graph. We are interested in lower bounding the size of such matchings over all of the cliques (see Section \ref{subsec:match} for details). 

We can think of this scenario as a game played with an adversary: First we randomly choose the trajectories of all of the paths through the cliques. Then the adversary, after seeing the random choices, is allowed to help paths using ``long'' edges, and help \emph{parts} of paths using ``short'' edges. The goal of the adversary is to minimize the matching sizes in the non-helped portions of the graph. 

Due to the intricacy of this scenario, to argue about the structure of the non-helped parts of the graph, we need to open the black box of the construction of the base graph and product graph, which are based on constructions from discrete geometry. In particular, we need to prove a special ``expansion'' property of these graphs (Lemmas \ref{lem:base_degree} and \ref{lem:prod_degree}).

\subsection{Upper Bound} In this section we will describe some of the ideas behind~\cref{thm:dag_cover_intro}, our $O(\log^3 n \cdot \log \log n)$-approximate DAG cover with $\tilde{O}(m)$ additional edges and $O(\log n)$ DAGs.
Our starting point is the technique of~\cite{MR1450616} for probabilistic tree embedding in undirected graphs, which is the elegant and well-known technique of a recursively applying a \emph{low-diameter decomposition} (LDD). In a low-diameter decomposition, we delete edges from the input graph according to a random process, so that afterwards each resulting connected component has bounded diameter, and the probability that any given path survives has a particular inverse relation to its length. Then to build a tree, we can add edges to connect these connected components into a star, where each edge weight is the diameter of the original graph (so that no distances are shortened), and we can recurse on each connected component.

LDDs for directed graphs are also known~\cite{MR4537239,bringmann2023negative,bringmann2025near}; in particular, we use Lemma 1.2 of~\cite{MR4537239}. Instead of each connected component having bounded diameter, each \emph{strongly} connected component (SCC) has bounded diameter. The other guarantee is similar to the undirected case: the probability that any given path survives has a particular inverse relation to its length. We recursively apply this LDD to each SCC. While this process works smoothly, the issue is that it is not clear how to use this recursive LDD to obtain a DAG cover. We highlight a few considerations: 

(1) In the undirected case, the above random process results in a single tree. In contrast, in any DAG at least half of all ordered vertex pairs have infinite distance, so we need to construct multiple DAGs. To do so, we need to establish a partition of the ordered vertex pairs to determine which distances will be preserved in each DAG we construct, and which distances will be neglected.

(2) Consider a vertex pair $(s,t)$ and a DAG that approximately preserves $\dist(s,t)$. The topological order of the DAG may not respect the order of vertices a given $st$-path, which could include many segments that are ``backwards'' with respect to the topological order of the DAG. To resolve this issue, we need to add a limited number of edges to \emph{bypass} these backwards segments for all such $s,t$.

(3) If we were to recurse on multiple SCCs to build our DAG collection, each SCC would return multiple DAGs (otherwise some distances would be infinite). Then, we would need to combine these DAGs in such a way that globally all distances are preserved (approximately). If we consider all possible ways to combine these DAGs, we risk a combinatorial explosion of combinations, leading to a large number of DAGs. To address this issue, instead of using recursion as a black box, we take a more global approach and define the process so that we construct a total of only two DAGs. (And then, similar to the undirected version, we repeat the random process $O(\log n)$ times to obtain the final collection of DAGs).

The rough idea of our algorithm is as follows. Consider a pair of vertices $(s,t)$, and consider the moment during the construction of the recursive LDD at which an edge is deleted, causing $\dist(s,t)$ to become infinite. There are two cases: (1) right before this moment, $s$ and $t$ were in the same SCC, or (2) they were in different SCCs. 
In case 1, if $s$ and $t$ were in the same SCC $S$, 
then the LDD gives us a bound on $\dist(s,t)$. Similar to the undirected version, we would like to add edges of weight diameter$(S)$ to create a star-like structure to approximately preserve $\dist(s,t)$. To make a star-like DAG for each SCC, we use a folklore shortcut set construction, which creates a DAG of diameter 2 with $\widetilde{O}(|S|)$ added edges.

In case 2, if $s$ and $t$ were in different SCCs, this means that $s$ appears before $t$ in the topological order of the evolving graph resulting from the recursive LDD. However, some edge on an $st$-path was deleted, disconnecting $t$ from $s$. We would like to bypass this edge, by adding a limited number of new edges that are consistent with the topological order emerging from the recursive LDD. We do this roughly via another star-like DAG for each SCC, as well as adding additional edges to link the SCCs together.  This linking of the SCCs is done in a roughly similar way to the previously described reachability-preserving DAG cover with $\tilde{O}(m)$ added edges. This is the part of the construction that requires $\tilde{O}(m)$ added edges, as opposed to $f(n)$ edges.

To approximately preserve all distances, we execute both of the above cases for every SCC and every level of the LDD. Importantly, we cannot construct a separate DAG for every level of the LDD because we need a low-distortion path from $s$ to $t$ to be contained in \emph{a single DAG}. That is, if an $st$-path requires multiple ``bypass'' edges from multiple levels of the LDD then all of these bypass edges need to appear in the \emph{same} DAG. A key challenge is to ensure that the union of \emph{all} of the added edges over all levels genuinely forms only two DAGs, and no cycles are formed from interference between edges from different levels.

\section{Notation}
\label{sec:prelims}
Throughout this paper, we will be working exclusively with directed graphs. For a directed graph $G$, we let $TC(G) \subseteq V(G) \times V(G)$ denote the transitive closure of  $G$. For a node $v \in V(G)$, we will use $\deg_G(v)$ to denote the number of edges incident to $v$ in $G$. Additionally, we will use $\text{indeg}_G(v)$ and $\text{outdeg}_G(v)$ to denote the the number of incoming edges and outgoing edges, respectively, that are incident to $v$ in $G$. 

We will frequently use $\pi$ to denote a directed path and $\Pi$ to denote a collection of paths. 
We will let $V(\pi)$ and $E(\pi)$ denote the nodes and edges of path $\pi$, respectively. Given a path $\pi$ and a set of nodes $S$, we will let $\pi \cap S$ denote the sequence of nodes in $V(\pi) \cap S$, listed in the order they appear in path $\pi$. We will interpret the ordered set $\pi \cap S$ as a (possibly non-contiguous) subpath of $\pi$ restricted to the nodes in $S$. Note that the edges $E(\pi \cap S)$ of subpath $\pi \cap S$ may not be a subset of the set edges $E(\pi)$ of the original path $\pi$; however, instead we can guarantee that the edges in $\pi \cap S$ are contained in the transitive closure of $\pi$, i.e.,  $E(\pi \cap S) \subseteq TC(\pi)$. 

Given a collection of nodes $V$ and a collection of paths $\Pi$ where the paths in $\pi$ may contain nodes that are not in $V$, we will define a corresponding graph which we denote as $\graph(V, \Pi)$ over the vertex set $V$. This graph will be the (non-disjoint) union of the set of paths in $\Pi$ restricted to the nodes in $V$. Formally, we let $\graph(V, \Pi)$ be the graph 
$$
\graph(V, \Pi) = \left(V, \quad \bigcup_{\pi \in \Pi} E(\pi \cap V) \right).
$$

\section{DAG covers with $f(n)$ additional edges}
In this section, we present a lower bound and an upper bound for DAG covers with $f(n)$ additional edges, where function $f(n)$ depends only on the number of nodes $n$ in the input graph. All of our bounds in this section are tight up to $n^{o(1)}$ factors.

\subsection{A lower bound for DAG covers with $n^{2-o(1)}$ additional edges}
\label{subsec:fn_lb}
The goal of this subsection is to prove the following theorem.
\fnlower*

Our lower bound construction will make use of the following lemma, which is immediate from existing lower bounds on reachability preservers in~\cite{MR3775910,MR4716717}.

\begin{lemma}[cf. Theorem 13 of \cite{MR4716717}]
\label{lem:rp_lbs}
There exists an $n$-node directed acyclic graph $G = (V, E)$ and a collection of directed paths $\Pi$ in $G$ with the following properties:
\begin{enumerate}
    \item $|\Pi| = n^{2-o(1)}$,
    \item each path $\pi \in \Pi$ has at least $|\pi| \geq 10\log n$ edges, 
    \item paths in $\Pi$ are pairwise edge-disjoint, and
    \item each path $\pi \in \Pi$ is the unique directed path between its endpoints in $G$. 
\end{enumerate}
\end{lemma}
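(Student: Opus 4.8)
The plan is to exhibit the classical ``layered grid'' instance underlying the reachability-preserver lower bounds of \cite{MR3775910,MR4716717}, in which the long unique paths are \emph{arithmetic-progression paths} whose step vectors are drawn from a set of integer points in \emph{strictly convex position}. Properties (2)--(4) (long length, edge-disjointness, uniqueness) will fall out essentially for free from this design; the only quantitative work is tuning parameters to push $|\Pi|$ up to $n^{2-o(1)}$.

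In detail: fix a dimension $d$ and a side length $r$, and let $S\subseteq\Z^d$ be the image of $\{0,1,\dots,r\}^{d-1}$ under the paraboloid lift $\phi(a_1,\dots,a_{d-1})=(a_1,\dots,a_{d-1},\,a_1^2+\dots+a_{d-1}^2)$. Then $|S|=(r+1)^{d-1}$, and since $\phi$ parametrizes the graph of the strictly convex function $z\mapsto\sum_j z_j^2$, the set $S$ has the key property that for any $v,v_1,\dots,v_\ell\in S$ the relation $v_1+\dots+v_\ell=\ell v$ forces $v_1=\dots=v_\ell=v$. Now set $\ell:=\lceil 10\log n\rceil$, take the vertex set $V=\{0,1,\dots,\ell\}\times\Gamma$ where $\Gamma$ is an integer box large enough to contain $\ell$ consecutive $S$-steps (side $\Theta(\ell r)$ in the first $d-1$ coordinates and $\Theta(\ell d r^2)$ in the last), and put an edge from $(i,x)$ to $(i+1,x+v)$ whenever $0\le i<\ell$, $v\in S$, and $x,x+v\in\Gamma$; this $G$ is a layered DAG. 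For each $v\in S$ and each \emph{valid start} $x$ (one with $x,x+v,\dots,x+\ell v\in\Gamma$), let $\pi_{v,x}$ be the path $(0,x)\to(1,x+v)\to\cdots\to(\ell,x+\ell v)$, and let $\Pi$ collect all of these.

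Properties (2)--(4) are then immediate. Each $\pi_{v,x}$ has exactly $\ell\ge 10\log n$ edges, giving (2). A fixed edge $\bigl((i,y),(i+1,y+v)\bigr)$ can lie on $\pi_{v',x'}$ only if $v'=v$ and $x'=y-iv$, so it lies on at most one path of $\Pi$, giving (3). For (4): since $G$ is layered, any directed path from $(0,x)$ to $(\ell,x+\ell v)$ has exactly $\ell$ edges with step vectors $v_1,\dots,v_\ell\in S$ satisfying $v_1+\dots+v_\ell=\ell v$, so strict convexity of $S$ forces it to equal $\pi_{v,x}$; moreover $(v,x)\mapsto\bigl((0,x),(\ell,x+\ell v)\bigr)$ is injective, so distinct members of $\Pi$ have distinct endpoint pairs and each is genuinely the unique directed path between its own endpoints.

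It remains to verify (1), the only step requiring real care. Here $|\Pi|$ is the number of valid $(v,x)$ pairs, which after a constant-per-coordinate shrinkage (restricting $S$-coordinates to $[0,r/2]$ and starts accordingly) is $|\Pi|\gtrsim |S|\cdot|\Gamma|/2^{O(d)}$, while trivially $|S|\cdot|\Gamma|\le|\Gamma|^2\le n^2$; a short computation shows the gap is $|\Pi|\gtrsim n^2\big/\bigl(2^{O(d)}\,\ell^{O(d)}\,\mathrm{poly}(d,r)\bigr)$. So we need the overhead $2^{O(d)}\,(\log n)^{O(d)}\,\mathrm{poly}(d,r)$ to be $n^{o(1)}$. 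Index the family by $d\to\infty$, take $r$ superpolynomial in $d$ (say $r=2^{d^3}$) and $\ell=\Theta(\log n)=\Theta(d^4)$, so that $n=2^{(1+o(1))d^4}$; then each of $2^{O(d)}$, $(\log n)^{O(d)}=2^{O(d\log d)}$, $\mathrm{poly}(d)$, and $r^{O(1)}=2^{O(d^3)}$ is $2^{o(d^4)}=n^{o(1)}$, whence $|\Pi|=n^{2-o(1)}$; finally, pad $G$ with isolated vertices to realize all sufficiently large $n$ (affecting none of (1)--(4)). The main obstacle is exactly this balancing act: $d$ must grow fast enough that both the $1/d$ fractional loss in the exponent of $|S|=r^{d-1}$ (versus the ambient $r^d$) and the $2^{O(d)}$ validity overhead become negligible, yet slowly enough that the $(\log n)^{O(d)}$ factor forced by the path length $\ell=\Theta(\log n)$ stays subpolynomial in $n$.
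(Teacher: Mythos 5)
Your construction is correct: the paraboloid-lifted convex set, the layered DAG, and the Jensen/Cauchy--Schwarz uniqueness argument all check out, and your parameter choice ($d\to\infty$, $r=2^{d^3}$, $\ell=\Theta(d^4)$) does make the $2^{O(d)}\ell^{O(d)}\mathrm{poly}(d,r)$ overhead $n^{o(1)}$, giving $|\Pi|=n^{2-o(1)}$. The paper does not prove this lemma at all --- it cites it as immediate from prior reachability-preserver lower bounds --- and what you have written is precisely the standard construction underlying those citations (and the high-dimensional analogue of the paper's own two-dimensional base graph in Section 6.1), so this is essentially the same approach, just spelled out.
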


Let $G = (V, E)$ and $\Pi$ be the $n$-node directed graph and associated collection of paths referenced in Lemma \ref{lem:rp_lbs}, respectively. We define a random graph $G^*$ as follows:
\begin{enumerate}
    \item Replace each node $v$ in $V$ with a copy of the directed clique graph $K_2$ with bidirectional edges. Denote the clique replacing node $v$ as $K_2^v$. 
    \item Fix an edge $(u, v) \in E$. Let $u^* \in V(K_2^u)$ be a node  sampled uniformly at random from $K_2^u$. Likewise, let $v^* \in V(K_2^v)$ be a node sampled uniformly at random from $K_2^v$. Add edge $(u^*, v^*)$ to graph $G^*$.  Repeat this random process for every edge in $E$. 
\end{enumerate}
This completes the construction of random graph $G^* = (V^*, E^*)$. We now define an associated set of paths $\Pi^*$ in $G^*$. Each path in $\Pi^*$ will be an image of a path in $\Pi$. Each edge $(u, v) \in E$ corresponds to a unique edge $(u^*, v^*)$ in $E^*$ between cliques $K_2^u$ and $K_2^v$. Let $\phi:E \mapsto E^*$ denote this injective mapping. Now fix a path $\pi \in \Pi$, and let $\pi = (v_1, \dots, v_k)$. For each edge $(v_i, v_{i+1}) \in E(\pi)$, let $e_i := \phi((v_i, v_{i+1}))$. Note that $e_i := (x_i, y_i) \subseteq V(K_2^{v_i}) \times V(K_2^{v_{i+1}})$, so $y_i, x_{i+1} \in V(K_2^{v_{i+1}})$, and in particular, either $y_i = x_{i+1}$ or $(y_i, x_{i+1}) \in E(K_2^{v_{i+1}})$. Then we define the image path $\pi^*$ in $G^*$ as 
$$
\pi^* = (x_1, y_1, x_2, y_2, \dots, x_i, y_i, \dots, x_{k-1}, y_{k-1}).
$$
We will use $\psi(\pi)$ to denote the image path $\pi^*$ of $\pi$ in $G^*$. Given a subset $\Pi' \subseteq \Pi$, we will use $\psi[\Pi']$ to denote the image of set $\Pi'$ under function $\psi$, i.e., $\psi[\Pi'] = \{\psi(\pi) \mid \pi \in \Pi'\}$.
We define $\Pi^*$ as the collection of image paths $\psi(\pi)$ of paths $\pi$ in $\Pi$, so
$\Pi^* = \psi[\Pi].$

We now define a family of DAGs $\mathcal{D}$ associated with our distribution of random graphs $G^*$ that will be useful in our analysis.

\paragraph{DAG Family $\mathcal{D}$.} We begin by defining a graph $\hat{G} = (V^*, \hat{E})$  over the same vertex set as $G^*$, and containing every possible edge in $G^*$. Formally, we define edge set $\hat{E}$ to be 
$$
\hat{E} = \{ V(K_2^u) \times V(K_2^v) \mid (u, v) \in E \} \cup \{ E(K_2^v) \mid  v \in V \}.
$$
Note that $G^* \subseteq \hat{G}$ for each graph $G^*$ in the support of our random graph distribution.

Our DAG family $\mathcal{D}$ will be the set of all edge-maximal acyclic subgraphs of $\hat{G}$. 
We define a DAG $D$ of $\hat{G}$ as follows. For each clique $K_2^v$, fix a total order on its $2$ vertices, and delete the edge in $K_2^v$ that does not respect this total order. After repeating this process for all cliques $K_2^v$, the resulting graph will be a DAG, since the original graph $G$ is a DAG. Define $\mathcal{D}$ to be the family of DAGs of $G^*$ that can be constructed from the above process. Note that since there are only $2$ choices of total orderings for each clique $K_2$, it follows that the size of $\mathcal{D}$ is 
$$
|\mathcal{D}| = 2^{|V|} = 2^n.
$$

For each DAG $D \in \mathcal{D}$, the graph $D \cap G^*$ is an acyclic subgraph of random graph $G^*$.\footnote{We use $D \cap G^*$ to  denote the graph with vertex set $V^*$ and edge set $ E(D) \cap E(G^*)$.} This allows us to reason about DAG covers of $G^*$ using the DAGs in graph family $\mathcal{D}$. We note that while $G^*$ is a random graph, the DAGs in $\mathcal{D}$ are fixed and depend only on graph $G$. 
The next claim  follows directly from the structure of graph $G^*$ as a DAG of strongly connected components $K_2^v$. 

\begin{claim}
For each acyclic subgraph $D$ of $G^*$, there is DAG $D' \in \mathcal{D}$ such that $D \subseteq  D' \cap G^*$.
    \label{clm:subdag}
\end{claim}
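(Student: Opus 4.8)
The plan is to argue that any acyclic subgraph $D$ of $G^*$ imposes only a ``partial'' orientation constraint on each clique $K_2^v$, and that this partial constraint can always be completed to one of the total orders used in the construction of $\mathcal{D}$. Recall that $G^*$ is a DAG of strongly connected components: contracting each clique $K_2^v$ to a single node yields precisely the DAG $G$ (since the edges of $E^*$ between distinct cliques are images of edges of $E$, and $G$ is acyclic). So the only way a cycle can arise inside a subgraph of $\hat G$ is if it lies entirely within a single clique $K_2^v$ — i.e., it uses \emph{both} oriented edges of some $K_2^v$. Equivalently, a subgraph of $\hat G$ is acyclic if and only if for every $v$ it contains at most one of the two edges of $K_2^v$.

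First I would make this reduction precise: given the acyclic subgraph $D \subseteq G^*$, for each clique $K_2^v$ record which (if any) of its two internal edges appears in $E(D)$; by the above, at most one appears. Now build a total order $<_v$ on the two vertices of $K_2^v$: if $D$ contains the edge $(a,b)$ with $a,b \in V(K_2^v)$, set $a <_v b$; if $D$ contains neither internal edge of $K_2^v$, pick either of the two total orders arbitrarily. Let $D'$ be the DAG in $\mathcal D$ obtained from this choice of total orders, i.e. $D'$ is $\hat G$ with, in each clique, the edge violating $<_v$ deleted. By construction $D'$ is acyclic (as argued in the paragraph defining $\mathcal D$, since $G$ is a DAG), so $D' \in \mathcal D$.

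It remains to check $D \subseteq D' \cap G^*$, i.e. $E(D) \subseteq E(D') \cap E(G^*)$. Since $D$ is a subgraph of $G^*$ we have $E(D) \subseteq E(G^*)$ and $V(D) = V^* = V(D')$, so it suffices to show $E(D) \subseteq E(D')$. Take any edge $e \in E(D)$. If $e$ is an edge of some clique $K_2^v$, then by our choice of $<_v$ it is the edge respecting $<_v$, hence it survives into $D'$. If $e = (u^*, v^*)$ is an inter-clique edge with $u^* \in V(K_2^u)$, $v^* \in V(K_2^v)$ and $(u,v) \in E$, then $e \in \hat E$ by the definition of $\hat E$, and the construction of $D'$ from $\hat G$ only ever deletes \emph{intra}-clique edges, so $e \in E(D')$. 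This proves the containment.

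The only genuinely substantive point is the reduction in the first paragraph — that a subgraph of $\hat G$ is acyclic iff it contains at most one internal edge per clique — and this I expect to be the main (though still routine) obstacle: it requires observing that $\hat G$ with one edge deleted per clique is exactly ``$G$ with each node blown up into two vertices joined by a single arc,'' which is acyclic because $G$ is, and conversely that keeping both arcs of any $K_2^v$ immediately creates a length-two cycle. Everything after that is bookkeeping about which edges the construction of $\mathcal D$ deletes (only intra-clique ones) versus which edges $D$ can contain.
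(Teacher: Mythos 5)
Your proof is correct and takes essentially the same approach as the paper. The paper derives the per-clique orders from a topological ordering of $D$, whereas you read them off directly from which $K_2^v$-edge (if any) appears in $D$, using acyclicity to ensure at most one per clique — a minor bookkeeping variant that arrives at the same $D'$.
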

\begin{proof}
    Fix an acyclic subgraph $D$ of $G^*$. 
    DAG $D$ has an associated total ordering $\sigma$ of the vertices $V(G^*)$ of $G^*$. This total ordering induces a total ordering $\sigma[V(K_2^v)]$ on the vertices $V(K_2^v)$ for each $v \in V(G^*)$.    
    By our construction of family $\mathcal{D}$, there exists a DAG $D' \in \mathcal{D}$ such that $D'$ contains every edge in $E(K_2^v)$ that respects total order $\sigma[V(K_2^v)]$ for all $v \in V(G^*)$. Then for every edge $e \in E(D)$ with $e \in E(K_2^v)$ for some $v \in V(G^*)$, it follows that $e \in E(D') \cap E(G^*)$. Additionally, note that every edge in $E(G^*)$ between distinct strongly connected components $K_2^u$ and $K_2^v$ in $G^*$, where $u \neq v$, is preserved in $D'$. We conclude that $D \subseteq D' \cap G^*$. 
\end{proof}

We now verify that every DAG $D \in \mathcal{D}$ inherits the unique directed path property of our original graph $G$ (Property 4 of Lemma \ref{lem:rp_lbs}).

\begin{claim}
    Let $D \in \mathcal{D}$ be a DAG in $\mathcal{D}$. 
    Fix  an $s^* \leadsto t^*$ path $\pi^* \in \Pi^*$. 
    Then   $s^*$ can reach  $t^*$ in $D \cap G^*$  only if $\pi^* \subseteq D \cap G^*$.  
    \label{clm:uniqueness}
\end{claim}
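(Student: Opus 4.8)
The plan is to show that if $s^*$ can reach $t^*$ in $D \cap G^*$, then the reachability witness must be exactly the image path $\pi^*$, by ``pulling back'' the reachability to the base graph $G$ and invoking the uniqueness of $\pi$ there (Property 4 of Lemma \ref{lem:rp_lbs}). First I would recall the structure of $D \cap G^*$: its vertex set is $V^*$, and every edge is either (i) an inter-clique edge $\phi((u,v))$ for some $(u,v) \in E$ that was actually sampled into $G^*$, or (ii) an intra-clique edge of some $K_2^v$ that survived in $D$. The key point is that $G^*$ (and hence $D \cap G^*$) has at most \emph{one} inter-clique edge between any two cliques $K_2^u$ and $K_2^v$, namely $\phi((u,v))$, since each edge of $E$ contributes exactly one edge of $E^*$ and $G$ is a DAG (so there is no edge $(v,u)$).

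Next I would argue the pullback. Suppose $P^*$ is a simple $s^* \leadsto t^*$ path in $D \cap G^*$, where $s^* = x_1$ and $t^* = y_{k-1}$ are the endpoints of $\pi^* = \psi(\pi)$ for $\pi = (v_1, \dots, v_k) \in \Pi$. Contract each clique $K_2^v$ to a single node $v$; this maps $P^*$ to a walk $P$ in $G$ from $v_1$ to $v_k$, where each inter-clique edge of $P^*$ of type (i), say $\phi((u,v))$, maps to the edge $(u,v) \in E$, and each intra-clique edge of type (ii) is contracted away. I would check that $P$ is a genuine directed walk in $G$ (consecutive inter-clique edges of $P^*$ share a clique, which is exactly the shared endpoint in $G$), and that since $G$ is a DAG, $P$ is in fact a simple directed path from $v_1$ to $v_k$. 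By Property 4 of Lemma \ref{lem:rp_lbs}, the unique directed path in $G$ between $v_1$ and $v_k$ is $\pi$, so $P = \pi$, i.e.\ $P^*$ uses exactly the inter-clique edges $\phi((v_i,v_{i+1})) = e_i = (x_i, y_i)$ for $i = 1, \dots, k-1$, in order.

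Finally I would recover the intra-clique edges. Having pinned down that $P^*$ enters clique $K_2^{v_{i+1}}$ at $y_i$ (the head of $e_i$) and leaves it at $x_{i+1}$ (the tail of $e_{i+1}$), the portion of $P^*$ inside $K_2^{v_{i+1}}$ must be a directed path from $y_i$ to $x_{i+1}$ within $K_2^{v_{i+1}}$; if $y_i \neq x_{i+1}$ this forces the edge $(y_i, x_{i+1})$, and if $y_i = x_{i+1}$ there is nothing to add. In either case this is precisely the piece of $\pi^*$ between $e_i$ and $e_{i+1}$ in the definition of $\psi(\pi)$, so $P^* = \pi^*$, and in particular $\pi^* \subseteq D \cap G^*$.

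\textbf{Main obstacle.} The one step that needs genuine care is the pullback argument: I need to rule out that $P^*$ wanders through ``extra'' cliques or reuses inter-clique edges in a way that, after contraction, produces a walk in $G$ that is not simple or not a path between $v_1$ and $v_k$ — and then argue that uniqueness of $\pi$ in $G$ still applies. The cleanest route is to use that $G$ is a DAG: any directed walk in a DAG from $a$ to $b$ can be shortcut to a simple directed $a$--$b$ path, and uniqueness forces that path to be $\pi$, which forces the walk to have visited exactly the vertices of $\pi$ in order (a walk in a DAG cannot revisit a vertex), hence $P$ itself equals $\pi$. I should also double-check the edge case where $s^* \leadsto t^*$ holds trivially via a length-zero or very short path, but since $s^* \in K_2^{v_1}$ and $t^* \in K_2^{v_k}$ with $v_1 \neq v_k$ (as $\pi$ has $\geq 10 \log n$ edges by Property 2), no such degeneracy arises.
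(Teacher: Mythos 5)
Your proof is correct and takes essentially the same approach as the paper: contract the cliques to pull the reachability witness back to a walk in $G$, use the DAG property of $G$ plus the uniqueness of $\pi$ (Property~4 of Lemma~\ref{lem:rp_lbs}) to identify the sequence of cliques visited, note that $G^*$ has exactly one inter-clique edge between consecutive cliques, and then observe that fixed entry/exit vertices force the intra-clique portion. Your explicit check that the contracted walk is a simple path (because $G$ is a DAG) is a detail the paper leaves implicit, but the structure of the argument is identical.
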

\begin{proof}
Recall that path $\pi^* \in \Pi^*$ is the  image of an $s \leadsto t$ path $\pi \in \Pi$. 
    This claim will follow straightforwardly from the property that path $\pi$ is the unique path between $s$ and $t$ in graph $G$. Consider an arbitrary $s^* \leadsto t^*$ path $\pi^*_1$ in $D \cap G^*$. Contract each clique $K_2^v$ in $G^*$ into a single node $v$. The resulting graph will be identical to our original graph $G$. Let $\pi'$ (respectively, $\pi_1'$) denote the image of path $\pi^*$ (respectively, path $\pi^*_1$)  after this contraction operation. Since $\pi$ is the unique $s \leadsto t$ path in $G$, we conclude that $\pi' = \pi'_1 = \pi$.

    As a consequence, we know that paths $\pi^*$ and $\pi_1^*$ in $D \cap G^*$  travel through the same sequence of cliques $K_2^v$ in $D \cap G^*$. 
    Let $K_2^{v_1}, \dots, K_2^{v_k}$ denote the sequence of cliques passed through by paths $\pi^*$ and $\pi_1^*$ in $G^*$, where $v_1 = s$ and $v_k = t$. By our construction of $G^*$, there is exactly one  edge between clique $K_2^{v_i}$ and $K_2^{v_{i+1}}$ in $G^*$. Formally,  
    $$
    |E(G^*) \cap (V(K_2^{v_i}) \times V(K_2^{v_{i+1}}))| = 1,
    $$
    for all $i \in [1, k-1]$. As a consequence, paths $\pi^*$ and $\pi_1^*$ must use exactly the same edge in $V(K_2^{v_i}) \times V(K_2^{v_{i+1}})$ between cliques $K_2^{v_i}$ and $K_2^{v_{i+1}}$. Finally, since paths $\pi^*$ and $\pi_1^*$ are contained in DAG $D$, we conclude that paths $\pi^*$ and $\pi_1^*$ are identical when restricted to any clique $K_2^{v_i}$. Formally,  $$\pi^* \cap V(K_2^{v_i}) =  \pi_1^*\cap V(K_2^{v_i}),$$
    for all $i \in [1, k]$. 
    We conclude that $\pi_1^* = \pi^*$, a contradiction.
\end{proof}

Our eventual goal is to argue that with high probability, for each DAG $D \in \mathcal{D}$, the DAG subgraph $D \cap G^*$ of $G^*$  cannot preserve reachability between many pairs of vertices in $G^*$. We begin with the following simple claim about the probability a DAG $D \cap G^*$ preserves a short path in $G^*$ (see \Cref{fig:clique_lb} for a visualization). 

\begin{figure}[h]
\centering
\includegraphics[width=0.75\textwidth]{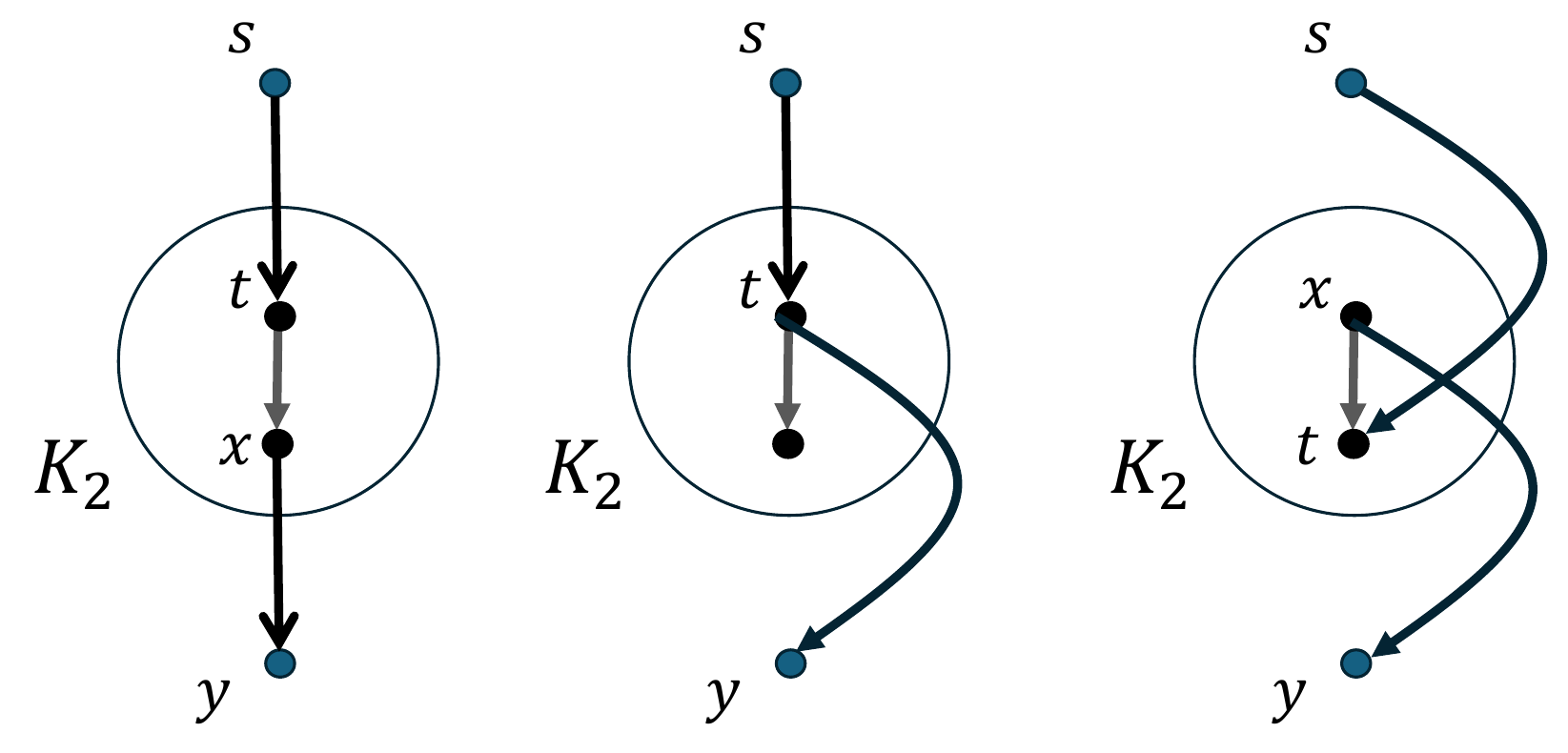} \label{fig:clique_lb}
\caption{A visualization of the different cases in \Cref{clm:small_paths}. These cases correspond to three possibilities of how edges entering and exiting a clique $K_2^v$ in $G^*$ are attached to vertices in $K_2^v$.}
\end{figure}

\begin{claim}
\label{clm:small_paths}
Fix a DAG $D \in \mathcal{D}$.  
Fix two edges $(u, v), (v, w) \in E$ in graph $G$, and let $\phi(u, v) = (s, t)$ and $\phi(v, w) = (x, y)$. 
Then path $\pi^* = (s, t, x, y)$ is contained in DAG $D \cap G^*$ with probability at most $3/4$.
\end{claim}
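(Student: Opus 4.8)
The plan is to analyze the random choices made at the clique $K_2^v$, which is the only clique where $\pi^*$ could fail to embed into $D$. The path $\pi^* = (s,t,x,y)$ enters $K_2^v$ via the edge $(s,t)$ (so $t \in V(K_2^v)$) and exits via the edge $(x,y)$ (so $x \in V(K_2^v)$), and $s \in V(K_2^u)$, $y \in V(K_2^w)$ are outside $K_2^v$. Since $D$ is a DAG whose restriction to each $K_2^v$ keeps exactly one of the two directed edges consistent with a fixed total order $\sigma[V(K_2^v)]$, and since the inter-clique edges $(s,t)$ and $(x,y)$ are always present in $D \cap G^*$ (they survive whenever the corresponding $G$-edge is chosen, which it is by definition of $\phi$), the path $\pi^*$ is contained in $D \cap G^*$ if and only if the ``transit'' through $K_2^v$ is realized in $D$. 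Concretely, either $t = x$ (the two inter-clique edges happen to attach to the same vertex of $K_2^v$, so no internal edge is needed) or $t \ne x$ and we need the directed edge $(t,x) \in E(K_2^v)$ to be present in $D$, i.e. $(t,x)$ respects $\sigma[V(K_2^v)]$.

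First I would enumerate the possibilities for the unordered pair $\{t,x\}$ determined by the independent uniform choices: $t$ is uniform in the $2$ vertices of $K_2^v$ and $x$ is uniform in the $2$ vertices of $K_2^v$, independently (these are distinct random choices since $(s,t) = \phi(u,v)$ and $(x,y) = \phi(v,w)$ correspond to distinct edges of $G$). With probability $1/2$ we get $t = x$, in which case $\pi^*$ is automatically contained in $D \cap G^*$. With probability $1/2$ we get $t \ne x$, and then exactly one of the two orderings of the pair is consistent with $D$'s fixed order $\sigma[V(K_2^v)]$: since $D$ keeps the edge $(t,x)$ precisely when $t$ precedes $x$ in $\sigma$, and conditioned on $\{t,x\}$ being the two distinct vertices, the realized ordered pair $(t,x)$ is equally likely to be either of the two orientations, so $\pi^*$ is contained in $D \cap G^*$ with conditional probability $1/2$. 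Combining: the total probability is $1/2 \cdot 1 + 1/2 \cdot 1/2 = 3/4$, giving the claimed bound (in fact an equality).

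The only subtlety — and the step I would be most careful about — is checking that no other clique besides $K_2^v$ can obstruct $\pi^*$: the endpoints $s$ and $y$ live in cliques $K_2^u$ and $K_2^w$ but $\pi^*$ only touches a single vertex of each of those cliques, so it uses no internal edge there, and all inter-clique edges on $\pi^*$ are present by construction. Hence the event ``$\pi^* \subseteq D \cap G^*$'' depends only on the two independent choices at $K_2^v$, which justifies the clean computation above. (The three cases in Figure \ref{fig:clique_lb} are exactly: $t = x$; $t \ne x$ with $(t,x)$ consistent with $\sigma$; and $t \ne x$ with $(t,x)$ inconsistent with $\sigma$.)
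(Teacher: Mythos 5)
Your proposal is correct and takes essentially the same approach as the paper: both observe that the inter-clique edges $(s,t)$ and $(x,y)$ are always present in $D \cap G^*$, so containment of $\pi^*$ reduces to the transit through $K_2^v$, and both compute the probability by the two cases $t=x$ (probability $1/2$, always succeeds) and $t\neq x$ (probability $1/2$, then succeeds with conditional probability $1/2$ since $(t,x)$ must respect the fixed total order on $K_2^v$), yielding $3/4$. Your explicit note that the two attachment choices are independent because $(u,v)$ and $(v,w)$ are distinct edges of $G$, and that no other clique can obstruct $\pi^*$, are good additions that the paper leaves implicit.
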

\begin{proof}
We observe that edges $(s, t)$ and $(x, y)$ are always contained in graph $D \cap G^*$.  Then path $\pi^*$ is contained in $D \cap G^*$ if and only if $t = x$ or $t <_D x$, where $<_D$ is the total order of DAG $D$.

Let $V(K_2^v) = {v_1^*, v_2^*}$, where $v_1^* <_D v_2^*$. Then the probability that $t = x$ is
$$
\Pr[t = x] = \Pr[t = x = v_1^* ] + \Pr[t = x = v_2^* ] = 1/2.
$$
Likewise, the probability that $t <_D x$ is
$$
\Pr[t <_D x] = \Pr[t = v_1^* \text{ and } x = v_2^*] = 1/4. 
$$
Then path $\pi^*$ is contained in $D \cap G^*$ with probability at most $3/4$.  
\end{proof}

In Claim \ref{clm:small_paths}, we showed that certain short paths in $G^*$ survive in $D \cap G^*$ for a fixed DAG $D \in \mathcal{D}$ with probability $3/4$. In the following claim, we will use Claim \ref{clm:small_paths} to argue that $D \cap G^*$ contains many paths in $\Pi^*$ with very low probability.

\begin{claim}
Fix a DAG $D \in \mathcal{D}$, and fix a subset $\Pi^*_1 \subseteq \Pi^*$ of size $|\Pi_1^*| = k$. The probability that $D \cap G^*$ contains all $k$ paths in $\Pi_1^*$ as subgraphs is at most $\left(\frac{3}{4}\right)^{10k \log n}$.  
\label{claim:dag_cover}
\end{claim}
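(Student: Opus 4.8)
The plan is to decompose each path $\pi^* \in \Pi_1^*$ into short sub-paths of the form handled by Claim~\ref{clm:small_paths}, and then argue that a suitable collection of these short sub-paths, spread across the $k$ paths, survives \emph{independently}. First I would note that each $\pi^* \in \Pi_1^*$ is the image of a path $\pi \in \Pi$ with $|\pi| \geq 10 \log n$ edges (Property~2 of Lemma~\ref{lem:rp_lbs}). Writing $\pi = (v_1, \dots, v_{k'})$ with $k' \geq 10\log n + 1$, for each internal vertex $v_{i+1}$ the path $\pi^*$ contains a sub-path of the form $(x_i, y_i, x_{i+1}, y_{i+1})$ passing through clique $K_2^{v_{i+1}}$, exactly the situation of Claim~\ref{clm:small_paths}; there are at least $10\log n - 1$ such internal cliques, but to keep things clean I would pick one such ``transit event'' per edge of $\pi$ except the last, giving at least $10\log n$ transit events $($or just say ``at least $5\log n$ disjoint transit events'' if a safety factor is wanted; the statement's constant $10$ works if we are slightly careful$)$. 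For each transit event, Claim~\ref{clm:small_paths} says the corresponding length-$3$ sub-path of $\pi^*$ lies in $D \cap G^*$ with probability at most $3/4$, where the randomness is the uniform choice of attachment vertex $u^*$ for each edge $(u,v)$ of the base graph $G$.

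The key step is independence. The event that the transit sub-path $(x_i, y_i, x_{i+1}, y_{i+1})$ through $K_2^{v_{i+1}}$ survives in $D \cap G^*$ depends, by the proof of Claim~\ref{clm:small_paths}, only on the random attachment choices for the two base edges $(v_i, v_{i+1})$ and $(v_{i+1}, v_{i+2})$ of $\pi$ — more precisely only on where $y_i$ and $x_{i+1}$ attach inside $K_2^{v_{i+1}}$. Since the paths in $\Pi$ are pairwise edge-disjoint (Property~3 of Lemma~\ref{lem:rp_lbs}), the base edges involved in transit events belonging to \emph{different} paths $\pi \in \Pi_1^*$ are all distinct, and the randomness for distinct base edges is independent by construction of $G^*$. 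Moreover, within a single path $\pi$, consecutive transit events through $K_2^{v_{i+1}}$ and $K_2^{v_{i+2}}$ share the base edge $(v_{i+1}, v_{i+2})$, so I cannot take all of them as independent; instead I take every other transit event along $\pi$ (the even-indexed ones, say), which uses disjoint pairs of base edges and hence gives $\geq 5\log n$ mutually independent survival events per path. Aggregating over all $k$ paths in $\Pi_1^*$ yields at least $5k\log n$ mutually independent events, each of probability at most $3/4$ of occurring, and $D \cap G^*$ containing every path in $\Pi_1^*$ requires \emph{all} of them to occur, so the probability is at most $(3/4)^{5k\log n}$. If the target exponent $10k\log n$ is needed exactly rather than $5k\log n$, I would instead use that each path has $\geq 10\log n$ edges hence $\geq 10\log n - 1$ internal cliques, and handle the ``every other event'' loss by noting $10\log n - 1 \geq 10\log n / 2 \cdot (1+o(1))$ is too weak — so more carefully, I'd observe each length-$3$ transit sub-path can be associated with the single base edge $(v_i,v_{i+1})$ and the attachment of its \emph{head} endpoint $y_i$ in $K_2^{v_{i+1}}$ together with the attachment of the \emph{tail} of the next edge; restructuring the partition so that the conditioning variable for the $i$-th event is the attachment of $y_i$ alone lets consecutive events be conditionally analyzed, but cleanest is simply to take disjoint triples of base edges per transit event, yielding $\geq (10\log n)/3$ events, still short of $10\log n$. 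Given this, I would in the writeup just prove the bound $(3/4)^{\lfloor (10\log n)/2 \rfloor \cdot k} = (3/4)^{\Theta(k\log n)}$, which is all that is used downstream; if the paper insists on the literal constant, one re-examines Claim~\ref{clm:small_paths} to extract a per-clique survival probability that, composed over one out of every two cliques, still beats the stated $(3/4)^{10k\log n}$ after adjusting the constant in Property~2's $10\log n$.

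The main obstacle is precisely this dependency bookkeeping between consecutive transit events along a single path: the attachment vertices are chosen per base-edge, and two consecutive length-$3$ sub-paths of $\pi^*$ share one base edge, so a naive product-of-probabilities over all internal cliques is not justified. I expect to resolve it by the ``every other clique'' (or ``every third clique'') selection described above, combined with the edge-disjointness of $\Pi$ to get independence across different paths; the only remaining care is to make sure enough events remain after thinning, which follows from Property~2 of Lemma~\ref{lem:rp_lbs} as long as one is willing to absorb a constant-factor loss into the exponent. I would present the argument at the level of: fix $D$; for each $\pi^* \in \Pi_1^*$ exhibit an independent family $\{A_{\pi^*,j}\}_j$ of ``transit-survival'' events with $\Pr[A_{\pi^*,j}] \le 3/4$ and $|\{j\}| \ge c\log n$; note $\bigcup_{\pi^*} \{A_{\pi^*,j}\}$ is globally independent by edge-disjointness; and conclude $\Pr[\pi^* \subseteq D\cap G^* \ \forall \pi^* \in \Pi_1^*] \le \prod \Pr[A_{\pi^*,j}] \le (3/4)^{ck\log n}$.
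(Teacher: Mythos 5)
Your proof matches the paper's exactly: both decompose each $\pi^* \in \Pi_1^*$ into length-$3$ transit subpaths (the events $A_i$ of the paper's proof), observe that $A_i$ and $A_j$ are independent for $|i-j| \ge 2$, derive $\Pr[\pi^* \subseteq D \cap G^*] \le (3/4)^{|\pi|/2} \le (3/4)^{5\log n}$ per path using $|\pi| \ge 10\log n$, and multiply over the $k$ edge-disjoint paths. Your worry about hitting the literal constant $10$ is moot: the paper's own proof establishes only $(3/4)^{5k\log n}$, and that is the strength invoked in \Cref{lem:path_coverage}, so the ``$10$'' in the claim statement is a benign overclaim; incidentally, the consecutive-event dependency you flagged does not actually arise --- each base edge randomizes its two clique attachments independently, so the event $A_i$ (which depends only on the head attachment of $e_i$ and the tail attachment of $e_{i+1}$, both living in the clique shared by $e_i$ and $e_{i+1}$) uses disjoint random bits from $A_{i+1}$, making all the $A_i$ along a path mutually independent --- but the conservative every-other thinning used by both you and the paper suffices.
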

\begin{proof}
Fix a path $\pi^* \in \Pi_1^*$ that is the image of a path $\pi \in \Pi$. Let $e_i$ be the $i$th edge of path $\pi$ for $i \in [1, |\pi|]$, and let $(u_i, v_i) = \phi(e_i)$. Fix an $i \in [1, |\pi|]$, and let $A_i$ be the event that path $(u_i, v_i, u_{i+1}, v_{i+1})$ is contained in DAG $D \cap G^*$.
We will need the following observations:
\begin{itemize}
    \item Path $\pi^*$ is contained in DAG $D \cap G^*$  only if event $A_i$ holds for every $i \in [1, |\pi|]$. 
    \item By Claim \ref{clm:small_paths}, $\Pr[A_i] \leq 3/4$. 
    \item Events $A_i$ and $A_j$ are independent when $|i-j| \ge 2$. 
\end{itemize}
Combining these observations, we get that
$$
\Pr[\pi^* \subseteq D \cap G^*] \leq \Pr[\cap_i A_i] \leq \left(\frac{3}{4}\right)^{|\pi|/2 } \leq \left( \frac{3}{4} \right)^{5 \log n},
$$
where the final inequality follows from the fact that $|\pi| \geq 10\log n$ by Lemma \ref{lem:rp_lbs}.
Also by Lemma \ref{lem:rp_lbs}, the paths in $\Pi$ are pairwise-edge disjoint. This implies that for distinct paths $\pi^*_1, \pi^*_2 \in \Pi^*$, the event that $\pi^*_1 \subseteq D \cap G^*$ is independent of the event that $\pi^*_2 \subseteq D \cap G^*$. We conclude that
$$\Pr[\text{$\pi^* \subseteq D \cap G^*$ for all $\pi^* \in \Pi_1^*$}] = \prod_{\pi^* \in \Pi_1^*}\Pr[\pi^* \subseteq D \cap G^*] \leq \left( \frac{3}{4} \right)^{5k \log n}.$$
\end{proof}

We will now combine the very strong probability bound proved in Claim \ref{claim:dag_cover} with a union bound to argue that no DAG subgraph $D \cap G^*$, where $D \in \mathcal{D}$, can contain many paths in $\Pi^*$ as subgraphs.

\begin{lemma}
    With nonzero probability, for each DAG $D \in \mathcal{D}$, graph $D \cap G^*$ contains at most $n$ distinct paths in $\Pi^*$ as subgraphs. 
    \label{lem:path_coverage}
\end{lemma}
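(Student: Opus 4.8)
The plan is to combine the per-DAG tail bound from Claim~\ref{claim:dag_cover} with a union bound over the DAG family $\mathcal{D}$ and over all size-$(n+1)$ subsets of $\Pi^*$. First I would fix a DAG $D \in \mathcal{D}$ and a subset $\Pi^*_1 \subseteq \Pi^*$ with $|\Pi^*_1| = n+1$. By Claim~\ref{claim:dag_cover}, the probability that $D \cap G^*$ contains all $n+1$ paths of $\Pi^*_1$ as subgraphs is at most $(3/4)^{10(n+1)\log n}$. Call this a \emph{bad event} for the pair $(D, \Pi^*_1)$. If $D \cap G^*$ contains more than $n$ paths of $\Pi^*$, then in particular it contains some $(n+1)$-element subset of $\Pi^*$, so the event in the lemma statement fails only if some bad event occurs.

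Next I would bound the number of pairs $(D, \Pi^*_1)$ we union over. We have $|\mathcal{D}| = 2^n$, and the number of size-$(n+1)$ subsets of $\Pi^*$ is at most $|\Pi^*|^{n+1} = (n^{2-o(1)})^{n+1} \le n^{2(n+1)} = 2^{2(n+1)\log n}$. So the total number of bad events is at most $2^n \cdot 2^{2(n+1)\log n} \le 2^{3(n+1)\log n}$ for $n$ large. By the union bound, the probability that \emph{some} bad event occurs is at most
$$
2^{3(n+1)\log n} \cdot \left(\tfrac{3}{4}\right)^{10(n+1)\log n} = \left(2^3 \cdot \left(\tfrac{3}{4}\right)^{10}\right)^{(n+1)\log n}.
$$
Since $2^3 \cdot (3/4)^{10} = 8 \cdot (3/4)^{10} < 1$ (indeed $(3/4)^{10} < 2^{-4}$, so the product is less than $1/2$), this quantity is strictly less than $1$. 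Hence with nonzero probability no bad event occurs, which is exactly the statement that for every $D \in \mathcal{D}$, $D \cap G^*$ contains at most $n$ paths of $\Pi^*$ as subgraphs.

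The main obstacle — really the only delicate point — is checking that the union-bound arithmetic genuinely closes, i.e. that the exponential savings $(3/4)^{10k\log n}$ from Claim~\ref{claim:dag_cover} beats the combined entropy $2^n$ of the DAG family and $n^{O(n)}$ of the subset choices. This is precisely why the base graph's paths are required to have length at least $10\log n$ (Property 2 of Lemma~\ref{lem:rp_lbs}): the constant $10$ is tuned so that $10 \log(4/3) > 3\log 2$ with room to spare, giving a product base below $1$. I would also remark that the choice of threshold $n$ in the statement is not special; any threshold that is $\omega(1)$ would make the factor $2^n$ (from $|\mathcal{D}|$) and the $n^{O(\cdot)}$ subset count both absorbed, but using threshold $n$ keeps the later quantitative conclusion (that $n^{1-o(1)}$ DAGs are needed, since $|\Pi^*| = n^{2-o(1)}$ and each DAG covers $\le n$ paths) clean.
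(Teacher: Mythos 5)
Your proof is correct and follows essentially the same route as the paper: union-bound over the $2^n$ DAGs in $\mathcal{D}$ and over all size-$(n{+}1)$ subsets of $\Pi^*$, plugging in the per-pair tail bound of Claim~\ref{claim:dag_cover}, exactly as the paper does in its proof of Lemma~\ref{lem:path_coverage}. One caveat worth noting: you take the claim statement's exponent $(3/4)^{10k\log n}$ at face value, which is what makes your arithmetic ``close with room to spare'' ($8\cdot(3/4)^{10}<1$); however, the claim's own proof (and the paper's Lemma~\ref{lem:path_coverage} proof, which writes $(4/3)^{-5k\log n}$) actually only establishes the weaker bound $(3/4)^{5k\log n}$, under which $8\cdot(3/4)^{5}>1$ and the ``constant base $<1$'' shortcut fails---the union bound then only closes because $n^{2}\cdot(3/4)^{5\log n}=n^{2-5\log_2(4/3)}=n^{-\Theta(1)}$ eventually overwhelms the $2^n$ term once $k\ge n$ and $n$ is sufficiently large, which is all the asymptotic statement needs.
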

\begin{proof}
By \cref{claim:dag_cover} and the union bound, the probability that there exists a DAG $D \in \mathcal{D}$ and a set $\Pi' \subseteq \Pi^*$ of size $|\Pi'| = k$ such that each path in $\Pi'$ is a subgraph of $D \cap G^*$ is at most
$$
|\mathcal{D}| \cdot {|\Pi^*| \choose k} \cdot \left( \frac{4}{3} \right)^{-5k \log n} \leq 2^n \cdot n^{2k} \cdot \left( \frac{4}{3} \right)^{-5k \log n} < 1,
$$
when $k \geq n$.  Note that if a graph $D \cap G^*$ contains greater than $k$ distinct paths in $\Pi^*$ as subgraphs, then it must contain a set $\Pi' \subseteq \Pi^*$ of size $|\Pi'| = k$, so the claim follows. 
\end{proof}

From now on, we treat graph $G^*$ as a fixed graph that satisfies \Cref{lem:path_coverage}.  This is without loss of generality, since \Cref{lem:path_coverage} holds with nonzero probability. Intuitively, it should be difficult to construct a DAG cover for graph $G^*$ because any DAG subgraph of $G^*$ can only preserve at most $n$ paths in $\Pi^*$ by \Cref{lem:path_coverage}.  
We are ready to prove \cref{thm:dag_cover_n_setting}. 

\fnlower*
\begin{proof}
Let $D_1, \dots, D_g$ be a reachability-preserving DAG cover of $G^*$ with at most $|\Pi^*|/2 = n^{2-o(1)}$ additional edges. Let $E'$ be the set of additional edges used in the DAG cover. 
From the definition of a DAG cover, we require that $E' \subseteq TC(G^*)$. 
In particular, we may assume without loss of generality $E' \subseteq TC(G^*) \setminus E(G^*)$. This implies that every edge $e \in E'$ is between distinct clique subgraphs $K_2^u$ and $K_2^v$ in $G^*$, where $u \neq v$. 

We define a subset $\Pi_1^* \subseteq \Pi^*$ of paths in $\Pi^*$ as follows:
$$
\Pi_1^* = \{ \pi^* \in \Pi^* \mid TC(\pi^*) \cap E' = \emptyset  \}.
$$
We observe that $|\Pi_1^*| \geq |\Pi^*| - |E'| \geq |\Pi^*|/2$, since every edge $e$ in $E'$ is contained in $TC(\pi^*)$  for at most one path $\pi^* \in \Pi^*$ by Property 3 of Lemma \ref{lem:rp_lbs}, and since $|E'| \le |\Pi^*|/2$.

Fix an $i \in [1, g]$. By \Cref{clm:subdag}, there exists a DAG $D_i' \in \mathcal{D}$ such that $D_i - E' \subseteq D_i' \cap G^*$. Additionally, fix  an  $s^* \leadsto t^*$ path $\pi^* \in \Pi_1^*$  such that $s^*$ can reach $t^*$ in $D_i$.  Since $\pi^* \in \Pi_1^*$, we know that $s^*$ can reach $t^*$ in $D_i - E'$.
Since $D_i - E' \subseteq D_i' \cap G^*$, we know that $s^*$ can reach $t^*$ in $D_i' \cap G^*$ as well. Finally, by \Cref{clm:uniqueness}, this implies that path $\pi^*$ is contained in graph $D_i' \cap G^*$, i.e., $\pi^* \subseteq D_i' \cap G^*$.

We have shown that for each $i \in [1, g]$, there exists a DAG $D_i' \in \mathcal{D}$ such that for every $s^* \leadsto t^*$ path $\pi^* \in \Pi_1^*$, if $s^*$ can reach $t^*$ in DAG $D_i$,  then path $\pi^*$ is a subgraph of  $D_i' \cap G^*$. However, by \Cref{lem:path_coverage}, there are at most $n$ distinct paths in $\Pi^*$ that are subgraphs of $D_i' \cap G^*$. We conclude that each DAG $D_i$ in our DAG cover preserves the reachability between the endpoints of at most $n$ distinct paths in $\Pi^*_1$. 

There are $|\Pi_1^*| = |\Pi^*|/2 = n^{2-o(1)}$ distinct paths in $\Pi_1^*$. Then the number of DAGs in our DAG cover is at least
$$
g \geq \frac{|\Pi_1^*|}{n} = \frac{n^{2-o(1)}}{n} = n^{1-o(1)}.
$$
\end{proof}

\subsection{An upper bound for DAG covers with $O\left( \frac{n^2(\log \log n)^4}{\log^2 n} \right)$ additional edges}

The goal of this section is to prove the following theorem.

\fnupper*

We will rely on the following theorem about the existence of small exact hopsets due to \cite{berman2010finding}.

\begin{lemma}[Theorem 5.1 of  \cite{berman2010finding}]
\label{lem:hopset}
    Let $G$ be an $n$-node directed, weighted graph. For every $d \in [1, n]$, there exists a collection of edges $H \subseteq TC(G)$ satisfying the following properties:
    \begin{enumerate}
        \item $|H| = O(n^{2}/d^2)$, and
        \item for every pair of nodes $(s, t) \in TC(G)$, there exists a $s \leadsto t$ path in $G \cup H$ with at most $d$ edges. 
    \end{enumerate}
\end{lemma}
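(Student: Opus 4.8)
The statement is the classical existence of a sparse hopset/shortcut set: $O(n^2/d^2)$ edges drawn from $TC(G)$ whose addition makes every reachable pair joined by a walk of at most $d$ edges (and, weighting the new edges by true distances, without ever creating a walk shorter than $\dist_G$). I would prove it by the standard ``net on shortest paths'' construction. First fix, for every $(u,v)\in TC(G)$, a canonical shortest path $P(u,v)$, chosen consistently so that every contiguous sub-path of a canonical path is again the canonical path between its own endpoints (break ties by a fixed lexicographic order on vertex sequences). Declare every shortcut edge $(x,y)$ I add to lie in $TC(G)$ with weight $\dist_G(x,y)$; then $G\cup H$ has no walk shorter than the corresponding $\dist_G$-value, so it suffices to exhibit, for each reachable $(s,t)$, a decomposition of $P(s,t)$ into at most $d$ canonical sub-paths, each realized by a single edge of $E(G)$ or of $H$.

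Construction: choose a set $A\subseteq V$ of size $O(n/d)$ that meets every canonical sub-path of hop-length at least $\lceil d/3\rceil$, and put into $H$ the complete $TC(G)$-graph on $A$; this is $O((n/d)^2)=O(n^2/d^2)$ edges, and for vertices $a,a'$ lying (in this order) on a common canonical path, the edge $a\to a'$ is exactly the canonical sub-path between them, hence a legitimate ``shortcut along the path''. Hop-bound: if $P(s,t)$ has more than $d$ hops, then its prefix of $\lceil d/3\rceil$ hops is itself a canonical path and therefore meets $A$, say at a vertex $a$ within $\lceil d/3\rceil$ hops of $s$; symmetrically its suffix meets $A$ at a vertex $a'$ within $\lceil d/3\rceil$ hops of $t$; the concatenation $s\leadsto a$ (in $G$, at most $\lceil d/3\rceil$ edges), $a\to a'$ (one $H$-edge), $a'\leadsto t$ (in $G$, at most $\lceil d/3\rceil$ edges) then has at most $d$ edges and its edges concatenate to $P(s,t)$. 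Pairs whose canonical path already has at most $d$ hops need nothing.

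The step I expect to be the main obstacle is justifying that $A$ may be taken of size $O(n/d)$. A single hitting set for all long canonical sub-paths can be forced to have size $\Theta((n/d)\log n)$ in the worst case --- a net of size $o((n/d)\log n)$ misses some length-$\Theta(d)$ canonical sub-path, by a coupon-collector argument over the $\sim n^2$ canonical paths --- which would only give $O(n^2\log^2 n/d^2)$. The logarithmic factors must therefore be removed by a recursive/hierarchical refinement rather than by one global net: one builds the nets over a geometric tower of vertex subsets and over geometrically increasing hop-scales $d,2d,4d,\dots$, so that the cost incurred at scale $2^j d$ is $O(n^2/(2^j d)^2)$ and these form a geometric series dominated by $O(n^2/d^2)$; making this bookkeeping precise, and checking that the sub-path shortcuts produced at different scales remain mutually consistent so that the $\le d$-hop decomposition still exists, is the technical heart of the argument. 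Finally I would dispatch the extreme regimes directly: for $d\ge n$ take $H=\emptyset$, since every shortest path already has at most $n-1<d$ hops; for $d=O(1)$ take $H=TC(G)$, which has $O(n^2)=O(n^2/d^2)$ edges and hop-bound $1$; the recursive construction fills in the range in between.
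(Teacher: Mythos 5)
This lemma is imported by the paper as a black box (it is quoted verbatim as Theorem~5.1 of \cite{berman2010finding}); the paper gives no proof of it, so there is nothing internal to compare your argument against. Judged on its own terms, your write-up contains a genuine gap, and you have in fact located it yourself: everything in your construction except the size of the net $A$ is the standard folklore argument (consistent canonical shortest paths, a hitting set for all canonical subpaths of $\approx d/3$ hops, the complete $TC(G)$-graph on the hitting set, and the prefix/suffix decomposition $s \leadsto a \to a' \leadsto t$), and that argument only yields $|A| = O((n/d)\log n)$ and hence $O(n^2\log^2 n/d^2)$ edges. The entire content of the cited theorem is precisely the removal of those logarithmic factors, and your proposal does not supply it.

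The fix you gesture at does not close the gap. Summing $O(n^2/(2^j d)^2)$ over geometrically increasing hop-scales is indeed a convergent series dominated by its first term, but that bookkeeping is orthogonal to the problem: at each fixed scale $2^j d$ you still need a hitting set of size $O(n/(2^j d))$ \emph{without} a log factor, and iterating over scales gives you no mechanism for producing one. (Greedy or random hitting sets for a system of $\Theta(n^2)$ sets of size $\Theta(d)$ over an $n$-element universe generically incur the $\log$ of the set-cover integrality gap; beating it requires exploiting the structure of consistent shortest paths in a way your sketch never does.) So the ``technical heart'' you defer is not a routine refinement but the actual theorem, and as written the proposal proves only the weaker $O(n^2\log^2 n/d^2)$ bound. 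The boundary cases you handle ($d = O(1)$ via $H = TC(G)$, $d \ge n$ via $H = \emptyset$) are fine but do not help in the main regime.
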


With this lemma, we are ready to prove Theorem \ref{thm:f_n_upper}. 

\begin{proof}[Proof of Theorem \ref{thm:f_n_upper}]
    Given an input graph $G$ on $n$ nodes, construct a collection of edges $H$ as in Lemma \ref{lem:hopset} with parameter $d = \frac{\log n}{(\log \log n)^2}$. Let $\Sigma$ denote the collection of all total orders on the vertices $V(G)$ of $G$. Let $\sigma_1, \dots, \sigma_k \in \Sigma$ be  total orders of $V(G)$ sampled from $\Sigma$ uniformly with replacement.
    We choose $k$ to be $k = (d+1)! \cdot 10 \log n$. 
    For each $\sigma_i$, let $D_i \subseteq G \cup H$ be the DAG obtained by keeping only the edges in $G\cup H$ that respect total order $\sigma_i$. 

    Fix a pair of nodes $(s, t) \in TC(G)$. Let $\pi$ be an $s \leadsto t$ shortest path in $G \cup H$ with at most $d$ edges. Such a path must exist by Lemma \ref{lem:hopset}.  Path $\pi$ survives in DAG $D_i$ if total ordering $\sigma_i$ respects the ordering of the $|\pi|+1 \leq d+1$ vertices of path $\pi$. Then path $\pi$ survives in $D_i$ with probability at least $\frac{1}{(d+1)!}$. Then the probability that $\pi$ does not survive in any of our $k$ total orders is at most
    $$
    \left( 1 - \frac{1}{(d+1)!} \right)^k = \left( 1 - \frac{1}{(d+1)!} \right)^{(d+1)! \cdot 10 \log n} \leq e^{-\frac{(d+1)!}{(d+1)!} \cdot 10 \log n} \leq n^{-5}.
    $$
    By union bounding over all $O(n^2)$ pairs of nodes $(s, t) \in TC(G)$, we conclude that DAG cover $\{D_i\}_{i \in [1, k]}$ is an exact distance-preserving DAG cover of $G$, with high probability. Finally, we observe that our DAG cover contains 
    $$
    k = (d+1)! \cdot 10 \log n = 2^{\Theta(d \log d)} 10 \log n = 2^{O\left(\frac{\log n}{\log \log n}\right)} 10 \log n = n^{o(1)}
    $$
    DAGs.
\end{proof}


\section{An upper bound for DAG covers with $\widetilde{O}(m)$ additional edges}

The goal of this section is to prove the following theorem.

\dagcover*

An essential tool we will need to prove Theorem \ref{thm:dag_cover} is the directed low-diameter decomposition introduced in~\cite{MR4537239} and improved recently in~\cite{bringmann2025near}. The key properties of this low-diameter decomposition are summarized in Theorem 2.1 of \cite{bringmann2025near}, which we restate below. 

\begin{lemma}[Theorem 2.1 of \cite{bringmann2025near}]
\label{lem:ldd}
    Let $G = (V, E, w)$ be an $n$-node weighted, directed graph with positive integer edge weights. Let $d$ be a positive integer such that $d = O(\text{\normalfont poly}(n))$. Then there exists a set of edges $E' \subseteq E$ with the following properties:
    \begin{enumerate}
        \item each SCC of $G \setminus E$ has (weak) diameter at most $d$ in $G$. 
        \item For every edge $e \in E$, 
        $$
        \Pr[e \in E'] \leq \alpha \cdot  \frac{w(e)}{d},
        $$
        where $\alpha = O(\log n \cdot \log \log n)$. 
        \item  The set $E'$ and the SCCs of $G \setminus E'$ can be computed in $\widetilde{O}(m)$ time.
    \end{enumerate}
\end{lemma}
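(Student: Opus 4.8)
The statement is the by-now-standard directed low-diameter decomposition, so the plan is to reprove it along the lines of~\cite{MR4537239}, importing the quantitative refinement of~\cite{bringmann2025near} that brings the overhead down to $\alpha=O(\log n\log\log n)$. The core primitive is a randomized bounded-radius ball: writing $B^{\mathrm{out}}_G(v,r)=\{u:\dist_G(v,u)\le r\}$ and $B^{\mathrm{in}}_G(v,r)=\{u:\dist_G(u,v)\le r\}$, we observe that deleting the out-boundary of a set $S$ (the edges from $S$ to $V\setminus S$) prevents every vertex of $S$ from reaching $V\setminus S$, so it confines SCCs. The algorithm $\textsc{Decompose}(G,d)$ is analyzed by induction on $|V(G)|$ (base case $|V(G)|\le1$: return $\emptyset$). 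The inductive step performs one \emph{carving}: pick a pivot $v$, draw a random radius $R\sim\min(\mathrm{Geom}(p),\,d/4)$ with $p=\Theta(\log\log n/d)$, add the out-boundary of $S:=B^{\mathrm{out}}_G(v,R)$ to $E'$, and recurse on $G[S]$ and $G[V\setminus S]$, returning the union of the three edge sets. Because the out-boundary of $S$ is removed, every SCC of $G\setminus E'$ is contained inside a single ball $B^{\mathrm{out}}_G(v,R)$ that arose in the recursion; since $R\le d/4$, any two vertices of such an SCC are within $\dist_G\le d/4$ of that ball's pivot, hence within $\dist_G\le d/2\le d$ of each other, which is Property~1. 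In-balls are used symmetrically by reversing all edge orientations.

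\textbf{Pivot choice and recursion depth.} For depth $O(\log n)$ we need each carving's two children to shrink by a constant factor, which is the Phase-1 content of~\cite{MR4537239}: sample $\Theta(\log n)$ candidate pivots and, using bounded/early-aborted searches, classify their radius-$(d/4)$ ball sizes. If some out- or in-ball has size between $n/4$ and $3n/4$, carve it — both children shrink. The degenerate cases are the directed-specific part: if a ball is \emph{heavy} (more than $3n/4$ vertices), its complement is already small, and if some pivot $v$ is heavy in \emph{both} directions then $K:=B^{\mathrm{out}}_G(v,d/4)\cap B^{\mathrm{in}}_G(v,d/4)$ has more than $n/2$ vertices and $G$-diameter at most $d/2$ (route through $v$), so carving around $v$ confines all of $K$ into one low-diameter piece while leaving a complement of at most $n/2$ vertices; if instead every radius-$(d/4)$ ball is \emph{small}, one carves off small balls until a constant fraction of $V$ has been removed. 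In every case the recursion has depth $O(\log n)$, and Property~1 is maintained because each carved ball still confines its internal SCCs. I would lift the precise case bookkeeping from~\cite{MR4537239,bringmann2025near}.

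\textbf{The probability bound.} Fix an edge $e=(x,y)$ of weight $w(e)$. In a single carving with pivot $v$ and radius $R\sim\mathrm{Geom}(p)$, the edge $e$ is removed only if $\dist(v,x)\le R<\dist(v,y)\le\dist(v,x)+w(e)$ in the current graph, and by memorylessness this has probability at most $1-(1-p)^{w(e)}\le p\,w(e)$; truncating $R$ at $d/4$ only decreases it. By the structure above $e$ is at risk in only $O(\log n)$ carvings in total (one per recursion level it survives to, plus $O(1)$ per degenerate batch), so a union bound gives $\Pr[e\in E']=O(\log n)\cdot p\,w(e)=O\big(\tfrac{\log n\log\log n}{d}\big)w(e)$, which is Property~2 with $\alpha=O(\log n\log\log n)$. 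The reason $p=\Theta(\log\log n/d)$ suffices — rather than the naive $\Theta(\log n/d)$ that the original decomposition uses — is that the truncation event $\{\mathrm{Geom}(p)>d/4\}$ can only threaten Property~1 during one of the $O(\log n)$ carvings confining a \emph{fixed} vertex, so we only need it to fail with probability $o(1/\mathrm{polylog}\,n)$ rather than $o(1/\mathrm{poly}\,n)$; this is precisely the refined accounting of~\cite{bringmann2025near}, which I would invoke for the constants.

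\textbf{Running time, and the main obstacle.} Each ball is computed by Dijkstra with early termination at distance $R$, or aborted once it has more than $\tfrac34 n$ vertices when merely \emph{testing} heaviness, costing $\widetilde{O}(|S|+|\partial S|)$; since the pieces partition $V$ at each of $O(\log n)$ levels, and treating each weakly connected component separately so that $n=O(m)$ there, the total work telescopes to $\widetilde{O}(m)$, which is Property~3. The genuine obstacle is exactly the directed degenerate cases: in undirected LDD a grown ball is always either balanced or fills the whole component, whereas in directed graphs a single pivot can have a huge out-ball and a tiny in-ball (or vice versa), so neither the ball nor its complement is by itself a good piece — detecting the both-heavy situation and exploiting its large, low-diameter intersection, while showing the one-sided-heavy situation is absorbed after a few small carvings, is where the directed decompositions do their real work. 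Once that is in place, the probability and running-time accounting — including the $\log^2 n\to\log n\log\log n$ sharpening — is routine.
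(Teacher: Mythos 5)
This lemma is not proved in the paper at all: it is imported verbatim as Theorem~2.1 of~\cite{bringmann2025near} and used as a black box, so there is no in-paper argument to compare your proposal against. Your sketch is a reasonable reconstruction of the proof strategy of~\cite{MR4537239,bringmann2025near} (random-radius ball carving, balanced/heavy/light pivot cases, memorylessness for the per-carving cut probability, $O(\log n)$ recursion depth), and the places where you defer to those papers for the exact bookkeeping are the right places to defer.

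There is, however, one concrete flaw in the step where you derive Property~1. You claim that because every SCC of $G\setminus E'$ is contained in some out-ball $B^{\mathrm{out}}_G(v,R)$ with $R\le d/4$, any two vertices $u,w$ of that SCC satisfy $\dist_G(u,w)\le d/2$. This inference is valid only for undirected graphs: containment in an out-ball gives $\dist_G(v,u)\le d/4$ and $\dist_G(v,w)\le d/4$, but bounding $\dist_G(u,w)\le\dist_G(u,v)+\dist_G(v,w)$ requires control of $\dist_G(u,v)$, which an out-ball does not provide. This is precisely the directed-specific difficulty you yourself identify later as ``the genuine obstacle,'' and the correct resolution is the one you sketch only for the degenerate case: the diameter guarantee comes from recursing on $G[S]$ until the recursion bottoms out at singletons or at a both-heavy core $K=B^{\mathrm{out}}_G(v,d/4)\cap B^{\mathrm{in}}_G(v,d/4)$, where routing $u\to v\to w$ gives weak diameter at most $d/2$. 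A carved out-ball is not itself a final low-diameter piece, so Property~1 cannot be read off from the carving step alone; it must be threaded through the termination cases of the recursion. With that correction the outline is sound, modulo the quantitative sharpening from $O(\log^2 n)$ to $O(\log n\cdot\log\log n)$, which you correctly attribute to~\cite{bringmann2025near} and do not attempt to rederive.
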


For the remainder of this section, we use $\alpha$ to denote the coefficient $\alpha$ in property 2 of \Cref{lem:ldd}.

\subsection{Construction of our DAG Cover $\mathcal{D}$}
\label{subsec:const_D}
Let $G$ be an $n$-node, weighted, directed graph with positive integer edge weights and maximum weight $W$. The key step of our construction will be a hierarchical decomposition of $G$ obtained by repeatedly applying Lemma \ref{lem:ldd}. 
Let $G_0 = G$, and let $\mathcal{F}_0$ denote the collection of all strongly connected components in $G_0$. We will construct set family $\mathcal{F}_{i+1}$ and graph $G_{i+1}$ from  graph $G_i$, as follows. 
\begin{enumerate}
    \item Apply Lemma \ref{lem:ldd} to graph $G_i$ with integer parameter $d_i = nW \cdot 2^{-i-1}$. Let $E_i \subseteq E(G_i)$ be the set of edges specified by Lemma \ref{lem:ldd}.
    \item Let $G_{i+1} = G_i - E_i$, and let $S^1_{i+1}, \dots, S^{k_{i+1}}_{i+1} \subseteq V(G)$ denote the SCCs of $G_{i+1}$. We let $\mathcal{F}_{i+1}$ denote the collection of SCCs of graph $G_{i+1}$.
    \item Let $z = \lceil \lg (n W) \rceil$.
    We will terminate our recursion after computing graph $G_z$ and collection $\mathcal{F}_z$. 
\end{enumerate}
Note that since $d_z < 1$, the graph $G_z$ will be a DAG. We will use $D^*$ to denote graph $G_z$. The following claim summarizes the properties of the sets $\mathcal{F}_i$ inherited from Lemma \ref{lem:ldd}.

\begin{claim}
The following properties of the $\mathcal{F}_i$'s and $E_i$'s are inherited from Lemma \ref{lem:ldd}:
\begin{enumerate}
    \item Set family $\mathcal{F}_i$ is the collection of SCCs of graph $G_i$. 
    \item Each set $S \in \mathcal{F}_i$ has weak diameter at most $2d_i$ in $G_{i-1}$ (and consequently, in $G$ as well). 
    \item For every edge $e \in E(G_i)$, 
    $$
    \Pr[e \in E_i] \leq \alpha \cdot \frac{w(e)}{d_i}
    $$
\end{enumerate}
\label{clm:fam}
\end{claim}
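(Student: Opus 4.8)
The plan is to check the three items one at a time, each of which reduces to unpacking the construction of the hierarchy $G_0 \supseteq G_1 \supseteq \cdots \supseteq G_z$ and quoting \Cref{lem:ldd}; I expect the whole argument to be short bookkeeping rather than anything substantive. For Property 1 I would simply observe that it holds by definition: $\mathcal{F}_0$ was defined to be the collection of SCCs of $G_0 = G$, and Step 2 of the construction defines $\mathcal{F}_{i+1}$ to be the collection of SCCs of $G_{i+1}$, so $\mathcal{F}_i$ is the collection of SCCs of $G_i$ for every $i$.

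For Property 2 I would fix $i \ge 1$ and recall that $G_i = G_{i-1} - E_{i-1}$, where $E_{i-1}$ is the edge set obtained by applying \Cref{lem:ldd} to $G_{i-1}$ with parameter $d_{i-1}$. Property 1 of \Cref{lem:ldd} then says every SCC of $G_i = G_{i-1} - E_{i-1}$ has weak diameter at most $d_{i-1}$ in $G_{i-1}$; since $d_{i-1} = nW\cdot 2^{-i} = 2 d_i$, this is exactly the claimed bound of $2 d_i$ in $G_{i-1}$. To get the parenthetical ``in $G$'' statement I would note that each $G_j$ is obtained from $G$ by deleting edges, so $G_{i-1}$ is a subgraph of $G$; hence $\dist_G(u,v) \le \dist_{G_{i-1}}(u,v)$ for all $u,v$, and the weak-diameter bound transfers from $G_{i-1}$ to $G$. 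The case $i = 0$ should be handled separately and is trivial: with $n$ nodes and positive integer weights at most $W$, every finite distance in $G$ is at most $(n-1)W < nW = 2 d_0$, and inside an SCC all distances are finite.

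For Property 3 I would fix $i$ and condition on any realization of the graph $G_i$ (equivalently, on the choices $E_0, \dots, E_{i-1}$). Conditioned on $G_i$, the set $E_i$ is precisely the output of \Cref{lem:ldd} run on the now-fixed graph $G_i$ with parameter $d_i = nW\cdot 2^{-i-1}$, which is $O(\text{\normalfont poly}(n))$ as the lemma requires; so Property 2 of \Cref{lem:ldd} gives $\Pr[e \in E_i \mid G_i] \le \alpha\cdot w(e)/d_i$ for every $e \in E(G_i)$. Since this bound holds for every realization of $G_i$, the stated inequality follows (averaging over $G_i$ if one prefers an unconditional reading).

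The main point to be careful about is purely notational rather than a genuine obstacle: the index shift $d_{i-1} = 2 d_i$, the fact that the weak-diameter guarantee in \Cref{lem:ldd} is measured in the previous-level graph $G_{i-1}$ (and hence in $G$) rather than in $G_i$ itself, and the observation that in Property 3 the graph $G_i$ is itself random, so the inequality is really a conditional statement that one then averages. I do not anticipate anything beyond this.
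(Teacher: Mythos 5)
Your proposal is correct and is essentially the (unwritten) proof the paper has in mind: the paper states Claim~\ref{clm:fam} without proof, treating it as immediate from the construction and Lemma~\ref{lem:ldd}, and your three checks are exactly the bookkeeping that makes this precise. In particular you correctly identify the index shift $d_{i-1} = 2d_i$ (which is where the factor of $2$ in Property~2 comes from), correctly transfer the weak-diameter bound from $G_{i-1}$ to $G$ via $G_{i-1}\subseteq G$, and correctly observe that Property~3 should be read as a conditional statement given the realization of $G_i$, which then holds unconditionally because $E_i\subseteq E(G_i)$ makes the bound vacuous whenever $e\notin E(G_i)$. Your separate treatment of $i=0$ (where $G_{-1}$ is not defined, so one argues $2d_0 = nW > (n-1)W$ bounds any finite distance in $G$ directly) is a helpful clarification the paper glosses over, and it matters because the paper does invoke Property~2 at level $i=0$ (e.g.\ inside Claim~\ref{clm:hopedge}).
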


We will need several preliminary notations and claims before we can construct the DAGs associated with the set family of $\mathcal{F}_i$'s. 
Fix an $i \in [0, z]$. We define a total order $<_i$ on the SCCs $\mathcal{F}_i$ in $G_i$. Informally, this will correspond to a specific topological order of the condensation graph of $G_i$.  
Fix two SCCs  $S, T \in \mathcal{F}_i$, where $S \neq T$. 
If there exists a node $s \in S$ and a node $t \in T$ such that $s$ can reach $t$ in graph $G_i$, then we let $S <_i T$. If there exists distinct sets $S', T' \in \mathcal{F}_{i-1}$ satisfying $S \subseteq S'$ and $T \subseteq T'$ such that $S' <_{i-1} T'$, then we let $S <_i T$, as well. 
We observe that if there exists distinct $S, T \in \mathcal{F}_i$ such that $S <_iT$ and $T <_i S$, then that implies a contradiction of Property 1 of Claim \ref{clm:fam}. We conclude that $<_i$ is a partial order on the elements of $\mathcal{F}_i$. Finally, we convert partial order $<_i$ to a total order by completing the ordering arbitrarily. 
\begin{claim}
    Relation $<_i$ is a total ordering of set family $\mathcal{F}_i$.
    \label{clm:total_order}
\end{claim}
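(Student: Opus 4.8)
The plan is to prove the statement by induction on $i \in [0,z]$. Let $R_i$ be the relation on $\mathcal{F}_i$ consisting of exactly the pairs $(S,T)$ with $S \neq T$ that the two rules in the construction declare to satisfy $S <_i T$, so that the final $<_i$ is some completion of $R_i$ to a linear order. Since $\mathcal{F}_i$ is finite, it suffices to show $R_i$ is \emph{acyclic}: then its transitive closure is a strict partial order, and any strict partial order on a finite set extends to a strict total order by repeatedly removing a minimal element. This reduction also lets me avoid verifying transitivity of the raw two-rule relation directly.

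For the base case $i = 0$ there is no $\mathcal{F}_{-1}$, so $R_0$ is just the reachability relation among distinct SCCs of $G_0 = G$, which is acyclic because the condensation of any digraph is a DAG. For the inductive step I would fix $i \ge 1$, assume $<_{i-1}$ is a total order on $\mathcal{F}_{i-1}$, and exploit that $G_i = G_{i-1} - E_{i-1}$ is a subgraph of $G_{i-1}$: every SCC $S \in \mathcal{F}_i$ lies inside a unique SCC $p(S) \in \mathcal{F}_{i-1}$ (using Property 1 of Claim~\ref{clm:fam} at both levels, and that SCCs partition the vertex set). The key preliminary step is a one-step lemma: if $S \mathrel{R_i} T$ then $p(S) = p(T)$ or $p(S) <_{i-1} p(T)$; moreover $p(S) <_{i-1} p(T)$ strictly whenever the step was declared by the second rule (the distinct sets $S',T' \in \mathcal{F}_{i-1}$ appearing there are forced to equal $p(S),p(T)$) or by the first rule with $p(S)\neq p(T)$ (a witness $s \leadsto_{G_i} t$ also gives $s \leadsto_{G_{i-1}} t$, so $p(S) <_{i-1} p(T)$ by the first rule at level $i-1$).

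With the one-step lemma in hand I would finish as follows. Suppose $R_i$ contained a cycle $S_1 \mathrel{R_i} S_2 \mathrel{R_i} \cdots \mathrel{R_i} S_k \mathrel{R_i} S_1$ with consecutive terms distinct. Projecting through $p$ gives $p(S_1) \le_{i-1} p(S_2) \le_{i-1} \cdots \le_{i-1} p(S_k) \le_{i-1} p(S_1)$ with a strict inequality at each step coming from the second rule; since $<_{i-1}$ is a total order this forces all $p(S_j)$ equal and rules out the second rule entirely, so each step has a witness $s_j \leadsto_{G_i} t_j$ with $s_j \in S_j$, $t_j \in S_{j+1}$. Gluing $t_j$ to $s_{j+1}$ inside the strongly connected $S_{j+1}$ produces a closed walk in $G_i$ through all the $s_j$, so all $S_j$ coincide with one SCC of $G_i$ — contradicting that consecutive terms are distinct, by Property 1 of Claim~\ref{clm:fam}. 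Hence $R_i$ is acyclic and the induction goes through. The step I expect to be the crux is exactly this cycle analysis: the excerpt only explicitly excludes $2$-cycles via Property 1 of Claim~\ref{clm:fam}, and correctly handling longer cycles — particularly ones that alternate between reachability steps within $G_i$ and order relations inherited from level $i-1$ — is what forces the inductive set-up and the projection-through-$p$ argument above.
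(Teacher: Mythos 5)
Your proof is correct, and it is in fact noticeably more careful than what the paper supplies. The paper's argument (the paragraph preceding \cref{clm:total_order}) just asserts that if $S <_i T$ and $T <_i S$ both held for distinct $S,T$, that would contradict Property 1 of \Cref{clm:fam}, and then concludes $<_i$ is a partial order and can be completed arbitrarily. Interpreted literally, this only addresses $2$-cycles in the raw two-rule relation; to conclude one gets a partial order one really needs acyclicity of the raw relation, which is what your projection-through-$p$ induction establishes. Your one-step lemma (a $R_i$-step is either ``equal parents in $\mathcal{F}_{i-1}$'' or a strict $<_{i-1}$-step) and the subsequent cycle analysis (a hypothetical $R_i$-cycle forces all parents equal, hence all steps are $G_i$-reachability steps between SCCs sharing a closed walk, contradicting that these are \emph{distinct} SCCs) exactly fill the gap. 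Two small points: you should note that ``all parents equal'' also covers the case where a Rule-1 step has $p(S)=p(T)$, which your one-step lemma does correctly allow as a non-strict step, so the conclusion ``all steps are Rule 1 with $s_j\leadsto_{G_i} t_j$'' is justified; and when gluing, you need $t_k\leadsto_{G_i} s_1$ inside $S_1$ as well as the intermediate gluings, which is fine since $t_k\in S_1$. With those observations the argument is complete, and buys you the rigor the paper's one-liner is missing at essentially no added conceptual cost.
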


We will now define a total order $<_D$ on the vertex set $V(G)$ that respects every total order $<_i$ for all $i \in [0, z]$. For each node $v \in V(G)$ and each $i \in [0, z]$, we let $S_i(v) \in \mathcal{F}_i$ denote the unique set in $\mathcal{F}_i$ containing node $v$. Given $s, t \in V(G)$, we will let $s <_D t$ if 
$$
s \neq t \qquad \text{ and } \qquad S_i(s) <_i S_i(t),
$$
for some $i \in [0, z]$. We will let $<_D$ be an arbitrary total order that satisfies this property for all $s \neq t \in V(G)$. We will use $s \leq_D t$ to denote that either $s <_D t$ or $s = t$. 
\begin{claim}
    Relation $<_D$ is a total ordering of vertex set $V(G)$. 
\end{claim}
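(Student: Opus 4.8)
The plan is to verify the three conditions required for $<_D$ to be a total ordering of $V(G)$: totality (every pair of distinct vertices is comparable), antisymmetry (we never have both $s <_D t$ and $t <_D s$), and transitivity. Totality is automatic from the way $<_D$ was defined: we declared that $<_D$ is "an arbitrary total order that satisfies the property" $S_i(s) <_i S_i(t) \Rightarrow s <_D t$, so the only genuine content is that such a total order \emph{exists}, i.e., that the binary relation $R$ on $V(G)$ defined by "$sRt$ iff $s\neq t$ and $S_i(s) <_i S_i(t)$ for some $i\in[0,z]$" is acyclic and hence extends to a total order. So the real work is to show $R$ contains no directed cycle (equivalently, its transitive closure is irreflexive).

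First I would record the key monotonicity fact relating the $<_i$'s across levels: since $\mathcal{F}_{i}$ refines $\mathcal{F}_{i-1}$ (each SCC of $G_i = G_{i-1}\setminus E_{i-1}$ sits inside a unique SCC of $G_{i-1}$), and since the definition of $<_i$ explicitly inherits the order from $<_{i-1}$ (the clause "if $S\subseteq S'$, $T\subseteq T'$ and $S'<_{i-1}T'$ then $S<_iT$"), it follows that for $s,t\in V(G)$ with $S_{i-1}(s)\neq S_{i-1}(t)$, we have $S_{i-1}(s) <_{i-1} S_{i-1}(t) \iff S_i(s) <_i S_i(t)$. In other words, the comparability relation only gets \emph{finer} as $i$ increases: once two vertices are separated into different SCCs at some level $i_0$, their relative order is fixed and consistent at all levels $i\geq i_0$. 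Consequently, for any pair $s\neq t$, if $S_i(s)<_i S_i(t)$ holds for \emph{some} $i$, then it holds for \emph{all} $i$ large enough that $s$ and $t$ are in different SCCs of $G_i$ — in particular it holds at level $z$, where $D^* = G_z$ is a DAG and every vertex is its own SCC. So $R$ is exactly the relation "$S_z(s) <_z S_z(t)$", which is the total order $<_z$ on singletons pulled back to $V(G)$; more precisely, by Claim~\ref{clm:total_order}, $<_z$ is a total order on $\mathcal{F}_z$, and since every vertex of $G_z$ forms its own SCC, $<_z$ induces a total order on $V(G)$ directly. That total order satisfies the required property by the monotonicity fact above, establishing existence.

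The one point requiring care — and what I expect to be the main obstacle — is justifying the monotonicity fact cleanly, i.e., that the two clauses defining $<_i$ do not conflict and that $<_i$ genuinely restricts from $<_{i-1}$. The potential worry is that the first clause ("$s$ reaches $t$ in $G_i$ implies $S<_iT$") could contradict an order inherited from level $i-1$ via the second clause. But this cannot happen: if $S_{i-1}(s) >_{i-1} S_{i-1}(t)$ were inherited while $s$ reaches $t$ in $G_i \subseteq G_{i-1}$, then $s$ reaches $t$ in $G_{i-1}$, forcing $S_{i-1}(t) \leq_{i-1} S_{i-1}(s)$ — wait, rather forcing $S_{i-1}(s) \le_{i-1} S_{i-1}(t)$ by the first clause applied at level $i-1$, contradicting antisymmetry of $<_{i-1}$ (which is exactly Claim~\ref{clm:total_order} at level $i-1$). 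So I would run this as an induction on $i$: assume $<_{i-1}$ is a total order (base case $i=0$: $<_0$ is the topological order of the condensation of $G_0$, a DAG of SCCs, completed arbitrarily to a total order); then show the two clauses at level $i$ are mutually consistent using the inductive hypothesis and the fact $E(G_i)\subseteq E(G_{i-1})$; conclude $<_i$ is a partial order, hence (completed arbitrarily) a total order, which is Claim~\ref{clm:total_order}. Once Claim~\ref{clm:total_order} is in hand for all $i$, the existence of $<_D$ follows immediately by taking the total order $<_z$ and pulling it back along $v\mapsto S_z(v)$, and the argument above shows it satisfies the defining property. Antisymmetry and transitivity of $<_D$ are then inherited from $<_z$ being a genuine total order.
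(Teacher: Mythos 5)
Your proof is correct and rests on the same key facts as the paper's (the inheritance clause in the definition of $<_i$ and \Cref{clm:total_order}); the paper argues by contradiction, pushing two conflicting comparisons to a common level $j$ and invoking antisymmetry of $<_j$, while you iterate the inheritance forward to level $z$ and identify the underlying relation outright as the pullback of the total order $<_z$ along $v \mapsto S_z(v)$. Your packaging is slightly more explicit --- it shows the relation is already a total order (so the ``arbitrary'' completion in the definition of $<_D$ is in fact forced) rather than only that it has no two-cycles --- but it is the same essential argument, and you could shorten it by simply citing \Cref{clm:total_order} instead of re-deriving it by induction.
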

\begin{proof}
    Suppose towards contradiction that there exist nodes $ s, t \in V(G)$, with $s \neq t$, such that $s <_D t$ and $t <_D s$. Then without loss of generality, $S_i(s) <_i S_i(t)$ and $S_j(t) <_j S_j(s)$ for some $i \leq j$. Since $S_j(s) \subseteq S_i(s)$ and $S_j(t) \subseteq S_i(t)$ and $S_i(s) <_i S_i(t)$, we conclude that $S_j(s) <_j S_j(t)$ by the definition of $<_j$. This contradicts Claim \ref{clm:total_order}, so we conclude that $<_D$ is a total ordering of $V(G)$. 
\end{proof}

For each set $S \in \mathcal{F}_i$, we  associate two \textit{representative} nodes $r_1(S), r_2(S) \in S$.
We choose $r_1(S)$ to be the first node in $S$ under total order $<_{D}$, and we choose $r_2(S)$ to be the last node in $S$ under total order $<_{D}$. 
For each node $v \in V(G)$ and each $i \in [0, z]$, we define the level $i$ representative nodes of $v$, denoted as $r_i^1(v)$ and $r_i^2(v)$, to be 
$$
r_i^1(v) = r_1(S_i(v))  \quad \text{ and } \quad r_i^2(v) = r_2(S_i(v)).
$$
In particular, when $i = z$, we have that $r_z^1(v) = r_z^2(v) = v$, since $D^*$ is a DAG.  
For each edge $(u, v) \in E(G)$, we define the \textit{level} of edge $(u, v)$, denoted as $\ell(u, v)$,  to be the smallest integer $i \in [0, z]$ such that $S_i(u) \neq S_i(v)$.

In addition to the above notation, we will need the following claim, which roughly states that we can approximately preserve distances between  nodes $s, t$ in an SCC $S \in \mathcal{F}_i$ when $s <_{D} t$, using a small collection of edges $H$ that respect total order $<_D$.

\begin{claim}
\label{clm:hopedge}
    Let $S \in \mathcal{F}_i$ be a set in family $\mathcal{F}_i$. Then there exists a set of directed, weighted edges $H \subseteq S \times S$ of size $|H|=O(|S| \log |S|)$ with the following properties:
    \begin{enumerate}
    \item Every edge $(u, v)$ in $H$ respects total order $<_{D}$.
    \item Every edge $(u, v)$ in $H$  has weight $w(u, v) = 2d_i$. 
    \item For all $s, t \in S$ with $s <_{D} t$,
    $$
     \dist_G(s, t) \leq \dist_{H}(s, t) \leq 4d_i.
    $$
    \item  Set $H$ can be computed in $O(|H|)$ time.
    \end{enumerate}
\end{claim}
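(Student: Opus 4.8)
I want to construct the edge set $H$ so that it forms a "double star" (bidirectional star) centered at the two representatives $r_1(S)$ and $r_2(S)$ of $S$, but oriented so that every edge respects $<_D$. Recall $r_1(S)$ is the $<_D$-smallest node of $S$ and $r_2(S)$ the $<_D$-largest. The natural first attempt is: for every $v \in S$, add an edge $(r_1(S), v)$ of weight $2d_i$ and an edge $(v, r_2(S))$ of weight $2d_i$. Both types of edge respect $<_D$: the first because $r_1(S) \le_D v$, the second because $v \le_D r_2(S)$. This gives $|H| = O(|S|)$ edges, not $O(|S|\log|S|)$, which is even better than claimed (the extra log factor in the statement presumably comes from a different, more uniform construction — e.g.\ the folklore diameter-$2$ shortcut set mentioned in the technical overview — and I will note that my cheaper construction also works, or alternatively invoke that folklore construction directly if I want the bidirectional structure stated more symmetrically).

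**Verifying the distance bounds.** For $s, t \in S$ with $s <_D t$, I claim $\dist_H(s,t) \le 4d_i$: take the two-hop path $s \to r_2(S) \to \dots$ — wait, that doesn't reach $t$. The correct two-hop path is $s \to r_2(S)$? No: I need to get from $s$ to $t$. Using my edge set, the available path is $s \to r_2(S)$ (weight $2d_i$) — but then I cannot continue, since all edges into the star hubs go the wrong way. Let me instead route through $r_1(S)$: the edge $(r_1(S), t)$ has weight $2d_i$, but I need to first reach $r_1(S)$ from $s$, and there is no edge $(s, r_1(S))$ since $r_1(S) \le_D s$. So the simple double-star does not connect arbitrary $s <_D t$ in two hops. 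The fix: I should route $s \to r_2(S) \to$ ... no. The right structure is a single hub: add edges $(v, r_2(S))$ and $(r_2(S), v)$? That creates a $2$-cycle, violating acyclicity / the $<_D$ constraint. So the real construction must be the folklore shortcut-set one: recursively split $S$ (ordered by $<_D$) in half, add edges from a representative to all of one half and from all of the other half to that representative, recurse; this yields a DAG of diameter $2$ on $S$ respecting $<_D$ with $O(|S|\log|S|)$ edges — which is exactly why the log factor appears. Then for any $s <_D t$ there is a $2$-hop path $s \to x \to t$ with $x$ the representative of the smallest recursive block containing both, each hop of weight $2d_i$, giving $\dist_H(s,t) \le 4d_i$. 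The lower bound $\dist_G(s,t) \le \dist_H(s,t)$ follows since each added edge $(u,v)$ has weight $2d_i \ge \dist_{G_{i-1}}(u,v) \ge \dist_G(u,v)$, because $u, v$ lie in the common SCC $S \in \mathcal F_i$ which by Property 2 of Claim~\ref{clm:fam} has weak diameter at most $2d_i$ in $G$; so any $H$-path from $s$ to $t$ can be replaced edge-by-edge by a $G$-walk of no greater length.

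**The main obstacle and remaining steps.** The one genuine subtlety is Property 1: I must ensure every edge of the recursive shortcut set respects $<_D$. This holds by construction if, at each recursive level, I split the current block into a $<_D$-lower part $L$ and a $<_D$-higher part $R$, pick the hub $x$ to be the $<_D$-maximum of $L$ (equivalently the "middle" element), and add edges $(x, r)$ for $r \in R$ (fine, since $x \le_D r$) and $(v, x)$ for $v \in L \setminus \{x\}$ (fine, since $v <_D x$). I should double-check the diameter-$2$ claim survives this asymmetric hub choice — it does, via the standard argument that the recursion tree has depth $O(\log|S|)$ and any pair is separated at some level with one endpoint in $L$, one in $R$, both adjacent to that level's hub. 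The time bound (Property 4) is immediate: the recursion visits each element $O(\log|S|)$ times and emits $O(1)$ edges per visit, all computable given the sorted-by-$<_D$ order of $S$, which I can extract in $O(|S|)$ time once $<_D$ is fixed. I'll also remark that one could instead take $H$ to be the union of the simple double-star with a recursive inner structure, but the clean statement is just to invoke the folklore diameter-$2$ construction and annotate the three easy properties.
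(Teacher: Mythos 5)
Your proposal is correct and, after discarding the initial double-star attempt, converges on essentially the same construction the paper uses: the paper reaches the same folklore diameter-$2$ shortcut set (citing \cite{MR4415092}) by initializing $H$ with the $<_D$-ordered path on $S$ and adding its shortcut edges, which is equivalent to your recursive hub construction. One minor simplification worth noting: for Property 3's lower bound the paper observes directly that $\dist_H(s,t)\ge 2d_i$ (every edge has weight $2d_i$) and $\dist_G(s,t)\le 2d_i$ (weak diameter), avoiding your edge-by-edge replacement argument.
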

\begin{proof}
Let $S = \{s_1, \dots, s_k\}$ be the nodes in $S$, ordered with respect to $<_D$. 
We define a path $\pi$ that respects $<_D$ as follows:
$$
\pi = (s_1, \dots, s_k).
$$
We assign each edge in $\pi$ weight $d_i$. Initially, we let $H = E(\pi)$. 
By \cite{MR4415092}, for a directed path $\pi$ on $k$ nodes, there exists a collection of directed edges $H'$ such that 
\begin{itemize}
    \item $|H'| = O(k \log k)$ and $H'$ can be computed in $O(|H'|)$ time, 
    \item $H'$ is contained in the transitive closure of path $\pi$, and
    \item For every pair of nodes $s, t$ in path $\pi$ such that node $s$ occurs on $\pi$ before node $t$, there exists an $s \leadsto t$ path $\pi'$ in $\pi \cup H'$ with $|\pi'| \leq 2$ edges. 
\end{itemize}
Then we let $H = E(\pi) \cup H'$, with every edge in $H$ assigned weight $2d_i$. Properties 1, 2, and 4 of our claim are clearly satisfied. We now verify Property 3 for all $s, t \in S$ with $s <_D t$:
\begin{itemize}
    \item By Property 2, $\dist_H(s, t) \geq 2d_i$. Moreover, $2d_i \geq \dist_G(s, t)$ by Claim \ref{clm:fam}, so $\dist_G(s, t) \leq \dist_H(s, t)$. 
    \item For each pair of nodes $s, t$ in path $\pi$ such that $s <_D t$, there exists an $s \leadsto t$ path $\pi'$ in $H$ with $|\pi'| \leq 2$ edges. Then $\dist_H(s, t) \leq w(\pi') \leq 4d_i$, by Property 2. \qedhere
\end{itemize}
\end{proof}

We  construct two DAGs $D_1$ and $D_2$ associated with our  collection of   $\mathcal{F}_i$'s. DAG $D_1$ will respect total order $<_{D}$, while DAG $D_2$  will respect the reverse of $<_D$. 
We  construct DAG $D_1$ as follows.

\paragraph{Construction of DAG $D_1$.}
\begin{enumerate}
    \item The vertex set of $D_1$ will be the same as $G$, i.e., $V(D_1) = V(G)$. 
    \item For each edge $(u, v) \in E(D^*)$, each $i, j \in [0, z]$ such that $\min(i, j) \geq \ell(u, v)$, and each $k, k' \in \{1, 2\}$, add the edge $(r_i^{k}(u), r_j^{k'}(v))$ to $D_1$. Assign  edge $(r_i^{k}(u),  r_j^{k'}(v))$  weight $w_{G}(u, v) + 2d_i + 2d_j$.  Since $r_z^1(v) = r_z^2(v) = v$ for all $v \in V(G)$, this procedure also adds edges of the form $(u, r_i^j(v))$ and $(r_i^j(u), v)$, for all $i \geq \ell(u, v)$ and $j \in \{1, 2\}$.      Additionally, we add the edges in $E(D^*)$ with their original weights in $G$ to DAG $D_1$.

    \item Fix an index $i \in [0, z]$, and fix a set $S_i^j \in \mathcal{F}_i$, where $j \in [1, k_{i}]$. Let $H_i^j \subseteq S_i^j \times S_i^j$ be the set of directed, weighted edges specified in Claim \ref{clm:hopedge} with respect to set $S_i^j$. We will add the set of  edges
    $$
    H = \bigcup_{i \in [0, z], j \in [1, k_i]} H_i^j
    $$
    to $D_1$. 
\end{enumerate}
This completes the construction of $D_1$. If there are parallel edges in $D_1$ from node $u$ to node $v$, then we keep only the lowest weight edge from $u$ to $v$. We will quickly verify that  $D_1$ is a DAG.
\begin{claim}
   Graph $D_1$ is a DAG. 
   \label{clm:dag}
\end{claim}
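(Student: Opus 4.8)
The plan is to show that $D_1$ contains no directed cycle by exhibiting a single total order on $V(G)$ that is respected by every edge of $D_1$. The natural candidate is the total order $<_D$ constructed above, so the goal reduces to verifying that each of the three types of edges added in the construction of $D_1$ goes ``forward'' with respect to $<_D$ (i.e.\ from a $<_D$-smaller vertex to a $<_D$-larger vertex). Since $<_D$ is a total order on $V(G)$ (established by the preceding claim), any graph all of whose edges respect $<_D$ is acyclic, and we are done.

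First I would handle the edges coming from Step 2. Here I would use the fact that $D^* = G_z$ is a DAG whose topological order is consistent with $<_D$: indeed, if $(u,v) \in E(D^*) = E(G_z)$ then $u$ can reach $v$ in $G_{\ell(u,v)}$ (as $G_z \subseteq G_i$ for all $i$), and $S_{\ell(u,v)}(u) \neq S_{\ell(u,v)}(v)$, so by the definition of $<_{\ell(u,v)}$ we get $S_{\ell(u,v)}(u) <_{\ell(u,v)} S_{\ell(u,v)}(v)$, hence $u <_D v$. Now for each $i,j \geq \ell(u,v)$, the representative $r_i^k(u)$ lies in $S_i(u) \subseteq S_{\ell(u,v)}(u)$ and $r_j^{k'}(v)$ lies in $S_j(v) \subseteq S_{\ell(u,v)}(v)$; since $S_{\ell(u,v)}(u) <_{\ell(u,v)} S_{\ell(u,v)}(v)$ are distinct sets, every vertex of the first precedes every vertex of the second under $<_D$ (this is exactly how $<_D$ is built to refine each $<_i$). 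Hence $r_i^k(u) <_D r_j^{k'}(v)$, so this edge respects $<_D$. The original edges of $D^*$ are handled by the $u <_D v$ observation directly.

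Next I would handle the edges from Step 3, namely the sets $H_i^j$. By Property 1 of Claim~\ref{clm:hopedge}, every edge of $H_i^j$ already respects $<_D$ by construction, so there is nothing more to check. Finally, collapsing parallel edges to the minimum-weight representative does not create any new edges, so the acyclicity is preserved. Combining the three cases, every edge of $D_1$ respects the total order $<_D$, and therefore $D_1$ is a DAG.

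I expect the main (though still modest) obstacle to be the bookkeeping in the Step~2 case: one has to be careful that the containment chain $S_z(v) \subseteq \dots \subseteq S_i(v) \subseteq S_{\ell(u,v)}(v)$ really does force representatives at all levels $i \geq \ell(u,v)$ to sit on the correct side of the $<_D$-cut induced by $S_{\ell(u,v)}(u) <_{\ell(u,v)} S_{\ell(u,v)}(v)$, and that this holds for both representative indices $k,k' \in \{1,2\}$ simultaneously. The key point making this work is that $<_D$ was explicitly defined so that $S_i(s) <_i S_i(t)$ for \emph{some} level $i$ implies $s <_D t$; once that is invoked at level $\ell(u,v)$, all the representatives fall into line. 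No nontrivial estimate is needed — the argument is purely order-theoretic.
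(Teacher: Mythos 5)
Your proposal is correct and follows essentially the same route as the paper: verify that every edge of $D_1$ respects the total order $<_D$, handling the $H$-edges via Property 1 of Claim~\ref{clm:hopedge} and the Step-2 edges by exhibiting a level at which $S(u)$ and $S(v)$ are distinct, ordered SCCs, and then using containment of SCCs at higher levels to carry the ordering to the representatives. The paper anchors the argument at $\min(i,j)$ while you anchor at $\ell(u,v)$, but since $\ell(u,v) \leq \min(i,j)$ and the SCC chains descend, the two are interchangeable; your version actually sidesteps a small implicit WLOG in the paper's phrasing.
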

\begin{proof}
Note that every edge in $H$ respects total order $<_{D}$ by Claim \ref{clm:hopedge}. We will now show that every edge we add to $D_1$ in Step 2 respects $<_{D}$. Fix an edge $(r_i^{x}(u), r_j^{y}(v))$ added to $D_1$ in Step 2, for some $i \leq j \in [0, z]$, $x, y \in \{1, 2\}$, and $(u, v) \in E(D^*)$.  Since $i \geq \ell(u, v)$, it follows that $S_i(u) \neq S_i(v)$. Additionally, since edge $(u, v) \in E(D^*) \subseteq E(G_i)$, it follows that
$S_i(u) <_i S_i(v)$. Then since $r_i^x(u) \in S_i(u)$ and $r_j^{y}(v) \in S_j(v) \subseteq S_i(v)$, we conclude that $$r_i^x(u) <_D r_j^{y}(v),$$ as desired.
\end{proof}

\paragraph{Construction of DAG $D_2$.}
\begin{enumerate}
    \item The vertex set of $D_2$ will be the same as $G$, so $V(D_2) = V(G)$. 
    \item Fix an index $i \in [0, z]$, and fix a set $S_i^j \in \mathcal{F}_i$, where $j \in [1, k_{i}]$. Let $H_i^j \subseteq S_i^j \times S_i^j$ be the set of directed, weighted edges specified in Claim \ref{clm:hopedge} with respect to set $S_i^j$. Let $H \subseteq V(G) \times V(G)$ be the directed, weighted set of edges
    $$
    H = \bigcup_{i \in [0, z], j \in [1, k_i]} H_i^j.
    $$
    Let $H^R$ denote the set of weighted edges obtained by reversing the orientation of every edge in $H$. We add the edges in $H^R$ to $D_2$.
\end{enumerate}
This completes the construction of graph $D_2$. Note that for every edge $(u, v)$ in graph $D_2$, we have that $(v, u) \in E(D_1)$. 
Then by Claim \ref{clm:dag}, graph $D_2$ is also a DAG. 

\paragraph{Construction of DAG Cover $\mathcal{D}$.}
We will construct our DAG cover $\mathcal{D}$ by repeating the following (random) procedure $10 \log n$ times:
\begin{enumerate}
    \item Construct a collection of sets $\{\mathcal{F}_i\}_{i\in [0, z]}$. (Note that each set  $\mathcal{F}_i$ is constructed randomly and inherits
    the probabilistic guarantees of Lemma \ref{lem:ldd}.)
    \item Construct DAGs $D_1$ and $D_2$ using the collection of sets $\{\mathcal{F}_i\}_{i\in [0, z]}$, and add $D_1$ and $D_2$ to $\mathcal{D}$. 
\end{enumerate}

\subsection{Size and Time Analysis of our DAG Cover $\mathcal{D}$}

In this section, we prove that our DAG cover $\mathcal{D}$ has the size claimed in Theorem \ref{thm:dag_cover}, and we prove that our DAG cover can be constructed in the time claimed in  Theorem \ref{thm:dag_cover}. 

\paragraph{Size Analysis.} In Step 2 of the construction of DAG $D_1$, we add at most $O(z^2)$ edges to $D_1$ for every edge $(u, v) \in E(D^*) \subseteq E(G)$. 
Then Step 2 of the construction of DAG $D_1$ contributes $O(mz^2) = O(m \log^2(nW))$ edges to $D_1$. 
By Claim \ref{clm:hopedge}, the third step of the construction of DAG $D_1$ contributes at most
$$
 |H| \leq \sum_{i \in [0, z], j \in [1, k_i]} \left| H_i^j \right| \leq O(\log n) \cdot \sum_{i \in [0, z]} \sum_{j \in [1, k_i]} |S_i^j| = O(\log n) \cdot \sum_{i \in [0, z]} n = O(n \log(nW) \log n)
$$
edges. Then we add at most
$$
|E(D_1)| + |E(D_2)| =  O(m \log^2 n + n\log^2 n)
$$
edges to $D_1$ and $D_2$, when $W = O(\text{\noindent poly}(n))$. 
Since we repeatedly construct random DAGs $D_1$ and $D_2$ exactly $10\log n$ times, the number of additional edges and DAGs claimed in Theorem \ref{thm:dag_cover} is established.

\paragraph{Running Time Analysis.} To construct our hierarchical decomposition $\{\mathcal{F}\}_i$, we make $z$ calls to Lemma \ref{lem:ldd}. This takes $\widetilde{O}(\log(nW) \cdot m)$ time.
The second step of the construction of DAG $D_1$ takes $O(mz^2) = O(m \log^2 (nW))$ time. To construct all sets of edges $H_i^j$, where $i \in [0, z]$ and $j \in [1, k_i]$, it takes time $O(\sum_{i \in [0, z], j \in [1, k_i]} | H_i^j |) = O(n  \log (nW) \log n)$ by Claim \ref{clm:hopedge} and our earlier bound on $\cup_{i, j}|H_i^j|$ in our size analysis. 
We conclude that the constructions of DAGs $D_1$ and $D_2$ take $\widetilde{O}(m)$ time when $W = O(\text{\noindent poly}(n))$. Since we repeatedly construct random DAGs $D_1$ and $D_2$ exactly $10\log n$ times, the running time claimed in Theorem \ref{thm:dag_cover} is established.

\subsection{Initial Distortion Analysis}

In this section, we make some initial progress towards proving that our DAG cover $\mathcal{D}$ approximately preserves distances in $G$.  First, we verify that for every DAG $D \in \mathcal{D}$, distances in $D$ are at least distances in $G$. 

\begin{claim}
    For each DAG $D \in \mathcal{D}$ and nodes $s, t \in V(G)$,
    $$
    \dist_G(s, t) \leq \dist_D(s, t). 
    $$
\end{claim}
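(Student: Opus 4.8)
The plan is to show that every edge added to $D_1$ and $D_2$ has weight at least the $G$-distance between its endpoints, and then conclude that any path in $D$ corresponds to a walk in $G$ of no greater total weight. The key point is that $D$ is one of the DAGs $D_1$ or $D_2$ built in Section~\ref{subsec:const_D}, so it suffices to handle the three families of edges that appear in $D_1$ (the edges of $D_2$ are reversals of $D_1$-edges, and distance lower bounds are symmetric under the global reversal since reversing $G$ preserves all distances between the corresponding reversed pairs).

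First I would enumerate the edge types. The edges of $D^*=G_z$ retained with their original weights trivially satisfy $w_G(u,v)\geq \dist_G(u,v)$. For an edge $(r_i^k(u), r_j^{k'}(v))$ added in Step~2 with weight $w_G(u,v)+2d_i+2d_j$: since $r_i^k(u)$ and $u$ lie in the same SCC $S_i(u)\in\mathcal F_i$, Claim~\ref{clm:fam}(2) gives $\dist_G(r_i^k(u), u)\leq 2d_i$, and similarly $\dist_G(v, r_j^{k'}(v))\leq 2d_j$; hence $\dist_G(r_i^k(u), r_j^{k'}(v)) \leq 2d_i + w_G(u,v) + 2d_j$, which is exactly the assigned weight. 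Finally, for an edge $(u,v)\in H_i^j$ coming from Claim~\ref{clm:hopedge}, Property~3 of that claim gives $\dist_G(u,v)\leq \dist_{H_i^j}(u,v)$, and Property~2 gives that the single edge $(u,v)$ has weight $2d_i \geq \dist_{H_i^j}(u,v) \geq \dist_G(u,v)$ (using that $H_i^j$ contains the direct path edge so $\dist_{H_i^j}(u,v)\le 2d_i$). So in all cases the edge weight dominates the true $G$-distance.

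With this established, take any $s\leadsto t$ path $P = (s = u_0, u_1, \dots, u_\ell = t)$ in $D$. For each edge $(u_{p}, u_{p+1})$ of $P$, by the above its $D$-weight is at least $\dist_G(u_p, u_{p+1})$, and concatenating shortest $G$-paths between consecutive vertices yields an $s\leadsto t$ walk in $G$ of total weight at most $\sum_p w_D(u_p,u_{p+1}) = w_D(P)$. Since edge weights are positive, this walk contains an $s\leadsto t$ path of no greater weight, so $\dist_G(s,t)\leq w_D(P)$. Taking the minimum over all $s\leadsto t$ paths $P$ in $D$ gives $\dist_G(s,t)\leq \dist_D(s,t)$ (the inequality being vacuous when $t$ is unreachable from $s$ in $D$).

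I do not expect a genuine obstacle here; the statement is essentially a bookkeeping check that no added edge is ``too short.'' The only mild subtlety is making sure the weak-diameter bound from Claim~\ref{clm:fam}(2) is applied in the correct direction (it bounds $\dist_G$ within an SCC regardless of orientation, so both $r_i^k(u)\leadsto u$ and $v \leadsto r_j^{k'}(v)$ are covered), and that the parallel-edge pruning step only ever \emph{keeps} the lowest-weight edge, which cannot violate the lower bound since each individual candidate edge already satisfies it. Handling $D_2$ requires only the observation that $(u,v)\in E(D_2)$ iff $(v,u)\in E(D_1)$ with the same weight, and $\dist_G(u,v)$ in the reversed accounting equals $\dist_{G^{R}}(v,u)$, which does not exceed that weight by the $D_1$ analysis applied to $G^R$; alternatively, one notes the $H^R$ edges are exactly handled by Claim~\ref{clm:hopedge} applied symmetrically.
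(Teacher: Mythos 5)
Your proposal is correct and follows the same route as the paper: show that every edge weight $w_D(u,v)$ dominates $\dist_G(u,v)$ (split by construction step), and then conclude that any $D$-path projects to a $G$-walk of no larger total weight. One small wrinkle: you assert $(u,v)\in E(D_2)$ iff $(v,u)\in E(D_1)$, but only the forward implication holds (since $D_2 = H^R$ while $D_1$ also contains the Step-2 edges); the one-way implication is all that is needed, and the weak-diameter bound from Claim~\ref{clm:fam} is orientation-symmetric, so the argument goes through. Also, for the $H_i^j$-edge case the paper invokes Claim~\ref{clm:fam} directly ($u,v \in S_i^j$, weak diameter $\leq 2d_i$) whereas you route through Property~3 of Claim~\ref{clm:hopedge}; this is slightly more roundabout but equally valid since $(u,v)\in H_i^j$ guarantees $u<_D v$ and the single edge $(u,v)$ of weight $2d_i$ witnesses $\dist_{H_i^j}(u,v)\le 2d_i$.
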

\begin{proof}
    We will prove that for every edge $(u, v) \in E(D)$, 
    $$
    \dist_G(u, v) \leq w_D(u, v),
    $$
    where $w_D$ is the weight function associated with DAG $D$. 
    This will immediately imply the stated claim. We split our proof into cases based on which step of the construction we added edge $(u, v)$  to DAG $D$.
    \begin{itemize}
        \item Edge $(u, v)$ was added in Step 2 of the construction of DAG $D_1$. In this case, edge $(u, v)$ is of the form
        $$
        (u, v) = (r_i^x(u^*), r_j^{y}(v^*)),
        $$
        for some $i, j \in [0, z]$ such that $\min(i, j) \geq \ell(u^*, v^*)$, $x, y \in \{1, 2\}$ and edge $(u^*, v^*) \in E(D^*)$.
        Recall that $w_D(u, v) = w_{G}(u^*, v^*) + 2d_i + 2d_j$. 
        Then
        \begin{align*}
            \dist_G(u, v) & = \dist_G(r_i^x(u^*), r_j^{y}(v^*)) \\
            & \leq \dist_G(r_i^x(u^*), u^*) + \dist_G(u^*, v^*) + \dist_G(v^*, r_j^{y}(v^*)) & \text{ by the triangle inequality} \\
            & \leq \dist_G(r_i^x(u^*), u^*) + w_G(u^*, v^*) + \dist_G(v^*, r_j^{y}(v^*)) \\
            & \leq 2d_i + w_G(u^*, v^*) + 2d_j & \text{ by \Cref{clm:fam}} \\
            & = w_D(u, v),
        \end{align*}
        as desired.
        \item Edge $(u, v)$ was added in Step 3 of the construction of DAG $D_1$ or in Step 2 of the construction of DAG $D_2$. We will need the following observations:
        \begin{itemize}
            \item Since edge $(u, v) \in H \cup H^R$, there exists a set $S_i^j \in \mathcal{F}_i$, where $i \in [0, z]$ and $j \in [1, k_i]$, such that $u, v \in S_i^j$. 
            \item By Property 2 of Claim \ref{clm:hopedge}, $w(u, v) = 2d_i$. 
            \item By  Claim \ref{clm:fam}, set $S_i^j$ has weak diameter at most $2d_i$ in $G$.
        \end{itemize}
        Then
        \[
        \dist_G(u, v) \leq 2d_i  = w_D(u, v). \hfill \qedhere
        \]
    \end{itemize}
\end{proof}

We have shown that distances in our DAG cover are at least distances in $G$. 
We now establish the distortion upper bound guarantee of our DAG cover. 
Let  $D_1$ and $D_2$ be the random DAGs constructed in subsection \ref{subsec:const_D}, with associated set family $\{\mathcal{F}_i\}_i$ and graph family $\{G_i\}_i$. 
The key step in establishing our distortion upper bound will be to prove the following lemma.
\begin{restatable}{lemma}{expected}
    For all $s, t \in V(G)$,
    $$
    \mathbb{E}[\min(\dist_{D_1}(s, t), \dist_{D_2}(s, t))] = O(\log^3n \cdot \log \log n) \cdot \dist_G(s, t).
    $$
    \label{lem:exp_err}
\end{restatable}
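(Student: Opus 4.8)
The plan is to fix a pair $s,t \in V(G)$ with $s \neq t$ (the case $s=t$ being trivial) and to track the evolution of the pair across the hierarchical decomposition $G = G_0 \supseteq G_1 \supseteq \cdots \supseteq G_z = D^*$. For each level $i$, either $s$ and $t$ lie in the same SCC $S_i(s) = S_i(t)$ of $G_i$, or they lie in distinct SCCs. Since $D^* = G_z$ is a DAG, at the top level they are in distinct SCCs (assuming $s$ can reach $t$ at all; if neither can reach the other then $\dist_G(s,t) = \infty$ and there is nothing to prove, and by symmetry of the two DAGs $D_1, D_2$ we may assume $s \leq_D t$). Let $\ell^\star$ be the smallest level at which $S_{\ell^\star}(s) \neq S_{\ell^\star}(t)$; so for all $i < \ell^\star$ they are together, and in particular $S_{\ell^\star - 1}(s) = S_{\ell^\star-1}(t)$ is an SCC of weak diameter at most $2d_{\ell^\star-1}$ in $G$ by Claim~\ref{clm:fam}. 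I would first handle this "collapse" contribution: the hopset edges $H$ added in Step 3 of the construction of $D_1$ (or their reverses $H^R$ in $D_2$) guarantee, via Claim~\ref{clm:hopedge} applied to $S_{\ell^\star-1}(s)$, an $s \leadsto t$ path of weight at most $4 d_{\ell^\star-1}$ in $D_1$ (using $s \leq_D t$), or in $D_2$ otherwise. So once we pay to "route $s$ and $t$ into the same SCC at some level $i$," the remaining cost is $O(d_i)$.

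The core of the argument is therefore to bound the expected level $\ell^\star$ at which $s$ and $t$ first separate, weighted by $d_{\ell^\star}$. The key point, exactly as in the undirected LDD-based tree embedding of~\cite{MR1450616}, is that separation at level $i$ requires an edge of a fixed $st$-path $P$ in $G$ (the shortest path, say, of weight $\dist_G(s,t)$) to be cut; by Property~3 of Claim~\ref{clm:fam}, each edge $e \in E(P)$ is cut at level $i$ with probability at most $\alpha \cdot w(e)/d_i$. Summing over $e \in P$, the probability that $P$ is "broken" at level $i$ is at most $\alpha \cdot \dist_G(s,t)/d_i$. Since $d_i = nW \cdot 2^{-i-1}$ decreases geometrically, $\sum_i d_i \cdot \Pr[\text{break at level } i] = O(\alpha \log(nW) \cdot \dist_G(s,t))$, and combined with the $O(d_{\ell^\star})$ routing cost above this yields $\mathbb{E}[\min(\dist_{D_1}(s,t), \dist_{D_2}(s,t))] = O(\alpha \log(nW)) \cdot \dist_G(s,t) = O(\log^3 n \cdot \log\log n) \cdot \dist_G(s,t)$, using $\alpha = O(\log n \log\log n)$, $W = \mathrm{poly}(n)$, and one extra $\log n$ factor that I expect to emerge from the analysis below.

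The subtle step — and the main obstacle — is verifying that the bypass edges added in Step 2 of the construction of $D_1$ actually stitch together into a single valid path from $s$ to $t$ of the claimed weight, \emph{inside one DAG}, across all the levels $0, 1, \dots, \ell^\star-1$ at which $s$ and $t$ are still together but individual edges of $P$ are getting cut. Concretely: $P = (s = v_0, v_1, \dots, v_k = t)$; at a level $i$ an edge $(v_{a}, v_{a+1})$ of $P$ might be in $E_i$ (cut), meaning it no longer survives into $G_{i+1}$ and cannot be used directly in $D_1$ if its level is exactly $i$. We need to replace each such cut edge by a detour through representative vertices $r_i^1(\cdot), r_i^2(\cdot)$ of the SCCs of $G_i$ containing its endpoints — these detours are exactly the edges $(r_i^x(u), r_j^y(v))$ with weight $w_G(u,v) + 2d_i + 2d_j$ added in Step 2 — and the SCC-internal legs of the detour are covered by the hopset edges $H$ of weight $O(d_i)$. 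I would argue level-by-level: maintain the invariant that for every level $i$ up to the current one, $P$ restricted to the SCCs of $G_i$ can be realized in $D_1$ with total added weight $O(\sum_{j \le i} d_j \cdot \mathbb{1}[P \text{ broken at level } j])$ times the number of break events, and that all these edges respect $<_D$ (hence no cycle is created — this is exactly Claim~\ref{clm:dag}). The bookkeeping that the weight of a detour at level $i$ is only $O(d_i)$ per broken edge, and that a geometric sum over levels collapses, is where the extra $\log n$ (or a constant) will appear; isolating the precise combinatorial statement that makes the per-level detours compose into one $st$-walk respecting $<_D$ is the part I expect to require the most care, and is presumably where the subsequent lemmas in the paper do their work.
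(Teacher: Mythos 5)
Your central probabilistic claim is incorrect: separation of $s$ and $t$ at level $i$ does \emph{not} require an edge of the shortest $s\leadsto t$ path $\pi$ to be cut. The pair $(s,t)$ lies in a common SCC of $G_i$ only if both an $s\to t$ walk \emph{and} a $t\to s$ walk survive in $G_i$, and the LDD can destroy the latter without touching $\pi$. Concretely, take $G$ to consist of a single edge $(s,t)$ of weight~$1$ together with a long $t\to s$ path of total weight $L$: then $\dist_G(s,t)=1$, yet the $t\to s$ path is cut at a level where $d_i = \Theta(L/\alpha)$ with constant probability, at which point $s$ and $t$ separate with $\pi$ still intact. So $\mathbb{E}[d_{\ell^\star-1}]$ can be $\Omega(L/\alpha)\gg\dist_G(s,t)$, and your bound $\min(\dist_{D_1},\dist_{D_2})\le 4d_{\ell^\star-1}$ --- while true by \cref{clm:single_scc} whenever $\ell^\star\ge 1$ --- becomes useless. (If $\ell^\star=0$, e.g.\ when $G$ is itself a DAG, the bound is not even defined.)

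Relatedly, the appeal to ``symmetry of the two DAGs $D_1, D_2$'' to assume $s\le_D t$ is not valid: the two DAGs are not symmetric. Only $D_1$ receives the Step-2 bypass edges; $D_2$ contains only the reversed hopset $H^R$. This asymmetry is why the paper splits the lemma into $\mathbb{E}[\dist_{D_2}(s,t)\cdot\mathbbm{1}[s>_D t]]$ (\cref{lem:s_geq_t}) and $\mathbb{E}[\dist_{D_1}(s,t)\cdot\mathbbm{1}[s<_D t]]$ (\cref{lem:induction}), which require quite different proofs. The case $s>_D t$ is essentially the version of your argument that \emph{does} work: conditional on $s>_D t$, if $\pi$ survives into $G_i$ then the whole of $\pi$ must lie in a single SCC of $G_i$ (otherwise $\pi$ would witness $S_i(s)<_i S_i(t)$, i.e.\ $s<_D t$). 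Hence $\ell^\star-1$ is at least the last level $X$ at which $\pi$ survives, $d_{\ell^\star-1}\le d_X$, and $\mathbb{E}[d_X]$ is controlled by the LDD cut probabilities as in \cref{lem:d_X}. But conditional on $s<_D t$, $\pi$ can already span many SCCs at level $0$, $\ell^\star$ is uncontrolled by $\dist_G(s,t)$, and the hopset edges alone do not help.

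You also misplace where the bypass edges matter. They are needed at levels $\ge \ell^\star$, once $s$ and $t$ have separated and $\pi$ crosses SCC boundaries of $G_i$ --- not at levels $<\ell^\star$, where the hopset on $S_{\ell^\star-1}(s)=S_{\ell^\star-1}(t)$ alone gives the $4d_{\ell^\star-1}$ bound regardless of which edges of $\pi$ have been cut. Bounding $\mathbb{E}[\dist_{D_1}(s,t)\cdot\mathbbm{1}[s<_D t]]$ genuinely requires the inductive decomposition of \cref{lem:induction}: at level $i$, decompose $\pi$ by the SCCs $S_1,\dots,S_q$ of $G_i$ it visits, pay $O(d_i)$ per inter-SCC crossing via a Step-2 edge, charge the cost of replacing representatives against the probability that the corresponding subsegment breaks at level $i$ (via \cref{lem:d_X}), and recurse on the intra-SCC segments using statement $S_{k-1}$. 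Your proposal stops short of this and falls back on the false expectation bound above; the gap is genuine, not merely a matter of deferred bookkeeping.
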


Once we prove this lemma, our claimed distortion upper bound in Theorem \ref{thm:dag_cover} will follow from a simple application of Markov's inequality.  The remainder of this section and section \ref{subsec:induction} will be devoted to developing the necessary claims and lemmas to prove Lemma \ref{lem:exp_err}; we will finally prove Lemma \ref{lem:exp_err} in section \ref{subsec:final}.

First, we will show that paths in graph $G_i$ have a natural structure with respect to the SCCs $\mathcal{F}_i$ of $G_i$. 

\begin{claim}
    \label{clm:path_struct}
    Let $\pi$ be an $s \leadsto t$ path in $G_i$ for some $s, t \in V(G)$ and $i \in [0, z]$. There exists a sequence of sets
    $S_1, \dots, S_k \subseteq \mathcal{F}_i$ in set family $\mathcal{F}_i$ with the following properties:
\begin{enumerate}
    \item Path $\pi$ contains exactly one edge $e_j = (u_j, v_j)$ of the form $e_j \in S_{j} \times S_{j+1}$ for all $j \in [1, k-1]$. Moreover,
    \begin{itemize}
        \item  $\pi[s, u_1] \subseteq S_1$,
        \item $\pi[v_j, u_{j+1}] \subseteq S_{j+1}$ for all $j \in [1, k-2]$, and 
        \item $\pi[v_{k-1}, t] \subseteq S_{k}$. 
    \end{itemize}
    \item  $S_{j_1} \neq S_{j_2}$ for all $j_1, j_2 \in [1, k]$ such that $j_1 \neq j_2$.
\end{enumerate}
\end{claim}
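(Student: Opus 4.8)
\textbf{Proof proposal for Claim~\ref{clm:path_struct}.}

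The plan is to extract the sets $S_1, \dots, S_k$ directly from the trajectory of $\pi$ through the SCCs of $G_i$, and then argue the two listed properties using the fact (Property~1 of Claim~\ref{clm:fam}) that $\mathcal{F}_i$ is precisely the set of SCCs of $G_i$, together with the resulting topological structure of the condensation graph. First I would walk along $\pi = (s = w_0, w_1, \dots, w_\ell = t)$ and consider the sequence $S_i(w_0), S_i(w_1), \dots, S_i(w_\ell)$ of SCCs containing the consecutive vertices; define $S_1, \dots, S_k$ to be this sequence with consecutive repetitions collapsed. By construction, $\pi$ passes through $S_1$, then $S_2$, and so on, and each ``transition'' from $S_j$ to $S_{j+1}$ occurs along a single edge $e_j = (u_j, v_j)$ of $\pi$ with $u_j \in S_j$, $v_j \in S_{j+1}$. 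The three sub-bullets of Property~1 then just record that the portion of $\pi$ strictly between two consecutive transition edges lies entirely in one SCC, which is immediate from how we collapsed repetitions: if $\pi[v_j, u_{j+1}]$ left $S_{j+1}$ and returned, there would be an intermediate transition, contradicting the choice of $e_j, e_{j+1}$ as consecutive transition edges.

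The one real content is Property~2: no SCC is visited twice, i.e.\ the collapsed sequence has no repeats even non-consecutively. This is where I expect the main (though still mild) obstacle. Suppose $S_{j_1} = S_{j_2}$ for some $j_1 < j_2$. Then $\pi$ contains a subpath from a vertex of $S_{j_1}$ to a vertex of $S_{j_2} = S_{j_1}$ that also passes through $S_{j_1 + 1}$ (which is distinct from $S_{j_1}$, since transitions are between distinct SCCs). Concatenating this subpath with a path inside the SCC $S_{j_1}$ back to the start vertex — which exists because $S_{j_1}$ is strongly connected in $G_i$ — gives a closed walk in $G_i$ through two distinct SCCs. But any such closed walk forces all the SCCs it touches to be merged into one, contradicting that $S_{j_1}$ and $S_{j_1+1}$ are distinct SCCs of $G_i$ (equivalently, the condensation of $G_i$ is acyclic). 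Hence the collapsed sequence is repetition-free, establishing Property~2.

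To make the strong-connectivity step clean I would phrase it via the condensation: contract each SCC of $G_i$ to a point to get a DAG $C_i$; the path $\pi$ projects to a walk in $C_i$, and collapsing consecutive repeats of vertices along that walk yields a walk in $C_i$ with no two consecutive equal vertices, which in an acyclic graph must in fact be a simple path — this simultaneously delivers Property~2 and the existence of the single transition edge $e_j$ for each $j$ (since between consecutive distinct condensation-vertices there is exactly one crossing edge used by $\pi$ at that point in its traversal). I would then just transcribe this back to the uncontracted picture to get the three sub-bullets of Property~1 verbatim. No nontrivial calculation is involved; the whole argument is a careful bookkeeping of how a single path interacts with the SCC decomposition, so I would keep the write-up short and lean on Claim~\ref{clm:fam}(1) for the identification of $\mathcal{F}_i$ with the SCCs of $G_i$.
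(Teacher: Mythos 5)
Your proof is correct and follows essentially the same route as the paper. The paper proves the single observation that if $u$ precedes $v$ on $\pi$ and both lie in the same SCC $S$, then $\pi[u,v] \subseteq S$ (so every SCC meets $\pi$ in a contiguous segment), which simultaneously yields Properties~1 and~2; your argument is the same fact phrased through the condensation DAG — you collapse consecutive repeats to get contiguity and then invoke acyclicity of the condensation to rule out non-consecutive repeats, which is the same strong-connectivity reasoning split into two steps.
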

\begin{proof}
Recall that $\mathcal{F}_i$ is the set of all strongly connected components in $G_i$.  Let $u, v \in \pi$ be nodes such that $u$ comes before $v$ in path $\pi$. Suppose that $u, v \in \pi \cap S$ for some $S \in \mathcal{F}_i$. This implies that $\pi[u, v] \subseteq S$ because $S$ is an SCC in $G_i$. Then for every SCC $S \in \mathcal{F}_i$, set $S$ intersects path $\pi$ in a (possibly empty) contiguous subpath of $\pi$. Let sequence $S_1, \dots, S_k \subseteq \mathcal{F}_i$  be the SCCs in $\mathcal{F}_i$ that have nonempty intersection with $\pi$, written so that if $j_1 < j_2 \in [1, k]$ then subpath $\pi \cap S_{j_1}$ comes before subpath $\pi \cap S_{j_2}$ in path $\pi$. 
\end{proof}

We will now show that if nodes $s$ and $t$ are contained in the same set $S$ in set family $\mathcal{F}_i$, then we can use the edges in $H$ and $H^R$ to upper bound the distances between $s$ and $t$ in $D_1$ and $D_2$.

\begin{claim}
    \label{clm:single_scc}
    Fix an index $i \in [0, z]$
    and nodes $s, t \in S$ for some set $S \in \mathcal{F}_i$ in set family $\mathcal{F}_i$.
    \begin{itemize}
        \item If $s <_D t$, then
        $$
        \dist_{D_1}(s, t) \leq  4d_i.
        $$
        \item if $s >_D t$, then
        $$
        \dist_{D_2}(s, t) \leq 4d_i.
        $$
    \end{itemize}
    \end{claim}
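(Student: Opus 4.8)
## Proof Proposal for Claim \ref{clm:single_scc}

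The plan is to leverage the edge set $H$ from Claim \ref{clm:hopedge} directly, since $H \subseteq E(D_1)$ by Step 3 of the construction of $D_1$, and $H^R \subseteq E(D_2)$ by Step 2 of the construction of $D_2$. First I would recall that since $s, t \in S$ for $S \in \mathcal{F}_i$, Claim \ref{clm:hopedge} gives us a set $H_i^j \subseteq S \times S$ (where $S = S_i^j$ for the appropriate index $j$) that was included in $H$, and hence in $D_1$. Crucially, all edges of $H_i^j$ that we added to $D_1$ retained their weight $2d_i$ from Claim \ref{clm:hopedge}, Property 2, and these edges respect $<_D$ by Property 1.

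For the first bullet, suppose $s <_D t$. Then Property 3 of Claim \ref{clm:hopedge} applied to $S$ gives $\dist_{H_i^j}(s, t) \leq 4d_i$, witnessed by a path $\pi'$ of at most $2$ edges, each of weight $2d_i$. Since every edge of this path lies in $H_i^j \subseteq E(D_1)$ with the same weight $2d_i$, the path $\pi'$ is a valid $s \leadsto t$ path in $D_1$, so $\dist_{D_1}(s, t) \leq w(\pi') \leq 4d_i$. For the second bullet, suppose $s >_D t$, i.e.\ $t <_D s$. Applying Property 3 of Claim \ref{clm:hopedge} to the pair $(t, s)$ (valid since $t <_D s$), we obtain a $t \leadsto s$ path $\pi''$ in $H_i^j$ with at most $2$ edges of weight $2d_i$ each. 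Reversing $\pi''$ edge-by-edge yields an $s \leadsto t$ path in $H^R \subseteq E(D_2)$, again with total weight at most $4d_i$, so $\dist_{D_2}(s, t) \leq 4d_i$.

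The only subtlety I anticipate is a bookkeeping one: making sure that the edges of $H_i^j$ genuinely survive into $D_1$ with weight exactly $2d_i$ — in particular, that the de-duplication step in the construction of $D_1$ ("keep only the lowest weight edge") does not cause a problem. This is fine, since de-duplication can only \emph{decrease} edge weights, which can only decrease $\dist_{D_1}(s,t)$; the upper bound is preserved. A second minor point is confirming that $S$ really equals some $S_i^j$ in the indexing used in the construction, so that $H_i^j$ was actually added — but this is immediate since $\mathcal{F}_i = \{S_i^1, \dots, S_i^{k_i}\}$ and Step 3 ranges over all $j \in [1, k_i]$. I do not expect any real obstacle here; this claim is essentially an unwinding of the definitions together with Claim \ref{clm:hopedge}.
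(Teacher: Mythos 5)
Your proposal is correct and takes essentially the same approach as the paper: invoke Property 3 of Claim~\ref{clm:hopedge} on the set $H_S \subseteq E(D_1)$ (and its reversal $H_S^R \subseteq E(D_2)$) to bound $\dist_{D_1}(s,t)$ and $\dist_{D_2}(s,t)$ by $4d_i$. Your extra remarks about de-duplication only decreasing weights and about $S$ appearing as some $S_i^j$ are sound, though the paper treats both as immediate.
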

    \begin{proof}
        This claim will follow directly from Claim \ref{clm:hopedge} and our construction of DAGs $D_1$ and $D_2$. Let $H_S \subseteq S \times S$ be the directed, weighted edges specified in Claim \ref{clm:hopedge} with respect to set $S \in \mathcal{F}_i$.
        By construction, $H_S \subseteq E(D_1)$. 
        \begin{itemize}
            \item If $s <_D t$, then 
        $$
        \dist_{D_1}(s, t) \leq \dist_{H_S}(s, t) \leq 4d_i,$$
        by Property 3 of Claim \ref{clm:hopedge}. 
        \item Otherwise, $s >_D t$. Let $H_S^R$ be the set of weighted, directed edges obtained by reversing the orientations of the edges in $H_S$. By construction, $H_S^R \subseteq H^R \subseteq E(D_2)$. Then since $s >_D t$, we can again argue by Property 3 of  Claim \ref{clm:hopedge} that
        \[
        \dist_{D_2}(s, t) \leq  \dist_{H_S^R}(s, t) =   \dist_{H_S}(t, s) \leq 4d_i.\hfill\qedhere
        \] 
        \end{itemize}     
    \end{proof}

For an event $A$, let $\mathbbm{1}[A]$ denote the random variable that takes value $1$ when event $A$ holds, and value $0$ otherwise. The following lemma makes some progress towards proving Lemma \ref{lem:exp_err}, by achieving an upper  bound  on the expected distortion between $s$ and $t$ in DAG $D_2$, multiplied by the binary random variable $\mathbbm{1}[s >_D t]$. Notice that if $s <_D t$, then $\dist_{D_2}(s, t) = \infty$, so we cannot expect to get a reasonable upper bound on $\mathbb{E}[\dist_{D_2}(s, t)]$ in general; this is why we upper bound $\mathbb{E}[\dist_{D_2}(s, t) \cdot \mathbbm{1}[s >_D t]]$ instead.

\begin{lemma}
\label{lem:s_geq_t}
For all $s, t \in V(G)$,
    $$
\mathbb{E}[\dist_{D_2}(s, t) \cdot \mathbbm{1}[s >_D t]] = O(\log^2 n \cdot \log \log n) \cdot \dist_G(s, t).
$$
\end{lemma}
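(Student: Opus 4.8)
The plan is to follow the recursive structure of the low-diameter decomposition, charging the distortion to the edges that get ``cut'' at each level. Fix $s, t \in V(G)$, and consider the event $s >_D t$; if this event fails both sides are zero, so assume $s >_D t$. Let $\pi$ be a shortest $s \leadsto t$ path in $G = G_0$. I would track, as $i$ increases from $0$ to $z$, what happens to $\pi$ inside the hierarchy: at level $i$, by \Cref{clm:path_struct}, the portion of $\pi$ surviving in $G_i$ (more precisely, the relevant contiguous pieces of $\pi$) lies inside a sequence of SCCs $S_1 <_i S_2 <_i \cdots <_i S_k$ of $\mathcal{F}_i$, connected by single ``bridge'' edges $e_j = (u_j, v_j) \in S_j \times S_{j+1}$ that are present in $G_i$ (these are exactly the edges of $\pi$ with level $\ell(e_j) \le i$). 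Since these bridge edges have $\ell(e_j) \le i$, Step 2 of the construction of $D_1$ added the edge $(r_i^2(u_j), r_i^1(v_j))$ to $D_1$ of weight $w_G(e_j) + 2d_i + 2d_i$, and hence $D_2$ contains the reverse edge $(r_i^1(v_j), r_i^2(u_j))$ of the same weight --- wait, here is the first subtlety: $D_2$ only contains the reversals of the hop-edges $H$, not of the Step-2 edges, so I must instead work in $D_2$ using only the reversed hop-edges within each SCC and the bridge edges. Let me reconsider: since $s >_D t$, I want a path in $D_2$, whose edges all go ``backward'' with respect to $<_D$; the bridge edges $e_j$ go forward, so they cannot be used directly in $D_2$.

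The correct approach is therefore to induct \emph{downward} through the levels using the representative nodes as ``anchors''. I would prove, by induction on decreasing $i$ (or equivalently set up an auxiliary ``Induction Statement'' of the kind the preamble declares), a statement of the form: for $s, t$ with $s >_D t$ lying in a common SCC $S \in \mathcal{F}_i$, the expected value of $\dist_{D_2}(s,t) \cdot \mathbbm{1}[\text{bad event at this level}]$ is controlled by $O(\alpha \log n) \cdot d_i$ plus a recursive contribution from level $i+1$. The base case $i = z$ is \Cref{clm:single_scc} combined with the fact that $D^*$ is a DAG (singleton SCCs). For the inductive step at level $i$: the pair $s, t$ sits in $S \in \mathcal{F}_i$; applying \Cref{lem:ldd}/\Cref{clm:fam} to $G_i$ restricted to $S$, with probability $\le \alpha \cdot \dist_G(s,t)/d_i$ some edge of the shortest $s \leadsto t$ path (of length $\le 2d_i$ inside $S$) is cut, moving $s$ and $t$ into different SCCs of $\mathcal{F}_{i+1}$; call this the bad event. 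On the good event, $s$ and $t$ remain in a common SCC of $\mathcal{F}_{i+1}$ and we recurse (with the improvement that the available distance budget is now the smaller $d_{i+1} = d_i/2$). On the bad event, $s$ and $t$ split into distinct SCCs $T \ni s$ and $T' \ni t$ of $\mathcal{F}_{i+1}$ with $T' <_{i+1} T$ (since $s >_D t$), and I bound $\dist_{D_2}(s,t)$ crudely by: go from $s$ to its representative $r_{i+1}^1(s) = r_1(S_{i+1}(s))$ using reversed hop-edges inside $T$ (cost $\le 4 d_{i+1}$ by \Cref{clm:single_scc} applied in $D_2$, noting $r_1(S_{i+1}(s)) \le_D s$), then traverse a chain of reversed Step-2 / reversed-hop edges --- and here is where I need to be careful again, because $D_2$ has no Step-2 edges.

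So the real mechanism for the bad event must be: $D_2$ contains, for each SCC $S' \in \mathcal{F}_{i+1}$ that lies between $T'$ and $T$ on a shortest path, the reversed hop-set $H_{S'}^R$, plus we need ``linking'' edges between consecutive SCCs. The only such inter-SCC edges in $D_2$ are the reversals of hop-edges, which are intra-SCC, so in fact $D_2$ \emph{as constructed} has \emph{no} edges between distinct SCCs of any $\mathcal{F}_{i+1}$ --- meaning the bad event genuinely makes $\dist_{D_2}(s,t) = \infty$ relative to level $i+1$ alone, and must be caught at a \emph{lower} level $i' < i$ instead. Hence the summation should run the other way: for the pair $(s,t)$ with $s >_D t$, let $i^\ast$ be the \emph{largest} level at which $s$ and $t$ still share an SCC; at that level the good event (no cut) must \emph{fail} at level $i^\ast$, and the whole $D_2$-path is assembled at levels $\le i^\ast$ using, at level $i^\ast$, the reversed hop-edges inside $S_{i^\ast}(s) = S_{i^\ast}(t)$, giving $\dist_{D_2}(s,t) \le 4 d_{i^\ast} \le 4 d_0 = O(nW)$ deterministically once we know such $i^\ast$ exists, and the expectation bound then comes from $\Pr[i^\ast = i] \le \Pr[\text{path of length} \le 2d_{i+1} \text{ inside } S_{i+1} \text{ is cut at level } i+1] \le \alpha \dist_G(s,t)/d_{i+1}$, so that
\[
\mathbb{E}[\dist_{D_2}(s,t)\cdot\mathbbm{1}[s>_D t]] \;\le\; \sum_{i=0}^{z} 4 d_i \cdot \frac{\alpha \cdot \dist_G(s,t)}{d_{i+1}} \;=\; \sum_{i=0}^{z} 8\alpha \cdot \dist_G(s,t) \;=\; O(z \alpha)\cdot \dist_G(s,t) \;=\; O(\log^2 n \log\log n)\cdot \dist_G(s,t),
\]
using $z = O(\log n)$ and $\alpha = O(\log n \log \log n)$, which is exactly the claimed bound. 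I expect the main obstacle to be making the preceding telescoping argument fully rigorous: precisely defining $i^\ast$, verifying that at every level $i' \le i^\ast$ one can stitch together a valid directed path in $D_2$ (respecting the reverse of $<_D$) purely from the reversed hop-edges $H^R$ restricted to the single chain of SCCs containing both $s$ and $t$, and correctly handling the probability that the shortest $s\leadsto t$ path inside $S_{i+1}$ has length exceeding $2d_{i+1}$ (which can't happen, by the weak-diameter guarantee of \Cref{clm:fam}, but needs to be stated) --- plus confirming that the events ``cut at level $i+1$'' across different $i$ are being union-bounded rather than treated as disjoint, which only costs a constant factor.
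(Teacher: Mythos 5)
Your proposal, after working through the initial confusion about whether $D_2$ contains inter-SCC edges (it does not --- your realization that $D_2$ consists entirely of reversed intra-SCC hop-edges and therefore can only help when $s,t$ share an SCC is the right one), converges on essentially the same argument as the paper: condition on the level where the relevant structure around $(s,t)$ falls apart, invoke \Cref{clm:single_scc} to bound $\dist_{D_2}(s,t)$ by $4d$ at that level, and telescope the expectation against the LDD cut-probability from \Cref{clm:fam}. The final computation $\sum_{i=0}^z O(d_i)\cdot O(\alpha \dist_G(s,t)/d_i) = O(z\alpha)\dist_G(s,t)$ matches the paper.

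There is, however, one imprecision that would become a genuine gap if carried into a formal write-up. You define $i^\ast$ as the largest level at which $s$ and $t$ still share an SCC, and then bound $\Pr[i^\ast = i]$ by the probability that ``the path of length $\le 2d_{i+1}$ inside $S_{i+1}$ is cut,'' arriving at $\alpha\,\dist_G(s,t)/d_{i+1}$. This bound is not justified under your definition: if the fixed shortest path $\pi$ was already deleted at an earlier level, there is no specific surviving path of weight $\dist_G(s,t)$ to feed into the LDD guarantee; you'd be cutting an unknown path whose weight you can't control by $\dist_G(s,t)$. The fix is to define $i^\ast$ as the largest $i$ with $\pi \subseteq G_i$ --- exactly what the paper does via the events $B_i$. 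Under that definition, (a) $\pi \subseteq G_{i^\ast}$ together with $s >_D t$ forces $s,t$ into a common SCC at level $i^\ast$, so $\dist_{D_2}(s,t) \le 4d_{i^\ast}$ still holds, and (b) $\Pr[i^\ast = i] \le \Pr[\pi \text{ hit at level } i \mid \pi \subseteq G_i] \le \alpha\,\dist_G(s,t)/d_i$ is legitimate because now you are cutting a concrete path of weight $\dist_G(s,t)$. (Also, that denominator should be $d_i$, not $d_{i+1}$, per \Cref{clm:fam}; this is a constant-factor nit.) With this single change your argument is the paper's argument.
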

\begin{proof}
     Fix a shortest $s \leadsto t$ path $\pi$ in $G$.
     Let $X$ be the random variable $X = \dist_{D_2}(s, t)$. 
     Let $A$ be the event that $s >_D t$. Let $B_i$ be the event that path $\pi$ is contained in graph $G_i$, for all $i \in [0, z]$.

     Consider the scenario where event $A \cap B_i$ occurs, i.e., $s >_D t$, and path $\pi$ is contained in $G_i$. Then there must exist an SCC $S \in \mathcal{F}_i$ of $G_i$ such that $\pi \subseteq G_i[S]$. If this is not the case, then path $\pi$ intersects with two or more SCCs in $G_i$, implying that $s <_D t$, a contradiction. Then we can apply Claim \ref{clm:single_scc} to argue that if event $A \cap B_i$ occurs, then 
    $
    \dist_{D_2}(s, t) \leq 4d_i.
    $
    In particular, this implies that $$\mathbb{E}[X \mid A \cap B_i \cap \overline{B}_{i+1}] \leq 4d_i.$$

    Observe that $\Pr[B_0] = 1$ since $\pi \subseteq G = G_0$.  Then we can obtain the following inequality
    $$
    \mathbb{E}[X \mid A] \leq  \sum_{i = 0}^{z}  \mathbb{E}[X \mid A \cap B_i \cap \bar{B}_{i+1}] \Pr[B_i \cap \bar{B}_{i+1} \mid A].
    $$
    Additionally, we observe that by Claim \ref{clm:fam} and the union bound, for all $i \in [0, z]$,
    $$
    \Pr[\Bar{B}_{i+1} \mid B_{i}] \leq 
     \sum_{e \in E(\pi)} \alpha \cdot  \frac{w(e)}{d_i}  =  \frac{\alpha}{d_i}  \cdot \dist_G(s, t),
    $$
    since $\pi$ is an $s \leadsto t$ shortest path in $G$. 
    Putting our observations together, 
    \begin{align*}
        \mathbb{E}[X \cdot \mathbbm{1}[s >_D t]] & = \mathbb{E}[X \mid A]\Pr[A] \\
         & \leq \left( \sum_{i = 0}^{z}  \mathbb{E}[X \mid A \cap B_i \cap \bar{B}_{i+1}] \Pr[B_i \cap \bar{B}_{i+1} \mid A] \right)  \Pr[A]  \\
        & = \sum_{i = 0}^{z}  \mathbb{E}[X \mid A \cap B_i \cap \bar{B}_{i+1}] \Pr[B_i \cap \bar{B}_{i+1} \cap A]   \\
        & \leq \sum_{i = 0}^{z}  \mathbb{E}[X \mid A \cap B_i \cap \bar{B}_{i+1}] \Pr[ \bar{B}_{i+1} \mid B_i]   \\
        & \leq \sum_{i = 0}^{z}  2d_i \cdot \Pr[ \bar{B}_{i+1} \mid B_i]   \\  
        & \leq \sum_{i = 0}^{z}  2d_i \cdot \frac{\alpha}{d_i} \cdot \dist_G(s, t) \\
        & = (z+1) \cdot 2\alpha \cdot \dist_G(s, t)\\ &= O(\log^2 n \cdot \log\log n) \cdot \dist_G(s, t).\hfill\qedhere
    \end{align*}    
\end{proof}

We will need the following technical lemma, which we prove using a similar argument as in Lemma \ref{lem:s_geq_t}. 
Recall that for any index $X \in [0, z]$, $d_X = nW2^{-X-1}$ is the diameter of the directed low-diameter decomposition that we apply to graph $G_X$. This lemma roughly states that an $(s, t)$-shortest path $\pi$ survives to diameter $\widetilde{O}(1) \cdot \dist_G(s, t)$ in expectation.

\begin{lemma} \label{lem:d_X}
Let $s, t \in V(G)$ be nodes such that $s$ can reach $t$ in $G$. Let $\pi$ be an $s \leadsto t$ shortest path in $G$. Let $H \subseteq G$ be a subgraph of $G$ such that $\pi \subseteq H$.  For each $i \in [0, z]$ and subgraph $J \subseteq G$, let $C_i^J$ be the event that $G_i = J$.  Let $X \in [0, z]$ be the (random-valued) index such that $\pi \subseteq G_X$ and $\pi \not \subseteq G_{X+1}$. For each $i \in [0, z]$,
     \begin{align*}
     \mathbb{E}\left[ d_X |  C_i^H \right]& \leq  ( z - i + 1) \cdot \alpha \cdot \dist_G(s, t), 
    \end{align*}
where $\alpha = O(\log n \cdot \log \log n)$ is the coefficient from the directed LDD in \Cref{lem:ldd}.
\end{lemma}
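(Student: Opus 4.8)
The plan is to mirror the argument structure of Lemma~\ref{lem:s_geq_t}, conditioning on the event $C_i^H = \{G_i = H\}$ and then analyzing the (conditional) random process by which $\pi$ is peeled away through successive applications of the directed LDD. Fix $s,t$ and an $s\leadsto t$ shortest path $\pi$ in $G$, and fix a subgraph $H\supseteq\pi$. Condition on $C_i^H$. For $j \ge i$, let $B_j$ be the event that $\pi\subseteq G_j$; note that conditioned on $C_i^H$ we have $\Pr[B_i\mid C_i^H]=1$ since $\pi\subseteq H=G_i$. The index $X$ satisfies $X\ge i$ always (given $C_i^H$), and $X=j$ exactly when $B_j$ holds but $B_{j+1}$ does not. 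So I would write
\[
\mathbb{E}[d_X \mid C_i^H] \;=\; \sum_{j=i}^{z} d_j \cdot \Pr[B_j\cap \bar B_{j+1}\mid C_i^H]\;\le\;\sum_{j=i}^{z} d_j\cdot \Pr[\bar B_{j+1}\mid B_j\cap C_i^H].
\]

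The key inequality to establish is $\Pr[\bar B_{j+1}\mid B_j\cap C_i^H]\le \tfrac{\alpha}{d_j}\cdot\dist_G(s,t)$ for each $j\ge i$. For $j=i$ this is immediate: conditioned on $C_i^H$, the graph $G_i=H$ is fixed, the LDD applied to $G_i$ picks edges $E_i$ with $\Pr[e\in E_i]\le \alpha\, w(e)/d_i$ by Property~3 of Claim~\ref{clm:fam}, and $\pi$ fails to survive only if some edge of $\pi$ lands in $E_i$; a union bound over $E(\pi)$ gives $\Pr[\bar B_{i+1}\mid C_i^H]\le \frac{\alpha}{d_i}\sum_{e\in E(\pi)}w(e)=\frac{\alpha}{d_i}\dist_G(s,t)$ since $\pi$ is a shortest path. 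For $j>i$, the point is that conditioning on $B_j\cap C_i^H$ only restricts the history of the decomposition up through level $j$ (it fixes $G_i$ and asserts $\pi$ survived to $G_j$), but the LDD applied to $G_j$ at step $j$ still satisfies Property~3 of Claim~\ref{clm:fam} relative to whatever $G_j$ turns out to be — that bound holds pointwise for every realization of $G_j$ — so the same union bound over $E(\pi)\subseteq E(G_j)$ yields $\Pr[\bar B_{j+1}\mid B_j\cap C_i^H]\le\frac{\alpha}{d_j}\dist_G(s,t)$. Plugging this in,
\[
\mathbb{E}[d_X\mid C_i^H]\;\le\;\sum_{j=i}^{z} d_j\cdot\frac{\alpha}{d_j}\cdot\dist_G(s,t)\;=\;(z-i+1)\cdot\alpha\cdot\dist_G(s,t),
\]
which is the claimed bound.

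The main obstacle is the conditioning subtlety in the step $j>i$: one must be careful that $\Pr[\bar B_{j+1}\mid B_j\cap C_i^H]$ really is controlled by Property~3 of Claim~\ref{clm:fam}. The cleanest way to handle this is to further condition on the full realization $G_j=J$ for an arbitrary $J$ consistent with $B_j\cap C_i^H$ (i.e.\ $\pi\subseteq J$ and $J$ reachable from $H$ by the intermediate LDD steps), observe that the randomness of step $j$ is independent of all prior steps, apply Property~3 of Claim~\ref{clm:fam} to $G_j=J$ together with a union bound over $E(\pi)\subseteq E(J)$ to get $\Pr[\bar B_{j+1}\mid G_j=J]\le\frac{\alpha}{d_j}\dist_G(s,t)$, and then average over all such $J$. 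A secondary point worth stating explicitly is that $X$ is well-defined and lies in $[i,z]$ under $C_i^H$: since $d_z<1$ and edge weights are positive integers, $G_z$ is a DAG so $\pi$ (having at least one edge, assuming $s\ne t$; the case $s=t$ is trivial with $\dist_G(s,t)=0$) cannot survive in $G_z$, guaranteeing $X\le z$, while $X\ge i$ follows from $\pi\subseteq G_i$. Everything else is the same telescoping/union-bound bookkeeping already used in Lemma~\ref{lem:s_geq_t}.
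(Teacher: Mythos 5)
Your proposal is correct and follows essentially the same route as the paper's proof: writing $\mathbb{E}[d_X \mid C_i^H] = \sum_{j=i}^z d_j \Pr[B_j \cap \bar B_{j+1} \mid C_i^H]$, bounding $\Pr[\bar B_{j+1}\mid B_j\cap C_i^H]\le \tfrac{\alpha}{d_j}\dist_G(s,t)$ via Property~3 of Claim~\ref{clm:fam} and a union bound over $E(\pi)$, and summing. Your extra remarks on the conditioning for $j>i$ and on the well-definedness of $X$ are sound clarifications of steps the paper leaves implicit, not a departure in method.
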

\begin{proof}

Fix a pair of nodes $s, t \in V(G)$ such that $s$ can reach $t$ in $G$. Let $\pi$ be an $s \leadsto t$ shortest path in $G$, and let $H \subseteq G$ be a subgraph of $G$ such that $\pi \subseteq H$. We will prove Lemma \ref{lem:d_X} for a fixed index $i \in [0, z]$.

For each $j \in [0, z]$, let $B_j$ be the event that $\pi \subseteq G_j$. 
We can obtain the following equality:
    $$
    \mathbb{E}[d_X \mid C_i^H] =  \sum_{j = i}^{z}  d_j \cdot  \Pr[B_j \cap \bar{B}_{j+1}  \mid  C_i^H].
    $$
    Additionally, we observe that by Claim \ref{clm:fam} and the union bound, for all $j \in [i, z]$,
    $$
    \Pr[\Bar{B}_{j+1} \mid B_j \cap C_i^H] \leq 
     \sum_{e \in E(\pi)}  \frac{w(e) \cdot \alpha}{d_j}   =  \frac{\alpha}{d_j} \cdot \dist_G(s, t),
    $$
    since $\pi$ is an $s \leadsto t$ shortest path in $G$.  Putting our observations together,
    \begin{align*}
        \mathbb{E}[d_X \mid C_i^H ] & = \sum_{j=i}^z d_j \cdot \Pr[B_j \cap \bar{B}_{j+1} \mid  C_i^H] \\
         & =  \sum_{j = i}^{z}  d_j \cdot  \Pr[\bar{B}_{j+1} \mid B_{j} \cap C_i^H] \Pr[B_j \mid  C_i^H]  \\  
         & \le  \sum_{j = i}^{z}  d_j \cdot  \Pr[\bar{B}_{j+1} \mid B_{j} \cap C_i^H]  \\ 
        & \leq \sum_{j = i}^{z} d_j \cdot  \frac{\alpha}{d_j}  \cdot \dist_{G}(s, t)   \\
        & \leq  (z-i+1) \cdot  \alpha \cdot \dist_G(s, t).
    \end{align*}     
\end{proof}

In Lemma \ref{lem:s_geq_t}, we gave an upper bound on the expectation of random variable $\dist_{D_2}(s, t) \cdot \mathbbm{1}[s >_D t]$, for a pair of nodes $s, t \in V(G)$. This proof made heavy use of the fact that this random variable took value $0$ when $s <_D t$. Our next goal will be to give an upper bound on the expectation of the random variable $\dist_{D_1}(s, t) \cdot \mathbbm{1}[s <_D t]$. 
We will need to introduce some additional notation first.

\paragraph{Additional Notation.} 
For all nodes $s, t \in V(G)$ such that $s$ can reach $t$ in $G$ (i.e., $(s, t) \in TC(G)$), we fix an $s \leadsto t$ shortest path $\pi_{s, t}$ in $G$. We may assume without loss of generality that our collection of paths $\{\pi_{s, t}\}_{(s, t) \in TC(G)}$ is \textit{consistent}. Formally, this means that for any two paths $\pi_{s, t}$ and $\pi_{s', t'}$ in our collection, and for any two vertices $u, v \in V(G)$, if $u, v \in \pi_{s, t} \cap \pi_{s', t'}$ and node $u$ precedes node $v$ in both $\pi_{s, t}$ and $\pi_{s', t'}$, then we have that $\pi_{s, t}[u, v] = \pi_{s', t'}[u, v]$. 

For all nodes $s, t \in V(G)$ such that $s$ can reach $t$ in $G$, and for each $i \in [0, z]$, let $B_i^{s, t}$ be the event that path $\pi_{s, t}$ is contained in $G_i$. For every subgraph $H \subseteq G$, let $C_i^H$ be the event that $G_i = H$. 
\\

The following lemma formally states what we will prove about the expected size of the random variable $\dist_{D_1}(s, t) \cdot \mathbbm{1}[ s <_D t]$.

\begin{restatable}{lemma}{induction}
    Let $s, t \in V(G)$, and let $i \in [0, z]$. Let $H \subseteq G$ be a subgraph of $G$ such that $\pi_{s,t } \subseteq H$. Then
$$
\mathbb{E}\left[\dist_{D_1}(s, t) \cdot \mathbbm{1}[s <_D t] \mid  C_i^H  \right] \leq c \cdot (z-i+1)^2 \cdot \alpha \cdot \dist_G(s, t), 
$$
for a sufficiently large constant $c > 1$.
\label{lem:induction}
\end{restatable}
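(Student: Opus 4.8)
\textbf{Proof plan for Lemma~\ref{lem:induction}.} The plan is to prove the bound by \emph{downward induction on $i$}, from the base case $i=z$ (where $G_z=D^*$ is a DAG) down to $i=0$, which gives the claim we ultimately want in the unconditioned form. Fix $s,t$ and a subgraph $H$ with $\pi_{s,t}\subseteq H$, and condition throughout on the event $C_i^H$ that $G_i=H$. If $s \not<_D t$ the indicator vanishes and there is nothing to prove, so assume $s <_D t$. By Claim~\ref{clm:path_struct} applied to $\pi_{s,t}$ inside $G_i=H$, the path decomposes into a sequence of SCC-segments $S_1,\dots,S_k\in\mathcal{F}_i$ connected by single ``bridge'' edges $e_1,\dots,e_{k-1}$ with $e_j\in S_j\times S_{j+1}$. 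The strategy is to bound $\dist_{D_1}(s,t)$ by a telescoping sum along this decomposition: route from $s$ to the appropriate representative of its current SCC using Claim~\ref{clm:single_scc} (the hopset edges $H_i^j\subseteq E(D_1)$ of weight $O(d_i)$), then hop across the bridge edge $e_j$ using the Step-2 edges of $D_1$ between representatives (which exist since $\ell(e_j)\le i$ and whose weight is $w_G(e_j)+O(d_i)$), and continue.

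The key case split is on the \emph{level} $X$ at which $\pi_{s,t}$ first breaks, i.e.\ the random index with $\pi_{s,t}\subseteq G_X$ but $\pi_{s,t}\not\subseteq G_{X+1}$; note $X\ge i$ under $C_i^H$. If $X=z$, then the entire path survives into the DAG $D^*$, all the bridge edges and intra-SCC hops are available, and one gets $\dist_{D_1}(s,t)\le \dist_G(s,t) + O(z\cdot d_z) = O(\dist_G(s,t))$ since $d_z<1$ — here the number of levels $z+1$ and the representative-detour costs $2d_j$ summed over levels contribute the extra $(z-i+1)$ factor. If $X<z$, some edge of $\pi_{s,t}$ lies in $E_X$; I decompose $\pi_{s,t}$ at the level-$X$ SCC structure of $G_X$ into at most $k$ sub-paths, each contained in a single SCC $S\in\mathcal{F}_X$. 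For each such sub-path between consecutive ``bridge'' endpoints $u,v$ with $u<_D v$, I invoke Claim~\ref{clm:single_scc} to pay $O(d_X)$ in $D_1$, and for the (at most $k-1$) bridge edges that lie in $G_X$ I pay their original weights plus $O(d_X)$ representative detours via the Step-2 edges. The total is at most $\dist_G(s,t) + O(k\cdot d_X)$; but the bridge-endpoint detours can instead be charged through the recursive structure rather than counting $k$ directly — the cleaner route is: the sub-paths within each level-$X$ SCC each satisfy the induction hypothesis at level $X+1$ (conditioning on $G_{X+1}$), so by induction each costs $c\cdot(z-X)^2\cdot\alpha\cdot(\text{its length})$ in expectation, and summing over sub-paths (whose lengths sum to $\dist_G(s,t)$) plus $O(1)$ added levels of representative detours absorbs into $c\cdot(z-X+1)^2\cdot\alpha\cdot\dist_G(s,t)$. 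Then take expectation over $X$ using Lemma~\ref{lem:d_X}, which controls $\mathbb{E}[d_X\mid C_i^H]\le (z-i+1)\alpha\dist_G(s,t)$; combined with the at most $z+1\le O(z-i+1)$ possible values of $X$ and the inductive $(z-X+1)^2$ terms, the sum collapses to $c\cdot(z-i+1)^2\cdot\alpha\cdot\dist_G(s,t)$ provided $c$ is chosen large enough relative to the absolute constants hidden in the $O(\cdot)$'s.

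\textbf{Where the difficulty lies.} The main obstacle is making the recursion well-founded and the conditioning coherent: when I pass from level $i$ to level $X+1$, I must re-express ``$\mathbb{E}[\dist_{D_1}(s,t)\mathbbm 1[s<_D t]\mid C_i^H]$'' as an average over intermediate graphs $G_{X+1}=H'$ of the conditional expectations $\mathbb{E}[\cdots\mid C_{X+1}^{H'}]$, and verify that the sub-paths really do satisfy ``$u<_D v$'' so that Claim~\ref{clm:single_scc} applies in the correct ($D_1$, not $D_2$) direction — this uses that $<_D$ refines $<_i$ for all $i$ and that the bridge edges of $\pi_{s,t}$ point forward in $<_D$. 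A second subtlety is that the $D_1$-path I build from the telescoping sum uses edges from \emph{many} levels simultaneously (the representative-detour edges at level $j$ for the $j$-th segment, the bridge edges $e_j$, and the hopset edges), and I must confirm this concatenation is actually a valid directed walk in $D_1$ — this is exactly where Claim~\ref{clm:dag}'s proof that all added edges respect $<_D$ is used, since any walk whose edges all respect $<_D$ is automatically consistent and acyclic. Finally, getting the quadratic $(z-i+1)^2$ rather than a worse power requires being careful that each level of recursion adds only an \emph{additive} $O(1)$ to the level-count (not multiplicative), so that the per-segment costs, which already carry one factor of $(z-X+1)$ from Lemma~\ref{lem:d_X} and one from the inductive hypothesis, multiply to give exactly two factors overall.
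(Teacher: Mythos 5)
Your overall strategy — decompose $\pi_{s,t}$ into SCC-segments, bound the cost of each segment via intra-SCC hops and Step-2 bridge edges, close the recursion by conditioning on the graph at a deeper level, and control the random break level via Lemma~\ref{lem:d_X} — is indeed the paper's strategy. But there is a genuine gap at the step where you claim the intra-SCC sub-paths have endpoints correctly ordered under $<_D$.

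You write that the sub-path endpoints satisfy ``$u <_D v$'' because ``$<_D$ refines $<_i$ for all $i$ and the bridge edges of $\pi_{s,t}$ point forward in $<_D$.'' The bridge edges do point forward in $<_D$, but this tells you nothing about the internal segments of $\pi_{s,t}$ that lie inside a single SCC $S\in\mathcal{F}_i$. Within $S$, the restriction of $<_D$ is determined by the LDD splits at levels $>i$, and these can reverse the path order: if the LDD deletes an edge on the sub-segment $\pi[x_j,u_{j+1}]$ at some level $i'>i$, the SCC containing $u_{j+1}$ may come before the SCC containing $x_j$ in $<_{i'}$, forcing $u_{j+1}<_D x_j$. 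This is exactly the same phenomenon that makes $D_2$ necessary in the first place (an $s\leadsto t$ path with $s>_D t$), so it cannot be dismissed. When it happens, you cannot route from $x_j$ to $u_{j+1}$ in $D_1$ at all, so Claim~\ref{clm:single_scc} does not apply in the direction you want.

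The paper resolves this with the four-case analysis in Lemma~\ref{lem new X j}: when $x_j >_D u_{j+1}$, instead of routing $x_j\to u_{j+1}$ (impossible in $D_1$), one routes $x_j \to r^2_{Z_1}(u_{j+1})$, the \emph{last} node (under $<_D$) of the SCC at the random level $Z_1$ where the sub-path first breaks, which is forward in $<_D$ and reachable via the Claim~\ref{clm:single_scc} hopset at level $Z_1$; one then uses the Step-2 edge $(r^2_{Z_1}(u_{j+1}), v_{j+1})$ to cross the bridge. The extra cost is $O(d_{Z_1})$ per segment, and it is essential that this is at the \emph{random} level $Z_1$ rather than the fixed level $i$ — otherwise the sum over segments would be $O(q\cdot d_i)$ with no control on $q$. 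Lemma~\ref{lem:d_X} is then used precisely to bound $\mathbb{E}[d_{Z_1}]$ and $\mathbb{E}[d_{Z_2}]$ by $O((z-i+1)\alpha\dist_G)$. This case analysis and the representative detour at level $Z_1, Z_2$ are the main technical content of the inductive step, and your proposal omits them; the claimed justification for skipping them is incorrect. (Minor aside: your base case also overcomplicates — when $\pi_{s,t}\subseteq D^*$ one has $\dist_{D_1}(s,t)\le\dist_{D^*}(s,t)=\dist_G(s,t)$ directly since $E(D^*)\subseteq E(D_1)$ at original weights, with no $O(z\cdot d_z)$ term needed.)
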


We will prove Lemma \ref{lem:induction} by an induction argument in section \ref{subsec:induction}.  After  proving  Lemma \ref{lem:induction}, we finish the proof of Theorem \ref{thm:dag_cover} in section \ref{subsec:final}.

\subsection{Induction proof of Lemma \ref{lem:induction}}

\label{subsec:induction}

In this section we will prove Lemma \ref{lem:induction} using an induction argument. Recall that for all nodes $s, t \in V(G)$ such that $s$ can reach $t$ in $G$, and for each $i \in [0, z]$,  $B_i^{s, t}$ denotes the event that path $\pi_{s, t}$ is contained in $G_i$, where $\pi_{s, t}$ is a $s \leadsto t$ shortest path in $G$ specified in the previously. Likewise, for every subgraph $H \subseteq G$,  $C_i^H$ denotes the event that $G_i = H$. 

\induction*

We will prove Lemma \ref{lem:induction} by induction on $k = z-i+1$, starting with $k=1$ and incrementing to $k = z+1$. Let $c>1$ denote the induction constant claimed in Lemma \ref{lem:induction}. We will specify the precise value of $c$ when we finish the induction step of our proof in Lemma \ref{lem:inductive_step}. 
For each $k \in [1, z+1]$, we define an induction statement $S_k$ as follows. 

\paragraph{Induction Statement $S_k$.} \textit{Let $i = z-k+1$.  Let $s, t \in V(G)$. Let $H \subseteq G$ be a subgraph of $G$ such that $\pi_{s, t} \subseteq H$. Then}
$$
\mathbb{E}\left[\dist_{D_1}(s, t) \cdot \mathbbm{1}[s <_D t] \mid  C_{i}^H  \right] \leq c \cdot k^2 \cdot \alpha \cdot \dist_G(s, t).
$$

We can prove statement $S_1$ with relative ease. 

\begin{claim}[Base Case: $S_1$] \label{clm:base_induction}
    Induction Statement $S_1$ is true.   
\end{claim}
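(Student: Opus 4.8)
The plan is to observe that the base case $k=1$ is the level $i=z$ at which the recursive decomposition has already terminated: $G_z = D^*$ is itself a DAG, so conditioning on $C_z^H$ simply means $D^* = H$. Since by hypothesis $\pi_{s,t}\subseteq H$, the shortest path $\pi_{s,t}$ is entirely contained in $D^*$, and the task is just to verify that $D_1$ and the order $<_D$ then behave as expected.

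First I would invoke Step~2 of the construction of $D_1$, where every edge of $E(D^*)$ is added to $D_1$ carrying its original weight from $G$ (and the parallel-edge cleanup can only decrease weights). Hence $\pi_{s,t}$ appears in $D_1$ with total weight at most $w_G(\pi_{s,t})=\dist_G(s,t)$, so $\dist_{D_1}(s,t)\le \dist_G(s,t)$ on the event $C_z^H$. Second, I would check that the indicator $\mathbbm{1}[s<_D t]$ causes no trouble: $<_D$ refines $<_z$, and $<_z$ is a topological order of the condensation of $G_z=D^*$, which, being a DAG with singleton SCCs, is just a topological order of $D^*$ itself; thus every edge of $D^*$ points forward in $<_D$, so $s$ reaching $t$ in $D^*$ forces $s\le_D t$, that is, $s<_D t$ unless $s=t$ (in which case $\dist_G(s,t)=0$ and the claim is trivial). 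Combining the two facts gives, pointwise on the event $C_z^H$,
$$
\dist_{D_1}(s,t)\cdot\mathbbm{1}[s<_D t] \le \dist_G(s,t) \le c\cdot 1^2\cdot\alpha\cdot\dist_G(s,t),
$$
since $c>1$ and $\alpha\ge 1$; taking conditional expectation yields exactly $S_1$.

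I do not expect any real obstacle in this base case; it is essentially the observation that a shortest path surviving all the way to the bottom of the recursion is preserved by $D_1$ without any rerouting. The one point worth stating explicitly is that conditioning on $C_z^H$ does not determine the full random hierarchy $\{\mathcal{F}_i\}_i$ (hence not $D_1$ or $<_D$ in their entirety), but both facts used above, namely that $E(D^*)\subseteq E(D_1)$ with weights at most those of $G$, and that $D^*$ is consistent with $<_D$, hold deterministically for every realization, so the inequality holds pointwise on $C_z^H$ and therefore in conditional expectation. The genuine work is deferred to the induction step, where a path cut at an intermediate level $i$ must be rerouted through representative nodes and the hop-edge sets $H_i^j$, and it is this rerouting that produces the $(z-i+1)^2$ blow-up.
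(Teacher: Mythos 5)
Your proof is correct and follows essentially the same route as the paper's: on the event $C_z^H$ the path $\pi_{s,t}$ lies in $G_z = D^*$, whose edges are added to $D_1$ with their original weights, giving $\dist_{D_1}(s,t)\le \dist_G(s,t)$ pointwise, and the rest follows since $c\cdot\alpha\ge 1$. The extra discussion you give (why $s<_D t$ holds on this event, and why conditioning on $C_z^H$ does not pin down the full hierarchy but the needed facts hold deterministically) is correct but not needed, since the paper simply bounds the indicator by $1$.
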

\begin{proof}
     When $k = 1$, we have that $i = z-k+1 = z$. If event $C_i^H$ holds, then path $\pi_{s, t}$ is contained in graph $H = G_z = D^*$. Since we added the edges in $E(D^*)$ to DAG $D_1$ with their original edge weights, it follows that
    $$
    \dist_{D_1}(s, t) \leq \dist_{D^*}(s, t) = \dist_G(s, t).
    $$
    We conclude that 
    $$
    \mathbb{E}[\dist_{D_1}(s, t) \cdot \mathbbm{1}[s <_D t] \mid C_z^H] \leq \mathbb{E}[\dist_{D_1}(s, t)\mid  C_z^H] \leq  \dist_G(s, t) \leq c \cdot \alpha \cdot \dist_G(s, t),
    $$
    as desired.
\end{proof}

For the remainder of this section, we will let $k$ and $i$ be fixed indices such that $k \in [2, z+1]$ and $i = z-k+1 \in  [0, z-1]$. Additionally, for the remainder of this section we will assume that Induction Statement $S_{k-1}$ holds.
By the end of this section, we will prove  Lemma \ref{lem:inductive_step}, which states that if $S_{k-1}$ is true, then $S_k$ is true as well.  

\begin{restatable}{lemma}{inductivestep} \label{lem:inductive_step}
    If statement $S_{k-1}$ holds, then statement $S_k$ holds as well. 
\end{restatable}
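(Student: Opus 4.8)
The plan is to pass from level $i+1$ to level $i$ (so $k=z-i+1$) by exposing only the randomness of the level-$i$ low-diameter decomposition --- the call to \cref{lem:ldd} on $G_i$ that produces $G_{i+1}$ and hence $\mathcal{F}_{i+1}$ --- and conditioning on whether the fixed shortest path $\pi_{s,t}$ survives it. Concretely, fix $H\subseteq G$ with $\pi_{s,t}\subseteq H$ and condition on $C_i^H$, so that $\mathcal{F}_i$, the representatives $r_i^1(\cdot),r_i^2(\cdot)$, and $<_D$ restricted to levels $\le i$ are frozen; let $B_{i+1}^{s,t}=\{\pi_{s,t}\subseteq G_{i+1}\}$.

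On $B_{i+1}^{s,t}$ I would condition further on the realized graph $H'=G_{i+1}$, which satisfies $\pi_{s,t}\subseteq H'\subseteq H$; since $C_{i+1}^{H'}$ refines $C_i^H\cap B_{i+1}^{s,t}$ and $z-(i+1)+1=k-1$, the inductive hypothesis $S_{k-1}$ applied to $H'$ and averaged over $H'$ gives
$$
\mathbb{E}\!\left[\dist_{D_1}(s,t)\cdot\mathbbm{1}[s<_D t]\cdot\mathbbm{1}[B_{i+1}^{s,t}]\ \middle|\ C_i^H\right]\ \le\ c(k-1)^2\,\alpha\,\dist_G(s,t).
$$
For the complementary event, all edges of $\pi_{s,t}$ lie in $G_i=H$, so by \cref{clm:fam}(3) and a union bound the number $R$ of edges of $\pi_{s,t}$ deleted at level $i$ satisfies $\mathbb{E}[R\mid C_i^H]\le\sum_{e\in E(\pi_{s,t})}\alpha\,w(e)/d_i=\alpha\,\dist_G(s,t)/d_i$, and in particular $\Pr[\overline{B_{i+1}^{s,t}}\mid C_i^H]\le\alpha\,\dist_G(s,t)/d_i$. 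Hence it is enough to exhibit, on $C_i^H\cap\overline{B_{i+1}^{s,t}}\cap\{s<_D t\}$, an $s\!\leadsto\!t$ walk in $D_1$ of length $\dist_G(s,t)+O(d_i)\cdot N$ for a nonnegative random $N$ with $\mathbb{E}[N\cdot\mathbbm{1}[\overline{B_{i+1}^{s,t}}]\mid C_i^H]=O(k\,\alpha\,\dist_G(s,t)/d_i)$: then the $\overline{B_{i+1}^{s,t}}$-contribution is at most $\dist_G(s,t)\Pr[\overline{B_{i+1}^{s,t}}\mid C_i^H]+O(d_i)\,\mathbb{E}[N\,\mathbbm{1}[\overline{B_{i+1}^{s,t}}]\mid C_i^H]=O(k\,\alpha\,\dist_G(s,t))$, and adding the two parts gives $c(k-1)^2\alpha\,\dist_G(s,t)+O(k\,\alpha\,\dist_G(s,t))\le ck^2\alpha\,\dist_G(s,t)$ once $c$ is a large enough absolute constant, which is exactly $S_k$.

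To build the walk I would decompose $\pi_{s,t}$ by the SCCs $S_1,\dots,S_\ell$ of $\mathcal{F}_i$ it passes through (\cref{clm:path_struct}), with inter-SCC edges $f_1,\dots,f_{\ell-1}$, and expose which of its edges were deleted at level $i$. The construction rests on two structural facts. First, since $\pi_{s,t}\subseteq G_i$, each $f_p$ respects $<_D$ and has level $\ell(f_p)\le i$, and being an inter-SCC edge it is never internal to an SCC of any $G_j$, $j\ge i$, hence survives into $G_z=D^*$; consequently Step~2 of the construction of $D_1$ placed into $D_1$ both $f_p$ (with weight $w_G(f_p)$) and the shortcut edge $(r_i^2(S_p),r_i^1(S_{p+1}))$ of weight $w_G(f_p)+4d_i$. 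Second, any edge $u\to v$ of $\pi_{s,t}$ that is \emph{backward} under $<_D$ must have $S_i(u)=S_i(v)$ (otherwise, since $\pi_{s,t}\subseteq G_i$, that edge would witness $S_i(u)<_i S_i(v)$, forcing $u<_D v$). Thus every maximal \emph{bad run} of $\pi_{s,t}$ --- a maximal subpath each of whose edges is either deleted at level $i$ or backward --- lies inside one $\mathcal{F}_i$-SCC, and one checks that the two vertices flanking a bad run are $<_D$-ordered consistently, so the run can be bypassed by a single hopset detour of weight $\le 4d_i$ in that SCC via \cref{clm:single_scc}. The walk then follows $\pi_{s,t}$ verbatim on its good (forward, undeleted, hence in $D^*$) portions, traverses each $f_p$ along its original edge, and replaces each bad run by its $4d_i$ detour; since every edge used respects $<_D$, the concatenation is a genuine directed walk, of length $\dist_G(s,t)$ plus $4d_i$ times the number of bad runs, so I may take $N$ to be that number of bad runs.

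The main obstacle will be bounding $N$ so that $\mathbb{E}[N\,\mathbbm{1}[\overline{B_{i+1}^{s,t}}]\mid C_i^H]=O(k\,\alpha\,\dist_G(s,t)/d_i)$. A bad run that contains an edge deleted at level $i$ can be charged to that edge, which accounts for at most $R$ of them; the delicate case is a bad run made up entirely of backward edges that \emph{survive} the level-$i$ decomposition, which no level-$i$ deletion pays for. Handling these is where the bulk of the work (and presumably the remaining claims of this subsection) goes: one must show such a run lies inside a single $\mathcal{F}_{i+1}$-SCC and can be absorbed at a later level --- so that at most $O(k)=O(z-i)$ extra detours over all levels are ever needed --- and one must verify that concatenating these detours with the sub-walks produced recursively from the surviving portions never creates a directed cycle, again using that every added and followed edge respects the fixed order $<_D$. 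Once this accounting is in place, choosing $c$ large enough to swallow the $O(k)$ overhead closes the induction.
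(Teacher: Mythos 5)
Your top-level split by the event $B_{i+1}^{s,t}$ and the reduction of the surviving branch to $S_{k-1}$ (by conditioning on the realized $G_{i+1}$ and averaging) is sound and matches the paper's treatment of the ``$Y_1$'' term in its warm-up lemma. The trouble is entirely in the $\overline{B_{i+1}^{s,t}}$ branch, where your walk construction has two concrete gaps.

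First, you assert that the ``good'' edges of $\pi_{s,t}$ are ``forward, undeleted, hence in $D^*$,'' but being forward under $<_D$ and surviving the level-$i$ decomposition does \emph{not} put an edge in $D^* = G_z$: a forward edge that stays inside an $\mathcal{F}_{i+1}$-SCC can perfectly well be deleted at level $i+1$ or later. So the ``verbatim'' portion of your walk may use edges that are simply not present in $D_1$. (You later say you would concatenate with ``sub-walks produced recursively from the surviving portions,'' which contradicts the verbatim claim; but if the good portions are handled recursively then your bound $\dist_G(s,t)+O(d_i)\cdot N$ is no longer what the recursion returns, and the accounting has to be reworked from scratch.) Second, your claim that ``the two vertices flanking a bad run are $<_D$-ordered consistently'' is unjustified and, as far as I can see, false: a maximal run of backward/deleted edges from $a$ to $b$ inside an $\mathcal{F}_i$-SCC can easily have $a >_D b$, in which case the $4d_i$ detour supplied by \cref{clm:single_scc} simply does not exist in $D_1$, and there is no edge you can use to bridge $a$ to $b$ in a DAG that respects $<_D$.

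This second gap is precisely what the paper's machinery is built to avoid, and it is the missing idea. Rather than detouring from the run's own endpoints, the paper jumps to the \emph{representative} nodes $r^1,r^2$ of the relevant SCC: $r^2(S)$ is the $<_D$-last node of $S$, so from any $x\in S$ you always have $x\le_D r^2(S)$ and a $\le 4d$ path to it in $D_1$, irrespective of whether your target is $<_D$-smaller than $x$. Concretely, the paper decomposes $\pi_{s,t}$ by the $\mathcal{F}_i$-SCCs it meets, treats each segment $x_j\leadsto y_j$ separately, does a four-way case analysis on whether $x_j\le_D u_{j+1}$ and $v_{j+1}\le_D y_j$, and when a comparison is adverse routes via $r^2_{Z_1}(u_{j+1})$ or $r^1_{Z_2}(v_{j+1})$, where $Z_1,Z_2$ are the random levels at which the sub-shortest-paths $\pi_{s,t}[x_j,u_{j+1}]$ and $\pi_{s,t}[v_{j+1},y_j]$ die. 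The Step-2 edges $(r_{Z_1}^2(u_{j+1}),r_{Z_2}^1(v_{j+1}))$ always respect $<_D$ and always exist, and \cref{lem:d_X} supplies the $\mathbb{E}[d_{Z_1}\mid C_i^H] \le O(k)\,\alpha\,\dist_G(\cdot)$ bound that replaces the ad hoc charging of ``bad runs'' that you were struggling to control. Your proposal correctly identifies the $B_{i+1}$ split and the $\mathcal{F}_i$-decomposition, but without the representative-node detours and the $\mathbb{E}[d_Z]$ lemma the accounting does not close.
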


Proving Lemma \ref{lem:inductive_step} will complete the inductive step of our proof of Lemma \ref{lem:induction}.
The remainder of this section will be devoted towards eventually proving  Lemma \ref{lem:inductive_step}.

\paragraph{Proving Lemma \ref{lem:inductive_step}.} 
Our  goal is to prove the upper bound  $$\mathbb{E}[\dist_{D_1}(s, t) \cdot \mathbbm{1}[s <_D t] \mid  C_i^H] \leq c \cdot k^2 \cdot \alpha \cdot \dist_G(s, t) $$ 
claimed in statement $S_k$, completing the proof of Lemma \ref{lem:inductive_step}. 
As a warm up, we will prove that statement $S_k$ holds in the special case where $s$ and $t$ lie in the same SCC in $H$. 

\begin{lemma} \label{lem:warmup}
Let $s, t \in V(G)$, and let $H \subseteq G$ be a subgraph of $G$ such that $\pi_{s,t} \subseteq H$ and nodes $s$ and $t$ lie in the same SCC in $H$.  If statement $S_{k-1}$ holds, then
    $$
    \mathbb{E}[\dist_{D_1}(s, t) \cdot \mathbbm{1}[s <_D t] \mid  C_i^H] \leq (c \cdot (z-i)^2 +c_1)  \cdot \alpha \cdot \dist_G(s, t),
    $$
    where $c_1 \geq 1$ is a sufficiently large universal constant that does not depend on the induction constant $c$. In particular, if $c \geq c_1$, then it follows that  
    $$
    \mathbb{E}[\dist_{D_1}(s, t) \cdot \mathbbm{1}[s <_D t] \mid  C_i^H] \leq c \cdot k^2  \cdot \alpha \cdot \dist_G(s, t).
    $$
\end{lemma}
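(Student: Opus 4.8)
The plan is to condition on the realization of $G_{i+1}$ and split into three cases. Two quick reductions first: we may assume $s\neq t$ (otherwise the left-hand side is $0$), and we may assume without loss of generality that the decomposition of Lemma~\ref{lem:ldd} never deletes an edge lying between two distinct SCCs of the graph it is applied to — re-adding such an edge to $G_j\setminus E_j$ cannot merge two SCCs and hence cannot increase any SCC's diameter, so deleting all inter-SCC edges from each $E_j$ preserves all three properties of Lemma~\ref{lem:ldd}. A consequence used throughout: if an edge $e$ of $\pi_{s,t}$ crosses between two distinct SCCs of some $G_j$ with $j\ge i+1$ and $\pi_{s,t}\subseteq G_j$, then (SCCs only refine as $j$ grows) $e$ stays an inter-SCC edge forever and hence survives into $D^*=G_z$; thus $e$ and all its lifted copies $(r_a^x(u),r_b^y(v))$ with $a,b\ge\ell(e)$ added in Step~2 of the construction of $D_1$ lie in $D_1$.

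Write $\mathbb{E}[\dist_{D_1}(s,t)\cdot\mathbbm{1}[s<_Dt]\mid C_i^H]=\sum_{H'}\Pr[C_{i+1}^{H'}\mid C_i^H]\cdot\mathbb{E}[\dist_{D_1}(s,t)\cdot\mathbbm{1}[s<_Dt]\mid C_{i+1}^{H'}]$ and classify each realization $H'$ of $G_{i+1}$: (I) $\pi_{s,t}\not\subseteq H'$; (II) $\pi_{s,t}\subseteq H'$ and $s,t$ lie in one SCC of $H'$; (III) $\pi_{s,t}\subseteq H'$ and $s,t$ lie in distinct SCCs of $H'$. In case~(I) the path was destroyed at level $i$, so $s,t$ still share an SCC of $G_i=H$ and Claim~\ref{clm:single_scc} (using $s<_Dt$) gives $\dist_{D_1}(s,t)\le 4d_i$; the total probability of case~(I) is at most $\tfrac{\alpha}{d_i}\dist_G(s,t)$ by Claim~\ref{clm:fam} and a union bound over the shortest path $\pi_{s,t}$, so case~(I) contributes at most $4\alpha\dist_G(s,t)$. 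In case~(II) the induction hypothesis $S_{k-1}$ applies verbatim (it needs only $\pi_{s,t}\subseteq H'$), contributing at most $c(k-1)^2\alpha\dist_G(s,t)=c(z-i)^2\alpha\dist_G(s,t)$.

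Case~(III) is the crux. By Claim~\ref{clm:path_struct} write $\pi_{s,t}=\sigma_1 e_1\sigma_2\cdots e_{m-1}\sigma_m$ with $\sigma_\ell$ the portion of the path inside an SCC $T_\ell\in\mathcal F_{i+1}$ and $e_\ell=(u_\ell,v_\ell)$ the unique edge from $T_\ell$ to $T_{\ell+1}$; by consistency of $\{\pi_{a,b}\}$ each $\sigma_\ell$ is the chosen shortest path between its endpoints, and by the reduction above every $e_\ell$ and its lifted copies lie in $D_1$. I would build a $<_D$-monotone $s$–$t$ walk in $D_1$ by processing $T_1,\dots,T_m$ in order. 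Inside $T_\ell$, if the walk's current vertex precedes $u_\ell$ in $<_D$, recurse via $S_{k-1}$ (conditioned on $C_{i+1}^{H'}$) to route to $u_\ell$ and then traverse $e_\ell$ itself; otherwise $\sigma_\ell$ must break at some level $X_\ell\ge i+1$ with its endpoints in a common SCC $T'_\ell\in\mathcal F_{X_\ell}$, so instead route within $T'_\ell$ to $r_2(T'_\ell)$ at cost $O(d_{X_\ell})$ (Claim~\ref{clm:single_scc}) and cross via the level-$X_\ell$ lifted copy of $e_\ell$ (a $D_1$-edge of weight $w_G(e_\ell)+4d_{X_\ell}$), re-entering $T_{\ell+1}$ near $v_\ell$ so the walk continues. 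Summing: the recursive steps total at most $c(k-1)^2\alpha\sum_\ell w(\sigma_\ell)\le c(z-i)^2\alpha\dist_G(s,t)$ in expectation; the crossing edges contribute $\sum_\ell w_G(e_\ell)\le\dist_G(s,t)$; and the reroutes contribute $O\big(\sum_\ell d_{X_\ell}\big)$, which (applying Lemma~\ref{lem:d_X} to each $\sigma_\ell$ at level $i+1$) has expectation $O\big((z-i)\,\alpha\sum_\ell w(\sigma_\ell)\big)=O\big((z-i)\,\alpha\,\dist_G(s,t)\big)$. Folding the three cases together and pushing the $O((z-i)\alpha\dist_G)$ and $O(\alpha\dist_G)$ terms into the $c\cdot(z-i)^2$ coefficient (legitimate since $c$ is taken large and $z-i\ge 1$, leaving only genuinely constant slack for $c_1$) yields $\mathbb{E}[\dist_{D_1}(s,t)\mathbbm{1}[s<_Dt]\mid C_i^H]\le(c(z-i)^2+c_1)\alpha\dist_G(s,t)$; finally $(z-i)^2+1\le k^2$ together with $c\ge c_1$ gives the claimed $\le c\,k^2\,\alpha\,\dist_G(s,t)$.

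The hardest part is case~(III), and inside it the ``reversed'' sub-case: keeping the constructed $s$–$t$ walk monotone in $<_D$ while paying only $O(d_{X_\ell})$ — not $O(d_{i+1})$ — for a sub-path that cannot be followed forward, which forces rerouting through exactly the right representative $r_2(T'_\ell)$ at the break level $X_\ell$ and carefully tracking the vertex at which the walk re-enters each SCC. A secondary, bookkeeping obstacle is that $S_{k-1}$ may only be invoked on pairs whose \emph{chosen} shortest path lies in $H'$; since the walk's re-entry vertices need not lie on $\pi_{s,t}$, one must argue (via consistency of $\{\pi_{a,b}\}$ and the fact that the relevant endpoints share an SCC of $H'$) that the required shortest paths can be taken inside $H'$, or else reduce every recursive call to a sub-pair of $\pi_{s,t}$ itself. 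Making the total reroute cost genuinely absorbable within the universal-constant slack $c_1$ is the delicate accounting that I expect to demand the most care, and it is the step most likely to need an idea beyond the crude $O(d_{X_\ell})$ bound sketched above.
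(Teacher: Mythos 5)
Your case split into (I)/(II)/(III) is where the proof goes off the rails, and in a somewhat paradoxical way: you already write down the observation that resolves the whole problem, but fail to use it. In case (II) you note, correctly, that $S_{k-1}$ ``needs only $\pi_{s,t}\subseteq H'$'' — it places no constraint on whether $s$ and $t$ share an SCC of $H'$. But then case (III) also satisfies $\pi_{s,t}\subseteq H'$, so $S_{k-1}$ applies there verbatim as well. The paper's proof does exactly this: it splits only on whether $B_{i+1}^{s,t}$ holds, bounds the ``destroyed'' event via Claim~\ref{clm:single_scc} and the LDD probability to get a $4\alpha\dist_G(s,t)$ term (giving $c_1=4$), and bounds the ``survived'' event by conditioning on the realization of $G_{i+1}$ and invoking $S_{k-1}$ once — never touching the SCC structure of $G_{i+1}$. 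Your cases (I) and (II) reproduce this, and your case (III) is entirely superfluous.

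Worse, the case-(III) routing argument you attempt has a genuine, unfixable-as-written gap, which you yourself flag: the reroute cost $O\!\left(\sum_\ell d_{X_\ell}\right)$ has expectation $O\!\left((z-i)\alpha\,\dist_G(s,t)\right)$, and $z-i$ can be $\Theta(\log n)$, so this term cannot be absorbed into a \emph{universal} constant $c_1$. Your suggestion to push it into the $c\,(z-i)^2$ coefficient does not work either: that would require increasing $c$ by a constant, but $c$ is the same induction constant appearing in $S_{k-1}$, so inflating it at each level of the induction would blow it up. (An increment that grows as $c\mapsto c+O(1)$ per level is exactly what the paper avoids by keeping the per-level additive slack a true constant $c_1$ and then charging it to the quadratic $k^2$ factor.) The routing machinery you sketch — segmenting $\pi_{s,t}$ by SCCs of $G_{i+1}$, using lifted edges $\bigl(r^x_a(u),r^y_b(v)\bigr)$, and paying $O(d_{X_\ell})$ per reroute — is precisely the content of the paper's separate Lemma~\ref{lem new X j}, which handles the harder situation where $s,t$ do \emph{not} share an SCC of $G_i$; you have imported that harder argument into the warm-up lemma where it is not needed and cannot close. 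The fix is to delete case (III) and fold it into the single application of $S_{k-1}$, exactly as in your case (II).

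As a minor aside, your preliminary WLOG — that the LDD never deletes edges between distinct SCCs of the graph it is applied to — is correct (such an edge cannot create a cycle when restored, since the reverse direction was already unreachable), but once case (III) is removed it is no longer used anywhere, so it can be dropped.
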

\begin{proof}
 We define a random variable $X$ to be
$$X = \dist_{D_1}(s, t) \cdot \mathbbm{1}[s <_D t].$$

We will make use of the following equality in our proof.
\begin{multline} \label{warm up main ineq}
    X = \dist_{D_1}(s, t) \mathbbm{1}[s <_D t] = \dist_{D_1}(s, t) \mathbbm{1}[s <_D t] (\mathbbm{1}[B_{i+1}^{s, t}] + \mathbbm{1}[\bar{B}_{i+1}^{s, t}]) \\
    = \dist_{D_1}(s, t) \mathbbm{1}[s <_D t] \mathbbm{1}[B_{i+1}^{s, t}] + \dist_{D_1}(s, t) \mathbbm{1}[s <_D t] \mathbbm{1}[\bar{B}_{i+1}^{s, t}]
\end{multline}
Let $Y_1 = \dist_{D_1}(s, t) \mathbbm{1}[s <_D t] \mathbbm{1}[B_{i+1}^{s, t}]$, and let $Y_2 = \dist_{D_1}(s, t) \mathbbm{1}[s <_D t] \mathbbm{1}[\bar{B}_{i+1}^{s, t}]$. Since $X = Y_1 + Y_2$, we can upper bound $\mathbb{E}[X \mid  C_i^H]$ by upper bounding $\mathbb{E}[Y_1 \mid  C_i^H]$ and $\mathbb{E}[Y_2 \mid C_i^H]$ separately. 

We begin by upper bounding $\mathbb{E}[Y_1 \mid C_i^H]$. We will achieve this by applying the induction statement $S_{k-1}$. We observe the following sequence of inequalities:
    \begin{align*}
        &  \mathbb{E}[Y_1 \mid  C_i^H]  \\
        & = \mathbb{E}[ \dist_{D_1}(s, t) \cdot \mathbbm{1}[s <_D t] \cdot \mathbbm{1}[B_{i+1}^{s, t}] \mid C_i^H] & \\
        & \leq  \mathbb{E}[ \dist_{D_1}(s, t) \cdot \mathbbm{1}[s <_D t]  \mid  B_{i+1}^{s, t} \cap C_i^H]. &  
    \end{align*}

We are almost ready to apply the induction hypothesis. Let $\mathcal{H}$ denote the family of all subgraphs $H' \subseteq H$ of $H$ such that  $\pi_{s, t} \subseteq H'$. 
Let $H^* \in \mathcal{H}$ be the graph in $\mathcal{H}$ that maximizes $\mathbb{E}[\dist_{D_1}(s, t) \cdot \mathbbm{1}[s <_D t] \mid C_{i+1}^{H^*}]$, i.e., let
$$
H^* = \text{\noindent argmax}_{H' \in \mathcal{H}}  \mathbb{E}\left[\dist_{D_1}(s, t) \cdot \mathbbm{1}[s <_D t] \mid  C_{i+1}^{H'}\right].
$$

Notice that if event  $B_{i+1}^{s, t} \cap C_{i}^{H}$ occurs, then there exists a subgraph $H' \in \mathcal{H}$ such that event $C_{i+1}^{H'}$ occurs as well. Then by our choice of graph $H^*$, we obtain the following inequality. 
\begin{equation} \label{eq H star}
\mathbb{E}[ \dist_{D_1}(s, t)  \cdot \mathbbm{1}[s <_D t]  \mid  B_{i+1}^{s, t} \cap C_{i}^H] \leq \mathbb{E}[\dist_{D_1}(s, t) \cdot \mathbbm{1}[s <_D t] \mid   C_{i+1}^{H^*}]
\end{equation}

Continuing with our previous sequence of inequalities, 
\begin{align*}
    & \mathbb{E}[Y_1  \mid C_i^H]  & \\
    & \leq   \mathbb{E}[ \dist_{D_1}(s, t)  \cdot \mathbbm{1}[s <_D t]  \mid  B_{i+1}^{s, t} \cap C_i^H]  & \\
    & \leq \mathbb{E}[\dist_{D_1}(s, t) \cdot \mathbbm{1}[s <_D t] \mid  C_{i+1}^{H^*}]  & \text{ by inequality \ref{eq H star}} \\
    & \leq c \cdot (z - i)^2 \cdot \alpha \cdot \dist_G(s, t). & \text{ by statement $S_{k-1}$} 
\end{align*}
We conclude from the above sequence of inequalities that
 \begin{equation} \label{warmup eq c1}
     \mathbb{E}[Y_1 \mid  C_i^H] \leq c \cdot (z-i)^2 \cdot \alpha \cdot \dist_G(s, t). 
 \end{equation}

Next we upper bound  $\mathbb{E}[Y_2 \mid C_i^H]$. We will achieve this using Claim \ref{clm:single_scc}. We will also use of the following inequality which follows from the fact that $\pi_{s, t} \subseteq H$, combined with Claim \ref{clm:fam} and the union bound.
\begin{equation} \label{new eq warm up survive next level}
      \Pr[\Bar{B}_{i+1}^{s, t} \mid C_i^H] \leq 
     \sum_{e \in E(\pi_{s, t})}  \frac{w(e) \cdot \alpha}{d_i}  =  \frac{\alpha}{d_i}\cdot \dist_G(s, t)
\end{equation}

Let $A$ be the event that $s <_D t$. We observe the following sequence of inequalities:
\begin{align*}
  &    \mathbb{E}[Y_2 \mid  C_i^H] & \\
  & =   \mathbb{E}[\dist_{D_1}(s, t) \cdot \mathbbm{1}[A] \cdot \mathbbm{1}[\bar{B}_{i+1}^{s, t}] \mid   C_i^H ]          &  \\
  & =  \mathbb{E}[\dist_{D_1}(s, t)  \mid A \cap  \bar{B}_{i+1}^{s, t} \cap C_i^H ] \cdot \Pr[\bar{B}_{i+1}^{s, t} \mid A \cap  C_i^H] \cdot \Pr[A \mid  C_i^H] &  \\
  &  =  \mathbb{E}[\dist_{D_1}(s, t)  \mid A \cap  \bar{B}_{i+1}^{s, t} \cap C_i^H ] \cdot \Pr[\bar{B}_{i+1}^{s, t} \cap A \mid   C_i^H]  &  \\
  & \leq  \mathbb{E}[\dist_{D_1}(s, t)  \mid A \cap  \bar{B}_{i+1}^{s, t} \cap C_i^H ] \cdot \Pr[\bar{B}_{i+1}^{s, t} \mid   C_i^H]  &  \\
    & \leq  4d_i \cdot \Pr[\bar{B}_{i+1}^{s, t} \mid   C_i^H]  & \text{ by Claim \ref{clm:single_scc}} \\
    & \leq 4d_i \cdot \frac{\alpha}{d_i}  \cdot \dist_G(s, t) & \text{ by inequality \ref{new eq warm up survive next level}} \\
    & = 4\alpha \cdot \dist_G(s, t).
\end{align*} 
    We conclude from the above sequence of inequalities  that
    \begin{equation} \label{warm up eq c2}
        \mathbb{E}[Y_2 \mid  C_i^H] \leq c_1 \cdot \alpha \cdot \dist_G(s, t),
    \end{equation}
    for  $c_1 = 4$. Putting everything together, we get
\begin{align*}
    \mathbb{E}[X \mid C_i^H] & = \mathbb{E}[Y_1 \mid  C_i^H] + \mathbb{E}[Y_2 \mid  C_i^H] & \text{ by equation \ref{warm up main ineq}} \\
    & \leq (c \cdot (z-i)^2 +c_1) \cdot \alpha \cdot \dist_G(s, t). & \text{ by inequalities \ref{warmup eq c1} and \ref{warm up eq c2}} 
\end{align*}
 In particular, if $c \geq c_1$, then it follows that  
    $$
    \mathbb{E}[\dist_{D_1}(s, t) \cdot \mathbbm{1}[s <_D t] \mid  C_i^H] \leq c \cdot (z-i+1)^2 \cdot \alpha \cdot \dist_G(s, t)  = c \cdot k^2  \cdot \alpha \cdot \dist_G(s, t).
    $$
\end{proof}

By Lemma \ref{lem:warmup}, if nodes $s$ and $t$ lie in the same SCC in subgraph $H \subseteq G$, then statement $S_k$ holds. Therefore, when proving Lemma \ref{lem:inductive_step} we may assume without loss of generality that $s$ and $t$ do not lie in the same SCC in graph $H$.
For the remainder of this section, fix a pair of nodes $s, t \in V(G)$, and let $H \subseteq G$ be a fixed subgraph of $G$ such that $\pi_{s, t} \subseteq H$ and nodes $s$ and $t$ do not lie in the same SCC in $H$. We define a random variable $X$ to be
$$X = \dist_{D_1}(s, t) \cdot \mathbbm{1}[s <_D t].$$
We recall that our goal in proving Lemma \ref{lem:inductive_step} is to show that
$$
\mathbb{E}[X \mid  C_i^H] \leq c \cdot k^2 \cdot \alpha \cdot \dist_G(s, t). 
$$

Suppose that event $ C_i^H$ holds. Let  $S_1, \dots, S_{q} \in \mathcal{F}_i$, $q> 1$,  be the  sequence of SCCs of $H$ described in Claim \ref{clm:path_struct} with respect to path $\pi_{s, t}$ in graph $H$. Let $\{(u_j, v_j)\}_{j \in [1, q - 1]}$ be the associated set of edges, where edge $(u_j, v_j)$ is the unique edge in the intersection $E(\pi_{s, t}) \cap (S_j \times S_{j+1})$.
We observe that $(u_j, v_j) \in E(D^*) \subseteq E(D_1)$ for all $j \in [1, q-1]$.

We define vertices $x_j, y_j \in V(G)$ for all $j \in [0, q-2]$. Let $x_0 = s$ and $y_{q-2} = t$. For each $j \in [1, q-2]$, let $y_{j-1}= v_j$ and let $x_{j} = v_j$, so that $x_j = y_{j-1}$. See \Cref{fig:path} for a visualization of path $\pi_{s, t}$, the SCCs $\{S_i\}_i$, the edges $\{(u_i, v_i)\}_i$, and the vertices $\{x_i, y_i\}_i$.

\begin{figure}[h] 
\centering
\includegraphics[width=1\textwidth]{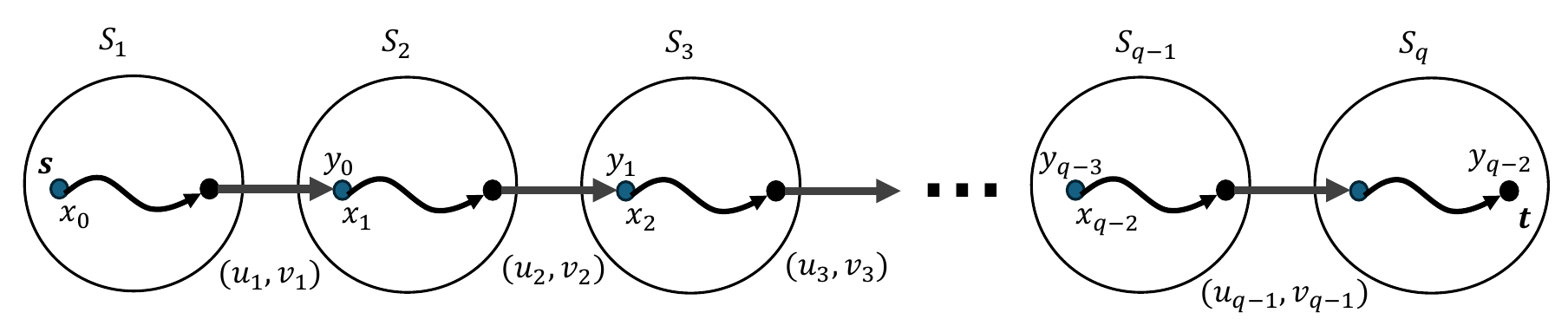}
\caption{A visualization of path $\pi_{s, t}$ in subgraph $G_i$ of $G$. We  upper bound the distance from $s$ to $t$ in DAG $D_1$ by upper bounding the distance from $x_i$ to $y_i$ in DAG $D_1$ for all $i \in [1, q-2]$.  } \label{fig:path}
\end{figure}

For each $j \in [0, q-2]$, let $X_j$ be the random variable 
$$
X_j = \dist_{D_1}(x_j,  y_j). 
$$
We will  use  the following upper bound on $\mathbb{E}[X \mid  C_i^H]$ in our proof of Lemma \ref{lem:inductive_step}. 
\begin{equation} \label{new eq X bound}
    \mathbb{E}[X \mid  C_i^H] \leq \mathbb{E}\left[ \sum_{j=0}^{q-2} \dist_{D_1}(x_j, y_{j}) \cdot \mathbbm{1}[s <_D t] \mid  C_i^H \right] \leq \sum_{j=0}^{q-2} \mathbb{E}[X_j   \mid  C_i^H ]
\end{equation}
Inequality \ref{new eq X bound} tells us that to upper bound $\mathbb{E}[X \mid  C_i^H]$, it is sufficient to upper bound $\mathbb{E}[X_j \mid  C_i^H]$ for each $j \in [0, q-2]$. This will be our next goal. 

\begin{lemma} \label{lem new X j}
If statement $S_{k-1}$ holds for a sufficiently large induction constant $c$, then for each $j \in [0, q-2]$, 
    $$
    \mathbb{E}[X_j \mid  C_i^H] \leq c \cdot k^2 \cdot \alpha \cdot  \dist_G(x_j, y_j)
    $$
\end{lemma}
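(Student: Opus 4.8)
The plan is to bound $\mathbb{E}[X_j \mid C_i^H]$ by splitting on whether the subpath $\pi_{s,t}[x_j, y_j]$ survives the next level of the LDD. Recall $x_j, y_j$ are consecutive ``landing points'' of $\pi_{s,t}$ in the SCCs $S_1, \dots, S_q$ of $H = G_i$; in particular $x_j$ and $y_j$ lie in the \emph{same} SCC $S_{j+1}$ of $H$ (for $j \in [1, q-2]$), while the endpoints $x_0 = s$ and $y_{q-2} = t$ are handled slightly differently since $x_0 \in S_1$ and $y_{q-2} \in S_q$. Since $s <_D t$ forces the SCC order $S_1 <_i \cdots <_i S_q$, we get $x_j <_D y_j$ for every $j$, so the indicator $\mathbbm{1}[s <_D t]$ is subsumed by working inside the event $C_i^H$ and using that $x_j, y_j$ are ordered. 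So it suffices to bound $\mathbb{E}[\dist_{D_1}(x_j, y_j) \mid C_i^H]$.

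\textbf{First} I would write $X_j = \dist_{D_1}(x_j, y_j) = X_j \cdot \mathbbm{1}[B_{i+1}^{x_j, y_j}] + X_j \cdot \mathbbm{1}[\overline{B}_{i+1}^{x_j, y_j}]$, exactly mirroring the warm-up (Lemma \ref{lem:warmup}). For the first term: when $\pi_{x_j, y_j}$ survives into $G_{i+1}$, we pass to the subgraph $H' = G_{i+1} \subseteq H$ containing $\pi_{x_j, y_j}$, take the worst-case such $H'$ (the $\text{argmax}$ trick from the warm-up), and invoke the induction hypothesis $S_{k-1}$ applied to the pair $(x_j, y_j)$ — which is legitimate since $i+1 = z - (k-1) + 1$. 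This gives a bound of $c \cdot (z-i)^2 \cdot \alpha \cdot \dist_G(x_j, y_j) = c(k-1)^2 \alpha \cdot \dist_G(x_j, y_j)$. Here I need the consistency of the path family $\{\pi_{s,t}\}$ to ensure $\pi_{x_j, y_j}$ is a genuine subpath of $\pi_{s,t}$, so that survival of $\pi_{s,t}$ controls survival of $\pi_{x_j, y_j}$ and $\dist_G(x_j, y_j) \le \dist_G(s,t)$ summed appropriately via Inequality \ref{new eq X bound}.

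\textbf{Second}, for the term $X_j \cdot \mathbbm{1}[\overline{B}_{i+1}^{x_j, y_j}]$: when the subpath does \emph{not} survive into $G_{i+1}$, I bound $\dist_{D_1}(x_j, y_j)$ using the structure of $D_1$ at level $i$. Since $x_j, y_j \in S_{j+1} \in \mathcal{F}_i$ with $x_j <_D y_j$, Claim \ref{clm:single_scc} gives $\dist_{D_1}(x_j, y_j) \le 4d_i$. (For the boundary cases $j = 0$ and $j = q-2$, $x_0 \in S_1$ and $y_{q-2} \in S_q$ are in different SCCs, but then one can first route within $S_1$ to $r_i^?(\cdot)$, across the level-$i$ edge $(u_j, v_j) \in E(D^*)$ added to $D_1$ in Step 2, etc.; actually the cleaner route is that $\dist_{D_1}(x_j, y_j)$ over the whole telescoped sum is already accounted for and only the ``same-SCC'' legs need the $4d_i$ bound — I would state this carefully.) Then I'd multiply by $\Pr[\overline{B}_{i+1}^{x_j,y_j} \mid C_i^H]$, which by Claim \ref{clm:fam} and a union bound over $E(\pi_{x_j, y_j})$ is at most $\frac{\alpha}{d_i} \dist_G(x_j, y_j)$, so the $d_i$ cancels and the contribution is $O(\alpha \cdot \dist_G(x_j, y_j))$. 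Adding the two terms: $\mathbb{E}[X_j \mid C_i^H] \le (c(k-1)^2 + c_1)\alpha \cdot \dist_G(x_j, y_j) \le ck^2 \alpha \cdot \dist_G(x_j, y_j)$ for $c$ large enough (e.g. $c \ge c_1$ and using $k^2 - (k-1)^2 = 2k-1 \ge 1$, so the slack $c(2k-1) \ge c \ge c_1$ absorbs $c_1$).

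\textbf{The main obstacle} I anticipate is the boundary segments $j=0$ and $j=q-2$, where $x_j$ and $y_j$ do not lie in a common SCC of $H$, so Claim \ref{clm:single_scc} does not directly apply and the ``non-survival'' bound on $\dist_{D_1}(x_j, y_j)$ is not simply $4d_i$. The fix is that for these segments the relevant $D_1$-path consists of: a hop inside $S_1$ (or $S_q$) of weight $\le 4d_i$, plus the crossing edge $(u_j, v_j) \in E(D^*) \subseteq E(D_1)$ of weight $w_G(u_j, v_j)$, and these crossing-edge weights telescope to at most $\dist_G(s,t)$ over all $j$ — so one should really bound $\sum_j \mathbb{E}[X_j \mid C_i^H]$ as a whole rather than each $X_j$ in isolation when crossing edges are involved. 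I'd need to set up the decomposition in Inequality \ref{new eq X bound} so that the crossing edges are separated out (contributing exactly $\dist_G(s,t)$ plus lower-order diameter terms $2d_i$ per crossing, of which there are $q-1 \le $ something controlled) and only the within-SCC legs recurse. This bookkeeping — making sure the $q-1$ crossing-edge overhead terms $O(d_i)$ each don't blow up, which requires bounding $q$ or absorbing $q \cdot d_i$ against $\dist_G(s,t)$ via the non-survival probability — is where the real care is needed, and I'd expect it to occupy most of the remaining proof after Lemma \ref{lem new X j}.
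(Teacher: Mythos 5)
Your proposal rests on a mistaken reading of the geometry of the decomposition. You write that ``$x_j, y_j$ lie in the same SCC $S_{j+1}$ of $H$ (for $j \in [1, q-2]$),'' but by the paper's definitions this never happens: for $j \in [1, q-3]$ we have $x_j = v_j \in S_{j+1}$ and $y_j = v_{j+1} \in S_{j+2}$, and the boundary cases $j=0$ and $j=q-2$ are also split across two SCCs. So $x_j$ and $y_j$ lie in \emph{different} SCCs of $H$ for \emph{every} $j \in [0,q-2]$ (indeed this is forced, because $s$ and $t$ are assumed not to share an SCC in $H$). Consequently the entire ``Case B'' of your plan --- applying Claim~\ref{clm:single_scc} to $(x_j,y_j)$ to get $\dist_{D_1}(x_j,y_j) \le 4d_i$ when the subpath does not survive into $G_{i+1}$ --- is invalid for \emph{all} $j$, not just the boundary terms; the hypotheses of Claim~\ref{clm:single_scc} are simply never met.

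Your fallback (``the cleaner route is \dots only the `same-SCC' legs need the $4d_i$ bound'') is closer to the paper's actual structure, which decomposes $\pi[x_j,y_j]$ into the within-SCC piece $\pi[x_j,u_{j+1}]$, the crossing edge $(u_{j+1},v_{j+1})$, and the within-SCC piece $\pi[v_{j+1},y_j]$, and then recurses on the two within-SCC pieces via Lemma~\ref{lem:warmup} rather than on $(x_j,y_j)$ directly. But even this reformulation hits an obstacle you do not address: within the SCC $S_{j+1}$, the order of $x_j$ and $u_{j+1}$ under $<_D$ is determined by deeper (random) levels of the decomposition, and one can well have $x_j >_D u_{j+1}$. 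In that event there is no direct $D_1$-path from $x_j$ to $u_{j+1}$, so $\dist_{D_1}(x_j,u_{j+1}) = \infty$ and your triangle inequality collapses. This is precisely what forces the paper's four-way case split on the signs of $\mathbbm{1}[x_j \le_D u_{j+1}]$ and $\mathbbm{1}[v_{j+1}\le_D y_j]$, together with the rerouting through representatives $r^2_{Z_1}(u_{j+1})$ and $r^1_{Z_2}(v_{j+1})$ (added in Step~2 of the $D_1$ construction) and the accompanying control of $\mathbb{E}[d_{Z_1}]$, $\mathbb{E}[d_{Z_2}]$ via Lemma~\ref{lem:d_X}. That machinery is the real content of the lemma, and it is absent from your plan; without it you have no way to bound $\mathbb{E}[X_j \mid C_i^H]$ on the (non-negligible) event $x_j >_D u_{j+1}$.
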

\begin{proof}
 Fix an index $j \in [0, q-2]$.  To upper bound $\mathbb{E}[X_{j} \mid  C_i^H]$,  we essentially need to handle four different cases based on the ordering of vertices $x_j, u_{j+1}, v_{j+1}, $ and $y_j$ under $<_D$.  
 We briefly outline these four cases and the different inequalities we will use to handle them, before proceeding with the formal upper bound.

Let $\pi_1 := \pi_{x_j, u_{j+1}} = \pi_{s, t}[x_j, u_{j+1}]$, and let $\pi_2 := \pi_{v_{j+1}, y_j} = \pi_{s, t}[v_{j+1}, y_j]$ (these equalities follow from the consistency of our collection of paths $\{\pi_{s, t}\}_{(s, t) \in TC(G)}$). 
Let $Z_1 \in [0, z]$ denote the (random-valued) index such that $\pi_1 \subseteq G_{Z_1}$ and $\pi_1 \not \subseteq G_{Z_1+1}$. Let $Z_2 \in [0, z]$ denote the (random-valued) index such that $\pi_2 \subseteq G_{Z_2}$ and $\pi_2 \not \subseteq G_{Z_2+1}$. Recall that we use $x \leq_D y$ to denote that either $x <_D y$ or $x = y$. 
\begin{itemize}
    \item \textbf{Case 1: $x_{j} \leq_D u_{j+1}$ and $v_{j+1} \leq_D y_j$.} In this case we will  apply the following triangle inequality 
    \begin{equation*} 
        \dist_{D_1}(x_j, y_j)\leq \dist_{D_1}(x_j, u_{j+1}) + \dist_{D_1}(u_{j+1}, v_{j+1}) +  \dist_{D_1}(v_{j+1}, y_j).
    \end{equation*}

    Since $(u_{j+1}, v_{j+1}) \in E(D_1)$ and $(u_{j+1}, v_{j+1}) \in \pi_{s, t}$, it follows that
    $$\dist_{D_1}(u_{j+1}, v_{j+1}) = w_G((u_{j+1}, v_{j+1})) = \dist_G(u_{j+1}, v_{j+1}).$$
    
    We conclude that    
    \begin{equation} \label{new eq tri 1}
        X_j \leq \dist_{D_1}(x_j, u_{j+1}) + \dist_{G}(u_{j+1}, v_{j+1}) +  \dist_{D_1}(v_{j+1}, y_j).
    \end{equation}
    We will use inequality \ref{new eq tri 1} to handle the case where  $x_{j} \leq_D u_{j+1}$ and $v_{j+1} \leq_D y_j$. 
    
    \item  \textbf{Case 2: $x_j \leq_D u_{j+1}$ and $v_{j+1} >_D y_j$.}
    In this case we will apply the following triangle inequality
    \begin{equation*} 
         \dist_{D_1}(x_j, y_j) \leq \dist_{D_1}(x_j, u_{j+1}) + \dist_{D_1}(u_{j+1}, r^1_{Z_2}(v_{j+1})) +  \dist_{D_1}(r^1_{Z_2}(v_{j+1}), y_j).
    \end{equation*}
    We will make use of the fact that $$\dist_{D_1}(u_{j+1}, r^1_{Z_2}(v_{j+1})) \leq w_{D_1}((u_{j+1}, r^1_{Z_2}(v_{j+1}))) = \dist_G(u_{j+1}, v_{j+1}) + d_{Z_2},$$ since $(u_{j+1}, r^1_{Z_2}(v_{j+1})) \in E(D_1)$, to obtain the following inequality:
    \begin{equation} \label{new eq tri 2}
          X_j \leq \dist_{D_1}(x_j, u_{j+1}) + \dist_{G}(u_{j+1}, v_{j+1}) + d_{Z_2} +  \dist_{D_1}(r^1_{Z_2}(v_{j+1}), y_j).
    \end{equation}
        We will use inequality \ref{new eq tri 2} to handle the case where   $x_j \leq_D u_{j+1}$ and $v_{j+1} >_D y_j$. 
    \item \textbf{Case 3: $x_j >_D u_{j+1}$ and $v_{j+1} \leq_D y_j$.} 
    In this case we will apply the following triangle inequality
    \begin{equation*}
        \dist_{D_1}(x_j, y_j)\leq \dist_{D_1}(x_j, r^2_{Z_1}(u_{j+1})) + \dist_{D_1}( r^2_{Z_1}(u_{j+1}), v_{j+1}) +  \dist_{D_1}(v_{j+1}, y_j).
    \end{equation*}
We will make use of the fact that $\dist_{D_1}( r^2_{Z_1}(u_{j+1}), v_{j+1}) \leq \dist_{G}(u_{j+1}, v_{j+1}) + d_{Z_1}$ since $( r^2_{Z_1}(u_{j+1}), v_{j+1}) \in E(D_1)$, to obtain the following inequality
        \begin{equation} \label{new eq tri 3}
        X_j\leq \dist_{D_1}(x_j, r^2_{Z_1}(u_{j+1})) + d_{Z_1} +\dist_{G}(u_{j+1}, v_{j+1})  +  \dist_{D_1}(v_{j+1}, y_j).
    \end{equation}
    We will use inequality \ref{new eq tri 3} to handle the case where $x_j >_D u_{j+1}$ and $v_{j+1} \leq_D y_j$. 
    \item \textbf{Case 4: $x_j >_D u_{j+1}$ and $v_{j+1} >_D y_j$.}
    In this case we will apply the following triangle inequality
    \begin{equation*}      
    \dist_{D_1}(x_j, y_j)\leq \dist_{D_1}(x_j, r^2_{Z_1}(u_{j+1})) + \dist_{D_1}( r^2_{Z_1}(u_{j+1}), r^1_{Z_2}(v_{j+1})) +   \dist_{D_1}(r^1_{Z_2}(v_{j+1}), y_j).
    \end{equation*} 
We will make use of the fact that $\dist_{D_1}( r^2_{Z_1}(u_{j+1}), r^1_{Z_2}(v_{j+1})) \leq \dist_{G}(u_{j+1}, v_{j+1}) + d_{Z_1} + d_{Z_2}$ since $( r^2_{Z_1}(u_{j+1}), r^1_{Z_2}(v_{j+1})) \in E(D_1)$, to obtain the following inequality
    \begin{equation}      \label{new eq tri 4}  
    X_j \leq \dist_{D_1}(x_j, r^2_{Z_1}(u_{j+1})) + d_{Z_1} +  \dist_G(u_{j+1}, v_{j+1}) + d_{Z_2} +   \dist_{D_1}(r^1_{Z_2}(v_{j+1}), y_j).
    \end{equation}
    We will use inequality \ref{new eq tri 4} to handle the case where  $x_j >_D u_{j+1}$ and $v_{j+1} >_D y_j$.
\end{itemize}

In order to apply our four case analysis formally, we will  use the following equality, which can be proved using the fact that $\mathbbm{1}[x \leq_D y] + \mathbbm{1}[x >_D y] = 1$ for any $x, y \in V(G)$. 
\begin{multline}\label{new eq X q}
    X_{j} = X_{j} \mathbbm{1}[x_j \le_D u_{j+1}]  \mathbbm{1}[v_{j+1} \le_D y_j] + X_{j} \mathbbm{1}[x_j \le_D u_{j+1}]  \mathbbm{1}[v_{j+1} >_D y_j] \\
   + X_{j} \mathbbm{1}[x_j >_D u_{j+1}]  \mathbbm{1}[v_{j+1} \le_D y_j] +X_{j} \mathbbm{1}[x_j >_D u_{j+1}] \mathbbm{1}[v_{j+1} >_D y_j] 
\end{multline}

Inserting inequalities \ref{new eq tri 1}, \ref{new eq tri 2}, \ref{new eq tri 3}, and \ref{new eq tri 4} into equation \ref{new eq X q}, we obtain
\begin{multline} \label{new eq large}
    X_{j} \leq  \left(\dist_{D_1}(x_j, u_{j+1}) + \dist_{G}(u_{j+1}, v_{j+1}) +  \dist_{D_1}(v_{j+1}, y_j)\right) \mathbbm{1}[x_j \le_D u_{j+1}]  \mathbbm{1}[v_{j+1} \le_D y_j]\\
    + \left(\dist_{D_1}(x_j, u_{j+1}) + \dist_G(u_{j+1}, v_{j+1}) + d_{Z_2} +  \dist_{D_1}(r^1_{Z_2}(v_{j+1}), y_j) \mathbbm{1}[x_j \le_D u_{j+1}]\right)  \mathbbm{1}[v_{j+1} >_D y_j] \\
    + \left(  \dist_{D_1}(x_j, r^2_{Z_1}(u_{j+1})) +\dist_{G}(u_{j+1}, v_{j+1}) + d_{Z_1} +  \dist_{D_1}(v_{j+1}, y_j) \right)  \mathbbm{1}[x_j >_D u_{j+1}]  \mathbbm{1}[v_{j+1} \le_D y_j] \\
    + \left( \dist_{D_1}(x_j, r^2_{Z_1}(u_{j+1})) +\dist_{G}(u_{j+1}, v_{j+1}) + d_{Z_1} + d_{Z_2} +   \dist_{D_1}(r^1_{Z_2}(v_{j+1}), y_j) \right) \mathbbm{1}[x_j >_D u_{j+1}] \mathbbm{1}[v_{j+1} >_D y_j].
\end{multline}
We can rearrange terms on the right side of inequality \ref{new eq large} to get
\begin{multline} \label{new eq large 2}
 X_{j} \leq     \dist_{D_1}(x_j, u_{j+1}) \mathbbm{1}[x_j \le_D u_{j+1}]  \cdot (\mathbbm{1}[v_{j+1} \le_D y_j] + \mathbbm{1}[v_{j+1} >_D y_j]) \\
 + \dist_{D_1}(v_{j+1}, y_j) \mathbbm{1}[v_{j+1} \le_D y_j] \cdot ( \mathbbm{1}[x_j \le_D u_{j+1}] + \mathbbm{1}[x_j >_D u_{j+1}])
 \\
 + \dist_{G}(u_{j+1}, v_{j+1}) \cdot ( \mathbbm{1}[x_j \le_D u_{j+1}] + \mathbbm{1}[x_j >_D u_{j+1}] ) \cdot (\mathbbm{1}[v_{j+1} \le_D y_j] + \mathbbm{1}[v_{j+1} >_D y_j]) \\
 +  \left( (\dist_{D_1}(r_{Z_2}^1(v_{j+1}), y_j) + d_{Z_2}) \mathbbm{1}[v_{j+1} >_D y_j]  \right) \cdot ( \mathbbm{1}[x_j \le_D u_{j+1}] + \mathbbm{1}[x_j >_D u_{j+1}] ) \\
 + \left((\dist_{D_1}(x_j, r^2_{Z_1}(u_{j+1})) + d_{Z_1}) \mathbbm{1}[x_j >_D u_{j+1}]    \right) \cdot ( \mathbbm{1}[v_{j+1} \le_D y_j] +  \mathbbm{1}[v_{j+1} >_D y_j]).
\end{multline}
We can simplify inequality \ref{new eq large 2} using the fact that for any $x, y \in V(G)$, $\mathbbm{1}[x \leq_D y] + \mathbbm{1}[x >_D y] = 1$. 
\begin{multline} \label{new eq large 3}
 X_{j} \leq     \dist_{D_1}(x_j, u_{j+1}) \mathbbm{1}[x_j <_D u_{j+1}]   \\
 + \dist_{D_1}(v_{j+1}, y_j) \mathbbm{1}[v_{j+1} <_D y_j] \\
 + \dist_{G}(u_{j+1}, v_{j+1}) \\
  + (\dist_{D_1}(x_j, r^2_{Z_1}(u_{j+1})) + d_{Z_1}) \mathbbm{1}[x_j >_D u_{j+1}] \\
 +   (\dist_{D_1}(r_{Z_2}^1(v_{j+1}), y_j) + d_{Z_2}) \mathbbm{1}[v_{j+1} >_D y_j]. 
\end{multline}

Let $W_1, W_2, W_3, W_4, W_5$ be the five terms on the right side of inequality \ref{new eq large 3}, written in the order they appear. 
To finish the proof of Lemma \ref{lem new X j}, it will be sufficient to upper bound $\mathbb{E}[W_p \mid C_i^H]$ for each $p \in [1, 5]$.

We will upper bound the conditional expectations of $W_1$ and $W_2$ by appealing to Lemma \ref{lem:warmup}. We will focus on $W_1$ first. Notice that $\pi_1 = \pi_{x_j, u_{j+1}} \subseteq \pi_{s, t} \subseteq H$ and nodes $x_j$ and $u_{j+1}$ lie in the same SCC in $H$. Then we can directly apply Lemma \ref{lem:warmup} to obtain the inequality
\begin{equation}  \label{eq w1}
\mathbb{E}[W_1 \mid  C_i^H]  = \mathbb{E}[\dist_{D_1}(x_j, u_{j+1}) \cdot \mathbbm{1}[x_{j} <_D u_{j+1}] \mid  C_i^H]   \leq (c \cdot (z-i)^2 + c_1) \cdot \alpha \cdot \dist_G(x_j, u_{j+1}),
\end{equation}
where $c_1 \geq 1 $ is a universal constant that does not depend on induction constant $c$. 
Likewise, since $\pi_2 = \pi_{v_{j+1}, y_j} \subseteq \pi_{s, t} \subseteq H$ and nodes $v_{j+1}$ and $y_j$ lie in the same SCC in $H$, we directly apply Lemma \ref{lem:warmup} to prove the following inequality. 
\begin{equation} \label{eq w2}
    \mathbb{E}[W_2 \mid  C_i^H] = \mathbb{E}[\dist_{D_1}(v_{j+1}, y_j) \cdot \mathbbm{1}[x_j <_D u_{j+1}] \mid C_i^H] \leq  (c \cdot (z-i)^2 + c_1) \cdot \alpha \cdot \dist_G(v_{j+1}, y_j)
\end{equation}

Now we will upper bound the conditional expectations of $W_4$ and $W_5$ by appealing to Lemma \ref{lem:d_X}. We will focus on $W_4$ first. 
Let $A$ be the event that $x_j >_D u_{j+1}$. Suppose that event $A \cap C_i^H$ occurs. Recall that $Z_1$  is the random-valued index such that $\pi_1 \subseteq G_{Z_1}$ and $\pi_1 \not \subseteq G_{Z_1+1}$. Since $x_j >_D u_{j+1}$, there must exist an SCC $S \in \mathcal{F}_{Z_1}$ of $G_{Z_1}$ such that $x_j, u_{j+1} \in S$; if this were not true, then since $x_j \leadsto u_{j+1}$ path $\pi_1$ satisfies $\pi_1 \subseteq G_{Z_1}$ it would imply that $x_j <_D u_{j+1}$, a contradiction.  In particular, this means that  $r^2_{Z_1}(u_{j+1}) \in S$. By the above discussion and Claim \ref{clm:single_scc}, we obtain the inequality
 \begin{equation} \label{eq:z 1}
     \mathbb{E}[\dist_{D_1}(x_j, r^2_{Z_1}(u_{j+1})) \mid A \cap C_i^H] \leq \mathbb{E}[4d_{Z_1} \mid A \cap C_i^H].
 \end{equation}

We observe the following sequence of inequalities:
 \begin{align*} 
     & \mathbb{E}[W_4 \mid  C_i^H]  \\
    & = \mathbb{E}[(\dist_{D_1}(x_j, r^2_{Z_1}(u_{j+1})) + d_{Z_1}) \mathbbm{1}[x_j >_D u_{j+1}] \mid C_i^H ]  \\
    & = \mathbb{E}[\dist_{D_1}(x_j, r^2_{Z_1}(u_{j+1})) + d_{Z_1}  \mid A \cap C_i^H ] \cdot \Pr[A \mid C_i^H]  \\ 
    & \leq \mathbb{E}[5d_{Z_1} \mid A \cap C_i^H ] \cdot \Pr[A \mid C_i^H] & \text{ by inequality \ref{eq:z 1}}  \\
    & \leq 5\mathbb{E}[d_{Z_1} \mid C_i^H] \\
    & \leq c_2 \cdot  (z-i+1) \cdot \alpha \cdot \dist_G(x_j, u_{j+1}), & \text{ by Lemma \ref{lem:d_X}}
 \end{align*}
where $c_2 \geq 1$ is a sufficiently large universal constant that does not depend on the induction constant $c$. We have shown
\begin{equation} \label{eq w4}
    \mathbb{E}[W_4 \mid  C_i^H] \leq c_2 \cdot  (z-i+1) \cdot \alpha \cdot \dist_G(x_j, u_{j+1}).
\end{equation}

An identical argument will work for bounding $\mathbb{E}[W_5 \mid C_i^H]$ as well; we present this argument for completeness. Let $A$ be the event that $v_{j+1} >_D y_j$. Suppose that event $A \cap C_i^H$ occurs. Recall that $Z_2$  is the random-valued index such that $\pi_2 \subseteq G_{Z_2}$ and $\pi_2 \not \subseteq G_{Z_2+1}$. Since $v_{j+1} >_D y_j$, there must exist an SCC $S \in \mathcal{F}_{Z_2}$ of $G_{Z_2}$ such that $v_{j+1}, y_j \in S$; if this were not true, then since $v_{j+1} \leadsto y_j$ path $\pi_2$ satisfies $\pi_2 \subseteq G_{Z_2}$ it would imply that $v_{j+1} <_D y_{j}$, a contradiction.  In particular, this means that  $r^1_{Z_2}(v_{j+1}) \in S$. By the above discussion and Claim \ref{clm:single_scc}, we obtain the inequality
 \begin{equation} \label{eq:z 2}
     \mathbb{E}[\dist_{D_1}(r^1_{Z_2}(v_{j+1}), y_j) \mid A \cap C_i^H] \leq \mathbb{E}[4d_{Z_2} \mid A \cap C_i^H].
 \end{equation}

We observe the following sequence of inequalities:
 \begin{align*} 
     & \mathbb{E}[W_5 \mid  C_i^H]  \\
    & = \mathbb{E}[(\dist_{D_1}(r^1_{Z_2}(v_{j+1}), y_j) + d_{Z_2}) \mathbbm{1}[v_{j+1} >_D y_{j}] \mid C_i^H ]  \\
    & = \mathbb{E}[\dist_{D_1}(r^1_{Z_2}(v_{j+1}), y_j) + d_{Z_2} \mid A \cap C_i^H ] \cdot \Pr[A \mid C_i^H]  \\ 
    & \leq \mathbb{E}[5d_{Z_2} \mid A \cap C_i^H ] \cdot \Pr[A \mid C_i^H] & \text{ by inequality \ref{eq:z 2}}  \\
    & \leq 5\mathbb{E}[d_{Z_2} \mid C_i^H] \\
    & \leq c_2 \cdot  (z-i+1) \cdot \alpha \cdot \dist_G(v_{j+1}, y_{j}), & \text{ by Lemma \ref{lem:d_X}}
 \end{align*}
where $c_2 \geq 1$ is a sufficiently large universal constant that does not depend on the induction constant $c$. We have shown
\begin{equation} \label{eq w5}
    \mathbb{E}[W_5 \mid  C_i^H] \leq c_2 \cdot  (z-i+1) \cdot \alpha \cdot \dist_G(v_{j+1}, y_{j}).
\end{equation}

We are ready to finish the proof of Lemma \ref{lem new X j}. We will make use of the fact that 
\begin{equation} \label{eq dist g}
    \dist_G(x_j, y_j) = \dist_G(x_j, u_{j+1}) + \dist_G(u_{j+1}, v_{j+1}) + \dist_G(v_{j+1}, y_j),
\end{equation}
since path $\pi_{s, t}$ is a $s \leadsto t$ shortest path in $G$. We complete the proof of Lemma \ref{lem new X j} with the following sequence of inequalities. 
\begin{align*}
    \mathbb{E}[X_j \mid C_i^H] & \leq \sum_{p=1}^5 \mathbb{E}[W_p \mid C_i^H] &  \text{ by inequality \ref{new eq large 3}} \\
    &  \leq \sum_{p\in \{1, 2, 4, 5\}} \mathbb{E}[W_p \mid C_i^H] + \dist_G(u_{j+1}, v_{j+1}) & \text{since } W_3 = \dist_G(u_{j+1}, v_{j+1})\\
\end{align*}
Applying inequalities  \ref{eq w1}, \ref{eq w2}, \ref{eq w4}, and \ref{eq w5}, we continue
\begin{multline*}
    \mathbb{E}[X_j \mid C_i^H] \leq (c \cdot (z-i)^2 + c_1 + c_2(z-i+1)) \cdot \alpha \cdot \dist_G(x_j, u_{j+1}) + \dist_G(u_{j+1}, v_{j+1})\\
    +  (c \cdot (z-i)^2 + c_1 + c_2(z-i+1)) \cdot \alpha \cdot \dist_G(v_{j+1}, y_{j}).  
\end{multline*}
Using equation \ref{eq dist g}, this yields
\begin{align*}
\mathbb{E}[X_j \mid C_i^H] 
    & \leq (c \cdot (z-i)^2 + c_1 + c_2(z-i+1)) \cdot \alpha \cdot \dist_G(x_j, y_j) & \text{ by equation \ref{eq dist g}} \\
    & \leq (c \cdot (z-i)^2 + (c_1 + c_2)(z-i+1)) \cdot \alpha \cdot \dist_G(x_j, y_j) \\
    & \leq c \cdot (z-i+1)^2 \cdot \alpha \cdot \dist_G(x_j, y_j) & \text{by letting $c \geq c_1 + c_2$}\\
    & \leq c \cdot k^2 \cdot \alpha \cdot \dist_G(x_j, y_j).
\end{align*}

Notice that we require induction constant $c$ to be at least $c \geq c_1 + c_2$, where $c_1$ is the constant in the statement of Lemma \ref{lem:warmup}, and $c_2$ is the constant in inequality \ref{eq w4}. 
\end{proof}

With Lemma \ref{lem new X j} proved, we can now complete the inductive step of our induction proof of Lemma \ref{lem:induction}.

\inductivestep*
\begin{proof}
We are ready to finish our proof of the inductive step by combining all of our proven inequalities.
\begin{align*}
    \mathbb{E}[X \mid  C_i^H] & \leq \sum_{j=0}^{q-2} \mathbb{E}[X_j \mid  C_i^H] & \text{ by inequality \ref{new eq X bound}} \\
    & \leq \left(\sum_{j=0}^{q-2} c \cdot k^2 \cdot \alpha \cdot \dist_G(x_j, y_{j})  \right)  & \text{by Lemma \ref{lem new X j}} \\
    & = c \cdot k^2 \cdot \alpha \cdot \dist_G(s, t),
\end{align*}
where the last equality follows from the fact that $\dist_G(s, t) = \sum_{j=0}^{q-2}\dist_G(x_j, y_{j})$, since $\pi_{s, t}$ is a shortest path in $G$. Note that
Lemma \ref{lem new X j} requires that the induction constant $c$ in statements $S_{k-1}$ and $S_k$ has size at least $c \geq c_1 + c_2$, where $c_1$ is the constant in the statement of Lemma \ref{lem:warmup}, and $c_2$ is the constant in inequality \ref{eq w4}. 
We have completed the inductive step of the proof of Lemma \ref{lem:induction}. 
\end{proof}

 By  Claim \ref{clm:base_induction} and Lemma \ref{lem:inductive_step} we have proven that Inductive Statement $S_k$ holds for all $k \in [1, z+1]$. This immediately implies Lemma \ref{lem:induction}.

\induction*

\subsection{Finishing the proof of Theorem \ref{thm:dag_cover}}
\label{subsec:final}

With Lemma \ref{lem:induction} in hand, we can now easily finish our proof of Theorem \ref{thm:dag_cover}.

\begin{claim}
    For all $s, t \in V(G)$,
    $$
    \mathbb{E}[\dist_{D_1}(s, t) \cdot \mathbbm{1}[s <_D t]] = O(\log^3n \cdot  \log\log n) \cdot \dist_G(s, t).
    $$
    \label{clm:s_leq_t}
\end{claim}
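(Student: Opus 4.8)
The plan is to obtain this claim as an essentially immediate consequence of Lemma~\ref{lem:induction}, instantiated at the top level of the hierarchical decomposition, i.e.\ at $i = 0$. Recall that Lemma~\ref{lem:induction} asserts, for any $s, t \in V(G)$, any $i \in [0, z]$, and any subgraph $H \subseteq G$ with $\pi_{s,t} \subseteq H$, that
$$
\mathbb{E}\left[\dist_{D_1}(s, t) \cdot \mathbbm{1}[s <_D t] \mid C_i^H\right] \leq c \cdot (z - i + 1)^2 \cdot \alpha \cdot \dist_G(s, t).
$$
The key observation is that at level $i = 0$ the conditioning is vacuous: by construction $G_0 = G$ holds deterministically, so the event $C_0^G$ has probability $1$, and trivially $\pi_{s,t} \subseteq G = G_0$. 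Applying Lemma~\ref{lem:induction} with $i = 0$ and $H = G$ therefore gives
$$
\mathbb{E}\left[\dist_{D_1}(s, t) \cdot \mathbbm{1}[s <_D t]\right] = \mathbb{E}\left[\dist_{D_1}(s, t) \cdot \mathbbm{1}[s <_D t] \mid C_0^G\right] \leq c \cdot (z + 1)^2 \cdot \alpha \cdot \dist_G(s, t).
$$

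To finish, I would substitute the parameter values. We have $z = \lceil \lg(nW) \rceil$, and since the edge weights are polynomially bounded we have $W = O(\mathrm{poly}(n))$, so $z = O(\log n)$ and hence $(z+1)^2 = O(\log^2 n)$. By Lemma~\ref{lem:ldd}, $\alpha = O(\log n \cdot \log\log n)$. Therefore $c \cdot (z+1)^2 \cdot \alpha = O(\log^3 n \cdot \log\log n)$, which is exactly the claimed bound.

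The only point needing care is that the path $\pi_{s,t}$—and hence the statement of Lemma~\ref{lem:induction}—is defined only when $(s, t) \in TC(G)$. If $s$ cannot reach $t$ in $G$ then $\dist_G(s, t) = \infty$ and the inequality is trivially satisfied (indeed every edge of $D_1$ lies in $TC(G)$, since representative nodes are mutually reachable with their SCC members by Claim~\ref{clm:fam} and hop edges respect $<_D$ within an SCC by Claim~\ref{clm:hopedge}, so $\dist_{D_1}(s,t) = \infty$ in this case as well), so the claim holds for all pairs $s, t$. I do not expect a genuine obstacle here: the entire difficulty was already resolved in the inductive proof of Lemma~\ref{lem:induction}, and this step is pure bookkeeping. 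Together with the symmetric bound of Lemma~\ref{lem:s_geq_t} for $\dist_{D_2}(s,t)\cdot\mathbbm{1}[s >_D t]$ and the inequality $\min(\dist_{D_1}(s,t),\dist_{D_2}(s,t)) \le \dist_{D_1}(s,t)\mathbbm{1}[s <_D t] + \dist_{D_2}(s,t)\mathbbm{1}[s >_D t]$, this claim will immediately yield Lemma~\ref{lem:exp_err}, from which the distortion bound of Theorem~\ref{thm:dag_cover} follows by Markov's inequality over the $O(\log n)$ repetitions.
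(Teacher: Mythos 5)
Your proof is correct and matches the paper's argument exactly: both instantiate Lemma~\ref{lem:induction} at $i=0$ with $H=G$ (where the conditioning on $C_0^G$ is vacuous) and then substitute $z = O(\log n)$ and $\alpha = O(\log n \cdot \log\log n)$. The brief discussion of the $(s,t)\notin TC(G)$ edge case is a minor addition the paper elides, but it does not change the approach.
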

\begin{proof}
Recall that $C_0^G$ is the event that $G_0 = G$. Notice that by our construction, $G_0 = G$ and shortest path $\pi_{s, t}$ satisfies $\pi_{s, t} \subseteq G_0$, unconditionally. Then by Lemma \ref{lem:induction}, 
$$
\mathbb{E}[\dist_{D_1}(s, t) \cdot \mathbbm{1}[s <_D t]] = \mathbb{E}[\dist_{D_1}(s, t) \cdot \mathbbm{1}[s <_D t] \mid C_0^G] \leq O(z^2 \alpha) \cdot \dist_G(s, t). 
$$
Since $z = O(\log n)$ and $\alpha = O( \log n \cdot \log \log n)$, the claim follows. 
\end{proof}

We are ready to prove Lemma \ref{lem:exp_err}, which we restate below.

\expected*
\begin{proof}
\begin{multline*}
  \mathbb{E}[\min(\dist_{D_1}(s, t), \dist_{D_2}(s, t))]  \leq \mathbb{E}[\dist_{D_1}(s, t) \cdot \mathbbm{1}[s <_D t]] + \mathbb{E}[\dist_{D_2}(s, t) \cdot \mathbbm{1}[s >_D t]] \\
  \leq O(\log^3 n \cdot \log \log n) \cdot \dist_G(s, t),
\end{multline*}
where the final inequality follows from 
Lemma \ref{lem:s_geq_t} and Claim \ref{clm:s_leq_t}.
\end{proof}

We can now finish the distortion analysis and the proof of Theorem \ref{thm:dag_cover}. 

\begin{proof}[Proof of Theorem \ref{thm:dag_cover}]
    We have finished the running time analysis and size analysis for Theorem \ref{thm:dag_cover}. What remains is to prove that with high probability, for all $s, t \in V(G)$, there exists $D \in \mathcal{D}$ such that
    $$
    \dist_D(s, t) \leq O(\log^3 n \cdot \log \log n) \cdot \dist_G(s, t).
    $$
    Fix a pair of nodes $s, t \in V(G)$. 
    Since $\mathbb{E}[\min(\dist_{D_1}(s, t), \dist_{D_2}(s, t))] = O(\log^3 n \cdot \log \log n) \cdot \dist_G(s, t)$ by Lemma \ref{lem:exp_err}, by Markov's inequality we conclude that with probability at least $1/2$, 
    $$
    \min(\dist_{D_1}(s, t), \dist_{D_2}(s, t)) = O(\log^3 n \cdot \log \log n) \cdot \dist_G(s, t).
    $$
    Then with probability at least
    $$
    1 - \left(\frac{1}{2}\right)^{10\log n} \geq 1 - n^{-10},
    $$
    there exists a DAG $D \in \mathcal{D}$ such that $\dist_D(s, t) \leq \dist_G(s, t)$. Then by applying the union bound over all pairs of nodes $s, t \in V(G)$, we conclude that  $\mathcal{D}$ is an $O(\log^3 n \cdot \log \log n)$-distance-preserving DAG cover of $G$ with high probability. 
\end{proof}

We will also quickly prove Theorem \ref{thm:embed}, which we restate below.
\embed*
\begin{proof}
    We define a random graph $D$ from our distribution $\mathcal{D}$ as follows. Construct random graphs $D_1$ and $D_2$ as in section \ref{subsec:const_D}, and choose graph $D$ from set $\{D_1, D_2\}$ uniformly at random. Notice that the first property  and the third property of Theorem \ref{thm:embed} holds immediately from Theorem \ref{thm:dag_cover}.

    Let $(s, t) \in TC(G)$. Notice that by our construction of DAGs $D_1$ and $D_2$,  $s$ can reach $t$ in DAG $D_1$ if and only if $ s <_D t$, where $<_D$ is the total order on $V(G)$ defined in subsection \ref{subsec:const_D}. Likewise, $s$ can reach $t$ in DAG $D_2$ if and only if $s >_D t$. Then it immediately follows that $s$ can reach $t$ in sampled DAG $D$ with probability $1/2$ if $s \neq t$. 

    To bound the (conditional) expected distortion, we now consider two cases:
    \begin{itemize}
        \item \textbf{Case 1:} DAG $D = D_1$. Then
        $$
        \mathbb{E}[\dist_{D}(s, t) \mid s \leadsto_D t] = \mathbb{E}[\dist_{D_1}(s, t) \mid s <_D t] = \mathbb{E}[\dist_{D_1}(s, t) \cdot \mathbbm{1}[s <_D t]]=O(\log^3 \cdot \log \log n) \cdot \dist_G(s, t),
        $$
        by Claim \ref{clm:s_leq_t}.
        \item \textbf{Case 2:} DAG $D = D_2$. Then
        $$
\mathbb{E}[\dist_{D}(s, t) \mid s \leadsto_D t]  = \mathbb{E}[\dist_{D_2}(s, t) \mid s >_D t]  = \mathbb{E}[\dist_{D_1}(s, t) \cdot \mathbbm{1}[s >_D t]]=O(\log^2 n \cdot \log \log n) \cdot \dist_G(s, t),
        $$
        by Lemma \ref{lem:s_geq_t}.
    \end{itemize}
\end{proof}

\section{A lower bound for DAG covers with $\widetilde{O}(m)$ additional edges }

The goal of this section is to prove the following lower bound.

\lbm*

The proof of this theorem is quite involved, and we  split it into five different phases:
\begin{enumerate}
    \item In subsection \ref{subsec:base}, we construct a base graph $G$ and associated collection of shortest paths $\Pi$. Graph $G$ and paths $\Pi$ essentially correspond to a collection of points and lines in the Euclidean plane. We prove that $G, \Pi$ have a special ``expansion'' property (Lemma \ref{lem:base_degree}).  
    \item In subsection \ref{subsec:prod}, we construct a \textit{product graph } $G^{\times}$ and a collection of shortest paths $\Pi^{\times}$, constructed by taking a graph  product of two copies of our base graph $G$ and set of paths $\Pi$. This graph product has been used in prior work \cite{hesse2003directed}. We prove that the special expansion property of $G, \Pi$  is preserved in $G^{\times}, \Pi^{\times}$ (Lemma \ref{lem:prod_degree}). 
    \item In subsection \ref{subsec:match}, we define what we call a \textit{matching graph} $J$ of $G^{\times}, \Pi^{\times}$ that will be useful for analysis purposes later. We prove that the edges of graph $J$ can be decomposed into a collection of large matchings (Lemma \ref{lem:match_decomp}). The proof of this lemma will follow from the special expansion property of $G^{\times}, \Pi^{\times}$.
    \item In subsection \ref{subsec:random}, we define a random graph $G^*$ and a collection of shortest paths $\Pi^*$ constructed from $G^{\times}, \Pi^{\times}$. Graph $G^*$ and paths $\Pi^*$ will be constructed in a similar manner as in our lower bound for DAG covers with $f(n)$ additional edges in subsection \ref{subsec:fn_lb}.  
    \item In subsection \ref{subsec:analysis}, we finish the proof of Theorem \ref{thm:lb_m} using  $G^*, \Pi^*$. At a high level, we will use our large matchings in matching graph $J$ to argue that no DAG in any DAG cover can cover many of the shortest paths in $\Pi^*$ simultaneously.
\end{enumerate}

\subsection{Base Graph Construction $G, \Pi$}

We begin our lower bound construction by constructing a base graph $G$ and associated collection of shortest paths $\Pi$.

\label{subsec:base}
\begin{lemma}[cf. \cite{coppersmith2006sparse}]
\label{lem:base_graph}
    There exists an $n$-node directed, edge-weighted graph $G $, with weight function $w_G(\cdot )$,  and a set $\Pi$ of $|\Pi| = p$ paths, where $p \in [n^{2/3}, n^{3/2}]$, satisfying the following conditions:
    \begin{enumerate}
        \item Graph $G$ has maximum degree $d = \Theta\left(\frac{p^{2/3}}{n^{1/3}}\right)$, and $|E(G)| = \Theta(n^{2/3}p^{2/3})$, 
        \item Graph $G$ has $\ell$ layers $L_1, \dots, L_{\ell}$, where $\ell = \Theta\left(\frac{n^{2/3}}{p^{1/3}}\right)$, and each layer has $n / \ell$ nodes,
        \item Every path $\pi \in \Pi$ contains $\ell$ nodes, and the $i$th node of $\pi$ is in layer $L_i$,
        \item Every path $\pi \in \Pi$ is the unique shortest path between its endpoints in $G$, and
        \item Paths in $\Pi$ are pairwise edge-disjoint.
    \end{enumerate}
\end{lemma}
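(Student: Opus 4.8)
The plan is to instantiate the classical point--line construction of~\cite{coppersmith2006sparse}: a layered integer grid whose vertices are lattice points, whose paths are line segments of small positive integer slope, and whose edge weights are a strictly convex function of the slope, so that ``straight'' lines are forced to be the unique shortest paths between their endpoints. Concretely, I would pick integer parameters $\ell = \Theta\!\big((n^2/p)^{1/3}\big) = \Theta(n^{2/3}/p^{1/3})$, then $N = n/\ell = \Theta(n^{1/3}p^{1/3})$ (taking $\ell \mid n$ for simplicity; otherwise round, which perturbs every quantity by a $1 \pm o(1)$ factor), and a maximum slope $d = \Theta(p^{2/3}/n^{1/3})$, with the hidden constants chosen so that $\ell d \le N/3$ and the collection $\Pi_0$ defined below has $|\Pi_0| \ge p$; the hypotheses $n^{2/3} \le p \le n^{3/2}$ guarantee $\ell \ge 2$ (for $n$ large) and $d \ge 1$. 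Set $V(G) = \{(i,y) : i \in [\ell],\ y \in [N]\}$ with layer $L_i := \{(i,y) : y \in [N]\}$, so $|V(G)| = n$ and $|L_i| = n/\ell$. For each pair $(h,s)$ with $s \in [d]$ and $1 \le h \le N - (\ell-1)s$ --- a nonempty range since $(\ell-1)s \le \ell d \le N/3$ --- define the line
$$
\pi_{h,s} \;=\; \big((1,h),\ (2,h+s),\ \ldots,\ (\ell,\, h+(\ell-1)s)\big),
$$
and let $\Pi_0$ be the set of all such lines; by the choice of constants $p \le |\Pi_0| = \sum_{s=1}^{d}\big(N - (\ell-1)s\big) = \Theta\!\big(N^2/\ell\big) = \Theta\!\big(n^2/\ell^3\big) = \Theta(p)$, so we may take $\Pi \subseteq \Pi_0$ to be an arbitrary subcollection of size exactly $p$. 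Finally set $E(G) := \bigcup_{\pi \in \Pi} E(\pi)$, and assign each edge of slope $s$ (i.e.\ each edge of the form $((i,y),(i+1,y+s))$) the weight $s^2 \in \{1, \ldots, d^2\}$, so edge weights are positive integers bounded by $d^2 = O(n^{4/3})$. Observe $G$ is strictly layered (all edges go from $L_i$ to $L_{i+1}$) and hence a DAG.

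Given this construction, Properties~2 and~3 are immediate: $|L_i| = n/\ell$, $\ell = \Theta(n^{2/3}/p^{1/3})$, and each $\pi_{h,s} \in \Pi$ has one node per layer and $\ell$ nodes total. For Property~5, take distinct $\pi_{h,s}, \pi_{h',s'} \in \Pi$: if $s = s'$ then $h \ne h'$, and the two lines occupy different heights in every layer, hence are vertex-disjoint; if $s \ne s'$ then the affine lines $\{(x, h+(x-1)s) : x \in [\ell]\}$ and $\{(x, h'+(x-1)s') : x \in [\ell]\}$ meet in at most one point, so the two paths share at most one vertex, and at a common vertex $(i,y)$ their successors $(i+1, y+s)$ and $(i+1, y+s')$ differ --- so they have no common edge, and in fact any two paths in $\Pi$ overlap in at most one vertex (a fact Lemma~\ref{lem:base_degree} will refine). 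Property~5 makes the counts clean: $|E(G)| = \sum_{\pi \in \Pi} |E(\pi)| = p(\ell-1) = \Theta(p\ell) = \Theta(n^{2/3}p^{2/3})$; each vertex has out-degree and in-degree at most $d$ (at most one per slope), while the total line--vertex incidence count $p\ell = \Theta(nd)$ over $n$ vertices forces some vertex onto $\Omega(d)$ lines, hence of degree $\Omega(d)$ --- so the maximum degree is $\Theta(d) = \Theta(p^{2/3}/n^{1/3})$, proving Property~1. For Property~4, fix $\pi = \pi_{h,s} \in \Pi$ with endpoints $a = (1,h)$, $b = (\ell, h+(\ell-1)s)$. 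Since $G$ is strictly layered, every $a \to b$ path has exactly $\ell-1$ edges, of slopes $s_1, \ldots, s_{\ell-1} \ge 1$ with $\sum_j s_j = (\ell-1)s$ (the total height change), and so has weight $\sum_j s_j^2 \ge \tfrac{1}{\ell-1}\big(\sum_j s_j\big)^2 = (\ell-1)s^2 = w_G(\pi)$ by Cauchy--Schwarz, with equality iff $s_1 = \cdots = s_{\ell-1} = s$, i.e.\ iff the path equals $\pi$; hence $\pi$ is the unique shortest $a \to b$ path (and restricting $E(G)$ to edges of kept lines only removes competitors, so this is preserved).

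The main obstacle is bookkeeping rather than ideas. One must choose the constants in $\ell$, $N$, $d$ so that simultaneously (i) the height range $[1, N-(\ell-1)s]$ is nonempty for every $s \le d$, so that each candidate line is valid, (ii) $|\Pi_0| \ge p$ while still $|\Pi_0| = \Theta(p)$, so that discarding down to exactly $p$ lines is harmless, and (iii) $\ell = \Theta(n^{2/3}/p^{1/3})$, $N = \Theta(n/\ell)$ and $d = \Theta(p^{2/3}/n^{1/3})$ all hold --- all of which follow from a single inequality of the form $\ell d \le cN$ for a suitable small constant $c$, i.e.\ from picking the constant in $\ell = \Theta((n^2/p)^{1/3})$ small enough. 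Integer rounding and the divisibility assumption $\ell \mid n$ contribute only $1 \pm o(1)$ corrections and can be absorbed by allowing $|V(G)| = \Theta(n)$ or by padding. The only genuinely substantive step is the Cauchy--Schwarz argument for Property~4, which is short once one notices that the layered structure of $G$ forces every competing $a \to b$ path to have exactly the same number of edges as $\pi$.
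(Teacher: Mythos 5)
Your proposal is correct and takes essentially the same approach as the paper: a layered integer grid, paths that are integer-slope line segments, quadratic edge weights $s^2$, and a strict-convexity argument (you invoke Cauchy--Schwarz; the paper invokes Jensen's inequality applied to $x \mapsto x^2$ --- these are the same calculation) to get unique shortest paths, plus the observation that distinct lines meet in at most one point to get pairwise edge-disjointness. The only cosmetic difference is how the $\Theta(p)$ lines are enumerated: the paper fixes a small set of left-endpoint heights and lets the slope range over $[1,r]$ with $r=\Theta(d)$, whereas you allow every valid left-endpoint height $h$ and then sub-sample $\Pi_0$ down to exactly $p$ lines; both yield the same asymptotic parameters, and your version has the minor advantage of making the degree bound (at most one incident line per slope, and an averaging argument for the matching lower bound) completely explicit, which the paper leaves implicit.
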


\paragraph{Construction of $G, \Pi$.} 
We  construct graph $G$ and set of paths $\Pi$ as follows:
\begin{itemize}
    \item For each index $i \in [1, \ell]$, where $\ell = \frac{n^{2/3}}{2p^{1/3}}$,  we let the nodes in layer $L_i$ correspond to a column of points in a two-dimensional grid. Specifically, let $L_i$ be
    $$
    L_i = \{i\} \times \left[1, \frac{n}{\ell}\right].
    $$
    \item We define a set of source vertices $S \subseteq L_1$ as follows:
    $$
    S = \{1\} \times \left[1, \frac{n^{1/3}}{2}\right].
    $$
    \item Let $r = \frac{n}{2\ell^2}$.  For each source vertex $s \in S$ and every integer $x$ in the interval $x \in \left[1, r\right]$, we define a sequence of nodes $\pi_{s, x}$ as follows. For $i \in [1, \ell]$, we let the $i$th node of $\pi_{s, x}$ be
    $$
    s+ (i-1) \cdot \left(1,  x\right) \in L_i.
    $$
    Note that the first node of $\pi_{s, x}$ is node $s \in S$, and $\pi_{s, x}$ contains exactly one node in each layer. We  treat node sequence $\pi_{s, x}$ as a path, and we will assign edge weights to this path. Let $\pi_{s, x} = (v_1, \dots, v_{\ell})$, where $v_1 = s$ and $v_i \in L_i$ for $i \in [1, \ell]$. We assign each directed edge $(v_i, v_{i+1})$ in $\pi_{s, x}$ weight $x^2$, so that $w_G((v_i, v_{i+1})) = x^2$. 
    We define our collection of paths $\Pi$ to be
    $$
    \Pi = \{ \pi_{s, x} \mid s \in S, x \in [1, r] \}.
    $$
    \item We define graph $G$ to be the union of the paths in $\Pi$ over the vertex set $V = \cup_i L_i$. 
    Formally, we let $G = \graph(V, \Pi)$. 
    We let $w_G:E(G) \mapsto \mathbb{R}$ be our weight function for graph $G$. Notice that weight function $w_G$ will be well-defined since paths in $\Pi$ are pairwise edge-disjoint by Property 5 of Lemma \ref{lem:base_graph}. 
    This completes the construction of $G$, $w_G$,  and $\Pi$. 
\end{itemize}
\begin{proof}[Proof of Lemma \ref{lem:base_graph}]
    Conditions 2 and 3 are immediate from the construction of $G$ and $\Pi$. Condition 5 follows from observing that paths in $\Pi$ correspond to distinct straight lines in $\mathbb{R}^2$, and therefore intersect at no more than one point. By our construction and Condition 5, the total number of edges in $G$ is 
    $$
    |E(G)| = \sum_{\pi \in \Pi}|\pi| = \Theta(\ell p) = \Theta(n^{2/3}p^{2/3}).
    $$
    
    What remains is to verify Condition 4. Fix an $s\leadsto t$ path $\pi \in \Pi$ and suppose towards contradiction that there exists an $s \leadsto t$  path $\pi'$ such that $\pi \neq \pi'$ and $w_G(\pi') \leq w_G(\pi)$. 
    Let $\pi = \pi_{s, x}$ for some $x \in [1, r]$. Since graph $G$ is a layered graph, path $\pi'$ must have $|\pi'| = \ell - 1$ edges. Let $\pi' = (v_1, \dots, v_{\ell})$, where $v_1=s$,  $v_{\ell}=t$, and $v_i \in L_i$ for $i \in [1, \ell]$. 
    Let the $i$th edge of path $\pi'$ correspond to the vector $v_{i+1} - v_i = (1, x_i)$ for some $x_i \in [1, r]$.


    We quickly observe that since $\pi$ and $\pi'$ are both $s \leadsto t$ paths and $w_G(\pi') \leq w_G(\pi)$,
    $$
  t - s =   (\ell - 1) \cdot (1, x) = \sum_{i=1}^{\ell - 1}(1, x_i) \quad \text{ and } \quad (\ell - 1) \cdot x^2 \leq \sum_{i=1}^{\ell - 1}x_i^2
    $$
    The above equality and inequality can be restated as
    $$
  x = \frac{1}{\ell - 1} \sum_{i=1}^{\ell - 1}x_i \quad \text{ and } \quad x^2 \leq \frac{1}{\ell - 1}\sum_{i=1}^{\ell - 1}x_i^2
    $$
    By Jensen's inequality,  the above equality and inequality simultaneously hold only if $x_i = x$ for all $i \in [1, \ell - 1]$. We conclude that $\pi = \pi'$, a contradiction.
\end{proof}

We will prove that base graph $G$ and collection of paths $\Pi$ have a certain special ``expansion'' property in Lemma \ref{lem:base_degree}. Before we can state this property we need to introduce several definitions. 

\begin{definition}[Path-node incidences]
    We define the set of path-node incidences $\mathcal{I}_{V, \Pi}$ between a set of nodes $V$ and a set of paths $\Pi$ as
    $$
    \mathcal{I}_{V, \Pi} := \{(v, \pi) \mid v\in V, \pi \in \Pi, \text{ and } v \in \pi \}.
    $$
\end{definition}

Let $G, \Pi$ be the $n$-node graph and set of $|\Pi| = p $ paths described in Lemma \ref{lem:base_graph}. Then the number of path-node incidences between $V(G), \Pi$ is 
$$
|\mathcal{I}_{V(G), \Pi}| = \ell p = \Theta(n^{2/3}p^{2/3}).
$$
We also define a variation of the notion path-node incidences, where we are only interested in incidences at low-degree nodes. 

\begin{definition}[Degree-bounded path-node incidences]
Let $V$ be a set of nodes and $\Pi$ be a set of paths, and let $G = \graph(V, \Pi)$. 
We define the set of $d$-degree-bounded path-node incidences between a set of nodes $V$ and a set of paths $\Pi$ as 
    $$
    \mathcal{I}^{\leq d}_{V(G), \Pi} := \{(v, \pi) \mid v\in V(G), \pi \in \Pi, \deg_G(v) \leq d, \text{ and } v \in \pi \}.
    $$
    Likewise, we let $\mathcal{I}^{> d}_{V(G), \Pi}$ denote the set of path node incidences that are \textit{not} $d$-degree-bounded.
\end{definition}

We are now able to state Lemma \ref{lem:base_degree}, which corresponds to our special expansion property of $G, \Pi$. 

\begin{lemma}
\label{lem:base_degree}
Let $G, \Pi$ be the $n$-node graph and set of $|\Pi| = p$ paths described in  \cref{lem:base_graph}. Let $\Pi' \subseteq \Pi$ be a set of $|\Pi'| \leq p \cdot n^{-\varepsilon}$ paths in $\Pi$, where $\varepsilon > 0$ is a sufficiently small constant. Let $G' = \graph(V(G), \Pi')$ be the graph $G$ induced on the set of paths $\Pi'$, and let $d' := \frac{p^{2/3}}{n^{1/3 + \varepsilon/10}}$. Then  the number of $d'$-degree-bounded path-node incidences between $G' $ and $\Pi'$ is at least
$$
|\mathcal{I}^{\leq d'}_{V(G'), \Pi'}| \geq \left(1 - \frac{1}{16}\right) \cdot |\mathcal{I}_{V(G'), \Pi'}|.
$$
\end{lemma}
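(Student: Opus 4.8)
The plan is to count path-node incidences at high-degree nodes and show they form only a tiny fraction of $|\mathcal{I}_{V(G'),\Pi'}| = \ell \cdot |\Pi'|$. Fix $\Pi' \subseteq \Pi$ with $|\Pi'| = p' \le p \cdot n^{-\varepsilon}$ and let $G' = \graph(V(G),\Pi')$. The total incidence count is exactly $\ell p'$ since each path has exactly $\ell$ nodes. A path-node incidence $(v,\pi)$ is \emph{bad} if $\deg_{G'}(v) > d' = p^{2/3} n^{-1/3-\varepsilon/10}$; we must show the number of bad incidences is at most $\tfrac{1}{16}\ell p'$.

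First I would bound the number of edges of $G'$: since paths in $\Pi'$ are pairwise edge-disjoint (Property 5 of \cref{lem:base_graph}) and each has $\ell - 1$ edges, $|E(G')| = (\ell-1)p' \le \ell p'$. Now I would partition the bad incidences by which node they occur at. If $H \subseteq V(G')$ denotes the set of high-degree nodes (those with $\deg_{G'} > d'$), then the number of bad incidences equals $\sum_{v \in H} |\{\pi \in \Pi' : v \in \pi\}|$. The key observation is that each path through $v$ contributes (at least one, in fact close to two) distinct edges incident to $v$ in $G'$ — since $G'$ is layered and edge-disjoint, a path passing through an interior node $v$ uses one edge into $v$ and one out of $v$, all distinct across paths. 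Hence $|\{\pi \in \Pi' : v \in \pi\}| \le \deg_{G'}(v)$, and more importantly $\sum_{v\in H}\deg_{G'}(v) \le 2|E(G')| \le 2\ell p'$. But this alone is not strong enough; I need to exploit that each node in $H$ has degree \emph{more than} $d'$, so $|H| \le 2\ell p'/d'$, and then the bad incidence count is at most $\sum_{v \in H}\deg_{G'}(v)$. To get the needed factor I would instead argue: the number of bad incidences is $\sum_{v \in H}(\text{paths through } v) \le \sum_{v\in H}\deg_{G'}(v)$, and I bound this sum by a \emph{dyadic} decomposition over degree scales, using that the contribution of nodes of degree in $(2^j d', 2^{j+1}d']$ is controlled because the total edge budget $(\ell-1)p'$ forces few such nodes. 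Combining with the hypothesis $p' \le p n^{-\varepsilon}$, which makes $d'$ much smaller than the generic maximum degree $d = \Theta(p^{2/3}n^{-1/3})$ by a factor $n^{\varepsilon/10}$, the slack $n^{\varepsilon/10}$ overwhelms the $\log$-factor loss from dyadic summation and yields the bound $\tfrac{1}{16}$.

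I expect the main obstacle to be making the degree-versus-edge-budget tradeoff tight enough to beat the constant $\tfrac{1}{16}$. The naive bound $\sum_{v\in H}\deg_{G'}(v) \le 2|E(G')| \le 2\ell p'$ gives a constant larger than $1$, which is useless; the point is that one must \emph{double-count more cleverly} — noting that a bad incidence $(v,\pi)$ comes with $\deg_{G'}(v) > d'$ other paths also through $v$, so the "bad" edges cluster, and the edge-disjointness plus the layered structure means that the edges witnessing high degree at $v$ are genuinely distinct from those at $v' \ne v$. Writing $b$ for the number of bad incidences, a counting argument of the form "$b \cdot d' \le C\cdot |E(G')| \le C\ell p'$ hence $b \le C\ell p'/d'$" combined with the generic bound $b \le \ell p'$ doesn't immediately close; instead one needs that the $d'$-threshold is a factor $n^{\varepsilon/10}$ below the true max degree, so that in fact \emph{most} nodes visited have degree well below $d'$. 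I would formalize this by comparing $G'$ to the full graph $G$: the degree of any node in $G$ is at most $d = O(p^{2/3}n^{-1/3})$, but since we only keep a $n^{-\varepsilon}$ fraction of the paths, a Markov/averaging argument over the interaction structure (using that distinct paths meet in at most one point, Property 4/5) shows the expected degree in $G'$ drops by roughly $n^{-\varepsilon}$, landing comfortably below $d' = d \cdot n^{-\varepsilon/10} \cdot \Theta(1)$. The routine calculations — verifying $\ell, d, r$ satisfy the claimed asymptotics and that the constant works out to $\tfrac{1}{16}$ for $\varepsilon$ small — I would defer to the write-up.
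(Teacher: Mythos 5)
Your proposal correctly sets up the bookkeeping (total incidences $= \ell |\Pi'|$, edge-disjointness gives $|E(G')| \le \ell |\Pi'|$, each node of degree $>d'$ absorbs at least $\approx d'$ incidences, hence $|V'| \lesssim \ell|\Pi'|/d'$ where $V'$ is the set of high-degree nodes). This matches the first step of the paper's argument. But after that point you honestly flag that the naive bound doesn't close, and the fixes you sketch (dyadic decomposition over degree scales, a ``Markov/averaging argument over the interaction structure'') do not work. To see why: with $p' = p n^{-\varepsilon}$, the bound on $|V'|$ is $\Theta(n^{1 - 9\varepsilon/10})$, and each such node could a priori host up to $r = \Theta(d) = \Theta(n^{\varepsilon/10} d')$ paths, so the worst-case combinatorial upper bound on incidences at $V'$ is $\Theta(n^{1-9\varepsilon/10}) \cdot \Theta(n^{\varepsilon/10} d') = \Theta(n^{2/3 - 9\varepsilon/10} p^{2/3})$. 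This is \emph{polynomially larger} than the total incidence count $\Theta(n^{2/3 - \varepsilon} p^{2/3})$ (since $9\varepsilon/10 < \varepsilon$), so no purely combinatorial counting argument based on edge-disjointness and degree thresholds can produce the needed contradiction. There is no ``clever double-counting'' left to do at this level of generality.

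The missing ingredient is the geometric structure: nodes of $G$ are points in the plane and paths in $\Pi$ are lines, so the Szemer\'edi--Trotter theorem bounds the number of incidences between the $|V'| = O(n^{1-9\varepsilon/10})$ candidate high-degree points and the $|\Pi'| = p n^{-\varepsilon}$ lines by $O(|V'|^{2/3}|\Pi'|^{2/3} + |V'| + |\Pi'|) = O(n^{2/3 - 19\varepsilon/15} p^{2/3})$. Since $19\varepsilon/15 > \varepsilon$, this is polynomially \emph{smaller} than the assumed lower bound $\Omega(n^{2/3 - \varepsilon} p^{2/3})$ on incidences at $V'$, giving the contradiction. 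So you correctly derived the small-$|V'|$ bound, but the lemma genuinely needs Szemer\'edi--Trotter (or an equivalent point-line incidence bound) applied to $V'$ and $\Pi'$; edge-counting alone cannot rule out the incidences concentrating at the high-degree nodes.
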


Notice that Lemma \ref{lem:base_degree} is a slightly weaker version of the following statement: ``The maximum degree of graph $G'$ is at most $d'$.'' This quoted statement is false in general, but Lemma \ref{lem:base_degree} is true as we will soon prove. We can view Lemma \ref{lem:base_degree} as an expansion property in the following sense. Let $H$ be the  incidence graph between nodes $V(G)$ and paths $\Pi$. Then Lemma \ref{lem:base_degree} implies that for any small set $\Pi' \subseteq \Pi$, the neighborhood $N_H(\Pi')$ of $\Pi'$ in $H$ is polynomially larger than the trivial lower bound. 

The proof of Lemma \ref{lem:base_degree} will make use of the Szemerédi–Trotter Theorem, a classical theorem from incidence geometry. 

\begin{theorem}[Szemerédi–Trotter Theorem \cite{szemeredi1983extremal}]
    Given $n$ distinct points and $m$ distinct lines in the Euclidean plane, the number of point-line incidences is $O(n^{2/3}m^{2/3}+n+m)$. 
    \label{thm:ST}
\end{theorem}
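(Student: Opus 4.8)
The plan is to reduce the statement to a point–line incidence estimate and then feed it into the Szemer\'edi--Trotter theorem (\cref{thm:ST}) together with its standard ``rich points'' corollary. Throughout, write $L := |\Pi'|$, and for a node $v$ let $m_v$ denote the number of paths of $\Pi'$ passing through $v$.

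\textbf{Step 1: reduce to counting incidences at high-multiplicity nodes.} Since the paths of $\Pi$ are pairwise edge-disjoint (Property~5 of \cref{lem:base_graph}) and each path through $v$ contributes at most two edges of its edge set incident to $v$, we get $\deg_{G'}(v) \le 2 m_v$. Hence every ``bad'' incidence $(v,\pi) \in \mathcal{I}^{>d'}_{V(G'),\Pi'}$ has $m_v > d'/2$. Moreover $|\mathcal{I}_{V(G'),\Pi'}| = \ell\, L$, because $V(G') = V(G)$ and every path of $\Pi'$ has exactly $\ell$ nodes, all in $V(G)$. Thus it suffices to show that the total number of incidences at rich nodes, $B := \sum_{v:\, m_v > d'/2} m_v$, satisfies $B \le \tfrac1{16}\,\ell\, L$.

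\textbf{Step 2: apply Szemer\'edi--Trotter.} Identify each path $\pi_{s,x} \in \Pi'$ with the line in $\mathbb{R}^2$ through $s$ of slope $x$; distinct paths give distinct lines (the line determines both its slope $x$ and its value at abscissa $1$, hence $s$), and a node lies on a path only if the corresponding point lies on the corresponding line, so $m_v$ is at most the number of these $L$ lines through $v$. By \cref{thm:ST}, for any threshold $t \ge 2c$ (with $c$ the Szemer\'edi--Trotter constant) the number of points on at least $t$ of the $L$ lines is $O(L^2/t^3 + L/t)$. A layer-cake summation with $t := d'/2$ then yields
$$B \;\le\; \sum_{v:\, m_v \ge t} m_v \;=\; t\cdot O\!\left(\tfrac{L^2}{t^3}+\tfrac{L}{t}\right) + \sum_{t < k \le L} O\!\left(\tfrac{L^2}{k^3}+\tfrac{L}{k}\right) \;=\; O\!\left(\tfrac{L^2}{t^2} + L\log L\right),$$
where the case $t > L$ (no rich nodes) and the case $t < 2c$ (ruled out for large $n$, since $d' \ge n^{1/9 - \varepsilon/10}$ as $p \ge n^{2/3}$) are trivial.

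\textbf{Step 3: the arithmetic.} Substitute $t = d'/2 = \tfrac12 p^{2/3} n^{-1/3-\varepsilon/10}$, $\ell = \Theta(n^{2/3} p^{-1/3})$, and $L = |\Pi'| \le p\, n^{-\varepsilon}$. Then $t^2 \ell = \Theta(p\, n^{-\varepsilon/5})$, so
$$\frac{L^2/t^2}{\ell\, L} = \frac{L}{t^2\ell} = O\!\left(n^{-\varepsilon}\cdot n^{\varepsilon/5}\right) = O\!\left(n^{-4\varepsilon/5}\right),\qquad \frac{L\log L}{\ell\, L} = \frac{\log L}{\ell} = O\!\left(\frac{p^{1/3}\log n}{n^{2/3}}\right) = O\!\left(n^{-1/6}\log n\right),$$
the last estimate using $p \le n^{3/2}$. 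Both ratios are $o(1)$, so $B \le \tfrac1{16}\,\ell\, L$ for all sufficiently large $n$, which gives $|\mathcal{I}^{\le d'}_{V(G'),\Pi'}| = |\mathcal{I}_{V(G'),\Pi'}| - B \ge (1-\tfrac1{16})\,|\mathcal{I}_{V(G'),\Pi'}|$, as required.

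\textbf{Main obstacle.} The real content is the calibration of the exponents: the shave $n^{-\varepsilon/10}$ in $d'$ must be exactly enough that the \emph{super-linear} savings Szemer\'edi--Trotter grants on rich points outweighs the loss from $|\Pi'|$ being only polynomially (factor $n^{-\varepsilon}$) smaller than $p$; and one must also dispose of the lower-order $L/t$ and $L$ terms coming from \cref{thm:ST} --- in particular the $L\log L$ term --- which is precisely where the hypothesis $p \le n^{3/2}$ (equivalently $\ell = \Omega(n^{1/6})$) is used. Everything else is a direct invocation of \cref{thm:ST} and routine bookkeeping.
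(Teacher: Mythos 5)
Your proposal does not prove the statement in question. The statement is the Szemer\'edi--Trotter theorem itself (\cref{thm:ST}): that $n$ points and $m$ lines in the plane determine $O(n^{2/3}m^{2/3}+n+m)$ incidences. The paper does not prove this --- it is a classical result cited from the literature and used as a black box. What you have written is instead a proof of \cref{lem:base_degree} (the ``expansion'' property of the base graph), and your argument explicitly \emph{invokes} \cref{thm:ST} in Step 2 (via its standard rich-points corollary, that the number of points on at least $t$ of $L$ lines is $O(L^2/t^3 + L/t)$). As a proof of \cref{thm:ST} this is circular: you assume the incidence bound in order to derive consequences of it, and nowhere do you establish the bound $O(n^{2/3}m^{2/3}+n+m)$ itself. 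A genuine proof would need the crossing-number inequality, a cell/cutting decomposition, or the polynomial method --- none of which appears.

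As a secondary remark: read as a proof of \cref{lem:base_degree}, your argument is a legitimate alternative to the paper's. The paper argues by contradiction --- it assumes the high-degree incidences exceed a $1/16$ fraction, deduces that the set $V'$ of rich nodes is small ($|V'| = O(n^{1-9\varepsilon/10})$), and then applies \cref{thm:ST} directly to the point set $V'$ and the $|\Pi'|$ lines to contradict the assumed incidence count. You instead use the rich-points corollary with a layer-cake summation over multiplicities, which avoids the contradiction framing and makes the role of the lower-order terms ($L/t$ and the resulting $L\log L$) explicit; the exponent bookkeeping in your Step 3 is consistent with the paper's. But this does not rescue the proposal as an answer to the question asked.
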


\begin{proof}[Proof of Lemma \ref{lem:base_degree}]
    Suppose towards contradiction that
    $$
|\mathcal{I}^{> d'}_{V(G'), \Pi'}| > \frac{1}{16} \cdot |\mathcal{I}_{V(G'), \Pi'}|. 
    $$
    Now let $V' \subseteq V(G')$ denote the set of nodes in $G'$ of degree greater than $d'$. Formally, let
    $$
    V' = \{v \in V(G') \mid \deg_{G'}(v) > d'\}. 
    $$
    We observe that the number of path-node incidences between $V'$ and $\Pi'$ is large: 
    $$
    |\mathcal{I}_{V', \Pi'}| = |\mathcal{I}^{> d'}_{V(G'), \Pi'}| = \Theta\left(\frac{1}{16} \cdot |\mathcal{I}_{V(G'), \Pi'}|\right) = \Theta(\ell p \cdot n^{-\varepsilon}) = \Theta(n^{2/3- \varepsilon}p^{2/3}).
    $$
    Additionally, since every node in $V'$ participates in at least $d'$ incidences, the number of nodes in $V'$ is small:
    $$
    |V'| \leq \frac{|\mathcal{I}_{V', \Pi'}|}{d'} = \Theta\left(  \frac{n^{2/3 - \varepsilon}p^{2/3}}{p^{2/3} / n^{1/3 + \varepsilon/10}}\right) = \Theta\left( n^{1-9\varepsilon/10}\right).
    $$
    Recall that each path in $\Pi' \subseteq \Pi$ corresponds to a distinct line in the Euclidean plane, and each node in $V' \subseteq V(G)$ corresponds to a distinct point in the Euclidean plane. Since we have $|\Pi'| = p \cdot n^{-\varepsilon}$ lines and $|V'| = O(n^{1 - 9\varepsilon/10})$ points, applying Theorem \ref{thm:ST}, we obtain the upper bound
    $$
    |\mathcal{I}_{V', \Pi'}| = O(|\Pi'| + |V'| + |\Pi'|^{2/3} \cdot |V'|^{2/3}) = O( p \cdot n^{-\varepsilon} + n^{1 - 9\varepsilon/10} + p^{2/3}n^{-2\varepsilon/3} \cdot n^{2/3 - 3\varepsilon/5}) = O(n^{2/3-19\varepsilon/15}p^{2/3}),
    $$
    where the final equality follows from our choice of $p \in [n^{2/3}, n^{3/2}]$ and by taking $\varepsilon>0$ to be sufficiently small. 

    The above upper bound on $|\mathcal{I}_{V', \Pi'}|$ contradicts our earlier bound of 
    $|\mathcal{I}_{V', \Pi'}| = \Theta(n^{2/3-\varepsilon}p^{2/3})$. We conclude that our claimed lemma holds.     
\end{proof}

\subsection{Product Graph Construction $G^{\times}, \Pi^{\times}$}
\label{subsec:prod}
In this subsection, we construct a \textit{product graph } $G^{\times}$ and a collection of shortest paths $\Pi^{\times}$, constructed by taking a graph  product of two copies of our base graph $G$ and set of paths $\Pi$. Then we prove that the special expansion property of $G, \Pi$  is preserved in $G^{\times}, \Pi^{\times}$ (Lemma \ref{lem:prod_degree}). 

Let $G, \Pi$ be the $n$-node, $\ell$-layer directed graph with weights $w_G$ and set of $|\Pi| = p$ (weighted) paths described in  \cref{lem:base_graph}. We construct a weighted, directed product graph $G^{\times}$, with associated weight function $w^{\times}$ and  set  of paths $\Pi^{\times}$ as follows. 
\begin{itemize}
    \item The product graph has $2\ell$ layers, which we denote as $L^{\times}_1, \dots, L^{\times}_{2\ell}$. 
    \item The $(2i - 1)$th layer, $L^{\times}_{2i-1}$, of $G^{\times}$ is defined as
    $$
    L^{\times}_{2i-1} := L_i \times L_i,
    $$
    where $L_i$ is the $i$th layer of base graph $G$.
    \item The $2i$th layer, $L^{\times}_{2i}$, of $G^{\times}$ is defined as
    $$
    L^{\times}_{2i} := L_{i+1} \times L_i.
    $$
    \item  For each pair of paths $(\pi_1, \pi_2) \in \Pi \times \Pi$, where $\pi_1 = (u_1, \dots, u_{\ell})$ and $\pi_2 = (v_1, \dots, v_{\ell})$, we define a corresponding path as
    $$
    (u_1, v_1) \rightarrow (u_2, v_1) \rightarrow (u_2, v_2) \rightarrow \cdots \rightarrow (u_{\ell}, v_{\ell -1}) \rightarrow (u_{\ell}, v_{\ell}).
    $$
    We will occasionally denote this new path by the tuple $(\pi_1, \pi_2)$, and we denote the edges of this path by $E((\pi_1, \pi_2))$. Note that the $i$th node of $(\pi_1, \pi_2)$ is in layer $L^{\times}_i$ of $G^{\times}$. 
    \item We define the edge set $E(G^{\times})$ of $G^{\times}$ as
    $$E(G^{\times}) = \bigcup_{(\pi_1, \pi_2) \in \Pi \times \Pi}  E((\pi_1, \pi_2)).$$
    For every edge $e \in E(G^{\times})$ of the form 
    $
    e = (u, v) \rightarrow (u', v),
    $
    we assign edge $e$ weight $w^{\times}(e) := w_G(u, u')$ in $G^{\times}$. Similarly, for every edge $e \in E(G^{\times})$ of the form 
    $
    e = (u, v) \rightarrow (u, v'),
    $
    we assign edge $e$ weight $w^{\times}(e) := w_G(v, v')$ in $G^{\times}$. 
    \item We define the set  $\Pi^{\times}$ of paths in $G^{\times}$ as
    $$
    \Pi^{\times} = \bigcup_{(\pi_1, \pi_2) \in \Pi \times \Pi}  (\pi_1, \pi_2).
    $$
\end{itemize}

\begin{lemma}
    Let $G, \Pi$ be the $n$-node graph and set of $|\Pi| = p$ paths, where $p \in [n^{2/3}, n^{3/2}]$, described in Lemma \ref{lem:base_graph}. Let $G^{\times}$, $\Pi^{\times}$ be the product graph and product paths constructed from $G, \Pi$. Then $G^{\times}, \Pi^{\times}$ satisfy the following conditions:
    \begin{enumerate}
        \item the parameters of $G^{\times}, \Pi^{\times}$ are:
        \begin{itemize}
            \item $|V(G^{\times})| =  \Theta\left(\frac{n^2}{\ell}\right)$, where $\ell = \frac{n^{2/3}}{p^{1/3}}$,
            \item $|E(G^{\times})| = |V(G^{\times})| \cdot \Theta\left( d\right)$ and $G^{\times}$ has maximum degree $\Theta(d)$, where $d = \frac{p^{2/3}}{n^{1/3}}$,
            \item  $|\Pi^{\times}| = \Theta(p^2)$,
        \end{itemize}
        \item graph $G^{\times}$ has $2\ell$ layers $L_1^{\times}, \dots, L_{2\ell}^{\times}$, and each layer has $n^2 / \ell^2$ nodes,
        \item every path $\pi \in \Pi^{\times}$ contains $2\ell$ nodes, and the $i$th node of $\pi$ is in layer $L_i^{\times}$,
        \item every path $\pi \in \Pi^{\times}$ is the unique shortest path between its endpoints in $G^{\times}$, and
        \item intersecting paths in $\Pi^{\times}$ intersect on either one node or one edge in $G^{\times}$.
    \end{enumerate}
    \label{lem:prod_graph}
\end{lemma}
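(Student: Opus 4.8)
The plan is to verify the five conditions in roughly the order stated. Conditions 2 and 3 are read off the construction: layer $L^{\times}_{2i-1}=L_i\times L_i$ and layer $L^{\times}_{2i}=L_{i+1}\times L_i$ each have $(n/\ell)^2=n^2/\ell^2$ nodes, there are $\Theta(\ell)$ layers, and by definition the product path $(\pi_1,\pi_2)$ advances one layer at a time, so its $i$-th node lies in $L^{\times}_i$ and it has $\Theta(\ell)$ nodes. Summing layer sizes gives $|V(G^{\times})|=\Theta(n^2/\ell)$, and $\Pi^{\times}$ is in bijection with $\Pi\times\Pi$, so $|\Pi^{\times}|=\Theta(p^2)$. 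This leaves the degree/edge count in Condition 1, and Conditions 4 and 5, which are the real content.

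For Condition 1's degree and edge bounds I would argue as follows. Every edge of $G^{\times}$ incident to a node $(a,b)$ changes exactly one coordinate, along an edge of $G$ incident to $a$ or to $b$; hence $\deg_{G^{\times}}(a,b)\le\deg_G(a)+\deg_G(b)\le 2d$, and this is attained, so $G^{\times}$ has maximum degree $\Theta(d)$ and $|E(G^{\times})|=O(d\,|V(G^{\times})|)$. For the matching lower bound I would count (product path, edge) incidences: there are $p^2$ product paths with $\Theta(\ell)$ edges each, while a fixed edge of $G^{\times}$ that advances an edge $e$ of a base path lies only on product paths whose corresponding factor is the unique base path through $e$ (using edge-disjointness of $\Pi$) and whose other factor is one of the $\le d$ base paths through a fixed node; hence each edge of $G^{\times}$ is used $O(d)$ times, so $|E(G^{\times})|=\Omega(\ell p^2/d)$. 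Substituting $\ell=n^{2/3}/p^{1/3}$ and $d=p^{2/3}/n^{1/3}$ gives $\ell p^2/d=\Theta(d\cdot n^2/\ell)=\Theta(pn)$, matching the upper bound.

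The heart of the argument is Condition 4. The key point is that every edge of $G^{\times}$ changes exactly one coordinate and never decreases a layer index, so any $(u_1,v_1)\leadsto(u_\ell,v_\ell)$ path $\tau$ in $G^{\times}$ projects, after deleting the ``stalling'' steps, to a $u_1\leadsto u_\ell$ path $\tau_1$ and a $v_1\leadsto v_\ell$ path $\tau_2$ in $G$, with $w^{\times}(\tau)=w_G(\tau_1)+w_G(\tau_2)$; in particular $w^{\times}((\pi_1,\pi_2))=w_G(\pi_1)+w_G(\pi_2)$. By the unique-shortest-path property of the base graph (Condition 4 of Lemma \ref{lem:base_graph}), $w_G(\tau_1)\ge w_G(\pi_1)$ and $w_G(\tau_2)\ge w_G(\pi_2)$, with equality iff $\tau_1=\pi_1$ and $\tau_2=\pi_2$; thus $(\pi_1,\pi_2)$ is a shortest path and any shortest $\tau$ has $\tau_1=\pi_1$, $\tau_2=\pi_2$. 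It then remains to see that the two projections determine $\tau$: a node of $G^{\times}$ whose coordinates are the $a$-th node of $\pi_1$ and the $b$-th node of $\pi_2$ exists only for $a\in\{b,b+1\}$ (this is exactly how the layers of $G^{\times}$ are defined), and advancing the same coordinate twice in a row would require passing through a node with $a\notin\{b,b+1\}$; starting from $a=b=1$ this forces the alternating staircase, which is precisely $(\pi_1,\pi_2)$. I expect this forced-alternation step to be the main obstacle, since it is where the precise layer structure of the product --- rather than just the projection inequality --- must be used.

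Finally, for Condition 5 I would take distinct product paths $P=(\pi_1,\pi_2)$ and $P'=(\pi_1',\pi_2')$ sharing a node $(u,v)$, so $u\in\pi_1\cap\pi_1'$ and $v\in\pi_2\cap\pi_2'$, and use that distinct base paths are distinct lines meeting in at most one node. If both factors differ, then $\pi_1\cap\pi_1'=\{u\}$ and $\pi_2\cap\pi_2'=\{v\}$, so $P$ and $P'$ share exactly the node $(u,v)$. If $\pi_1=\pi_1'$ and $\pi_2\ne\pi_2'$, write $\pi_2\cap\pi_2'=\{v\}$ with $v$ the $j$-th node of both; a common node must have second coordinate $v$ and (the nodes of $\pi_1$ lying in distinct layers) first coordinate the $a$-th node of $\pi_1$ with $a\in\{j,j+1\}$, so there are at most two such nodes, joined by the horizontal edge of $\pi_1$ between layers $j$ and $j+1$, which lies on both $P$ and $P'$; hence $P$ and $P'$ share exactly one edge, or, in the degenerate case $j=\ell$, exactly one node. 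The case $\pi_1\ne\pi_1'$, $\pi_2=\pi_2'$ is symmetric with the coordinates exchanged, and $\pi_1=\pi_1'$ with $\pi_2=\pi_2'$ forces $P=P'$; this exhausts all cases.
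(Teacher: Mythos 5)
Your proposal is correct, and its core argument for Condition~4 is the same as the paper's: every path in $G^{\times}$ between product-path endpoints must walk through the layers $L_1^{\times},\dots,L_{2\ell}^{\times}$ in order, its projections to the two coordinates are valid paths in $G$ whose weights add to the weight in $G^{\times}$, and the unique-shortest-path property of the base graph then forces the projections to equal $\pi_1,\pi_2$. The paper asserts the staircase form at the outset and reads off the projections, so it does not separately need your final ``forced alternation'' step; that step is logically the same content, unpacked. Where you genuinely diverge is on Conditions~1 and~5. For the edge count the paper only says the bounds are immediate from the construction (relegating the degree inequality to a separate Claim), whereas you give the matching lower bound $|E(G^{\times})|=\Omega(\ell p^2/d)$ by double-counting (product path, edge) incidences and using edge-disjointness of $\Pi$; that is a useful explicit argument the paper omits. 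For Condition~5 the paper's proof is a one-liner --- any two edges of a product path $(\pi_1,\pi_2)$ already determine both factors, so two distinct product paths cannot share two edges --- whereas you split into cases on whether $\pi_1=\pi_1'$ and $\pi_2=\pi_2'$ and then use the layer constraint $a\in\{b,b+1\}$ to pin down the intersection. Both work; the paper's is slicker but requires the reader to also rule out sharing two nonadjacent nodes, which your case analysis handles explicitly (including the degenerate boundary cases $j=\ell$ and $j=1$). So: same route for the substantive Condition~4, a more elementary and more detailed route for Conditions~1 and~5.
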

\begin{proof}
Conditions 1, 2, and 3 are immediate from the construction of $G^{\times}$ and $\Pi^{\times}$ and Lemma \ref{lem:base_graph}. To prove Condition 5, observe the following: for any path $(\pi_1, \pi_2) \in \Pi^{\times}$, this path is uniquely identified by any two edges on path $(\pi_1, \pi_2)$. 
Consequently, any two paths in $\Pi^{\times}$ can intersect on either one node or one edge. 

What remains is to verify Condition 4. Let $\pi   = (\pi_1, \pi_2) \in \Pi^{\times}$ be an $s \leadsto t$ path in $\Pi^{\times}$. Suppose towards contradiction that there exists an $s \leadsto t$ shortest path $\pi'$ such that $\pi' \neq \pi$. By the construction of graph $G^{\times}$, we can write path $\pi'$ as 
$$
\pi' = (u_1, v_1) \rightarrow (u_2, v_1) \rightarrow (u_2, v_2) \rightarrow \cdots \rightarrow (u_{\ell}, v_{\ell-1}) \rightarrow (u_{\ell}, v_{\ell}),
$$
where $s=(u_1, v_1)$ and $t = (u_{\ell}, v_{\ell})$. From path $\pi'$, we can define paths $\pi_1'$ and $\pi_2'$ as follows:
$$
\pi_1' = (u_1, u_2, \dots, u_{\ell}) \text{ and } \pi_2' = (v_1, v_2, \dots, v_{\ell}).
$$
Observe that $\pi_1', \pi_2' \subseteq G$ are valid paths in $G$ that share the same endpoints as $\pi_1$ and $\pi_2$, respectively.
Additionally, by our construction of $G^{\times}$,
$$
w_G(\pi_1') + w_G(\pi_2') = w^{\times}(\pi') \leq w^{\times}((\pi_1, \pi_2)) = w_G(\pi_1) + w_G(\pi_2).
$$
Since $\pi_1$ and $\pi_2$ are unique shortest paths in $G$ by Lemma \ref{lem:base_graph}, it follows that $w_G(\pi_1') \geq w_G(\pi_1)$ and $w_G(\pi_2') \geq w_G(\pi_2)$, which combined with the above sequence of inequalities immediately implies
$$
w_G(\pi_1') = w_G(\pi_1) \text{ and } w_G(\pi_2') = w_G(\pi_2).
$$
Then since paths $\pi_1$ and $\pi_2$ are unique shortest paths in $G$, it follows that $\pi_1 =\pi_1'$ and $\pi_2 = \pi_2'$. However, this implies that $\pi' = \pi$, a contradiction. 
\end{proof}

Additionally, we will make use of the following claim about the degrees of nodes in $G^{\times}$. 

\begin{claim}
Let $(u, v) \in V(G) \times V(G)$ be a node in $V(G^{\times})$. Then the degree of $(u, v)$ in $G^{\times}$ is at most
$$
\deg_{G^{\times}}((u, v)) \leq \deg_{G}(u) + \deg_{G}(v).
$$
Moreover, the number of paths in $\Pi^{\times}$ containing node $(u, v)$ is at most
$$
\left|\{(\pi_1, \pi_2) \in \Pi^{\times} \mid (u, v) \in (\pi_1, \pi_2)\}\right| \leq \deg_G(u) \cdot \deg_G(v).
$$
\label{clm:prod_degree}
\end{claim}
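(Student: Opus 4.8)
The plan is to prove the two inequalities separately, in both cases reducing to elementary structural facts about the base graph $G$ and the edge‑disjoint path family $\Pi$.

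For the degree bound, I would first recall from the construction of $\Pi^{\times}$ that the edges of any product path $(\pi_1,\pi_2)$ — and hence every edge of $G^{\times}$ — come in exactly two forms: a \emph{first‑coordinate} edge $(a,c)\to(b,c)$ with $(a,b)\in E(G)$, or a \emph{second‑coordinate} edge $(a,c)\to(a,d)$ with $(c,d)\in E(G)$; this is immediate by inspecting the edges of the path $(u_1,v_1)\to(u_2,v_1)\to(u_2,v_2)\to\cdots$. Fix the node $(u,v)$. Every first‑coordinate edge incident to $(u,v)$ is either $(a,v)\to(u,v)$ with $(a,u)\in E(G)$, or $(u,v)\to(b,v)$ with $(u,b)\in E(G)$; sending such an edge to the corresponding edge of $G$ incident to $u$ is injective, since the $G^{\times}$‑edge is determined by that $G$‑edge together with the fixed second coordinate $v$. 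Hence there are at most $\operatorname{indeg}_G(u)+\operatorname{outdeg}_G(u)=\deg_G(u)$ first‑coordinate edges at $(u,v)$, and symmetrically at most $\deg_G(v)$ second‑coordinate edges at $(u,v)$. Summing gives $\deg_{G^{\times}}((u,v))\le\deg_G(u)+\deg_G(v)$.

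For the second bound, I would observe that every node of a product path $(\pi_1,\pi_2)$ has the form $(u_i,v_j)$ with $u_i\in V(\pi_1)$ and $v_j\in V(\pi_2)$; therefore, if $(u,v)$ lies on $(\pi_1,\pi_2)$ then $u\in V(\pi_1)$ and $v\in V(\pi_2)$. Consequently the number of product paths through $(u,v)$ is at most $N_\Pi(u)\cdot N_\Pi(v)$, where $N_\Pi(x)$ denotes the number of paths of $\Pi$ containing $x$. Finally, I would note that $N_\Pi(x)\le\deg_G(x)$: each path of $\Pi$ through $x$ uses at least one edge of $G$ incident to $x$, and since the paths in $\Pi$ are pairwise edge‑disjoint (Property~5 of Lemma~\ref{lem:base_graph}) these edge sets are disjoint across paths, so there are at most $\deg_G(x)$ of them. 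Combining, the number of paths in $\Pi^{\times}$ containing $(u,v)$ is at most $\deg_G(u)\cdot\deg_G(v)$.

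Neither step is a genuine obstacle; the only points that require a little care are (i) confirming that the two‑form classification of $E(G^{\times})$ is exhaustive, which follows directly from the definition of $\Pi^{\times}$, and (ii) checking injectivity of the map from $G^{\times}$‑edges at $(u,v)$ to $G$‑edges at $u$ (respectively $v$), which holds because the two endpoints of such an edge differ only in the single coordinate that changes along that edge.
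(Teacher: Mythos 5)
Your proof is correct and takes essentially the same approach as the paper's: for the degree bound you classify edges at $(u,v)$ by which coordinate changes and charge each one injectively to a $G$-edge at $u$ or $v$, which is a slightly more explicit phrasing of the paper's observation that the only possible neighbors of $(u,v)$ have the form $(u',v)$ or $(u,v')$; and for the path-count bound you reduce to the fact that at most $\deg_G(x)$ paths of $\Pi$ pass through any node $x$ (by edge-disjointness), exactly as the paper does. The only cosmetic difference is that the paper records the path-count bound as an equality $|\{(\pi_1,\pi_2)\in\Pi^{\times}\mid (u,v)\in(\pi_1,\pi_2)\}| = |\{\pi\in\Pi\mid u\in\pi\}|\cdot|\{\pi\in\Pi\mid v\in\pi\}|$ — valid because of the layered structure forcing $u,v$ to occupy fixed positions in any pair of base paths through them — whereas you only use the inequality, which is all that is needed.
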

\begin{proof}
First, we make an observation about our base graph $G$. Since paths in $\Pi$ are pairwise edge-disjoint in $G$ by Condition 5 of \ref{lem:base_graph}, we immediately obtain that for all $v \in V(G)$, 
$$
|\{ \pi \in \Pi \mid v \in \pi \}| \leq \deg_G(v).
$$
As a direct consequence of our construction, for all $(u, v) \in V(G) \times V(G)$,
$$
\left|\{(\pi_1, \pi_2) \in \Pi^{\times} \mid (u, v) \in (\pi_1, \pi_2)\}\right| = |\{ \pi \in \Pi \mid u \in \pi \}|  \cdot |\{ \pi \in \Pi \mid v \in \pi \}| \leq \deg_G(u) \cdot \deg_G(v),
$$
establishing the second part of the claim. To prove the first part of our claim, we observe that node $(u, v)$ is only adjacent to nodes in $V(G^{\times})$ of the form $(u', v)$ and $(u, v')$, where node $u'$ is adjacent to node $u$ in $G$ and node $v'$ is adjacent to node $v$ in $G$. There can be at most $\deg_G(u) + \deg_G(v)$ many nodes of this form. 
\end{proof}

Let $G^{\times}$ and $\Pi^{\times}$ be the product graph and product paths constructed from a base graph $G$ with $n$ nodes and $|\Pi| = p$ paths. Then by Lemma \ref{lem:prod_graph}, the number of path-node incidences of $G^{\times}, \Pi^{\times}$ is 
$$
|\mathcal{I}_{G^{\times}, \Pi^{\times}}| = 2\ell \cdot |\Pi^{\times}| =  \Theta(\ell p^2) = \Theta(n^{2/3}p^{5/3}). 
$$

We will now prove that graph $G^{\times}$ and paths $\Pi^{\times}$ inherit special ``expansion'' property that we proved graph $G$ and paths $\Pi$ have in Lemma \ref{lem:base_degree}. 

\begin{lemma}
\label{lem:prod_degree}
Let $G^{\times}$ and $\Pi^{\times}$ be the product graph and product paths constructed from a base graph $G$ with $n$ nodes and a set $\Pi$ of $|\Pi| = p$ paths. Let $\Pi' \subseteq \Pi$ be a set of $|\Pi'| = p \cdot n^{-\varepsilon}$ paths in $\Pi$, for some sufficiently small constant $\varepsilon > 0$. Now let $\Pi'' \subseteq \Pi^{\times}$ be a set of paths in $\Pi^{\times}$ contained in the rectangle $\Pi' \times \Pi'$, i.e.,
$$
\Pi'' \subseteq \Pi' \times \Pi' \subseteq \Pi^{\times}.
$$ 
Let $d'' := \frac{2p^{2/3}}{n^{1/3 + \varepsilon/10}}$. Then there exists a subset $\Pi^{(3)} \subseteq \Pi''$ of $\Pi''$ such that the set of all $d''$-degree-bounded path-node incidences between $\Pi^{(3)}$ and $V(G^{\times})$ is at least
$$
\left|\mathcal{I}^{\leq d''}_{V(G^{\times}), \Pi^{(3)}}\right| = \Theta\left(\left|\mathcal{I}_{V(G^{\times}), \Pi^{(3)}}\right|\right) =  \Omega\left( \frac{1}{\log^2 n} \cdot  |\mathcal{I}_{V(G^{\times}), \Pi''}|\right).
$$
\end{lemma}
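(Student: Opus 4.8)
The plan is to reduce the degree bound inside the product graph to degree bounds inside the \emph{base} graph, regularize $\Pi''$ so that Lemma~\ref{lem:base_degree} can be applied to its two coordinate projections, and then translate the resulting incidence bound back up.

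\textbf{Step 1: reduction to base degrees.} First I would prove the following deterministic fact, in the spirit of Claim~\ref{clm:prod_degree}: for any $\Pi^{(3)} \subseteq \Pi''$, writing $A_3 := \{\pi_1 : (\pi_1,\pi_2) \in \Pi^{(3)}\}$ and $B_3 := \{\pi_2 : (\pi_1,\pi_2) \in \Pi^{(3)}\}$ for its two coordinate projections (both subsets of $\Pi'$, hence of size at most $p \cdot n^{-\varepsilon}$), every node $(a,b) \in V(G^{\times})$ satisfies
\[
\deg_{\graph(V(G^{\times}),\Pi^{(3)})}\big((a,b)\big) \;\le\; \deg_{\graph(V(G),A_3)}(a) + \deg_{\graph(V(G),B_3)}(b).
\]
Indeed, an edge of $G^{\times}$ incident to $(a,b)$ either changes the first coordinate (and, lying on a product path of $\Pi^{(3)}$, projects to an edge of $\graph(V(G),A_3)$ incident to $a$) or changes the second coordinate (projecting to an edge of $\graph(V(G),B_3)$ incident to $b$); edge-disjointness of the base paths (Condition~5 of Lemma~\ref{lem:base_graph}) keeps these counts from double counting. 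In particular $(a,b)$ is $d''$-degree-bounded in $\graph(V(G^{\times}),\Pi^{(3)})$ whenever $a$ has $\graph(V(G),A_3)$-degree at most $d' = d''/2$ and $b$ has $\graph(V(G),B_3)$-degree at most $d'$. Thus it suffices to find $\Pi^{(3)}$ of the required size so that only a small constant fraction of its path-node incidences land at a node whose first coordinate is ``heavy'' in $\graph(V(G),A_3)$ or whose second coordinate is ``heavy'' in $\graph(V(G),B_3)$.

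\textbf{Step 2: regularization of $\Pi''$.} Lemma~\ref{lem:base_degree} controls the \emph{unweighted} average over paths of the number of heavy base-incidences, but each base-incidence of a path $\pi_1 \in A_3$ gets inflated into (fiber size of $\pi_1$ in $\Pi^{(3)}$)-many product incidences, so I need the fiber sizes to be roughly uniform on both sides. I would extract this by a bucketing argument: bucket the rows of $\Pi''$ by dyadic fiber size (only $O(\log n)$ buckets, since fibers lie in $[1, |\Pi'|]$ and $|\Pi'| \le p \le n^{3/2}$), pass to the heaviest bucket to make row fibers uniform up to a factor $2$, then bucket the columns of the resulting set by dyadic fiber size and again pass to the heaviest bucket, obtaining $\Pi^{(3)} \subseteq \Pi''$ with $|\Pi^{(3)}| = \Omega(|\Pi''| / \log^2 n)$. \emph{This is the step I expect to be the main obstacle:} restricting columns perturbs the row fibers, so a naive two-round bucketing does not keep \emph{both} coordinates near-regular simultaneously. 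Making it work requires a more careful interleaved/iterated bucketing that controls the two fiber distributions at once while preserving the $O(\log^2 n)$ loss — one can exploit that the dyadic bucket indices are monotone non-increasing under restriction. This is precisely the correlation between the two coordinates highlighted in the technical overview.

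\textbf{Step 3: applying the expansion property and translating back.} With $\Pi^{(3)}$ near-biregular on both fibers, apply Lemma~\ref{lem:base_degree} separately to $A_3$ and to $B_3$ (each of size at most $p \cdot n^{-\varepsilon}$, so the hypothesis holds): at most $\tfrac{1}{16}$ of the $\Theta(\ell |A_3|)$ base-incidences of $A_3$ sit at nodes of $\graph(V(G),A_3)$-degree exceeding $d'$, and likewise for $B_3$. A product path $(\pi_1,\pi_2) \in \Pi^{(3)}$ meets a node with heavy first coordinate at most $2\,|V(\pi_1) \cap H_1|$ times (each base node of $\pi_1$ appears on it at most twice), where $H_1$ denotes the set of $\graph(V(G),A_3)$-heavy nodes; summing over $\Pi^{(3)}$ and weighting by fibers, near-uniformity of the row fibers converts the $\tfrac{1}{16}$ unweighted bound into an $O(\tfrac{1}{16})$ bound on the fraction of the $\Theta(\ell |\Pi^{(3)}|)$ incidences of $\Pi^{(3)}$ with heavy first coordinate, and symmetrically for heavy second coordinate. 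Choosing the constants in Lemma~\ref{lem:base_degree} and in the regularization appropriately, the union of these two bad sets covers at most, say, half of $|\mathcal{I}_{V(G^{\times}),\Pi^{(3)}}|$, so by Step~1 the complementary incidences are all $d''$-degree-bounded, giving $\big|\mathcal{I}^{\le d''}_{V(G^{\times}),\Pi^{(3)}}\big| = \Theta\big(|\mathcal{I}_{V(G^{\times}),\Pi^{(3)}}|\big)$. Finally, since every path of $\Pi^{\times}$ has exactly $2\ell$ nodes (Condition~3 of Lemma~\ref{lem:prod_graph}), $|\mathcal{I}_{V(G^{\times}),\Pi^{(3)}}| = \Theta(\ell |\Pi^{(3)}|) = \Omega(\ell |\Pi''| / \log^2 n) = \Omega\big(|\mathcal{I}_{V(G^{\times}),\Pi''}| / \log^2 n\big)$, which is the claimed bound.
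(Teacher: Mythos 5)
You have the right architecture (project to coordinates, reduce via Claim~\ref{clm:prod_degree} to base-graph degree bounds, invoke Lemma~\ref{lem:base_degree} on each projection, bucket fiber sizes to control the $\log$ losses), but the obstacle you flag in Step~2 is genuine and your proposal does not overcome it. Your Step~3 is an \emph{averaging} argument over incidences, which requires the fiber sizes on \emph{both} coordinates of the final set to be near-uniform so that the $\tfrac{1}{16}$ incidence-level bound from Lemma~\ref{lem:base_degree} can be weighted by fiber sizes without blowing up. But after you bucket rows of $\Pi''$ and then bucket columns of the resulting set, the column restriction can redistribute row fibers arbitrarily, so the final set is regular in neither coordinate. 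You correctly identify this as the crux, but deferring it to an unspecified ``interleaved/iterated bucketing'' is not a proof, and it is not clear such a simultaneous regularization of a bipartite fiber structure with only $O(\log^2 n)$ loss is possible in general.

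The paper's proof avoids the need for biregularity entirely, and this is the idea you are missing. Rather than using Lemma~\ref{lem:base_degree} only as an average-over-incidences statement, the paper first converts it into a \emph{per-path} property: call a base path $\pi_1$ \emph{good} (with respect to a graph induced by some path set) if at least $\tfrac{2}{3}\ell$ of its nodes have degree at most $d'$ in that graph. By Markov applied to the Lemma~\ref{lem:base_degree} incidence bound, at least a constant fraction of paths in a small set are good. The row bucketing is then used \emph{only} to guarantee that keeping the good rows (a constant fraction of the bucket) still retains an $\Omega(1/\log n)$ fraction of $\Pi''$ by mass, since fibers within a dyadic bucket are within a factor $2$. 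Crucially, goodness is a monotone per-path property: since the induced base graph only shrinks under further restriction of the path set, a good row stays good when columns are later restricted. So after one symmetric pass on columns (bucketing the column fibers of $\Pi^{(3)}$, keeping the good columns) one obtains $\Pi^{(4)}$ of size $\Omega(|\Pi''|/\log^2 n)$ in which \emph{every} row and \emph{every} column is good; by a union bound over each product path's $\ell$ diagonal nodes $(u_i,v_i)$, at least $\ell/3$ of them are simultaneously light in both coordinates, hence $d''$-degree-bounded by your Step~1 reduction. No fiber regularity of $\Pi^{(4)}$ is used at this final stage — the per-path filtering replaces your averaging argument. If you replace your Step~2--3 plan with this Markov-plus-monotone-filtering scheme, your Step~1 reduction and overall structure carry through.
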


\begin{proof}The proof of Lemma \ref{lem:prod_degree} straightforwardly follows from Lemma \ref{lem:base_degree} using an intersection argument.
We begin with some new notations: 
\begin{itemize}
    \item Let $\Pi_1 \subseteq \Pi'$ denote the set
    $$
    \Pi_1 = \{\pi_1 \in \Pi' \mid (\pi_1, \pi_2) \in \Pi''\}.
    $$
    Likewise, let $\Pi_2 \subseteq \Pi'$ denote the set 
        $$
    \Pi_2 = \{\pi_2 \in \Pi' \mid (\pi_1, \pi_2) \in \Pi''\}.
    $$
    Observe that $\Pi'' \subseteq \Pi_1 \times \Pi_2$. 
\end{itemize}
Let $k_1 = \lceil \log p \rceil = \Theta(\log n)$. We will bucket the paths in $\Pi_1$ as follows. For $i \in [1, k_1]$, let $\Pi_1^i \subseteq \Pi_1$ denote the set
    $$
    \Pi_1^i = \{\pi_1 \in \Pi_1 \mid  2^{i - 1} \leq |\Pi'' \cap (\{\pi_1\} \times \Pi_2 )| \leq 2^{i}\}. 
    $$
    By the pigeonhole principle, there must exist an $i \in [1, k_1]$ such that
    $|\Pi'' \cap (\Pi_1^i \times \Pi_2)| \geq \frac{1}{k_1} \cdot |\Pi''| \geq  \frac{1}{2\log n}\cdot |\Pi''| $. 
Now we will need to introduce some additional notation:
\begin{itemize}
    \item Let $G_1 = \graph(V(G), \Pi_1^i)$. 
    \item Given a graph $H$ and a node $v \in V(H)$, We say that $v$ is $H$\textit{-good} if $\deg_{H}(v) \leq d'$,  where $d' = \frac{p^{2/3}}{n^{1/3 + \varepsilon/10}}$ is as defined in Lemma \ref{lem:base_degree}. 
    \item Given a graph $H$, we say that a path $\pi_1 \in \Pi_1^i$ is $H$\textit{-good} if at least $\frac{2}{3}\cdot \ell$ nodes in $\pi_1$ are $H$-good. 
\end{itemize}
Let $\Pi_1' \subseteq \Pi_1^i$ denote the subset
$$
\Pi_1' = \{\pi_1 \in \Pi_1^i \mid \pi_1 \text{ is $G_1$-good}\}.
$$
We claim that $|\Pi_1'| \geq |\Pi_1^i|/4$. To prove this, we will apply Lemma \ref{lem:base_degree} on set $\Pi_1^i$. Observe that $|\Pi_1^i| \leq |\Pi'| = p \cdot n^{-\varepsilon}$, for a sufficiently small constant $\varepsilon > 0$. Consequently, by Lemma \ref{lem:base_degree},
$$
|\mathcal{I}_{V(G), \Pi_1^i}^{\leq d'}| \geq (1 - 1/16) \cdot |\mathcal{I}_{V(G), \Pi_1^i}| = (1 - 1/16) \cdot |\Pi_1^i| \cdot \ell, 
$$
where $d' = d''/2$. Additionally, we can upper bound $|\mathcal{I}_{V(G), \Pi_1^i}^{\leq d'}|$ in terms of $|\Pi_1'|$, as follows:
$$
|\mathcal{I}_{V(G), \Pi_1^i}^{\leq d'}| \leq |\Pi_1'| \cdot \ell + |\Pi_1^i \setminus \Pi_1'| \cdot \frac{2}{3} \cdot \ell.
$$
Putting these inequalities together, we get that
$$
|\Pi_1'| \cdot \ell + |\Pi_1^i \setminus \Pi_1'| \cdot \frac{2}{3} \cdot \ell \geq (1 - 1/16) \cdot |\Pi_1^i| \cdot \ell,
$$
and solving for $|\Pi_1'|$, we get
$$
|\Pi_1'| \geq (1 - 1/16 - 2/3) \cdot |\Pi_1^i| \geq |\Pi_1^i| / 4.
$$
Let $\Pi^{(3)} \subseteq \Pi''$ denote the set
$$
\Pi^{(3)} = \Pi'' \cap (\Pi_1' \times \Pi_2). 
$$
We observe from our choice of $\Pi_1^i$ that
$$
|\Pi^{(3)}| \geq \Pi_1' \cdot 2^i \geq |\Pi_1^i|/4 \cdot 2^i = \frac{1}{8} \cdot |\Pi_1^i| \cdot 2^{i+1} \geq \frac{1}{8} \cdot 
|\Pi'' \cap (\Pi_1^i \times \Pi_2)| \geq \frac{1}{16 \log n}\cdot |\Pi''|.$$
We have shown that given a set $\Pi'' \subseteq \Pi_1 \times \Pi_2$, where $|\Pi_1| \leq p \cdot n^{-\varepsilon}$, we can find a subset $\Pi^{(3)} = \Pi'' \cap (\Pi_1' \times \Pi_2)$  of size $|\Pi^{(3)}| \geq \frac{1}{16 \log n} \cdot |\Pi''|$, such that every path $\pi_1 \in \Pi_1'$ is $G_1'$-good, for graph $G_1' = \graph(V(G), \Pi_1')$.

Now note that $|\Pi_2| \leq |\Pi'| = p \cdot n^{-\varepsilon}$. Then by an argument symmetric to the above argument, we can show that there exists a subset $\Pi^{(4)} \subseteq \Pi^{(3)}$ defined as
$$
\Pi^{(4)} = \Pi^{(3)} \cap (\Pi_1' \times \Pi_2'),
$$
(where $\Pi_2' \subseteq \Pi_2$) such that subset $\Pi^{(4)}$ has size
$$
|\Pi^{(4)}| \geq \frac{1}{16 \log n}\cdot |\Pi^{(3)}|,
$$
and every path $\pi_2 \in \Pi_2'$ is $G_2'$-good, for graph $G_2' = \graph(V(G), \Pi_2')$. 

Putting it all together, we observe that subset $\Pi^{(4)} \subseteq \Pi''$ has size
$$
|\Pi^{(4)}| \geq \frac{1}{16^2 \log^2 n} \cdot |\Pi''|,
$$
and for every path $(\pi_1, \pi_2) \in \Pi^{(4)}$, path $\pi_1$ is $G_1'$-good and path $\pi_2$ is $G_2'$-good.

We will show that subset $\Pi^{(4)} \subseteq \Pi^{(3)}$ is the subset claimed in the statement of Lemma \ref{lem:prod_degree}. Let $G^{(4)} = \graph(V(G^{\times}), \Pi^{(4)})$.

Fix a path $(\pi_1, \pi_2) \in \Pi^{(4)}$. We claim that path $(\pi_1, \pi_2)$ contributes at least $\ell/3$ $d''$-degree-bounded path-node incidences to $\mathcal{I}_{V(G^{\times}), \Pi^{(4)} }^{\leq d''}$. Let $\pi_1 = (u_1, u_2, \dots, u_{\ell})$, and let $\pi_2 = (v_1, v_2, \dots, v_{\ell})$. Since at least $2/3 \cdot \ell$ nodes in $\pi_1$ are $G_1'$-good and at least $2/3 \cdot \ell$ nodes in $\pi_2$ are $G_2'$-good, we conclude that there are at least $1/3 \cdot \ell$ indices $i \in [1, \ell]$ such that $u_i$ is $G_1'$-good and $v_i$ is $G_2'$-good. Let $J \subseteq [1, \ell]$ denote all indices $i \in [1, \ell]$ such that $u_i$ is $G_1'$-good and $v_i$ is $G_2'$-good.  By Claim \ref{clm:prod_degree}, for all $i \in J$, node $(u_i, v_i) \in V(G^{\times})$ has degree at most
$$
\deg_{G^{(4)} }(u_i, v_i) \leq \deg_{G_1'}(u_i) + \deg_{G_2'}(v_i) \leq d''
$$
in $G^{(4)} $. Finally, observe that for all $i \in J$, the node $(u_i, v_i)$ is contained in path $(\pi_1, \pi_2)$, i.e.,  $(u_i, v_i) \in V((\pi_1, \pi_2))$. Then path $(\pi_1, \pi_2) \in \Pi^{(4)}$ contributes at least $1/3 \cdot \ell$ $d''$-degree-bounded path-node incidences to $\mathcal{I}_{V(G^{\times}), \Pi^{(4)} }^{\leq d''}$.

We conclude that
$$
\left|\mathcal{I}_{V(G^{\times}), \Pi^{(4)} }^{\leq d''}\right| \geq |\Pi^{(4)} | \cdot \frac{\ell}{3} \geq \frac{1}{3 \cdot 16^2 \log^2 n} \cdot |\Pi''| \cdot \ell = \frac{1}{3 \cdot 16^2 \log^2 n} \cdot \left|\mathcal{I}_{V(G^{\times}), \Pi''}\right|,
$$
as claimed. 
\end{proof}

\subsection{Matching Decomposition of $G^{\times}, \Pi^{\times}$}

 In this subsection we define what we call a \textit{matching graph} $J$ of $G^{\times}, \Pi^{\times}$ that will be useful for analysis purposes later. We prove that the edges of graph $J$ can be decomposed into a collection of large matchings (Lemma \ref{lem:match_decomp}). The proof of this lemma will follow from the special expansion property of $G^{\times}, \Pi^{\times}$ (Lemma \ref{lem:prod_graph}).  

\label{subsec:match}

\begin{definition}[Matching Graph $J$.]
Let $G^{\times}, \Pi^{\times}$ be the product graph and product paths constructed from a base graph $G$ with $n$ nodes and a set $\Pi$ of $|\Pi| = p$ paths. Fix a subset $\Pi'' \subseteq \Pi^{\times}$ of the product paths. We define the matching graph $J = J(n, p, \Pi'')$ as follows.  

Let $G'' = \graph(V(G^{\times}), \Pi'')$, and fix a node $v \in V(G'')$. Let $U_v$ be the in-neighbors of $v$ in $G''$, and let $W_v$ be the out-neighbors of $v$ in $G''$. We define the graph $H_v$ associated with node  $v$ of $G''$ as follows:
\begin{itemize}
    \item $H_v$ is a bipartite graph with partite sets $U_v$ and $W_v$, and 
    \item $H_v$ contains edge $(u, w) \in U_v \times W_v$ iff there exists a path $\pi$ in $\Pi'$  such that $u, v,  w \in \pi$. 
\end{itemize}
We define the  \textit{matching graph} $J=J(n, p, \Pi'')$ of $G''$ as the \textit{disjoint} union of graphs $\{H_v\}_{v \in V(G'')}$.
Formally, $V(J) = \cup_{v \in V(G'')} (U_v \cup W_v)$ and $E(J) = \cup_{v \in V(G'')} E(H_v)$. 
Note that $J$ is the union of $|V(G'')|$ bipartite subgraphs that are pairwise vertex-disjoint. 
\label{def:matching_graph}
\end{definition}



The key feature of matching graph $J$ that will be useful in our analysis later is summarized in Lemma \ref{lem:match_decomp}. At a high level, Lemma \ref{lem:match_decomp} states that for certain choices of $\Pi''$, matching graph $J(n, p, \Pi'')$ contains a collection of unusually large matchings.

\begin{lemma}
\label{lem:match_decomp}
Fix a sufficiently small constant $\varepsilon > 0$. 
Let $G^{\times}$ and $\Pi^{\times}$ be the product graph and product paths constructed from a base graph $G$ with $n$ nodes and a set $\Pi$ of $|\Pi| = p \leq n^{1+\varepsilon/100}$ paths. Let $\Pi' \subseteq \Pi$ be a set of $|\Pi'| = p \cdot n^{-\varepsilon}$, $\varepsilon > 0$, paths in $\Pi$. Now let $\Pi'' \subseteq \Pi^{\times}$ be a set paths in $\Pi^{\times}$ contained in the rectangle $\Pi' \times \Pi'$, i.e.,
$
\Pi'' \subseteq \Pi' \times \Pi' \subseteq \Pi^{\times}.
$
We require that $|\Pi''| \geq n^{1/3}p^{5/6}$.\footnote{This assumption will be necessary for the proof of Property 2 of Lemma \ref{lem:match_decomp}.}

Additionally, let $G'' = \graph(V(G^{\times}), \Pi'')$ be the graph $G^{\times}$ induced on paths in $\Pi''$, and let $d'' := \frac{2p^{2/3}}{n^{1/3 + \varepsilon/10}}$. Let $J = J(n, p, \Pi'')$ be the matching graph of $G''$ as defined in Definition \ref{def:matching_graph}.  
Then the matching graph $J$ admits a collection of  $k$ matchings, $\mathcal{M} = \{M_1, \dots, M_k\}$, with the following properties:
\begin{enumerate}
    \item For each $i \in [1, k]$, $M_i \subseteq E(J)$ is a valid matching in graph $J$, 
    \item For each $i \in [1, k]$, matching $M_i$ is of size 
    $$
    |M_i| = \Theta\left( \frac{|\mathcal{I}_{V(G''), \Pi''}|}{d''\cdot \log^2 n} \right) = \Theta \left( \frac{|\Pi''| \cdot \ell}{d'' \cdot \log^2 n} \right) = \Theta\left(|\Pi''| \cdot \frac{n^{1+\varepsilon/10}}{p\cdot \log^2 n}\right),
    $$
    \item The number of matchings in our collection is $k = \Theta(d'') = \Theta\left( \frac{p^{2/3}}{n^{1/3 + \varepsilon/10}}\right)$, and
    \item Matchings $M_1, \dots, M_k$ are pairwise edge-disjoint. 
\end{enumerate}
\end{lemma}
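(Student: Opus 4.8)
The plan is to use the expansion property of Lemma~\ref{lem:prod_degree} to carve out of the matching graph $J$ a large bipartite subgraph of bounded maximum degree, and then split that subgraph into near‑equal matchings via a balanced bipartite edge‑coloring.

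\emph{Step 1: a low-degree, appropriately-sized portion of $\Pi''$.} First I would apply Lemma~\ref{lem:prod_degree} to $\Pi''$ to obtain a subset $\Pi^{(3)}\subseteq\Pi''$ for which at least a $\Theta(1/\log^2 n)$-fraction of the $|\mathcal{I}_{V(G^{\times}),\Pi''}| = 2\ell|\Pi''|$ path-node incidences are both incidences of paths of $\Pi^{(3)}$ and $d''$-degree-bounded in $G^{(3)}:=\graph(V(G^{\times}),\Pi^{(3)})$; in particular $|\Pi^{(3)}| = \Omega(|\Pi''|/\log^2 n)$. To make Property~2 a two-sided $\Theta$-bound rather than merely a lower bound, I would then throttle: pass to a subset $\Pi^{(3)}_{0}\subseteq\Pi^{(3)}$ with $|\Pi^{(3)}_{0}| = \Theta(|\Pi''|/\log^2 n)$ chosen with a sufficiently small constant. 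Since deleting paths only lowers degrees, every $d''$-degree-bounded incidence of a retained path stays $d''$-degree-bounded in $G^{(3)}_{0}:=\graph(V(G^{\times}),\Pi^{(3)}_{0})$, so every path of $\Pi^{(3)}_{0}$ still passes through at least $\ell/3$ \emph{good} nodes (calling $v$ good if $\deg_{G^{(3)}_{0}}(v)\le d''$), hence through at least $\ell/3-2$ good \emph{internal} nodes after discarding its $\le 2$ endpoints.

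\emph{Step 2: the bipartite graph $J^{+}$.} Let $J^{+}\subseteq J$ be the union, over good internal nodes $v$, of the bipartite pieces $H_v$ restricted to edges witnessed by paths of $\Pi^{(3)}_{0}$. Because $G^{\times}$ is layered and intersecting product paths share at most one node or one edge (Condition~5 of Lemma~\ref{lem:prod_graph}), distinct paths of $\Pi^{(3)}_{0}$ through a fixed node $v$ produce distinct edges of $H_v$, so $|E(H_v)| = t_v := |\{\pi\in\Pi^{(3)}_{0}: v\in\pi\}|$; likewise any two such paths sharing the in-edge $(u,v)$ must leave $v$ along distinct out-edges (sharing two edges would contradict Condition~5), so $\deg_{H_v}(u)\le\text{outdeg}_{G^{(3)}_{0}}(v)\le\deg_{G^{(3)}_{0}}(v)\le d''$ for good $v$, and symmetrically on the $W_v$ side. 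As the $H_v$ are pairwise vertex-disjoint in $J$, this shows $J^{+}$ is a bipartite graph of maximum degree at most $d''$ with $|E(J^{+})| = \Sigma := \sum_{v\text{ good, internal}} t_v$. Summing ``$\ge\ell/3-2$ good internal incidences per path'' over $\Pi^{(3)}_{0}$ and using $\ell = \Theta(n^{2/3}/p^{1/3})\to\infty$ gives $\Sigma = \Theta(\ell\,|\Pi^{(3)}_{0}|) = \Theta(\ell\,|\Pi''|/\log^2 n)$.

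\emph{Step 3: balanced edge-coloring, and the main obstacle.} Set $k:=\lceil d''\rceil$. I would use the fact that a bipartite graph of maximum degree $\le k$ decomposes into $k$ matchings whose sizes pairwise differ by at most $1$: start from a proper edge $k$-coloring (König), and while two color classes differ in size by $\ge 2$, swap the two colors along an alternating path in their symmetric difference on which the larger class has one more edge; this keeps the coloring proper and strictly decreases $\sum_i|M_i|^2$, so it halts with all classes of size $\lfloor\Sigma/k\rfloor$ or $\lceil\Sigma/k\rceil$. Applied to $J^{+}$ this yields matchings $M_1,\dots,M_k$ of $J^{+}\subseteq J$ — valid matchings of $J$ since each $M_i$ is a matching within every vertex-disjoint $H_v$ (Property~1) — that are edge-disjoint because they partition $E(J^{+})$ (Property~4), with $k=\lceil d''\rceil=\Theta(p^{2/3}/n^{1/3+\varepsilon/10})$ (Property~3) and $|M_i| = \Theta(\Sigma/d'') = \Theta(\ell|\Pi''|/(d''\log^2 n)) = \Theta(|\Pi''|\,n^{1+\varepsilon/10}/(p\log^2 n))$ after substituting $\ell$ and $d''$ (Property~2). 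The crux — and the only place the hypothesis $|\Pi''|\ge n^{1/3}p^{5/6}$ enters — is ensuring $\Sigma/k$ exceeds a large constant, so that $\lfloor\Sigma/k\rfloor=\Theta(\Sigma/k)$ and Property~2 is genuinely two-sided; plugging in the parameters with $p\le n^{1+\varepsilon/100}$ gives $\Sigma/k=\Omega(n^{7/6}/\log^2 n)\gg 1$, which is precisely the slack the footnote refers to. The secondary technical point is verifying that the throttling to $\Pi^{(3)}_{0}$ preserves both the $\Theta(\ell|\Pi''|/\log^2 n)$ incidence count and the degree bound $\le d''$ — both hold because deleting paths only decreases degrees in $G^{(3)}$ — so I expect this bookkeeping, rather than any single conceptual step, to be the part requiring the most care.
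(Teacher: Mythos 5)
Your proposal is correct and takes a genuinely different route from the paper. The paper proves \cref{lem:match_decomp} by applying a ``high average degree implies many disjoint matchings'' claim (Claim 5.10) to each piece $H_v$ separately, then \emph{randomly} distributing the resulting sub-matchings among $k=\Theta(d'')$ global buckets and invoking Hoeffding's inequality to show each bucket is large with high probability; this is where the hypothesis $|\Pi''|\geq n^{1/3}p^{5/6}$ is used, to get enough concentration. You instead assemble a single bounded-degree bipartite graph $J^{+}$ and apply a balanced K\H{o}nig edge-coloring, which is a \emph{deterministic} decomposition into exactly $\lceil d''\rceil$ near-equal matchings. This sidesteps the concentration argument entirely; the size hypothesis only needs to give $\Sigma/k\gg 1$, a far weaker requirement. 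Your degree bound on $J^{+}$ (via $\deg_{H_v}(u)\le\text{outdeg}_{G^{(3)}_0}(v)\le d''$ for good $v$, using that the $H_v$ are pairwise vertex-disjoint by definition of the matching graph) is correct and is the structural engine that makes the K\H{o}nig approach work. Overall your route is cleaner and arguably strengthens the lemma by making it deterministic.

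One step needs to be filled in. You assert that ``every path of $\Pi^{(3)}_0$ passes through at least $\ell/3$ good nodes,'' citing \cref{lem:prod_degree}. But the \emph{statement} of \cref{lem:prod_degree} only guarantees the aggregate bound $|\mathcal{I}^{\leq d''}_{V(G^{\times}),\Pi^{(3)}}| = \Theta(|\mathcal{I}_{V(G^{\times}),\Pi^{(3)}}|)$; it does not say anything per path, and as stated $\Pi^{(3)}$ could contain paths with no good incidences at all. (The per-path bound does hold internally in the paper's proof of \cref{lem:prod_degree}, but you cannot use that from the lemma statement.) The fix is a one-line Markov argument: since each path has at most $2\ell$ incidences total and the fraction of good incidences among all of $\Pi^{(3)}$'s incidences is some constant $c_0>0$, at least a $\frac{c_0}{2-c_0}$ fraction of the paths of $\Pi^{(3)}$ have at least $c_0\ell$ good incidences each. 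Choose $\Pi^{(3)}_0$ from \emph{those} paths (further trimming for the $\Theta$ size bound as you describe). With that averaging step made explicit, the rest of your proof — throttling, the degree bound on $J^{+}$, the balanced edge-coloring, and the parameter computation — goes through as written.
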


For the remainder of this section, we focus on proving \Cref{lem:match_decomp}. We will make use of the following somewhat standard claim about the existence of large matchings in graphs with large average degree. 

\begin{claim}
Let $H$ be a bipartite graph with average degree $d \geq 1$. Then there exists a collection of $k$ pairwise edge-disjoint matchings $M_1, \dots, M_k \subseteq E(H)$  of $H$ such that  $|M_i| = \lceil d/4\rceil$ for all $i \in [1, k]$ and $k=\lceil |V(H)|/4 \rceil$. 
\label{clm:konig}
\end{claim}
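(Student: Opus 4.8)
\textbf{Proof proposal for Claim \ref{clm:konig}.}

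The plan is to prove the claim by a greedy/peeling argument, repeatedly extracting a single large matching and deleting its vertices. First I would recall the standard fact that any bipartite graph $H$ with average degree $d$ contains a matching of size at least $\lceil d/4 \rceil$. One clean way to see this: if $H$ has $V(H) = n_0$ vertices and average degree $d$, then $|E(H)| = dn_0/2$. By König's theorem, the maximum matching size equals the minimum vertex cover size $\tau(H)$; but a vertex cover of size $\tau$ can cover at most $\tau \cdot \Delta(H)$ edges, and more usefully, deleting the endpoints of a maximum matching $M^*$ leaves no edges, so $|E(H)| \le |M^*| \cdot (\text{max degree among those } 2|M^*| \text{ vertices})$. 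A cleaner bound avoiding $\Delta$: iteratively pick any edge, delete both endpoints; each such step removes at most $2\Delta$ edges — this still has a $\Delta$ in it. Instead, the robust statement I want is: a bipartite graph with $n_0$ vertices and $e_0$ edges has a matching of size at least $e_0/n_0 \cdot (\text{something})$; concretely, greedily picking edges and removing endpoints, each edge removed kills at most $2(\Delta_L + \Delta_R)$... Let me instead use the simplest correct route: \emph{a graph with average degree $d$ has a subgraph of minimum degree at least $d/2$} (standard: repeatedly delete vertices of degree $< d/2$; this deletes fewer than $(d/2) n_0$ edges total, so the process cannot exhaust all edges). A bipartite graph with minimum degree $\ge d/2$ on vertex set of size $n_0$ has a matching of size $\ge \min(|L|, |R|) \ge$ ... hmm, min degree $\delta$ in a bipartite graph forces a matching of size $\ge \delta$ (greedily: any maximal matching $M$ must be such that every vertex outside is... actually min degree $\delta$ gives matching size $\ge \delta$ by a short Hall-type / greedy argument). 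So we get one matching of size $\ge d/4$ (absorbing constants), using $\lceil d/4 \rceil$ as the target.

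Next, to get $k = \lceil |V(H)|/4 \rceil$ pairwise edge-disjoint matchings each of size $\lceil d/4 \rceil$, I would iterate. The subtlety is that after deleting one matching's edges the average degree drops, so I cannot just re-apply the lemma to the full vertex set. Instead I would maintain the invariant via a potential/counting argument: Let $M_1, \dots, M_j$ be the matchings found so far. I claim that as long as $j < \lceil |V(H)|/4\rceil$, the graph $H_j := H \setminus (M_1 \cup \dots \cup M_j)$ still has average degree $\ge d/2$ (say). Indeed, $|E(H_j)| \ge |E(H)| - j \cdot \lceil d/4\rceil \ge \frac{d |V(H)|}{2} - \frac{|V(H)|}{4}\cdot \frac{d}{4}\cdot(1+o(1))$; choosing constants so this stays $\ge \frac{d|V(H)|}{4}$, i.e. average degree $\ge d/2$ over the $|V(H)|$ vertices. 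Then by the one-matching lemma applied to $H_j$ we extract $M_{j+1}$ of size $\ge d/4 \ge \lceil d/4\rceil$ (again absorbing constants / being slightly careful with ceilings, which is the routine part I would not grind through). This gives the full collection; edge-disjointness is automatic since each $M_{j+1} \subseteq E(H_j) = E(H)\setminus \bigcup_{i\le j} M_i$.

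The main obstacle — really the only nontrivial point — is getting the constants to line up so that the average-degree-drop bookkeeping genuinely survives for all $k = \lceil |V(H)|/4\rceil$ rounds while each matching still has size $\lceil d/4 \rceil$ exactly (not merely $\Omega(d)$). If the precise constants $4$ in the statement are tight this requires a slightly more careful accounting than the sketch above — e.g. using that each matching \emph{saturates} $2\lceil d/4\rceil$ vertices and tracking a weighted potential $\Phi_j = |E(H_j)| - c\cdot(\text{unsaturated vertices})$ rather than just edge counts, or invoking a sharper one-shot bound (a bipartite graph with $e$ edges and $n_0$ vertices has a matching of size $\ge e/n_0$, which follows from fractional relaxation / Vizing-type edge-coloring with $\Delta+1$ colors so the largest color class has $\ge e/(\Delta+1)$ edges, combined with a min-degree-subgraph reduction to make $\Delta \le 2d$). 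I would pick whichever of these makes the constant $4$ fall out cleanly; since the lemma is described as "somewhat standard," I expect the intended argument is exactly the min-degree-subgraph reduction plus greedy peeling, and the constants are chosen with enough slack that the naive edge-count bookkeeping suffices. I would write it that way and verify the arithmetic $\frac{d|V(H)|}{2} - \frac{|V(H)|}{4}\cdot\frac{d}{4} = \frac{3d|V(H)|}{8} \ge \frac{d|V(H)|}{4}$ at the end to confirm the invariant holds throughout.
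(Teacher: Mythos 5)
Correct, and essentially the paper's proof: extract one large matching, delete its edges, iterate, and verify by edge-counting that the remaining average degree stays high enough for all $k = \lceil |V(H)|/4 \rceil$ rounds. The only substantive difference is the single-matching lower bound (you pass to a min-degree-$\geq d/2$ subgraph and use that min degree $\delta$ forces a matching of size $\geq \delta$, while the paper invokes K\H{o}nig's theorem together with the fact that a vertex cover of size $\tau$ covers at most $\tau\,(|V(H)|-1)$ edges), and your closing arithmetic slips slightly ($d|V(H)|/2 - (|V(H)|/4)(d/4) = 7d|V(H)|/16$, not $3d|V(H)|/8$), though the conclusion $\geq d|V(H)|/4$ survives.
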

\begin{proof}
By Kőnig's Theorem, the size of a maximum matching of $H$ equals the number of vertices in a minimum vertex cover of $H$.  Since $|E(H)| = |V(H)| \cdot d$, and the maximum degree of any node in $H$ is at most $|V(H)|-1$, any vertex cover of $H$ must have size at least
$$
\frac{|E(H)|}{|V(H)| - 1} \geq d. 
$$
Consequently, $H$ contains a matching $M_1$ of size $|M_1| \geq \lceil d/4 \rceil$. To obtain the remaining pairwise edge-disjoint matchings $M_2, \dots, M_k$, we can delete the edges in $M_1$ from $H$ and repeat the above argument. After deleting $i$ matchings $M_1, \dots, M_i$ from $H$, the average degree of $H$ is at least
$$
\frac{|E(H)| - i\cdot \lceil d/4 \rceil}{|V(H)|}=\frac{d|V(H)| - i\cdot \lceil d/4 \rceil}{|V(H)|} \geq \frac{d \cdot (|V(H)| - i )}{|V(H)|}  \geq d/4,
$$
when $i \leq  \lceil |V(H)|/4 \rceil$. Then we can repeat the above argument $k$ times to obtain $k$ pairwise edge-disjoint matchings, each of size at least $\lceil d/4 \rceil$. 
\end{proof}

We now specify our construction of the collection of matchings $\mathcal{M}$.

\paragraph{Constructing the Collection of Matchings $\mathcal{M} = \{M_1, \dots, M_k\}$.}
\begin{itemize}
    \item Let $k = \Theta(d'') = \Theta\left( \frac{p^{2/3}}{n^{1/3 + \varepsilon/10}}\right)$.
    
    \item Initially, we will let each matching $M_i$ be empty, so that $M_i = \emptyset$ for all $i \in [1, k]$. 

    \item By Lemma \ref{lem:prod_degree}, there exists a subset $\Pi^* \subseteq \Pi''$ of $\Pi''$ such that the set of all $d''$-degree-bounded path-node incidences between $\Pi^*$ and $V(G^{\times})$ is at least
$$
\left|\mathcal{I}^{\leq d''}_{V(G^{\times}), \Pi^{*}}\right| =  \Omega\left( \frac{1}{\log^2 n} \cdot  |\mathcal{I}_{V(G^{\times}), \Pi''}|\right).
$$
Let $G^* = \graph(V(G^{\times}), \Pi^*)$, and let $J^* = J(n, p, \Pi^*)$ be the matching graph of $G^*$. Notice that $G^*$ is a subgraph of $G''$, with $V(G^*) = V(G'')$, and $J^* \subseteq J$, where $J = J(n, p, \Pi'')$ is the matching graph of $G''$. As a consequence, any matching of $J^*$ is a matching of $J$ as well. Our matchings in $\mathcal{M}$ will be matchings in graph $J^*$. 
    
    \item Fix a vertex $v \in V(G^*)$, and let $d_v$ denote the average degree of bipartite graph $H_v \subseteq J^*$ as defined in Definition \ref{def:matching_graph} with respect to matching graph $J^* = J(n, p, \Pi^*)$. 
    \item Let $M^v_1, \dots, M^v_{k_v} \subseteq E(H_v)$ denote a collection of $k_v = \min(k, \lceil |V(H_v)|/4 \rceil)$ pairwise edge-disjoint matchings in $H_v$, each of size $|M^v_i| = \lceil d_v/4 \rceil$ for $i \in [1, k_v]$. Note that such a collection of matchings must exist by Claim \ref{clm:konig}. 
    \item Choose a subset $\mathcal{M}' \subseteq \mathcal{M}$ of size $|\mathcal{M}'| = k_v$ uniformly at random from ${\mathcal{M}  \choose k_v}$. Let $\mathcal{M}' = \{M_{i_1}, M_{i_2}, \dots, M_{i_{k_v}}\}$, where $i_1, \dots, i_{k_v} \in [1, k]$. For all $j \in [1, k_v]$, add matching $M_j^v$ to matching $M_{i_j}$, i.e., let 
    $$
    M_{i_j} \leftarrow M_{i_j} \cup M_j^v.
    $$
    \item We complete our construction of the collection of matchings $\mathcal{M} = \{M_1, \dots, M_k\}$ by repeating the above construction for all $v \in V(G^*)$. 
\end{itemize}

\paragraph{Proof of Lemma \ref{lem:match_decomp}.}
We will now analyze our construction of $\mathcal{M}$ to prove  Lemma \ref{lem:match_decomp}. We will verify each property stated in Lemma \ref{lem:match_decomp} individually. 
\begin{enumerate}
    \item To see that each matching $M_i \in \mathcal{M}$ is a valid matching of $J^*$, observe that for each vertex $v \in V(G^*)$, we add at most one matching from our collection of matchings $M^v_1, \dots, M^v_{k_v} \subseteq E(H_v)$ to $M_i$. Then $M_i \cap E(H_v)$ is a matching for all $v \in V(G^*)$. Moreover, subgraphs $\{H_v\}_{v \in V(G^*)}$ of graph $J^*$ are pairwise vertex-disjoint. We conclude that $M_i$ is a valid matching in graph $J^*$. Property 1 is established. 
    \item Property 2 will follow from our construction and Lemma \ref{lem:prod_degree}.
    Fix a matching $M_i$, $i \in [1, k]$. As a first step, we will show that
    $$
    \mathbb{E}[|M_i|] = \Omega\left( \frac{|\mathcal{I}_{V(G''), \Pi''}|}{d'' \cdot \log^2 n} \right).
    $$

    Let $X \subseteq V(G^*)$ denote the set of vertices $x \in V(G^*)$ of degree at most $d''$ in $G^*$. Formally,
    $$
    X = \{ x \in V(G^*) \mid \deg_{G^*}(x) \leq d''\}. 
    $$
    By Lemma \ref{lem:prod_degree},
    $$
    \sum_{x \in X}|\mathcal{I}_{\{x\}, \Pi^*}| = \left|\mathcal{I}^{\leq d''}_{V(G^*), \Pi^*}\right| = \Theta\left( \left|\mathcal{I}_{V(G^*), \Pi^*}\right| \right) = \Omega\left( \frac{1}{\log^2 n} \cdot \left|\mathcal{I}_{V(G''), \Pi''}\right| \right). 
    $$
    Additionally, observe that by Definition \ref{def:matching_graph}, for all $x \in X$,
    \begin{itemize}
        \item  $|V(H_x)| \leq d''$, and
        \item  $|E(H_x)| = |\mathcal{I}_{\{x\}, \Pi^*}|$.
    \end{itemize}
    Combining these two facts, we can bound the expected contribution of graph $H_x$ to the size of matching $M_i$ as follows:
    $$\mathbb{E}[|M_i \cap E(H_x)|] = \Pr[M_i \cap E(H_x) \neq \emptyset] \cdot \lceil d_x/4 \rceil = \frac{k_x}{k} \cdot \lceil d_x/4 \rceil  = \frac{\lceil|V(H_x)|/4\rceil}{k} \cdot \lceil d_x/4 \rceil = \Theta\left( \frac{|\mathcal{I}_{\{x\}, \Pi^*}|}{d''} \right),$$
    where $d_x$ and $k_x$ are as defined in the construction of $\mathcal{M}$. Putting everything together, we get that
    \begin{align*}
       \mathbb{E}[|M_i|] & \geq \sum_{x \in X}\mathbb{E}[|M_i \cap E(H_x)|] =  \sum_{x \in X}\Theta\left( \frac{|\mathcal{I}_{\{x\}, \Pi^*}|}{d''} \right) = \Theta\left( \frac{|\mathcal{I}^{\leq d''}_{V(G''), \Pi^*}|}{d''} \right) = \Theta\left( \frac{|\mathcal{I}_{V(G''), \Pi^*}|}{d''} \right)\\
       & = \Omega\left( \frac{|\mathcal{I}_{V(G''), \Pi''}|}{d'' \cdot \log^2 n} \right).
    \end{align*}

    We will now give a lower bound on $|M_i|$ that holds with high probability, using Hoeffding's inequality. For all $x \in X$, let $Y_x$ denote the random variable $Y_x = |M_i \cap E(H_x)|$. We observe that $$0 \leq Y_x \leq \lceil d_x/4 \rceil \leq |V(H_x)| \leq  d'',$$ where the second to last inequality follows from the fact that the average degree $d_x$ of $H_x$ is at most $|V(H_x)|$.     
    Additionally, we observe that $|X| \leq |V(G'')| =  \Theta(n^{4/3}p^{1/3})$. 
    Let $Z = \sum_{x \in X}Y_x$, and note that $Z \leq |M_i|$. In particular, our previous lower bound on $\mathbb{E}[|M_i|]$ implies that $\mathbb{E}[Z] =\Theta\left( \frac{|\mathcal{I}_{V(G''), \Pi^*}|}{d''} \right) $
as well.  Then by our lower bound on $\mathbb{E}[Z]$ and by Hoeffding's inequality, 
    $$
    \Pr[Z \leq \mathbb{E}[Z]/2] \leq 2\exp\left( - \frac{\mathbb{E}[Z]^2}{8\cdot |X| \cdot d''^2} \right).
    $$
    As we have already proved,
    $$
    \mathbb{E}[Z] =\Theta\left( \frac{|\mathcal{I}_{V(G''), \Pi^*}|}{d''} \right) =  \Theta \left( \frac{|\Pi^*| \cdot \ell}{d''} \right) = \Theta\left(|\Pi^*| \cdot \frac{n^{1+\varepsilon/10}}{p}\right) = \Omega\left( |\Pi^*| \cdot n^{\varepsilon/20} \right), 
    $$
    where the final equality follows from our assumption that $p \leq n^{1+\varepsilon/100}$. Likewise,
    $$
    8 \cdot |X| \cdot d''^2 \leq 8 \cdot |V(G'')| \cdot \frac{p^{4/3}}{n^{2/3}} = \Theta(n^{2/3}p^{5/3}).
    $$
    Continuing with our calculations,
    $$
    \Pr[Z \leq \mathbb{E}[Z]/2] \leq 2\exp\left( - \frac{\mathbb{E}[Z]^2}{8\cdot |X| \cdot d''^2} \right) \leq \exp\left(- \Omega\left( \frac{|\Pi^*|^2n^{\varepsilon/10}}{n^{2/3}p^{5/3}} \right)\right) \leq \exp(-\Omega(n^{\varepsilon/20})),
    $$
    where the final inequality follows the fact that $$|\Pi^*| = \Omega\left( \frac{1}{\log^2 n}|\Pi''|\right) = \Omega\left(\frac{n^{1/3}p^{5/6}}{\log^2 n}\right),$$ 
     from our assumption in the statement of Lemma \ref{lem:match_decomp} that $|\Pi''| \geq n^{1/3}p^{5/6}$. 

    We have shown that matching $|M_i|$ has size at least $\mathbb{E}[Z]/2 = \Omega\left( \frac{|\mathcal{I}_{V(G''), \Pi''}|}{d'' \cdot \log^2 n} \right) $ with probability at least $1 - \exp(-\Omega(n^{\varepsilon/20}))$. By union bounding over all $k = \Theta(d'') \leq n$ matchings in $\mathcal{M}$, we conclude that all of our  matchings $M \in \mathcal{M}$ have size $|M| = \Theta\left( \frac{|\mathcal{I}_{V(G''), \Pi''}|}{d'' \cdot \log^2 n} \right)$ with high probability (specifically, with probability at least $1 - n \cdot \exp(-\Omega(n^{\varepsilon/20}))$).

    \item Property 3 is immediate from our construction.
    \item Property 4 immediately follows from the fact that for all $v \in V(G^*)$, our collection of matchings $M^v_1, \dots, M^v_{k_v} \subseteq E(H_v)$ in $H_v \subseteq J^*$ are pairwise edge-disjoint. 
\end{enumerate}

\subsection{Random Graph Construction $G^*, \Pi^*$}
\label{subsec:random}
 In subsection \ref{subsec:random}, we define a random graph $G^*$ and a collection of shortest paths $\Pi^*$ constructed from $G^{\times}, \Pi^{\times}$. Graph $G^*$ and paths $\Pi^*$ will be constructed in a similar manner as in our lower bound for DAG covers with $f(n)$ additional edges in subsection \ref{subsec:fn_lb}. 

Let $G^{\times} = (V(G^{\times}), E(G^{\times}), w^{\times})$ and $\Pi^{\times}$ be the product graph and product paths constructed from a base graph $G = (V(G), E(G), w_G)$ with $n$ nodes and a set $\Pi$ of $|\Pi| = p$ paths. Let $c\geq 2$ be a positive integer parameter.  
We define a random cyclic graph $G^* = G^*(n, p, c)$ with associated weight function $w^*$ as follows:
\begin{itemize}
    \item Replace each node $v$ in $V(G^{\times})$ with a copy of the clique graph $K_c$ with bidirectional edges and $c$ nodes. Denote the clique replacing node $v$ as $K_c^v$.
    \item  Fix an edge $(u, v) \in E(G^{\times})$. Let $u^* \in V(K_c^u)$ be a node  sampled uniformly at random from $K_c^u$. Likewise, let $v^* \in V(K_c^v)$ be a node sampled uniformly at random from $K_c^v$. Add edge $(u^*, v^*)$ to graph $G^*$.  Repeat this random process for every edge in $E(G^{\times})$. 
    \item We will now assign a weight $w^*(e)$ to every edge $e$ in $E(G^*)$.   For every edge $e \in E(K_c^v)$ in a clique subgraph $K_c^v$, assign edge $e$ weight $w^*(e) := 1$ in $G^*$. For every edge $e \in E(V(K_c^u), V(K_c^v))$ between distinct clique subgraphs $K_c^u$ and $K_c^v$, where $u \neq v$, assign edge $e$ weight $w^*(e) := 10 \ell \cdot w^{\times}((u, v))$, where $\ell = \frac{n^{2/3}}{p^{1/3}}$ as in Lemma \ref{lem:prod_graph}.
\end{itemize}
This completes the construction of the random graph $G^*$. We now define an associated set of paths $\Pi^*$ in $G^*$. Each path in $\Pi^*$ will be an image of a path in $\Pi^{\times}$. 
Note that each edge $(u, v) \in E(G^{\times})$ corresponds to a unique edge $(u^*, v^*)$ in $E(G^*)$ between cliques $K_c^u$ and $K_c^v$. Let $\phi:E(G^{\times}) \mapsto E(G^*)$ denote this injective mapping. 

Now fix a path $\pi$ in $\Pi^{\times}$, and let $\pi = (v_1, \dots, v_{2\ell})$, where $\ell = \frac{n^{2/3}}{p^{1/3}}$  as in Lemma \ref{lem:prod_graph}. For each edge $(v_i, v_{i+1}) \in E(\pi)$, let $e_i := \phi((v_i, v_{i+1}))$. Note that $e_i := (x_i, y_i) \subseteq V(K_c^{v_i}) \times V(K_c^{v_{i+1}})$, so $y_i, x_{i+1} \in V(K_c^{v_{i+1}})$, and in particular, either $(y_i, x_{i+1}) \in E(K_c^{v_{i+1}})$ or $y_i = x_{i+1}$. Then we define the image path $\pi^*$ in $G^*$ as
$$
\pi^* = (x_1, y_1, x_2, y_2, \dots, x_i, y_i, \dots, x_{2\ell-1}, y_{2\ell-1}).
$$
We will use $\psi(\pi)$ to denote the image path $\pi^*$ of $\pi$ in $G^*$. Given a subset $\Pi' \subseteq \Pi^{\times}$, we will use $\psi[\Pi']$ to denote the image of set $\Pi'$ under function $\psi$, i.e., $\psi[\Pi'] = \{\psi(\pi) \mid \pi \in \Pi'\}$.
We define $\Pi^*$ as the collection of image paths $\psi(\pi)$ of paths $\pi$ in $\Pi^{\times}$, so
$
\Pi^* = \psi[\Pi^{\times}].
$

\begin{lemma}
    Let $G^* = G^*(n, p, c)$ be the random cyclic graph described above, and let $\Pi^*$ be the associated set of  paths.  Then $G^{*}, \Pi^{*}$ satisfy the following conditions:
    \begin{enumerate}
        \item the parameters of $G^{*}, \Pi^{*}$ are:
        \begin{itemize}
            \item $|V(G^{*})| =  \Theta\left(c \cdot \frac{n^2}{\ell}\right)$, where $\ell = \frac{n^{2/3}}{p^{1/3}}$,
            \item $|E(G^{*})| = \Theta\left(\frac{n^2}{\ell} \cdot (d+c^2)\right)$, where $d = \frac{p^{2/3}}{n^{1/3}}$,
            \item  $|\Pi^{*}| = \Theta(p^2)$,
        \end{itemize}
        \item every path $\pi \in \Pi^{*}$ contains $\Theta(\ell)$ nodes,
        \item every path $\pi \in \Pi^*$ is a unique shortest path in $G^*$, 
        and
        \item intersecting paths in $\Pi^{*}$ intersect in either one clique $K_c^v$ or two adjacent cliques $K_c^u$ and $K_c^v$, where $(u, v) \in E(G^{\times})$.
    \end{enumerate}
    \label{lem:rand_graph}
\end{lemma}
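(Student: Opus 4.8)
The plan is to obtain all four properties by transferring the structure of the product graph $G^{\times}$ and product paths $\Pi^{\times}$ recorded in \cref{lem:prod_graph} through the clique-replacement operation, handling Properties 1, 2, 4 as bookkeeping and reserving the real work for Property 3. For Property 1 I would note $|V(G^*)| = c\cdot|V(G^{\times})| = \Theta(cn^2/\ell)$, split $E(G^*)$ into intra-clique edges ($\Theta(c^2)$ per clique over $|V(G^{\times})| = \Theta(n^2/\ell)$ cliques) and inter-clique edges (exactly one per edge of $G^{\times}$, so $|E(G^{\times})| = \Theta(dn^2/\ell)$ of them) to get $|E(G^*)| = \Theta\!\big(\tfrac{n^2}{\ell}(c^2+d)\big)$, and check that $\psi$ is injective — an image path exhibits its inter-clique edges, and $\phi^{-1}$ recovers from them the edge sequence of $G^{\times}$, hence the original path — so $|\Pi^*| = |\Pi^{\times}| = \Theta(p^2)$. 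For Property 2, a path $\pi = (v_1,\dots,v_{2\ell}) \in \Pi^{\times}$ maps to the sequence $\pi^* = (x_1,y_1,\dots,x_{2\ell-1},y_{2\ell-1})$ of $4\ell-2$ terms whose only possible coincidences are $y_i=x_{i+1}$, both inside $K_c^{v_{i+1}}$; since $K_c^{v_1},\dots,K_c^{v_{2\ell}}$ are pairwise disjoint, $|V(\pi^*)| \in [2\ell,\,4\ell-2] = \Theta(\ell)$.

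For Property 4 I would use locality of the image map: every node of $\pi^*$ lies in $\bigcup_{v\in V(\pi)}V(K_c^v)$, so two image paths $\pi_1^*,\pi_2^*$ can share nodes only inside $\bigcup_{v\in V(\pi_1)\cap V(\pi_2)}V(K_c^v)$. By Property 5 of \cref{lem:prod_graph}, $V(\pi_1)\cap V(\pi_2)$ is either a single vertex or two vertices joined by an edge of $G^{\times}$ lying on both paths, which is exactly the claimed ``one clique, or two adjacent cliques $K_c^u,K_c^v$ with $(u,v)\in E(G^{\times})$'' dichotomy.

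The substantive step is Property 3, and the key observation I would build on is that contracting every clique of $G^*$ recovers $G^{\times}$, and that since all edges of $G^{\times}$ run from layer $L_i^{\times}$ to $L_{i+1}^{\times}$, every inter-clique edge of $G^*$ advances the layer by exactly one and never decreases it. Hence, for the endpoints $s^*\in K_c^s$, $t^*\in K_c^t$ of $\pi^*$ (with $s\in L_1^{\times}$, $t\in L_{2\ell}^{\times}$), every simple $s^*\leadsto t^*$ path $\rho^*$ in $G^*$ uses exactly $2\ell-1$ inter-clique edges whose contraction is an $s\leadsto t$ path $\rho$ in $G^{\times}$, and within each of the $2\ell$ visited cliques $\rho^*$ is a simple path between an entry and exit vertex fixed by the incident inter-clique edges. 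Writing $w^*(\rho^*) = 10\ell\cdot w^{\times}(\rho) + (\#\text{intra-clique edges of }\rho^*)$ and using that $w^{\times}$ is positive-integer-valued: if $\rho\ne\pi$ then $w^{\times}(\rho)\ge w^{\times}(\pi)+1$ by Property 4 of \cref{lem:prod_graph}, so $w^*(\rho^*)\ge 10\ell\cdot w^{\times}(\pi)+10\ell$, whereas $\pi^*$ uses at most one weight-$1$ intra-clique edge per internal clique, so $w^*(\pi^*)\le 10\ell\cdot w^{\times}(\pi)+(2\ell-2) < 10\ell\cdot w^{\times}(\pi)+10\ell \le w^*(\rho^*)$; and if $\rho=\pi$ then $\rho^*$ and $\pi^*$ use identical inter-clique edges, so inside each clique they run between the same vertex pair, and since each $K_c^v$ is complete and bidirectional the unique shortest such route is the direct edge already used by $\pi^*$, forcing $\rho^*=\pi^*$.

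The main obstacle will be exactly this last step, and within it the two quantitative points it rests on: first, that the layering of $G^{\times}$ rigidly pins down both the number ($2\ell-1$) and the forward orientation of the inter-clique transitions, so no competitor can ``save'' an expensive edge by rerouting; and second, that the blow-up factor $10\ell$ on inter-clique weights strictly exceeds the total weight $\le 2\ell-2$ that the cheap intra-clique edges of $\pi^*$ can contribute, which is precisely what lets the uniqueness collapse onto the unique-shortest-path property of $G^{\times}$ together with the triviality of shortest paths inside a clique.
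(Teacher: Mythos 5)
Your proposal is correct and takes essentially the same approach as the paper: Properties 1, 2, 4 as direct bookkeeping from \cref{lem:prod_graph} and the clique-replacement construction, and Property 3 by contracting cliques back to $G^{\times}$ and splitting into the cases $\rho=\pi$ (handled by uniqueness of shortest routes inside each clique) and $\rho\ne\pi$ (handled by the $10\ell$ inter-clique weight blow-up dominating the at-most-$2\ell$ intra-clique contribution, via Property 4 of \cref{lem:prod_graph}). Your explicit use of the layering of $G^{\times}$ to pin down that every $s^*\leadsto t^*$ path uses exactly $2\ell-1$ inter-clique edges, each advancing the layer, is a slightly cleaner way to justify that the contraction is a well-defined $s\leadsto t$ path, a point the paper leaves implicit.
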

\begin{proof}
Properties 1, 2, and 4 follow immediately from the construction of $G^*, \Pi^*$ and Properties 1, 3, and 5 of Lemma \ref{lem:prod_graph}, respectively. What remains is to prove Property 3, which will follow from Property  4 of Lemma \ref{lem:prod_graph} and our weighting scheme $w^*$ of $G^*$. 

Fix an $s \leadsto t$ path $\pi^* \in \Pi^*$, and suppose towards contradiction that there exists an $s \leadsto t$ path $\pi_1^*$ such that $\pi_1^* \neq \pi^*$ and $w^*(\pi_1^*) \leq w^*(\pi^*)$. 

Contract each clique $K_c^v$ of $G^*$ back into a single node $v$. The resulting graph will be identical to product graph $G^{\times}$.  Let $\pi$ (and $\pi_1$, respectively) denote the image of path $\pi^*$ (and path $\pi_1^*$) after performing this contraction operation. We now have two cases:
\begin{itemize}
    \item \textbf{Case 1: $\pi_1 = \pi$.} In this case, we know that paths $\pi^*$ and $\pi_1^*$ travel through exactly the same sequence of cliques $K_c^v$ in $G^*$. Let $K_c^{v_1}, \dots, K_c^{v_{2\ell}}$ denote the sequence of cliques passed through by paths $\pi^*$ and $\pi_1^*$ in $G^*$. By our construction of $G^*$, there is exactly one edge between clique $K_c^{v_i}$ and $K_c^{v_{i+1}}$ in $G^*$. Formally,
    $$
    |E(G^*) \cap (V(K_c^{v_i}) \times V(K_c^{v_{i+1}}))| = 1,
    $$
    for all $i \in [1, k-1]$. As a consequence, paths $\pi^*$ and $\pi_1^*$ must take exactly the same edge $E(V(K_c^{v_i}), V(K_c^{v_{i+1}}))$ between cliques $K_c^{v_i}$ and $K_c^{v_{i+1}}$.
    We will argue that if $w^*(\pi_1^*) \leq w^*(\pi^*)$, then path $\pi^*$ and path $\pi_1^*$ will use exactly the same set of edges inside each clique $K_c^v$ of $G^*$, and so $\pi_1^* = \pi^*$. For any clique $K_c^v$ in $G^*$, the subpath $\pi^* \cap V(K_c^v)$ of path $\pi^*$ restricted to clique $K_c^v$ is either empty, or consists of a single node in $V(G^*)$ or an edge in $E(G^*)$. Then by our construction of $G^*$ and clique subgraphs $K_c^v$,  any nonempty subpath $\pi^* \cap V(K_c^v)$ of path $\pi^*$ is a unique shortest path between its endpoints in $G^*$. Then if $w^*(\pi_1^*) \leq w^*(\pi^*)$, we must have that $\pi_1^* \cap V(K_c^v) = \pi^* \cap V(K_c^v)$ for all $v \in V(G^{\times})$. Then in fact, $\pi_1^* = \pi^*$, contradicting our assumption that $\pi_1^* \neq \pi^*$. We conclude that in this case, $w^*(\pi_1^*) > w^*(\pi^*)$.
    \item     \textbf{Case 2: $\pi_1 \neq \pi$.} Now observe that by Lemma \ref{lem:prod_graph}, path $\pi$ is a unique shortest path in $G^{\times}$, so we have that $w^{\times}(\pi) \leq w^{\times}(\pi_1) - 1$, since $\pi_1 \neq \pi$ (recall that  $w^{\times}$ is the weight function for product graph $G^{\times}$). We will use the above observation to show that $w^*(\pi_1^*) > w^*(\pi^*)$, as desired. We partition the edges in $E(G^*)$ into two sets, $E_1$ and $E_2$, as follows. Let
    $$
    E_1 = \{ e \in E(G^*) \mid e \in E(K_c^v) \text{ for some $v \in V(G^{\times})$} \}
    $$
    denote the set of edges in $E(G^*)$ inside clique subgraphs $K_c^v$ of $G^*$. Let
    $$
    E_2 = \{ e \in E(G^*) \mid e \in E(V(K_c^{u}), V(K_c^{v})) \text{ where $u, v \in V(G^{\times})$ and $u \neq v$} \}
    $$
    denote the set of edges in $E(G^*)$ between distinct clique subgraphs $K_c^u, K_c^v$ in $G^*$. We observe that $E_1, E_2$ partitions the edges of $G^*$ as desired. We now measure the contribution of each edge set to the weights $w^*(\pi_1^*), w^*(\pi^*)$ of paths $\pi_1^*, \pi^*$.  We first observe that paths $\pi_1^*$ and $\pi^*$ pass through $2\ell$ cliques, and they use at most one clique edge in each clique. Consequently,
    $$
    0 \leq w^*(E(\pi_1^*) \cap E_1) \leq 2\ell  \text{\quad and \quad} 0 \leq w^*(E(\pi^*) \cap E_1) \leq 2\ell.
    $$
    Likewise, using our earlier observation and our choice of weight function $w^*$,
    $$
    w^*(E(\pi^*) \cap E_2) = 10 \ell \cdot w^{\times}(\pi) \leq 10 \ell \cdot (w^{\times}(\pi_1) - 1) = 10 \ell \cdot w^{\times}(\pi_1) - 10 \ell = w^*(E(\pi_1^*) \cap E_2) - 10\ell.
    $$
    We conclude that
\begin{align*}
    w^*(\pi^*) & = w^*(E(\pi^*) \cap E_1) + w^*(E(\pi^*) \cap E_2) \\
    & \leq 2\ell +  w^*(E(\pi_1^*) \cap E_2) - 10\ell \\
    & \leq w^*(\pi_1^*) - 8\ell,
\end{align*}
so $\pi^*$ is a unique shortest path in $G^*$, as claimed.
\end{itemize}
\end{proof}

Like in \Cref{subsec:fn_lb}, we now define a family of DAGs $\mathcal{D}$ associated with our distribution of random graphs $G^*$ that will be useful in our analysis.

\paragraph{DAG Family $\mathcal{D}$.} We begin by defining a graph $\hat{G} = (V^*, \hat{E})$  over the same vertex set as $G^*$, and containing every possible edge in $G^*$. Formally, we define edge set $\hat{E}$ to be 
$$
\hat{E} = \{ V(K_2^u) \times V(K_2^v) \mid (u, v) \in E(G^{\times}) \} \cup \{ E(K_2^v) \mid  v \in V \}.
$$
Note that $G^* \subseteq \hat{G}$ for each graph $G^*$ in the support of our random graph distribution.

Our DAG family $\mathcal{D}$ will be the set of all edge-maximal acyclic subgraphs of $\hat{G}$. 
We define a DAG $D$ of $\hat{G}$ as follows. For each clique $K_c^v$, fix a total order on its $c$ vertices, and delete the edges in $K_c^v$ that do not respect this total order. After repeating this process for all cliques $K_c^v$, the resulting graph will be a DAG, since the product graph $G^{\times}$ is a DAG. Define $\mathcal{D}$ to be the family of DAGs of $G^*$ that can be constructed from the above process. Note that since there are  $c!$ choices of total orderings for each clique $K_c$, it follows that the size of $\mathcal{D}$ is 
$$
|\mathcal{D}| = (c!)^{|V(G^{\times})|} = 2^{\Theta(n^2 / \ell \cdot c \log c)} = 2^{\Theta(n^{4/3}p^{1/3} \cdot c \log c)}.
$$

For each DAG $D \in \mathcal{D}$, the graph $D \cap G^*$ is an acyclic subgraph of random graph $G^*$.\footnote{We use $D \cap G^*$ to  denote the graph with vertex set $V^*$ and edge set $ E(D) \cap E(G^*)$.} This allows us to reason about DAG covers of $G^*$ using the DAGs in graph family $\mathcal{D}$. We note that while $G^*$ is a random graph, the DAGs in $\mathcal{D}$ are fixed and depend only on product graph $G^{\times}$.  
The next claim  follows directly from the structure of graph $G^*$ as a DAG of strongly connected components $K_2^v$.

\begin{claim}
For each acyclic subgraph $D$ of $G^*$, there is a DAG $D' \in \mathcal{D}$ such that $D \subseteq D' \cap G^*$. \label{clm:subdag2}
\end{claim}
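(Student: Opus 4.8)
The plan is to adapt the argument used for Claim~\ref{clm:subdag} to cliques of size $c$ instead of $2$. First I would fix an arbitrary acyclic subgraph $D$ of $G^*$ and pick a topological ordering $\sigma$ of $V(G^*)$ consistent with $D$ (which exists precisely because $D$ is a DAG). Restricting $\sigma$ to the vertex set $V(K_c^v)$ of each clique yields a total order $\sigma[V(K_c^v)]$ on those $c$ vertices.

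Next I would invoke the definition of the DAG family $\mathcal{D}$: since $\mathcal{D}$ consists of exactly those edge-maximal acyclic subgraphs of $\hat{G}$ obtained by picking, independently for each clique $K_c^v$, a total order on $V(K_c^v)$ and deleting the clique edges that violate it, there is a DAG $D' \in \mathcal{D}$ whose clique orders coincide with $\sigma[V(K_c^v)]$ for every $v \in V(G^{\times})$. By construction, $D'$ retains precisely the clique edges respecting these orders and retains \emph{all} inter-clique edges of $\hat{G}$ (in particular all inter-clique edges of $G^*$).

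Then I would check the two kinds of edges of $D$ separately. For an edge $e \in E(D)$ lying inside some clique $K_c^v$: since $D$ respects $\sigma$, the edge $e$ respects $\sigma[V(K_c^v)]$, so $e \in E(D')$; and $e \in E(G^*)$ because $D$ is a subgraph of $G^*$; hence $e \in E(D') \cap E(G^*)$. For an edge $e \in E(D)$ between distinct cliques $K_c^u$ and $K_c^v$: $D'$ keeps every such edge of $\hat{G}$, and in particular this $e \in E(G^*)$, so again $e \in E(D') \cap E(G^*)$. Combining the two cases gives $E(D) \subseteq E(D') \cap E(G^*)$, i.e.\ $D \subseteq D' \cap G^*$, as claimed.

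I do not anticipate a real obstacle; the only point needing a word of care is that $G^*$ is genuinely a DAG of strongly connected components $K_c^v$ — equivalently, contracting each clique recovers the acyclic product graph $G^{\times}$ — so that once the within-clique orientations are chosen acyclically, the inter-clique edges cannot close a cycle and $D'$ is indeed acyclic and well-defined as an element of $\mathcal{D}$. This is exactly the structural fact already used to count $|\mathcal{D}| = (c!)^{|V(G^{\times})|}$, so it can simply be cited, and the rest is a verbatim adaptation of the proof of Claim~\ref{clm:subdag}.
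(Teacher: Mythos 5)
Your proof is correct and matches the paper's approach exactly: the paper simply states that this claim ``follows from an identical argument as in the proof of Claim~\ref{clm:subdag},'' and your proposal is precisely that adaptation, replacing the size-2 cliques with size-$c$ cliques and checking the intra-clique and inter-clique edges separately. The closing remark that acyclicity of $D'$ rests on the product graph $G^{\times}$ being a DAG is the same structural fact the paper relies on when defining the family $\mathcal{D}$.
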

\begin{proof}
The proof of this claim follows from an identical argument as in the proof of Claim \ref{clm:subdag}.
\end{proof}



\subsection{Finishing the Proof}
\label{subsec:analysis}
Let $G^* = G^*(n, p, c)$ be our random cyclic graph constructed in the previous subsection, with construction parameters $p, c$ to be specified later. Let $w^*$ be the edge weight function associated with $G^*$, and let $\Pi^*$ be the collection of critical paths associated with $G^*$. Likewise, let $\mathcal{D}$ denote our collection of DAGs satisfying \Cref{clm:subdag2}. 

Graph $G^*$  and paths $\Pi^*$ are constructed from a corresponding product graph $G^{\times}$ and set of paths $\Pi^{\times}$, which are in turn constructed from a base graph $G$ and set of paths $\Pi$. In this subsection, we will occasionally reference $G^{\times}, \Pi^{\times}, G$, and  $\Pi$, as they relate to the construction of $G^*, \Pi^*$.

Recall that our goal in proving Theorem \ref{thm:lb_m} is the following. For any exact distance-preserving DAG cover $D_1, \dots, D_g$ of $G^*$ that uses a set $E' \subseteq TC(G^*)$ of additional edges of size $|E'| = O(m^{1+\varepsilon})$, for a sufficiently small $\varepsilon > 0$, we have that $g = \Omega(n^{1/6})$. Towards this goal, it will be useful in our analysis to split the set of additional edges $E' \subseteq TC(G^*)$ into two categories: ``short edges'' and ``long edges''. We may assume without loss of generality that $E' \subseteq TC(G^*) \setminus E(G^*)$.
\begin{itemize}
    \item \textbf{Short Edges:} We say an edge $e \in TC(G^*) \setminus E(G^*)$ is a \textit{short} edge if
    edge $e$ is between two clique subgraphs $K_c^u$ and $K_c^v$ that are adjacent in $G^*$. Formally, if $e$ is short, then
    $e \in V(K_c^u) \times V(K_c^v)$ where  $(u, v) \in E(G^{\times})$.
    \item \textbf{Long Edges:} We say an edge $e \in TC(G^*) \setminus E(G^*)$ is a \textit{long} edge if $e$ is between two cliques $K_c^u$ and  $K_c^v$ that do not have an edge between them in $E(G^*)$. Equivalently, $e$ is long if it is not short. 
\end{itemize}

The reason it is helpful to distinguish between short edges and long edges is that their intersection behavior with the paths in $\Pi^*$ is very different. As stated in Claim \ref{obs:long_edge}, every long edge in $E'$ is contained in the transitive closure of at most one path in $\Pi^*$. On the other hand,  short edges in $E'$ are contained in the transitive closure of up to $d/c$ distinct paths in $\Pi^*$ in general (see Claim \ref{obs:short}).

Since long edges have this simple intersection behavior with paths in $\Pi^*$, it is easier to handle them in our analysis. With this in mind, we first give a proof of Theorem \ref{thm:lb_m} for the special case where our set of additional edges consists only of long edges. 

\subsubsection{Proof of Theorem \ref{thm:lb_m} when $E'$ consists only of long edges}

The goal of this subsubsection is to prove the following special case of Theorem \ref{thm:lb_m}. 
\begin{theorem}[Long edges only]
\label{thm:long}
    Let $D_1, \dots, D_k$ be an exact distance-preserving DAG cover of $G^*$ with a set of additional edges $E' \subseteq TC(G^*)$ of size $|E'| = m^{1+\varepsilon}$ for a sufficiently small constant $\varepsilon > 0$. Assume that $E'$ consists only of long edges. Then $k = \Omega(|V(G^*)|^{1/6})$. 
\end{theorem}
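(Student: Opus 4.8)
The plan is to follow the template of the $f(n)$ lower bound (the proof of \cref{thm:dag_cover_n_setting}), now run on the random cyclic graph $G^*$ and its DAG family $\mathcal{D}$ from \cref{subsec:random}, with three new ingredients forced by the $m$-regime: distances/weights must replace pure reachability (paths in $\Pi^*$ are only \emph{unique shortest} paths, not unique paths), the cliques have size $c$ rather than $2$, and paths in $\Pi^*$ can overlap on an edge rather than being edge-disjoint. First, I would reduce to ``unhelped'' paths: we may assume $E'\subseteq TC(G^*)\setminus E(G^*)$, and since all of $E'$ is long, \cref{obs:long_edge} gives that each edge of $E'$ lies in $TC(\pi^*)$ for at most one $\pi^*\in\Pi^*$; hence $\Pi_1^*:=\{\pi^*\in\Pi^*: TC(\pi^*)\cap E'=\emptyset\}$ satisfies $|\Pi_1^*|\geq|\Pi^*|-|E'|=\Theta(p^2)-m^{1+\varepsilon}$, which is $\Omega(p^2)$ for the parameter regime to be fixed (with $\varepsilon$ small relative to the exponent of $p$).

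\textbf{Weighted uniqueness.} Next I would transfer coverage of $\Pi_1^*$ into $\mathcal{D}$. Fix $\pi^*\in\Pi_1^*$ with endpoints $s^*,t^*$. Since the cover is exactly distance-preserving and $\pi^*$ is the unique shortest $s^*\leadsto t^*$ path in $G^*$ (\cref{lem:rand_graph}), some $D_i$ has $\dist_{D_i}(s^*,t^*)=w^*(\pi^*)$. Replacing each additional edge $(a,b)\in E'$ used by a shortest $s^*\leadsto t^*$ path of $D_i$ with a shortest $a\leadsto b$ path of $G^*$ (legal, since additional edges must have weight at least $\dist_{G^*}$, by the lower-bound clause of the DAG-cover definition) yields an $s^*\leadsto t^*$ walk in $G^*$ of weight at most $w^*(\pi^*)$, which must equal $\pi^*$; but $\pi^*\in\Pi_1^*$ means no edge of $E'$ lies in $TC(\pi^*)$, so that shortest path of $D_i$ in fact uses no edge of $E'$. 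Thus $\dist_{D_i-E'}(s^*,t^*)=w^*(\pi^*)$, and by \cref{clm:subdag2} there is $D_i'\in\mathcal{D}$ with $D_i-E'\subseteq D_i'\cap G^*$, whence $\dist_{D_i'\cap G^*}(s^*,t^*)=w^*(\pi^*)$, and uniqueness once more gives $\pi^*\subseteq D_i'\cap G^*$. So every $\pi^*\in\Pi_1^*$ is contained in $D_i'\cap G^*$ for some $i$, and therefore $k\geq|\Pi_1^*|/B$, where $B:=\max_{D'\in\mathcal{D}}|\{\pi^*\in\Pi^*:\pi^*\subseteq D'\cap G^*\}|$.

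\textbf{Bounding $B$ (the crux).} I would then show that with positive probability over the construction of $G^*$, every $D'\in\mathcal{D}$ contains at most $B$ paths of $\Pi^*$, for a suitable polynomial $B$; by a union bound over $\mathcal{D}$ it suffices to bound, for a \emph{fixed} $D'\in\mathcal{D}$ and a \emph{fixed} $\Pi''\subseteq\Pi^*$, the probability $\Pr[\Pi''\subseteq D'\cap G^*]$. For a single path, this event requires that at each of its $\Theta(\ell)$ internal cliques the (uniformly random) entry and exit vertices are ordered consistently with $D'$'s fixed orientation of that clique -- a probability-$\tfrac{c+1}{2c}\le\tfrac34$ event -- and along one path these events are mutually independent (entry into one clique and exit from the next use independently chosen endpoints of the common inter-clique edge). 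The obstacle is that distinct paths of $\Pi''$ can share an inter-clique edge, coupling their consistency events: a naive independent subcollection has size only $\Theta(|\Pi''|\cdot\ell/d)$, too weak since the degree $d=\Theta(p^{2/3}/n^{1/3})$ is large. This is precisely what the expansion property (\cref{lem:base_degree,lem:prod_degree}) and the matching decomposition (\cref{lem:match_decomp}) are for: after discarding a $1/\mathrm{polylog}$ fraction and restricting to an appropriate sub-rectangle $\Pi'\times\Pi'$, the matching graph $J(n,p,\Pi'')$ contains a matching of size $\Theta(|\Pi''|\cdot\ell/(d''\log^2 n))$ with $d''=d/n^{\Theta(\varepsilon)}$, whose edges furnish that many \emph{independent} necessary consistency events, so $\Pr[\Pi''\subseteq D'\cap G^*]\leq(3/4)^{\Omega(|\Pi''|\,n^{1+\varepsilon/10}/(p\log^2 n))}$. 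Union-bounding this over $D'\in\mathcal{D}$ (of size $2^{\Theta(n^{4/3}p^{1/3}c\log c)}$), over the $n^{O(\varepsilon)}$ sub-rectangles, and over the size-$B'$ subsets of each rectangle, the exponent wins for a suitable threshold $B'$ provided $p$ is polynomially close to (but below) $n$ and $c$ is chosen appropriately, which fixes $B=n^{O(\varepsilon)}B'$.

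\textbf{Conclusion and main difficulty.} Fixing an outcome of $G^*$ realizing this bound, I would combine the pieces: $k\geq|\Pi_1^*|/B=\Omega(p^2)/B$, and then substitute the chosen values of $p$ and $c$, expressed through $|V(G^*)|=\Theta(c\,n^{4/3}p^{1/3})$, to obtain $k=\Omega(|V(G^*)|^{1/6})$. The hard part is the third step: obtaining a per-event probability bound strong enough to survive the union bound over the (exponentially large in a polynomial of $n$) family $\mathcal{D}$ despite the overlaps among the paths of $\Pi^*$ -- this is exactly where the geometric machinery (Szemer\'edi--Trotter, the product construction, the matching decomposition) is essential -- together with checking that a single choice of $p$, $c$, and a small enough $\varepsilon$ simultaneously makes $|\Pi^*|$ dominate $m^{1+\varepsilon}$, makes the probability exponent dominate $\log|\mathcal{D}|$ and the binomial term, and produces the claimed $\Omega(|V(G^*)|^{1/6})$ number of DAGs.
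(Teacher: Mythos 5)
Your proposal follows essentially the same route as the paper's proof: restrict to the ``unhelped'' paths $\Pi_1^*$ via \cref{obs:long_edge}, argue (using uniqueness of shortest paths from \cref{lem:rand_graph} and the reduction to $\mathcal{D}$ via \cref{clm:subdag2}) that each such path must survive in some $D_i'\cap G^*$, and then invoke \cref{lem:path_cover2} (built on the matching decomposition and union bound over $\mathcal{D}$) to cap the number of surviving paths per DAG at $O(n^{1/3+2\varepsilon}p^{4/3}c\log c)$, which with $p=n^{1+\varepsilon/100}$ and $c=2$ yields $k=\Omega(|V(G^*)|^{1/6})$. Your ``weighted uniqueness'' step spells out the transfer in more detail than the paper's terse one-line claim, and one small slip is that the paper takes $p$ slightly \emph{above} $n$ (namely $p=n^{1+\varepsilon/100}$), not below; otherwise the structure and key lemmas you cite match the paper's.
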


Our proof of Theorem \ref{thm:long} will rely on the following property of edges in $E'$. 

\begin{claim}
 Every long edge $e \in E'$ is contained in the transitive closure of at most one path in $\Pi^*$. 
\label{obs:long_edge}
\end{claim}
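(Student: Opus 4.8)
The plan is to prove Claim \ref{obs:long_edge} by exploiting the ``unique two-edge determination'' property of product paths (Condition 5 of Lemma \ref{lem:prod_graph}, equivalently Property 4 of Lemma \ref{lem:rand_graph}), which limits how much two distinct product paths can overlap. The key observation is that a long edge $e \in E'$ spans two cliques $K_c^u$ and $K_c^v$ that are \emph{not} adjacent in $G^*$, so $e$ being in $TC(\pi^*)$ for a path $\pi^* = \psi(\pi) \in \Pi^*$ forces $\pi^*$ to visit both $K_c^u$ and $K_c^v$, with $K_c^u$ strictly before $K_c^v$ along $\pi^*$. Equivalently, the underlying product path $\pi \in \Pi^{\times}$ must pass through both nodes $u$ and $v$ of $G^{\times}$, with $u$ appearing before $v$ on $\pi$.

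The main step is then to argue that at most one product path $\pi \in \Pi^{\times}$ contains both $u$ and $v$ with $u$ preceding $v$. Here I would open up the product construction: a product path is of the form $(\pi_1, \pi_2)$ for $\pi_1, \pi_2 \in \Pi$, and its node sequence alternates between incrementing the first coordinate (along $\pi_1$) and the second coordinate (along $\pi_2$). A node of $G^{\times}$ lies in layer $L_i^{\times}$, and since each base path $\pi_j$ hits exactly one vertex per layer of $G$, specifying a node $u \in L_a^{\times}$ on $(\pi_1,\pi_2)$ pins down one vertex of $\pi_1$ in a specific layer of $G$ and one vertex of $\pi_2$ in a specific layer of $G$. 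Since two base paths in $\Pi$ are pairwise edge-disjoint and each is a straight line in the plane, they intersect in at most one point; so knowing a vertex of $\pi_1$ in a given layer plus a vertex of $\pi_2$ in a given layer, together with a \emph{second} such constraint coming from $v$ (in a different layer, since $u$ and $v$ are in non-adjacent cliques and hence distinct layers of $G^{\times}$), over-determines $\pi_1$ and $\pi_2$: two distinct lines through the same two points would coincide. Concretely, the pair $(u,v)$ gives two distinct points that both $\pi_1$'s trace and $\pi_2$'s trace (as lines in $\mathbb{R}^2$ after projecting to the appropriate coordinate) must be consistent with, and a line is determined by two points, so $(\pi_1, \pi_2)$ is unique. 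This mirrors the proof of Condition 5 of Lemma \ref{lem:prod_graph}, where a product path is identified by any two of its edges; the analogue here is that a product path passing through two nodes in non-adjacent layers is likewise uniquely identified.

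I would organize the write-up as follows. First, reduce from the statement about $TC(\pi^*)$ in $G^*$ to a statement about $\pi$ in $G^{\times}$: if $e = (x,y)$ is a long edge with $x \in V(K_c^u)$, $y \in V(K_c^v)$, then $e \in TC(\pi^*)$ implies $x \in \pi^* \cap V(K_c^u)$ and $y \in \pi^* \cap V(K_c^v)$ with $x$ preceding $y$ on $\pi^*$, hence $u, v \in \pi$ with $u$ preceding $v$ on $\pi$. Since $K_c^u$ and $K_c^v$ are non-adjacent in $G^*$, the nodes $u$ and $v$ lie in different (and in fact non-consecutive) layers $L_a^{\times}, L_b^{\times}$ of $G^{\times}$, so they impose constraints on $\pi_1$ and $\pi_2$ in genuinely different layers of the base graph $G$. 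Second, suppose two distinct product paths $(\pi_1,\pi_2)$ and $(\pi_1', \pi_2')$ both satisfy this; I would derive that $\pi_1$ and $\pi_1'$ agree on two distinct vertices (hence as lines coincide, so $\pi_1 = \pi_1'$), and similarly $\pi_2 = \pi_2'$, contradicting distinctness. A minor case analysis may be needed depending on whether $a, b$ are both odd, both even, or of different parities (this affects which coordinate of $G^{\times}$ nodes corresponds to progress along $\pi_1$ versus $\pi_2$), but in every case the two nodes $u, v$ force two independent point-constraints on each base path.

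The main obstacle I anticipate is the bookkeeping in that last case analysis: carefully translating ``$\pi$ passes through $u$ and $v$'' into ``$\pi_1$ passes through two specified vertices of $G$ and $\pi_2$ passes through two specified vertices of $G$'' when $u$ and $v$ might lie in layers of the same parity (so both constrain, say, $\pi_1$ at two different layers but $\pi_2$ at only one layer, or vice versa). If $u, v$ both have even-index layers, then from $u$ we learn one vertex of $\pi_1$ and one of $\pi_2$, and from $v$ another vertex of $\pi_1$ and another of $\pi_2$ — so actually we still get two constraints on each, and the argument goes through cleanly; the genuinely delicate sub-case is when the two layers force two constraints on $\pi_1$ but the constraints on $\pi_2$ happen to be in the same layer of $G$ (which can only pin down one vertex of $\pi_2$). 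In that situation I would observe that a single vertex-constraint plus edge-disjointness of $\Pi$ still suffices, because the first base path $\pi_1$ is already uniquely determined (two points determine its line), and then since $u$ (say) lies on $(\pi_1, \pi_2)$ and $\pi_1$ is fixed, the component $\pi_2$ is forced as the unique path in $\Pi$ through the known vertex in the appropriate layer that is edge-disjoint-compatible — or, more simply, I would invoke that product paths are identified by any two of their edges (Condition 5 of Lemma \ref{lem:prod_graph}) after noting that passing through $u$ and $v$ in distinct non-adjacent layers exhibits two distinct edges of the product path. Reducing everything to Condition 5 is likely the cleanest route and avoids re-deriving the incidence-geometry argument from scratch.
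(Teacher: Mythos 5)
Your proposal is correct and, after the detour, lands on exactly the paper's argument: the paper's entire proof is the one-liner ``this follows immediately from Property~4 of Lemma~\ref{lem:rand_graph}'' (intersecting paths in $\Pi^*$ meet in at most one clique or two adjacent cliques, which a shared long edge would violate), and your closing observation — that it is cleanest to reduce directly to Condition~5 of Lemma~\ref{lem:prod_graph} / Property~4 of Lemma~\ref{lem:rand_graph} rather than re-deriving the incidence geometry — is that same argument. The long middle section re-proves from scratch the contents of Lemma~\ref{lem:prod_graph}, Condition~5, including parity bookkeeping that turns out to be moot (non-adjacency in $G^{\times}$ forces the two visited layers to differ by at least two, so you always get two genuine constraints on each coordinate path); this is harmless but unnecessary once you take the route you yourself recommend at the end.
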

\begin{proof}
This claim follows immediately from  Property 4 of Lemma \ref{lem:rand_graph}.
\end{proof}

We will reuse the following claim from our  proof of  Theorem \ref{thm:dag_cover_n_setting}.

\begin{claim}
    Fix a DAG $D$ in $\mathcal{D}$. Fix two edges $(u, v), (v, w) \in E(G^{\times})$ in graph $G^{\times}$, and let $\phi(u, v) = (s, t)$ and $\phi(v, w) = (x, y)$, where function $\phi:E(G^{\times}) \mapsto E(G^*)$ is as defined in the previous subsection. Then path $\pi^* = (s, t, x, y)$ is contained in DAG $D \cap G^*$ with probability at most $3/4$.
    \label{clm:small_paths2}
\end{claim}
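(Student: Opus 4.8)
The plan is to mimic the argument of Claim \ref{clm:small_paths} essentially verbatim, since the local structure here — a clique $K_c^v$ with random entry/exit points — is identical to the $K_2$ case except that the clique now has $c$ vertices. The key point is that the edges $(s,t)$ and $(x,y)$, being the images $\phi(u,v)$ and $\phi(v,w)$ of edges of $G^{\times}$, are always present in $E(G^*)$ and hence in $E(D \cap G^*)$. So the path $\pi^* = (s,t,x,y)$ survives in $D \cap G^*$ if and only if the two-hop connection $t \to x$ is available, which (since $t, x \in V(K_c^v)$) happens iff either $t = x$, or $t$ precedes $x$ in the total order $<_D$ that $D$ imposes on $V(K_c^v)$.

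First I would recall that $t$ and $x$ are chosen independently and uniformly at random from the $c$ vertices of $K_c^v$: $t$ is the random endpoint selected when the image of edge $(u,v)$ was added, and $x$ is the random endpoint selected when the image of edge $(v,w)$ was added; these two choices are made in separate, independent steps of the construction of $G^*$. Let $v^*_1 <_D v^*_2 <_D \cdots <_D v^*_c$ be the vertices of $K_c^v$ in the order fixed by $D$. Then $\Pr[t = x] = \sum_{i=1}^{c} \Pr[t = x = v^*_i] = \sum_{i=1}^c 1/c^2 = 1/c$, and $\Pr[t <_D x] = \sum_{i < j} \Pr[t = v^*_i \text{ and } x = v^*_j] = \binom{c}{2}/c^2 = (c-1)/(2c)$. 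Adding these, the probability that $\pi^*$ survives is at most $1/c + (c-1)/(2c) = (c+1)/(2c) \le 3/4$, where the last inequality uses $c \ge 2$. I would present this as a short displayed computation, exactly parallel to the two cases in Claim \ref{clm:small_paths}.

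The one genuinely substantive thing to check — rather than an obstacle per se — is the independence of the random variables $t$ and $x$. This is immediate from the construction in Subsection \ref{subsec:random}: the endpoint of $\phi(u,v)$ inside $K_c^v$ and the endpoint of $\phi(v,w)$ inside $K_c^v$ are sampled in two distinct applications of the ``repeat this random process for every edge'' step, with fresh independent randomness. I would state this explicitly, since it is the only place the argument could go wrong, and then the probability bound follows. There is no hard part here; the claim is a routine generalization of Claim \ref{clm:small_paths} from cliques of size $2$ to cliques of size $c$, and the $(c+1)/(2c) \le 3/4$ bound is what makes the later union-bound argument go through for all $c \ge 2$.
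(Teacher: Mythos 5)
Your proof is correct and follows the same route the paper intends: compute, exactly as in Claim \ref{clm:small_paths}, the probability that the independent uniform choices $t, x \in V(K_c^v)$ are compatible with the fixed total order $<_D$ on the clique. The paper itself only writes ``an argument identical to that of Claim \ref{clm:small_paths}'' without redoing the count for general $c$; you supply the missing computation $\Pr[t = x] + \Pr[t <_D x] = 1/c + (c-1)/(2c) = (c+1)/(2c) \le 3/4$ for $c \ge 2$, and you correctly flag the one ingredient worth checking (independence of $t$ and $x$, which holds because $\phi(u,v)$ and $\phi(v,w)$ draw fresh randomness in separate steps of the construction of $G^*$).
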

\begin{proof}
    This claim follows from an argument identical to that of Claim \ref{clm:small_paths}. 
\end{proof}

Claim \ref{clm:small_paths2} states that certain short paths in $G^*$ survive in $D \cap G^*$, where $D \in \mathcal{D}$, with probability at most $3/4$. In the following claim, we will use Claim \ref{clm:small_paths2} to argue that  $D \cap G^*$ contains  many paths in $\Pi^*$ with very low probability. 

\begin{claim}
Fix a sufficiently small $\varepsilon > 0$, and assume that construction parameter $p$ is chosen so that $p \leq n^{1 + \varepsilon/100}$.
    Fix a DAG $D \in \mathcal{D}$. Let $\Pi' \subseteq \Pi$ be a set of $|\Pi'| = p \cdot n^{-\varepsilon}$ paths in $\Pi$, where $\Pi$ is the set of base paths constructed in subsection \ref{subsec:base}. Let $\Pi'' \subseteq \Pi^{\times}$ be a set of $|\Pi''| = k \geq n^{1/3}p^{5/6}$ paths in $\Pi^{\times}$ contained in the rectangle $\Pi' \times \Pi'$, i.e.,
    $$
    \Pi'' \subseteq \Pi' \times \Pi' \subseteq \Pi^{\times}. 
    $$
    Finally, let $\Pi^{(3)} = \psi[\Pi''] \subseteq \Pi^*$ denote the image of set $\Pi''$ under function $\psi$, as defined in the previous subsection. The probability that DAG $D \cap G^*$ contains all $|\Pi^{(3)}| = k$ paths in $\Pi^{(3)}$ as subgraphs is at most
    $$
    \left(\frac{3}{4}\right)^{\Theta\left(k \cdot \frac{n^{1+\varepsilon/10}}{p \cdot \log^2 n}\right)}.
    $$
    \label{clm:matching_bound}
\end{claim}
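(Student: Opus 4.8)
The plan is to reduce the event ``$D\cap G^*$ contains all $k$ paths in $\Pi^{(3)}$'' to the survival of a large collection of \emph{edge-disjoint} length-$3$ subpaths of the form $(s,t,x,y)$ guaranteed by Claim~\ref{clm:small_paths2}, and then apply the independence coming from that edge-disjointness together with the $3/4$ bound. The key point is that the paths in $\Pi^{(3)}$ can share edges of $G^{\times}$ (and hence share clique-transitions in $G^*$), so we cannot simply multiply the per-path bounds as we did in Claim~\ref{claim:dag_cover}. This is exactly the obstacle that the matching machinery of Section~\ref{subsec:match} was built to overcome, so the heart of the proof is to invoke Lemma~\ref{lem:match_decomp}.

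First I would observe that $\Pi''$ satisfies all the hypotheses of Lemma~\ref{lem:match_decomp}: it lies in a rectangle $\Pi'\times\Pi'$ with $|\Pi'| = p\cdot n^{-\varepsilon}$, we have $p \le n^{1+\varepsilon/100}$, and $|\Pi''| = k \ge n^{1/3}p^{5/6}$. Hence the matching graph $J = J(n,p,\Pi'')$ of $G'' = \graph(V(G^{\times}),\Pi'')$ admits a single matching $M$ (say $M = M_1$) of size
$$
|M| = \Theta\!\left(\frac{|\Pi''|\cdot \ell}{d''\cdot \log^2 n}\right) = \Theta\!\left(k\cdot \frac{n^{1+\varepsilon/10}}{p\cdot \log^2 n}\right).
$$
Each edge of $M$ lives inside some $H_v$ and, by Definition~\ref{def:matching_graph}, corresponds to a pair $(u,w)$ of in/out neighbors of a node $v\in V(G'')$ together with a path $\pi\in\Pi''$ with $u,v,w\in\pi$; that is, it certifies a length-$2$ subpath $(u,v,w)$ of a path in $\Pi''$ whose middle node is $v$. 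Lift this via $\phi$ (equivalently, via $\psi$) to a length-$3$ path $(s,t,x,y)$ in $G^*$ traversing the clique $K_c^v$, exactly as in Claim~\ref{clm:small_paths2}. Crucially, because $M$ is a matching in $J$ — so distinct edges of $M$ use disjoint in/out neighbors within each $H_v$, and the $H_v$ are vertex-disjoint across $v$ — the corresponding length-$3$ gadgets are pairwise edge-disjoint in $G^*$: no two share a clique-internal edge and no two share a clique-transition edge. Therefore the events $A_e := \{\text{the gadget of }e\text{ survives in }D\cap G^*\}$, for $e\in M$, are mutually independent, and each has $\Pr[A_e]\le 3/4$ by Claim~\ref{clm:small_paths2}.

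Finally, if $D\cap G^*$ contains every path $\pi^*\in\Pi^{(3)}=\psi[\Pi'']$ as a subgraph, then in particular, for every edge $e\in M$, the gadget of $e$ — which is a subpath (in transitive closure) of some $\pi^*\in\Pi^{(3)}$ — survives, i.e.\ $A_e$ holds for all $e\in M$. Hence
$$
\Pr\bigl[\text{all of }\Pi^{(3)}\subseteq D\cap G^*\bigr] \le \Pr\Bigl[\bigcap_{e\in M} A_e\Bigr] = \prod_{e\in M}\Pr[A_e] \le \left(\tfrac{3}{4}\right)^{|M|} = \left(\tfrac{3}{4}\right)^{\Theta\left(k\cdot \frac{n^{1+\varepsilon/10}}{p\cdot \log^2 n}\right)},
$$
which is the claimed bound. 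The main obstacle, and the step requiring the most care, is making precise that the length-$3$ gadgets indexed by a matching in $J$ are genuinely edge-disjoint in $G^*$ — one has to check both that the middle-clique edges $(t,x)$ are distinct (this uses that for fixed $v$ the matching picks distinct $(u,w)$, hence distinct lifted entry/exit points in $K_c^v$, together with Property~4 of Lemma~\ref{lem:rand_graph}) and that the inter-clique transition edges $\phi(u,v),\phi(v,w)$ are not reused (this uses the vertex-disjointness of the $H_v$'s and the fact that $\phi$ is injective). Once edge-disjointness is nailed down, independence and the union over $M$ are immediate, and the conditioning on the random choices defining $G^*$ enters only through Claim~\ref{clm:small_paths2}, which already accounts for it.
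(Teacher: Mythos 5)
Your overall strategy exactly matches the paper's: reduce to a single matching $M=M_1$ from Lemma~\ref{lem:match_decomp}, lift each matching edge to a length-$3$ subpath via $\phi$/$\psi$, apply the per-gadget $3/4$ bound from Claim~\ref{clm:small_paths2}, and multiply. But there is a genuine flaw in the way you justify the multiplication, namely the step where you deduce independence from ``the length-$3$ gadgets are pairwise edge-disjoint in $G^*$.''

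The claim that the gadgets are edge-disjoint in $G^*$ is false as stated, and moreover edge-disjointness of the \emph{realized} subgraphs is not what yields independence of the survival events. For two matching edges $e_i=(u_i,w_i)\in H_v$ and $e_j=(u_j,w_j)\in H_v$ with the same hub $v$, the matching property gives $u_i\neq u_j$ and $w_i\neq w_j$ as nodes of $G^\times$, but the lifted entry/exit points $t_i=\phi(u_i,v)$ and $x_i=\phi(v,w_i)$ land on \emph{uniformly random} vertices of $K_c^v$, so nothing prevents $t_i=t_j$ or $x_i=x_j$, and in particular the realized clique-internal edges $(t_i,x_i)$ and $(t_j,x_j)$ can coincide. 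Your parenthetical ``distinct $(u,w)$, hence distinct lifted entry/exit points in $K_c^v$'' is simply not true. The correct argument — which is what the paper does — does not need edge-disjointness of the realized gadgets. It observes that the event $A_i$ (survival of the $i$th gadget) is a function of $\phi$'s random choices on precisely the two $G^\times$-edges $(u_i,v_i)$ and $(v_i,w_i)$; that these pairs are pairwise disjoint across $i$ because $M$ is a matching in $J$ (and the $H_v$ are vertex-disjoint across $v$); and that $\phi$ makes its choice independently for each $G^\times$-edge. Independence of the $A_i$ then follows from the fact that they are functions of disjoint families of independent random variables, regardless of whether the realized intra-clique edges happen to collide. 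Replacing your edge-disjointness claim with this disjoint-randomness argument repairs the proof and brings it in line with the paper.
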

\begin{proof}
Let graph $G'' = \graph(V(G^{\times}), \Pi'')$ be the graph $G^{\times}$ induced on paths in $\Pi''$, and let $J = J(n, p, \Pi'')$ be the matching graph of $G''$ as defined in Definition \ref{def:matching_graph}. By Lemma \ref{lem:match_decomp}, the matching graph $J$ admits a collection of $\alpha$ matchings $M_1, \dots, M_{\alpha}$, with each matching of size $|M_i| = \beta$, where $\alpha = \Theta\left(\frac{p^{2/3}}{n^{1/3+\varepsilon/10}}\right)$ and $\beta = \Theta\left( k \cdot \frac{n^{1+\varepsilon/10}}{p \cdot \log^2 n}  \right)$. For the remainder of the proof of this claim, we will only use the single matching $M_1$ from our collection of matchings.

Let $(u, w) \in  M_1$ be an edge in matching $M_1$ that is contained in a subgraph $H_v$ of the matching graph $J$. That is $(u, w) \in E(U_v, W_v) = E(H_v) \subseteq E(J)$. Then by the definition of matching graph $J$, this implies that edges $(u, v), (v, w) \in E(G'') \subseteq E(G^{\times})$ are contained in a single path in $\Pi''$. 
We will let $\pi \in \Pi''$ denote the path such that  $(u, v), (v, w) \in E(\pi)$. Let $\phi(u, v) = (s, t)$ and $\phi(v, w) = (x, y)$, where function $\phi:E(G^{\times}) \mapsto E(G^*)$ is as defined in the previous subsection; note that $\psi(\pi) \in \Pi^{(3)}$.  Now observe that path $(s, t, x, y)$ is a subpath of the path $\psi(\pi)$, where function $\psi:\Pi^{\times} \mapsto \Pi^*$ is as defined in the previous subsection. Fix a DAG $D \in \mathcal{D}$. If DAG $D \cap G^*$ contains the path $\psi(\pi) \in \Pi^{(3)}$, it must also contain the subpath $(s, t, x, y)$. This event can happen with probability at most $3/4$ by Claim \ref{clm:small_paths2}.

We can repeat this argument for every edge in matching $M_1$. 
Let $e_1, \dots, e_{\beta} \in M_1$ denote the edges in  $M_1$. For each edge $e_i \in M_1$, where $i \in [1, \beta]$, we can identify two corresponding edges $(u_i, v_i), (v_i, w_i) \in E(G'')$ such that $e_i = (u_i, w_i)$. Additionally, we can identify a path $(s_i, t_i, x_i, y_i)$ such that
\begin{enumerate}
    \item $\phi(u_i, v_i) = (s_i, t_i) \in E(G^*)$ and $\phi(v_i, w_i)= (x_i, y_i) \in E(G^*)$, and
    \item $(s_i, t_i, x_i, y_i)$ is a subpath of a path $\pi^*_i \in \Pi^{(3)}$.
\end{enumerate}
Then if  DAG $D \cap G^*$ contains every path in $\Pi^{(3)}$, it must contain every subpath $(s_i, t_i, x_i, y_i)$ for $i \in [1, \beta]$, as well.

For each $i \in [1, \beta]$, let $A_i$ be the event that DAG $D \cap G^*$ contains subpath $(s_i, t_i, x_i, y_i)$. We claim that events $A_i$ and $A_j$ are independent for $i \neq j$. Edges $(s_i, t_i), (x_i, y_i), (s_j, t_j), (x_j, y_j)$ are always contained in DAG $D \cap G^*$, by the construction of $G^*$ and \Cref{clm:subdag2}. 
Then we must show that the event that subpath $(t_i, x_i)$ survives in  $D \cap G^*$  is independent of the event that subpath $(t_j, x_j)$ survives in  $D \cap G^*$. Since $M_1$ is a matching in $J$, the edges $(u_i, v_i), (v_i, w_i), (u_j, v_j), (v_j, w_j) \in E(G'')$ in $G''$ are all distinct. Recall that the random mapping $\phi:E(G^{\times}) \mapsto E(G^*)$ maps each edge $(u, v) \in E(G^{\times})$ to an edge $\phi((u, v)) = (u^*, v^*) \in E(G^*)$, where $(u^*, v^*)$ is chosen uniformly at random from $V(K_c^u) \times V(K_c^v)$. As a consequence, since edges $(u_i, v_i), (v_i, w_i), (u_j, v_j), (v_j, w_j) \in E(G'')$ are all distinct, the event that random mapping $\phi$ preserves subpath $(t_i, x_i)$ in  $D \cap G^*$  is independent of the event that random mapping $\phi$ preserves subpath $(t_j, x_j)$ in  $D \cap G^*$.  We conclude that events $A_i$ and $A_j$ are independent for $i \neq j$. 
Putting it all together, we find that
$$
\Pr[\pi^* \subseteq  D \cap G^* \text{ for all $\pi^* \in \Pi^{(3)}$}] \leq \Pr[\cap_i A_i] = \left( \frac{3}{4}\right)^{\beta} =  \left(\frac{3}{4}\right)^{\Theta\left(k \cdot \frac{n^{1+\varepsilon/10}}{p \cdot \log^2 n}\right)},
$$
as claimed.
\end{proof}

Notice how in the proof of Claim \ref{clm:matching_bound}, we only used the matching $M_1$ from our collection of matchings $M_1, \dots, M_{\alpha}$. In our final proof of Theorem \ref{thm:lb_m} where we incorporate short edges, we will make use of all the matchings in the matching decomposition.

We will now combine the very strong probability bound proved in Claim \ref{clm:matching_bound} with a union bound to argue that with nonzero probability there does not exist a DAG $D \in \mathcal{D}$  such that $D \cap G^*$  contains many paths in $\Pi^*$ as subgraphs.

\begin{lemma}
    Fix a sufficiently small $\varepsilon > 0$, and assume that construction parameter $p$ is chosen so that $p \leq n^{1 + \varepsilon/100}$.
    With nonzero probability the following statement holds. 
    For every DAG $D \in \mathcal{D}$, the graph $D \cap G^*$  contains at most
    $$
    n^{1/3 +2\varepsilon}p^{4/3} c \log c
    $$
    distinct paths in $\Pi^*$ as subgraphs. 
    \label{lem:path_cover2}
\end{lemma}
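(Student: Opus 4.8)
The plan is to combine the per-DAG tail bound of Claim~\ref{clm:matching_bound} with a union bound over the DAG family $\mathcal{D}$ and over all ``large'' subsets of $\Pi^*$ that could possibly be covered, exactly mirroring the structure of the proof of Lemma~\ref{lem:path_coverage} in the $f(n)$ setting. First I would suppose for contradiction that some $D\cap G^*$ covers more than $N := n^{1/3+2\varepsilon}p^{4/3}c\log c$ paths of $\Pi^*$. The subtlety, compared to Lemma~\ref{lem:path_coverage}, is that Claim~\ref{clm:matching_bound} does \emph{not} give a good bound for an arbitrary set of $N$ covered paths: it only applies to a set $\Pi^{(3)}=\psi[\Pi'']$ where $\Pi''\subseteq\Pi'\times\Pi'$ sits inside a product rectangle with $|\Pi'|=p\cdot n^{-\varepsilon}$, and of size $\geq n^{1/3}p^{5/6}$. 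So the first real step is a pigeonhole/covering argument: partition (or cover) $\Pi=\bigcup_a \Pi^{(a)}$ into $O(n^\varepsilon)$ blocks each of size $p\cdot n^{-\varepsilon}$; then the product paths $\Pi^\times\subseteq\Pi\times\Pi$ are covered by the $O(n^{2\varepsilon})$ rectangles $\Pi^{(a)}\times\Pi^{(b)}$. If $D\cap G^*$ covers more than $N$ paths of $\Pi^*$, then by averaging it covers more than $N/O(n^{2\varepsilon})$ paths whose preimages lie in a single rectangle $\Pi^{(a)}\times\Pi^{(b)}$; by our choice of $N$ this quantity is $\gg n^{1/3}p^{5/6}$ (I would verify $N/n^{2\varepsilon}=\Omega(p^{4/3}c\log c\cdot n^{1/3})\gg n^{1/3}p^{5/6}$ since $p^{4/3}\geq p^{5/6}$), so Claim~\ref{clm:matching_bound} applies to this sub-collection.

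Next I would set up the union bound. Fix the rectangle index pair $(a,b)$ (there are $O(n^{2\varepsilon})$ of these), fix $D\in\mathcal{D}$ (there are $|\mathcal{D}|=2^{\Theta(n^{4/3}p^{1/3}c\log c)}$ of these by the count in the DAG Family paragraph), and fix a candidate covered set $\Pi''\subseteq\Pi^{(a)}\times\Pi^{(b)}$ of size $k:=N/\Theta(n^{2\varepsilon})$ (there are at most $\binom{|\Pi^\times|}{k}\leq |\Pi^\times|^k\leq p^{2k}=n^{O(k)}$ of these). By Claim~\ref{clm:matching_bound}, the probability that $D\cap G^*$ contains all of $\psi[\Pi'']$ is at most $(3/4)^{\Theta(k\cdot n^{1+\varepsilon/10}/(p\log^2 n))}$. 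So the total failure probability is at most
\[
n^{2\varepsilon}\cdot 2^{\Theta(n^{4/3}p^{1/3}c\log c)}\cdot n^{O(k)}\cdot \left(\tfrac{3}{4}\right)^{\Theta\!\left(k\cdot \frac{n^{1+\varepsilon/10}}{p\log^2 n}\right)}.
\]
Taking logarithms (base $2$), I need the negative term to dominate: $\Theta\!\big(k\cdot \tfrac{n^{1+\varepsilon/10}}{p\log^2 n}\big)$ must exceed $\Theta(n^{4/3}p^{1/3}c\log c) + O(k\log n) + O(\varepsilon\log n)$. The $O(k\log n)$ term is beaten because $\tfrac{n^{1+\varepsilon/10}}{p\log^3 n}\to\infty$ under the hypothesis $p\leq n^{1+\varepsilon/100}$ (this is precisely the same slack exploited inside Claim~\ref{clm:matching_bound}). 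The dominant comparison is the first one: I need $k\cdot \tfrac{n^{1+\varepsilon/10}}{p\log^2 n}\gg n^{4/3}p^{1/3}c\log c$, i.e. $k\gg n^{1/3 - \varepsilon/10}p^{4/3}c\log c\,\log^2 n$. Since $k=\Theta(N/n^{2\varepsilon})=\Theta(n^{1/3+2\varepsilon}p^{4/3}c\log c/n^{2\varepsilon})=\Theta(n^{1/3}p^{4/3}c\log c)$, this holds with a polynomial ($n^{\varepsilon/10}$) margin to spare once $\varepsilon$ is small enough, and the margin also absorbs the $\log^2 n$. Hence the total failure probability is $2^{-\Omega(n^{1/3}\cdot\text{poly})}\to 0$, so with nonzero (in fact overwhelming) probability no $D\in\mathcal{D}$ covers more than $N$ paths of $\Pi^*$, which is the claim.

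The step I expect to be the main obstacle is not any single inequality but making the reduction to a product rectangle fully rigorous: one must be careful that ``$D\cap G^*$ covers a path $\psi(\pi)$'' genuinely means the whole image path is a subgraph (so that Claim~\ref{clm:matching_bound}'s hypothesis ``contains all $k$ paths as subgraphs'' is exactly met), and that the averaging over $O(n^{2\varepsilon})$ rectangles is applied to the \emph{covered} set rather than to $\Pi^*$ itself. A secondary bookkeeping point is that Claim~\ref{clm:matching_bound} requires $|\Pi''|\geq n^{1/3}p^{5/6}$ and $|\Pi''|=k$; I must ensure the sub-collection extracted by pigeonhole has size exactly in the regime the claim allows (if it is larger than needed, pass to an arbitrary subset of the right size; the probability bound only improves). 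Once these are pinned down, the arithmetic of the union bound is routine given the exponential gap between $(3/4)^{k\cdot n^{\Omega(1)}}$ and the $2^{n^{4/3}\text{poly}}$-sized search space divided through by $k\cdot n^{\Omega(1)}$.
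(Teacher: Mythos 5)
Your proposal is correct and follows essentially the same approach as the paper's proof: partition $\Pi$ into $n^{\varepsilon}$ blocks of size $p\,n^{-\varepsilon}$, cover $\Pi^{\times}$ by the $n^{2\varepsilon}$ resulting rectangles, union-bound over $\mathcal{D}$, rectangles, and candidate subsets of size $k=\Theta(n^{1/3}p^{4/3}c\log c)$ using the tail bound of Claim~\ref{clm:matching_bound}, and finish by averaging. You also correctly identify and verify the side condition $|\Pi''|\geq n^{1/3}p^{5/6}$ needed to invoke Claim~\ref{clm:matching_bound}, which the paper handles implicitly.
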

\begin{proof}
    Our strategy for proving this lemma is to use Claim \ref{clm:matching_bound} and apply the union bound. Before we can do this, we need to develop some additional notation. 

    Let $\alpha = n^{\varepsilon}$. Let $\Pi_1, \dots, \Pi_{\alpha}$ be a partition of the set of base paths $\Pi$ into $\alpha$ sets $\Pi_i$ each of size $|\Pi_i| \leq |\Pi|/\alpha \leq p \cdot n^{\varepsilon}$. For all $i, j \in [1, \alpha]$, we define the subset $\Pi_{i, j}^{\times} \subseteq \Pi^{\times}$ as follows:
    $$
    \Pi_{i, j}^{\times} := \Pi_i \times \Pi_j.
    $$
    For all $i, j \in [1, \alpha]$, we define $\Pi^*_{i, j} \subseteq \Pi^*$ to be the image of set $\Pi_{i, j}^{\times}$ under function $\psi$. Formally, $\Pi^*_{i, j} = \psi[\Pi^{\times}_{i,j}]$. Notice that the collection of sets $\{\Pi^*_{i, j}\}_{i, j \in [1, \alpha]}$ is a partition of set $\Pi^*$.  
    Additionally, since each set $\Pi^*_{i, j}$ is contained inside a small rectangle $\psi[\Pi_{i} \times \Pi_j]$, we can directly apply Claim \ref{clm:matching_bound} on each set $\Pi^*_{i, j}$. 

    Towards proving Lemma \ref{lem:path_cover2}, we will show the following. With nonzero probability, for every $D \in \mathcal{D}$, graph $D \cap G^*$ contains at most 
    $$
    n^{1/3}p^{4/3}c \log c
    $$
    distinct paths from a set $\Pi^*_{i, j}$ (where $i, j \in [1, \alpha]$) as subgraphs. Once we prove this, Lemma \ref{lem:path_cover2} will follow straightforwardly from an averaging argument.

    By Claim \ref{clm:matching_bound} and the union bound, the probability that there exists:
    \begin{enumerate}
        \item a DAG $D \in \mathcal{D}$,
        \item a set $\Pi^*_{i, j}$, $i, j \in [1, \alpha]$, and
        \item  a subset $\Pi' \subseteq \Pi^*_{i, j}$ of $\Pi^*_{i, j}$ of size $|\Pi'| = k$
    \end{enumerate}
    such that DAG $D \cap G^*$ contains every path in $\Pi'$ as a subpath is at most
    $$
    |\mathcal{D}| \cdot \alpha^2 \cdot {|\Pi_{i, j}^*| \choose k} \cdot \left( \frac{3}{4}\right)^{\Theta\left( k \cdot \frac{n^{1+\varepsilon/10}}{p \cdot \log^2 n} \right)} \leq 2^{\Theta(n^{4/3}p^{1/3} \cdot c \log c)} \cdot n^{2\varepsilon} \cdot n^{\Theta(k)} \cdot 2^{-\Theta\left( k \cdot \frac{n^{1+\varepsilon/10}}{p \cdot \log^2 n} \right)} < 1,
    $$
    when $k \geq n^{1/3}p^{4/3}c \log c$. Then by the probabilistic method we may assume that no DAG $D \cap G^*$, where $D \in \mathcal{D}$, contains more than $n^{1/3}p^{4/3}c \log c$ distinct paths from any set $\Pi^*_{i, j}$, where $i, j \in [1, \alpha]$. 

    Now to see why Lemma \ref{lem:path_cover2} follows, suppose towards contradiction that there exists a DAG $D \in \mathcal{D}$ such that $D \cap G^*$ contains more than $n^{1/3 + 2\varepsilon}p^{4/3}c \log c$ distinct paths in $\Pi^*$ as subpaths. Then by an averaging argument, there must exist a set $\Pi^*_{i, j} \subseteq \Pi^*$, $i, j \in [1, \alpha]$, such that DAG $D \cap G^*$ contains more than 
    $$
    \frac{n^{1/3 + 2\varepsilon}p^{4/3}c \log c}{\alpha^2} = n^{1/3}p^{4/3}c \log c
    $$
    distinct paths in $\Pi^*_{i, j}$, contradicting our earlier discussion. 
\end{proof}

We are finally ready to prove our special case of Theorem \ref{thm:lb_m}.

\begin{proof}[Proof of Theorem \ref{thm:long}]
Let $\varepsilon > 0$ be a sufficiently small constant. We will choose our construction parameters $p$ and $c$ of $G^*(n, p, c)$ to be $p = n^{1+\varepsilon/100}$ and $c = 2$.  Notice that under this choice of construction parameters, graph $G^*$ and paths $\Pi^*$ have the following properties:
\begin{itemize}
    \item $\ell = \frac{n^{2/3}}{p^{1/3}} = \Theta\left( n^{1/3 -\varepsilon/300} \right)$ and $d = \frac{p^{2/3}}{n^{1/3}} = \Theta(n^{1/3+\varepsilon/150})$, 
    \item $|V(G^*)| = \Theta\left(\frac{n^2}{\ell}\right) = \Theta(n^{5/3 + \varepsilon/300})$,
    \item $|E(G^*)| = \Theta\left( \frac{n^2}{\ell} \cdot d \right) = \Theta(n^{2+\varepsilon/100})$, and
    \item $|\Pi^*| = \Theta(p^2) = \Theta(n^{2+\varepsilon/50})$. 
\end{itemize}
Let $N = |V(G^*)|$ and let $m = |E(G^*)|$. Let $D_1, \dots, D_k$ be an exact distance-preserving DAG cover with a set $E'$ of at most $|E'| \leq m^{1+\varepsilon/300} \leq |\Pi^*|/2$ additional edges.  Recall that we are assuming all additional edges $E' \subseteq TC(G^*)$ are long edges.  We define a subset $\Pi_1^* \subseteq \Pi^*$ of paths in $\Pi^*$ as follows
 $$\Pi_1^* = \{\pi^* \in \Pi^* \mid TC(\pi^*) \cap E' = \emptyset \}.$$ 
 We observe that $|\Pi_1^*| \geq |\Pi^*| - |E'| \geq |\Pi^*|/2$, since 
every long edge $e \in E'$ lies in the transitive closure of at most one path in $\Pi^*$  by Claim \ref{obs:long_edge}.

Now fix an $s \leadsto t$ path $\pi^* \in \Pi_1^*$. Since path $\pi^*$ is the unique shortest $s \leadsto t$ path in $G^*$,  $E' \cap TC(\pi^*) = \emptyset$,
and $D_1, \dots, D_k$ is an exact distance-preserving DAG cover of $G^*$,  path $\pi^*$ must be  contained in  DAG $D_i$ as a subgraph, for some $i \in [1, k]$. There are $|\Pi_1^*| = |\Pi^*|/2 = n^{2+\varepsilon/50}$ distinct paths in $\Pi_1^*$. By Lemma \ref{lem:path_cover2}, every DAG $D_i$ in our DAG cover $\{D_i\}_{i \in [1, k]}$ contains at most $O(n^{1/3+2\varepsilon}p^{4/3})$ distinct paths in $\Pi_1^*$. Then  the number of DAGs in our DAG cover is at least
$$
k \geq \frac{|\Pi_1^*|}{O(n^{1/3+2\varepsilon}p^{4/3})} = \frac{n^{2+\varepsilon/50}}{O(n^{1/3+2\varepsilon}p^{4/3})} \geq n^{1/3 - 3\varepsilon} \geq N^{1/6},
$$
where the final inequality follows by taking $\varepsilon>0$ to be sufficiently small. 
\end{proof}

Notice that our proof of Theorem \ref{thm:long} did not use the flexibility of construction parameter $c$, nor did it use the full power of the matching decomposition claimed in Lemma \ref{lem:match_decomp}. Both of these details will become relevant in the next subsubsection, where we will complete the proof of the full version of Theorem \ref{thm:lb_m}.


\subsubsection{Proof of Theorem \ref{thm:lb_m} in the general case}
We restate Theorem \ref{thm:lb_m} below for reference.

\lbm*

A key concept we will need in our analysis of the short edges in our DAG cover is the concept of path-edge incidences.

\begin{definition}[Path-edge incidences]
    Given a collection of paths $\Pi' \subseteq \Pi^*$ and a collection of edges $E' \subseteq E(G^*)$, we define the set of path-edge incidences $\mathcal{I}_{E', \Pi'}$ between $E'$ and $\Pi'$ as
    $$
    \mathcal{I}_{E', \Pi'} = \{(e, \pi) \in E' \times \Pi' \mid e \in TC(\pi)\}.
    $$
    \label{def:path-edge}
\end{definition}

At a high level, the reason that the concept of path-edge incidences will be useful for analyzing short edges is that 
 a short edge $e$ can only shortcut a path $\pi \in \Pi^*$ if $e \in TC(\pi)$. As a consequence, if we can argue that there exists many paths in $\Pi^*$ that have few path-edge incidences with short edges, then we can hopefully argue that the short edges will not affect our lower bound that much. 

  In the following claim, we will use Claim \ref{clm:small_paths2} to argue that given a  DAG $D \in \mathcal{D}$ and a set of additional edges $E' \in TC(D)$, if the number of path-edge incidences between a large subset $\Pi' \subseteq \Pi^*$ of $\Pi^*$ and the set of edges $E'$ is small, then $(D \cap G^*) \cup E'$ preserves the distances between the endpoints of paths in $\Pi^*$ with very low probability.

\begin{claim}[Similar to Claim \ref{clm:matching_bound}]
Fix a sufficiently small $\varepsilon > 0$, and assume that construction parameter $p$ is chosen so that $p \leq n^{1 + \varepsilon/100}$.
    Fix a DAG $D \in \mathcal{D}$. 
    Let $\Pi' \subseteq \Pi$ be a set of $|\Pi'| = p \cdot n^{-\varepsilon}$ paths in $\Pi$, where $\Pi$ is the set of base paths constructed in subsection \ref{subsec:base}. Let $\Pi'' \subseteq \Pi^{\times}$ be a set of $|\Pi''| = k \geq n^{1/3}p^{5/6}$  paths in $\Pi^{\times}$ contained in the rectangle $\Pi' \times \Pi'$, i.e.,
    $$
    \Pi'' \subseteq \Pi' \times \Pi' \subseteq \Pi^{\times}. 
    $$
    Additionally, let $\Pi^{(3)} = \psi[\Pi''] \subseteq \Pi^*$ denote the image of set $\Pi''$ under function $\psi$. Let $\gamma > 0$ be a sufficiently large constant, and let $A$ be the event that there exists a set of additional edges $E' \subseteq TC(D \cap G^*) \setminus E(D \cap G^*)$ such that:
    \begin{enumerate}
        \item Every edge in $E'$ is a short edge, 
        \item $|\mathcal{I}_{E', \Pi^{(3)}}| \leq \frac{\ell}{n^{\eps/50}} \cdot |\Pi^{(3)}|$, and
        \item  For every $s \leadsto t$ path $\pi \in \Pi^{(3)}$, DAG $(D \cap G^*) \cup E'$ exactly preserves the distance between $s$ and $t$ in $G^*$, i.e., 
        $$
        \dist_{(D \cap G^*) \cup E'}(s, t) = \dist_{G^*}(s, t).
        $$ 
    \end{enumerate}
    Then event $A$ occurs with probability at most
    $$
    \left(\frac{3}{4}\right)^{\Theta\left(k \cdot \frac{n^{1+\varepsilon/10}}{p \cdot \log^2 n}\right)}.
    $$
    \label{clm:matching_bound2}
\end{claim}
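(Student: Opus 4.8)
The plan is to mimic the structure of the proof of Claim~\ref{clm:matching_bound}, but to replace the single matching $M_1$ with the \emph{entire} collection of $\alpha = \Theta(d'') = \Theta(p^{2/3}/n^{1/3+\varepsilon/10})$ pairwise edge-disjoint matchings $M_1,\dots,M_\alpha$ provided by Lemma~\ref{lem:match_decomp}, each of size $\beta = \Theta(k\cdot n^{1+\varepsilon/10}/(p\log^2 n))$. First I would set up the same machinery: let $G'' = \graph(V(G^\times),\Pi'')$, let $J = J(n,p,\Pi'')$ be its matching graph, and invoke Lemma~\ref{lem:match_decomp} (its hypotheses are met since $|\Pi'|=p\cdot n^{-\varepsilon}$ and $|\Pi''|=k\ge n^{1/3}p^{5/6}$ and $p\le n^{1+\varepsilon/100}$). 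Each edge $(u,w)\in M_j$ lying in a subgraph $H_v$ of $J$ corresponds to a pair of edges $(u,v),(v,w)\in E(G'')$ contained in some common path $\pi\in\Pi''$, hence to a subpath $(s_i,t_i,x_i,y_i) = (\phi(u,v),\phi(v,w))$ of $\psi(\pi)\in\Pi^{(3)}$. The key new idea is that a \emph{short} edge $e\in E'$ can only help one of these subpaths if $e\in TC(\psi(\pi))$, i.e.\ $e$ contributes a path-edge incidence with $\psi(\pi)$; so the budget in Condition~2 of event $A$ limits how many of the $\alpha\beta$ pairs across all matchings can be ``bypassed.''

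The core counting step I would carry out is: across all $\alpha$ matchings there are $\alpha\beta = \Theta\!\big(\tfrac{p^{2/3}}{n^{1/3+\varepsilon/10}}\cdot k\cdot\tfrac{n^{1+\varepsilon/10}}{p\log^2 n}\big) = \Theta\!\big(k\cdot\tfrac{n^{2/3}}{p^{1/3}\log^2 n}\big) = \Theta\!\big(k\cdot \tfrac{\ell}{\log^2 n}\big)$ subpath-slots, each associated with a path in $\Pi^{(3)}$. By a pigeonhole/averaging argument, one of the $\alpha$ matchings, say $M_{j^*}$, has the property that at most a $O(1/n^{\varepsilon/100})$-fraction — or more carefully, at most $|\mathcal{I}_{E',\Pi^{(3)}}|/\alpha \le \tfrac{\ell}{n^{\varepsilon/50}}\cdot|\Pi^{(3)}|/\alpha$ — of its $\beta$ subpaths have \emph{any} short edge of $E'$ in the transitive closure of their containing path. (Here I use that each short edge in $E'$, being counted once per incident path in $\mathcal{I}_{E',\Pi^{(3)}}$, can "spoil" subpath-slots only via those incidences, and the $M_j$'s are edge-disjoint so the accounting across matchings does not double-count slots within a single matching.) Since $\beta = \Theta(k\,\ell \cdot n^{-\varepsilon/10}\cdot(\text{stuff}))$ dominates $\tfrac{\ell}{n^{\varepsilon/50}}|\Pi^{(3)}|/\alpha$ by a polynomial factor in $n$ (this is exactly where the footnote's assumption $|\Pi''|\ge n^{1/3}p^{5/6}$ and $p\le n^{1+\varepsilon/100}$ get used — to ensure $\beta \gg \beta/n^{\varepsilon/\Theta(1)}$ after subtracting the spoiled slots), there remain $\Theta(\beta)$ subpath-slots in $M_{j^*}$ whose containing paths have \emph{no} short edge of $E'$ in their transitive closure. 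For each such surviving slot, Condition~3 of event $A$ forces $(D\cap G^*)\cup E'$ to preserve $\dist_{G^*}(s_i,t_i)$; but since no short or long (there are none) edge of $E'$ lies in $TC$ of that subpath, this reduces to requiring $(s_i,t_i,x_i,y_i)\subseteq D\cap G^*$, exactly the event of Claim~\ref{clm:small_paths2}, occurring with probability $\le 3/4$. Because $M_{j^*}$ is a matching in $J$, the edges $(u_i,v_i),(v_i,w_i)$ over the surviving slots are all distinct, so $\phi$ assigns their clique-endpoints independently, making these $\Theta(\beta)$ events independent. Hence $\Pr[A]\le (3/4)^{\Theta(\beta)} = (3/4)^{\Theta(k\cdot n^{1+\varepsilon/10}/(p\log^2 n))}$, as claimed.

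The main obstacle I anticipate is the bookkeeping in the averaging step: one must argue that the short-edge budget $\tfrac{\ell}{n^{\varepsilon/50}}|\Pi^{(3)}|$ on \emph{path-edge incidences} translates into a bound on spoiled \emph{subpath-slots} that is small relative to the $\Theta(\beta)$ slots in a single matching. A path $\psi(\pi)\in\Pi^{(3)}$ may own many slots (up to $2\ell$ of them, one per consecutive edge-pair), and a single short edge $e\in TC(\psi(\pi))$ could in principle sit "near" several of $\pi$'s slots — but in fact a short edge $e$ between two adjacent cliques $K_c^u,K_c^v$ with $(u,v)\in E(G^\times)$ corresponds to a single edge $(u,v)$ of $G^\times$, which participates in at most two consecutive positions of $\psi(\pi)$ (as entry and exit of a node), so $e$ spoils at most a constant number of slots of $\pi$; summing over all $|\mathcal{I}_{E',\Pi^{(3)}}|$ incidences bounds the total spoiled slots by $O(|\mathcal{I}_{E',\Pi^{(3)}}|) \le O(\tfrac{\ell}{n^{\varepsilon/50}}|\Pi^{(3)}|)$, and averaging over the $\alpha = \Theta(d'')$ matchings gives $O(\tfrac{\ell}{n^{\varepsilon/50}}|\Pi^{(3)}|/d'')$ spoiled slots in the best matching — which is $o(\beta)$ by the polynomial gap. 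Getting the exponents in this last inequality to line up (using $d'' = 2p^{2/3}/n^{1/3+\varepsilon/10}$, $\ell = n^{2/3}/p^{1/3}$, $|\Pi^{(3)}| = k$, and $p\le n^{1+\varepsilon/100}$) is the one place where I would need to be careful with the arithmetic, but it is routine once the combinatorial reduction is in place. Everything else — the independence argument, the appeal to Claim~\ref{clm:small_paths2}, and the final product bound — is a direct transcription of the proof of Claim~\ref{clm:matching_bound}.
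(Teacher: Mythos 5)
Your proposal follows the paper's proof of \cref{clm:matching_bound2} essentially step for step: invoke \cref{lem:match_decomp} to obtain the $\alpha$ pairwise edge-disjoint matchings $M_1,\dots,M_\alpha$ of size $\beta$, use the short-edge incidence budget together with edge-disjointness to average down to a single matching $M_{j^*}$ with at most $|M_{j^*}|/2$ ``spoiled'' slots, and then run the independence argument of \cref{clm:small_paths2} on the remaining $\Theta(\beta)$ slots to get $\Pr[A]\le(3/4)^{\Theta(\beta)}$. Your ``bookkeeping'' paragraph spells out the multiplicity accounting (a single short edge can spoil only $O(1)$ slots of a given $\psi(\pi)$, so $\sum_i|\mathcal{I}_{E',M_i}| = O(|\mathcal{I}_{E',\Pi^{(3)}}|)$) that the paper leaves implicit in its two terse ``observations,'' which is exactly the right justification for the averaging step.
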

\begin{proof}
Fix a DAG $D \in \mathcal{D}$ and sets $\Pi', \Pi'',$ and $\Pi^{(3)}$ as described in the statement of Claim \ref{clm:matching_bound2}. 
Let graph $G'' = \graph(V(G^{\times}), \Pi'')$ be the graph $G^{\times}$ induced on paths in $\Pi''$, and let $J = J(n, p, \Pi'')$ be the matching graph of $G''$ as defined in Definition \ref{def:matching_graph}. By Lemma \ref{lem:match_decomp}, the matching graph $J$ admits a collection of $\alpha$ matchings $M_1, \dots, M_{\alpha}$, with each matching of size $|M_i| = \beta$, where $\alpha = \Theta\left(\frac{p^{2/3}}{n^{1/3+\varepsilon/10}}\right)$ and $\beta = \Theta\left( k \cdot \frac{n^{1+\varepsilon/10}}{p \cdot \log^2 n}  \right)$.

We can interpret each matching $M_i$, $i \in [1, \alpha]$, as a collection of paths each of length one in $J$. Under the interpretation of $M_i$ as a collection of paths, we can reason about the set of path-edge incidences $\mathcal{I}_{E', M_i}$ between $M_i$ and a set of short edges $E' \subseteq TC(D \cap G^*) \setminus E(D \cap G^*)$. We claim that for every set of short edges $E' \subseteq  TC(D \cap G^*) \setminus E(D \cap G^*)$ such that $|\mathcal{I}_{E', \Pi^{(3)}}| \leq \frac{\ell}{n^{\varepsilon/50}} \cdot |\Pi^{(3)}|$, 
there must exist a matching $M_i$, $i \in [1, \alpha]$, such that $|\mathcal{I}_{E', M_i}| \leq |M_i|/2$. This claim emerges from the following observations:
\begin{enumerate}
    \item By the definition of matching graph $J$, we must have that for all $i \in [1, \alpha]$, $\mathcal{I}_{E', M_i} \subseteq \mathcal{I}_{E', \Pi^{(3)}}$.
    \item Since matchings $M_i, M_j$ are edge-disjoint for $i \neq j$, we must have that 
    $$
    \left| \cup_{i \in [1, \alpha]}M_i \right| = \alpha \beta = \Theta\left( 
 \frac{\ell}{\log^2 n} \cdot |\Pi^{(3)}|\right) \gg|\mathcal{I}_{E', \Pi^{(3)}}|.
    $$
\end{enumerate}
By combining these two observations and using a simple averaging argument, we can  show there exists a matching $M_i$, $i \in [1, \alpha]$, such that $|\mathcal{I}_{E', \Pi^{(3)}}| \leq \frac{\ell}{n^{\varepsilon/50}} \cdot |\Pi^{(3)}|$. 
Let $M_i' \subseteq M_i$ denote the set 
$$
M_i' = \{e \in M_i \mid e \not \in E'\}.
$$
By the above discussion, $$|M_i'| \geq |M_i|/2 \geq \beta/2 =  \Theta\left(k \cdot \frac{n^{1+\varepsilon/10}}{p \cdot \log^2 n}\right).$$
Now note that no edge in matching $M_i'$ coincides with a short edge in $E'$. 
As discussed in the proof of Claim \ref{clm:matching_bound}, each edge $e$ in matching $M_i'$ corresponds to a path $\pi$ on four nodes in graph $G^*$. 
Moreover, by our choice of $M_i'$, $TC(\pi) \cap E' = \emptyset$.
Then we must necessarily have that path $\pi$ is a subgraph of DAG $D \cap G^*$, which holds with probability at most $3/4$ by Claim \ref{clm:small_paths2}. Then just as in the proof of Claim \ref{clm:matching_bound}, we have that 
$$
\Pr[A] \leq \left( \frac{3}{4} \right)^{|M_i'|} = \left( \frac{3}{4} \right)^{\Theta\left(k \cdot \frac{n^{1+\varepsilon/10}}{p \cdot \log^2 n}\right)}.
$$
\end{proof}

We will now combine the very strong probability bound proved in Claim \ref{clm:matching_bound2} with a union bound to argue that there does not exist a DAG $D \cap G^*$, where $D \in \mathcal{D}$, that when unioned with a set of additional edges $E' \subseteq TC(D \cap G^*)$,  can exactly preserve many distances in $G^*$, if we assume that paths in $\Pi^*$ have few of 
a certain set of path-edge incidences with $E'$.

\begin{lemma}[Similar to Lemma \ref{lem:path_cover2}]
    Fix a sufficiently small $\varepsilon > 0$, and assume that construction parameter $p$ is chosen so that $p \leq n^{1 + \varepsilon/100}$. 
    With nonzero probability, there does not exist DAG $D \in \mathcal{D}$, a collection of edges $E' \subseteq TC(D \cap G^*) \setminus E(D \cap G^*)$,  
    and a collection of paths $\Pi' \subseteq \Pi^*$ such that 
    \begin{enumerate}
        \item Every edge in $E'$ is a short edge,
        \item $|\Pi'| = n^{1/3 +2\varepsilon}p^{4/3} c \log c$,  
        \item $|\mathcal{I}_{E', \{\pi\}}| \leq \frac{\ell}{n^{\varepsilon/50}}$ for all $\pi \in \Pi'$,  and
        \item For every $s \leadsto t$ path $\pi \in \Pi'$, DAG $(D \cap G^*) \cup E'$ exactly preserves the distance between $s$ and $t$ in $G^*$, i.e., 
        $$
        \dist_{(D \cap G^*) \cup E'}(s, t) = \dist_{G^*}(s, t). 
        $$
    \end{enumerate}

    \label{lem:path_cover3}
\end{lemma}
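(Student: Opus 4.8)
The plan is to prove Lemma \ref{lem:path_cover3} by the same union-bound strategy used for Lemma \ref{lem:path_cover2}, now invoking Claim \ref{clm:matching_bound2} in place of Claim \ref{clm:matching_bound}. The extra hypotheses (that $E'$ consists only of short edges and that each path in $\Pi'$ has at most $\frac{\ell}{n^{\varepsilon/50}}$ path-edge incidences with $E'$) are exactly the hypotheses under which Claim \ref{clm:matching_bound2} gives its strong probability bound, so the analytic heart of the argument is already available; what remains is the bookkeeping of partitioning $\Pi^*$ into rectangles and summing the failure probabilities over all relevant choices.

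Concretely, I would first partition the base path set $\Pi$ into $\alpha = n^\varepsilon$ blocks $\Pi_1,\dots,\Pi_\alpha$, each of size at most $|\Pi|/\alpha \le p\cdot n^{\varepsilon}$, and for each pair $(i,j)$ set $\Pi^{\times}_{i,j} = \Pi_i\times\Pi_j$ and $\Pi^*_{i,j} = \psi[\Pi^{\times}_{i,j}]$, exactly as in the proof of Lemma \ref{lem:path_cover2}; these $\alpha^2$ sets partition $\Pi^*$, and each one sits inside a rectangle $\psi[\Pi_i\times\Pi_j]$ to which Claim \ref{clm:matching_bound2} applies. Then I would argue: with nonzero probability, there is no DAG $D\in\mathcal D$, no set $E'$ of short edges in $TC(D\cap G^*)\setminus E(D\cap G^*)$, and no subset $\Pi' \subseteq \Pi^*_{i,j}$ of size $k = n^{1/3}p^{4/3}c\log c$ satisfying conditions (3) and (4) of the lemma restricted to $\Pi'$. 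Here one must be a little careful about the quantifier over $E'$: Claim \ref{clm:matching_bound2} already has an existential quantifier over $E'$ baked into its event $A$, so the union bound only ranges over $D\in\mathcal D$, over the $\alpha^2$ rectangles, and over the $\binom{|\Pi^*_{i,j}|}{k}$ choices of $\Pi'$ (using that if $\Pi'$ satisfies (3) for the whole $E'$ then in particular $|\mathcal I_{E',\Pi'}| \le \frac{\ell}{n^{\varepsilon/50}}|\Pi'|$, matching hypothesis (2) of Claim \ref{clm:matching_bound2}). The resulting bound is
$$
|\mathcal D|\cdot\alpha^2\cdot\binom{|\Pi^*_{i,j}|}{k}\cdot\left(\frac34\right)^{\Theta\left(k\cdot\frac{n^{1+\varepsilon/10}}{p\log^2 n}\right)} \le 2^{\Theta(n^{4/3}p^{1/3}c\log c)}\cdot n^{2\varepsilon}\cdot n^{\Theta(k)}\cdot 2^{-\Theta\left(k\cdot\frac{n^{1+\varepsilon/10}}{p\log^2 n}\right)},
$$
which is $<1$ for $k \ge n^{1/3}p^{4/3}c\log c$ and sufficiently small $\varepsilon$, since the $p\le n^{1+\varepsilon/100}$ hypothesis makes the exponent $k\cdot\frac{n^{1+\varepsilon/10}}{p\log^2 n}$ dominate both $\log|\mathcal D| = \Theta(n^{4/3}p^{1/3}c\log c)$ and $k\log n$. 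Finally, I would run the same averaging argument as in Lemma \ref{lem:path_cover2}: if some $D$, some short-edge set $E'$, and some $\Pi'\subseteq\Pi^*$ of size $n^{1/3+2\varepsilon}p^{4/3}c\log c$ satisfied (1)–(4), then by pigeonhole $\Pi'$ would contain at least $\frac{n^{1/3+2\varepsilon}p^{4/3}c\log c}{\alpha^2} = n^{1/3}p^{4/3}c\log c$ paths from a single block $\Pi^*_{i,j}$, and these would form a forbidden configuration for that block (note hypothesis (3) per-path trivially implies the aggregate bound needed), contradicting the previous paragraph.

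The one subtlety that needs care — and the main obstacle in writing this cleanly — is making sure the comparison between the $|\mathcal D|$ factor and the exponential savings goes through with the parameters at hand: $|\mathcal D| = 2^{\Theta(n^{4/3}p^{1/3}c\log c)}$ whereas the savings is $2^{\Theta(k\cdot n^{1+\varepsilon/10}/(p\log^2 n))}$ with $k = n^{1/3}p^{4/3}c\log c$, so the savings exponent is $\Theta\!\left(n^{4/3+\varepsilon/10}p^{1/3}c\log c/\log^2 n\right)$, which beats $\log|\mathcal D|$ by a factor $n^{\varepsilon/10}/\log^2 n \gg 1$; I would state this comparison explicitly rather than hide it in a $\Theta(\cdot)$, since it is where the $p\le n^{1+\varepsilon/100}$ and ``sufficiently small $\varepsilon$'' conditions are actually consumed. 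Everything else — the partition into rectangles, the averaging step, the translation of the per-path incidence bound in hypothesis (3) into the aggregate bound required by Claim \ref{clm:matching_bound2} — is routine and parallels Lemma \ref{lem:path_cover2} essentially verbatim, so I would keep those parts brief and refer back to that proof.
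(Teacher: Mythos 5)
Your proposal is correct and follows essentially the same route as the paper: partition $\Pi$ into $\alpha=n^{\varepsilon}$ blocks, union-bound over $D\in\mathcal D$, the $\alpha^2$ rectangles $\Pi^*_{i,j}$, and the $\binom{|\Pi^*_{i,j}|}{k}$ choices of $\Pi'$ (with $E'$ absorbed into the event of Claim \ref{clm:matching_bound2}, exactly as you note), and then finish with the same averaging argument. Your explicit comparison of $\log|\mathcal D|$ against the exponential savings, and your observation that the per-path incidence bound trivially implies the aggregate bound required by Claim \ref{clm:matching_bound2}, are both correct and match the paper's accounting.
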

\begin{proof}
 Our strategy for proving this lemma is to use Claim \ref{clm:matching_bound2} and apply the union bound. Before we can do this, we need to develop some additional notation. 

    Let $\alpha = n^{\varepsilon}$. Let $\Pi_1, \dots, \Pi_{\alpha}$ be a partition of the set of base paths $\Pi$ into $\alpha$ sets $\Pi_i$ each of size $|\Pi_i| \leq |\Pi|/\alpha \leq p \cdot n^{\varepsilon}$. For all $i, j \in [1, \alpha]$, we define the subset $\Pi_{i, j}^{\times} \subseteq \Pi^{\times}$ as follows:
    $$
    \Pi_{i, j}^{\times} := \Pi_i \times \Pi_j.
    $$
    For all $i, j \in [1, \alpha]$, we define $\Pi^*_{i, j} \subseteq \Pi^*$ to be the image of set $\Pi_{i, j}^{\times}$ under function $\psi$. Formally, $\Pi^*_{i, j} = \psi[\Pi^{\times}_{i,j}]$. Notice that the collection of sets $\{\Pi^*_{i, j}\}_{i, j \in [1, \alpha]}$ is a partition of set $\Pi^*$.  
    Additionally, since each set $\Pi^*_{i, j}$ is contained inside a small rectangle $\psi[\Pi_{i} \times \Pi_j]$, and each path $\pi \in \Pi'$ has few path-edge incidences with $E'$, we can directly apply Claim \ref{clm:matching_bound2} on each set of paths $\Pi^*_{i, j}$. 

    For all $i, j \in [1, \alpha]$, let $A_{i, j}$ be the event that
    \begin{enumerate}
        \item there exists a DAG $D \in \mathcal{D}$,
        \item there exists a subset of paths $\Pi' \subseteq \Pi^*_{i, j}$ of size $|\Pi'|= k$, and 
        \item there exists a collection of short edges $E' \subseteq TC(D \cap G^*) \setminus E(D \cap G^*)$,
    \end{enumerate}
    such that
    \begin{enumerate}
  \item $|\mathcal{I}_{E', \Pi'}| \leq \frac{\ell}{ n^{\eps/50}} \cdot |\Pi'|$, and
        \item  For every $s \leadsto t$ path $\pi \in \Pi'$, DAG $(D \cap G^*) \cup E'$ exactly preserves the distance between $s$ and $t$ in $G^*$, i.e., 
        $$
        \dist_{(D \cap G^*) \cup E'}(s, t) = \dist_{G^*}(s, t).
        $$
    \end{enumerate}
By Claim \ref{clm:matching_bound2} and the union bound, we have that
\begin{align*}
\Pr\left[ \bigcup_{i, j \in [1, \alpha]} A_{i, j} \right] & \leq \sum_{i, j \in [1, \alpha]}\Pr[A_{i, j}] \leq \alpha^2 \cdot |\mathcal{D}| \cdot {|\Pi_{i, j}^*| \choose k} \cdot  \left(\frac{3}{4}\right)^{\Theta\left(k \cdot \frac{n^{1+\varepsilon/10}}{p \cdot \log^2 n}\right)}\\
& \leq n^{2\varepsilon}  \cdot 2^{\Theta(n^{4/3}p^{1/3} \cdot c \log c)} \cdot n^{\Theta(k)} \cdot 2^{-\Theta\left( k \cdot \frac{n^{1+\varepsilon/10}}{p \cdot \log^2 n} \right)} \\
& < 1,
\end{align*}
when $k \geq n^{1/3}p^{4/3}c \log c$. 
Then by the probabilistic method, we may assume that event $A_{i, j}$ does not hold for any $i, j \in [1, \alpha]$, when we specifically choose  $k = n^{1/3}p^{4/3}c \log c$ in the definition of $A_{i, j}$.

    Now to see why Lemma \ref{lem:path_cover3} follows, suppose towards contradiction that there exists a DAG $D \in \mathcal{D}$ and a set of short edges $E' \subseteq TC(D \cap G^*) \setminus E(D \cap G^*)$ such that $(D \cap G^*) \cup E'$ exactly preserves the distances between the endpoints of a set $\Pi'$ of $|\Pi'| > n^{1/3 + 2\varepsilon}p^{4/3}c \log c$ distinct paths in $\Pi^*$, where each path $\pi \in \Pi'$ has at most $\frac{\ell}{n^{\varepsilon/50}}$ incidences with $E'$. Then by an averaging argument, there must exist a set $\Pi^*_{i, j} \subseteq \Pi^*$, with $i, j \in [1, \alpha]$, such that DAG $(D \cap G^*) \cup E'$ preserves more than
    $$
    \frac{n^{1/3 + 2\varepsilon}p^{4/3}c \log c}{\alpha^2} = n^{1/3}p^{4/3}c \log c
    $$
    distinct shortest path distances between the endpoints of paths  in $\Pi^*_{i, j} \cap \Pi'$, contradicting our assumption that event $A_{i, j}$ does not occur. 
\end{proof}

Lemma \ref{lem:path_cover3} gives us a very powerful tool to argue that DAGs in our DAG cover cannot cover many exact shortest path distances simultaneously. In order to apply Lemma \ref{lem:path_cover3}, we will need to bound the number of possible path-edge incidences between the set of short edges in our DAG cover and our set of paths $\Pi^*$.

For the remainder of the proof, we will assume that construction parameter $c$ is chosen so that $c \leq \frac{d^{1/2}}{\log n}$. 

\begin{claim}
    With high probability, every every node $v \in V(G^*)$ has degree at most $\deg_{G^*}(v) \leq O(d/c)$ in $G^*$. 
    \label{clm:low_degree}
\end{claim}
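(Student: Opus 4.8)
The plan is to fix an arbitrary node $x \in V(G^*)$, note that $x$ lies in a unique clique $K_c^v$ for some $v \in V(G^{\times})$, and split $\deg_{G^*}(x)$ into two contributions: edges inside the clique $K_c^v$, and inter-clique edges. The clique contribution is deterministically $2(c-1) = O(c)$, and since the hypothesis $c \le d^{1/2}/\log n$ gives $c^2 \le d$, we have $c = O(d/c)$, so this part is $O(d/c)$. The nontrivial part is the inter-clique edges: by Lemma~\ref{lem:prod_graph} the graph $G^{\times}$ has maximum degree $\Theta(d)$, so at most $\deg_{G^{\times}}(v) \le O(d)$ edges of $G^{\times}$ are incident to $v$, and in the construction of $G^*$ each such edge independently attaches its $K_c^v$-endpoint to a uniformly random vertex of $K_c^v$. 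Hence the number $X_x$ of inter-clique edges of $G^*$ incident to $x$ is a sum of at most $O(d)$ mutually independent Bernoulli$(1/c)$ indicators, with $\mathbb{E}[X_x] = \deg_{G^{\times}}(v)/c \le O(d/c)$.

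Next I would apply a standard Chernoff/Bernstein concentration bound for sums of independent $\{0,1\}$ random variables: taking a deviation parameter $t = \Theta(d/c) \ge \mathbb{E}[X_x]$, we get $\Pr[X_x \ge 2 t] \le \exp(-\Omega(d/c))$. The crucial quantitative point, which I would verify up front, is that $d/c$ is polynomially large (in particular $\omega(\log n)$): from $c \le d^{1/2}/\log n$ we get $d/c \ge d^{1/2}\log n$, and since $p \ge n^{2/3}$ (Lemma~\ref{lem:base_graph}) we have $d = p^{2/3}/n^{1/3} \ge n^{1/9}$, so $d/c \ge n^{1/18}\log n$. Therefore $\exp(-\Omega(d/c)) \le \exp(-\Omega(n^{1/18}\log n))$, which is smaller than any fixed inverse polynomial in $n$.

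Finally I would union-bound over all $|V(G^*)| = \Theta(c\, n^2/\ell) = \mathrm{poly}(n)$ nodes: the probability that some $x$ has $X_x \ge \Theta(d/c)$ is at most $|V(G^*)| \cdot \exp(-\Omega(n^{1/18}\log n)) = n^{-\omega(1)}$. On the complementary high-probability event, every node $x$ has $\deg_{G^*}(x) \le 2(c-1) + X_x = O(c) + O(d/c) = O(d/c)$, again using $c = O(d/c)$, which is exactly the asserted bound.

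I do not expect a genuine obstacle here — it is a textbook Chernoff-plus-union-bound argument — but the one point that needs care is confirming that the concentration parameter $d/c$ is large enough to dominate the union bound over all $\mathrm{poly}(n)$ vertices; this is precisely where the hypotheses $c \le d^{1/2}/\log n$ and $p \ge n^{2/3}$ enter. A minor bookkeeping matter is to note that the per-edge endpoint choices are genuinely mutually independent across the $\le \deg_{G^{\times}}(v)$ edges incident to $v$, which is immediate from the construction of $G^*$.
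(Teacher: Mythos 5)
Your proposal is correct and follows essentially the same route as the paper's proof: decompose $\deg_{G^*}(x)$ into the (deterministic) clique contribution and the (random) inter-clique contribution, observe the latter is a sum of independent Bernoullis with mean $O(d/c)$, apply a Chernoff bound, and union-bound over vertices. The only stylistic difference is that the paper splits into two cases — when $\deg_{G^\times}(u)/c \le 10\log n$ it uses a purely deterministic bound $\deg_{G^*}(v)\le \deg_{G^\times}(u)+d/c \le 10c\log n + d/c = O(d/c)$ (via $c^2\log n \le d$), and only invokes Chernoff when the expectation exceeds $10\log n$ so the standard multiplicative form applies — whereas you avoid the case split by using the deviation-based tail $\Pr[X \ge \Theta(d/c)] \le \exp(-\Omega(d/c))$ and then checking that $d/c \ge d^{1/2}\log n \ge n^{1/18}\log n$ is super-logarithmic, which uniformly dominates the $\mathrm{poly}(n)$ union bound. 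Both are correct; your version is marginally cleaner but needs the deviation form of Chernoff (valid since the threshold $\Theta(d/c)$ can be taken at least a constant factor above the mean), while the paper's version needs only the textbook $\Pr[X \ge 2\mu] \le \exp(-\Omega(\mu))$ at the cost of a short case analysis.
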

\begin{proof}
    Node $v$ is in a clique subgraph $K_c^u$ in $G^*$ for some $u \in V(G^{\times})$.
    Node $v$ has at most $c-1 \leq d/c$ edges to other nodes in $K_c^u$. Additionally, by Lemma \ref{lem:prod_graph}, $\deg_{G^{\times}}(u) = O(d)$. Each edge $(u, w)$ incident to node $u$ in $G^{\times}$ is reattached to node $v \in V(K_c^u)$ in $G^*$ independently with probability $1/c$, by the construction of $G^*$. Then the expected size of $\deg_{G^*}(v)$ is $\deg_{G^{\times}}(u) / c = O(d/c)$. If $\deg_{G^{\times}}(u) / c \leq 10 \log n$, then $$\deg_{G^*}(v) \leq \deg_{G^{\times}}(u) + d/c \leq c \cdot 10 \log n + d/c  = O(d/c).$$
    Otherwise, if $\deg_{G^{\times}}(u) / c \geq  10 \log n$, then we can apply the Chernoff bound to argue that $\deg_{G^*}(v) = O(d/c)$ with high probability. We complete the proof by union bounding over all $v \in V(G^*)$. 
\end{proof}

We will use Claim \ref{clm:low_degree} to argue that every short edge $e \in TC(G^*) \setminus E(G^*)$ cannot appear in the transitive closure of too many paths in $\Pi^*$.

\begin{claim}
    With high probability, every short edge $e \in TC(G^*) \setminus E(G^*)$ is contained in the transitive closure of at most $O(d/c)$ distinct paths in $\Pi^*$. 
    \label{obs:short}
\end{claim}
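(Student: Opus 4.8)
The plan is to fix a short edge $e = (a, b)$, where $a \in V(K_c^u)$, $b \in V(K_c^v)$ and $(u, v) \in E(G^{\times})$, bound the number of paths $\pi^* = \psi(\pi^{\times}) \in \Pi^*$ with $e \in TC(\pi^*)$, and then union-bound over all short edges. The first step is a structural observation. Since $u$ and $v$ lie in consecutive layers of the layered DAG $G^{\times}$ and every product path has exactly one node per layer, a path $\pi^{\times}$ whose image $\pi^*$ has $a$ before $b$ must traverse the unique edge $(u, v)$; hence $\pi^*$ traverses $\phi(u, v) = (u^*, v^*)$, so it leaves $K_c^u$ at $u^*$ and enters $K_c^v$ at $v^*$. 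Consequently $\pi^*$ meets $K_c^u$ only in $\{\mathrm{in}_u(\pi^*), u^*\}$ and meets $K_c^v$ only in $\{v^*, \mathrm{out}_v(\pi^*)\}$, where $\mathrm{in}_u(\pi^*)$ is the $K_c^u$-endpoint of $\phi$ applied to the edge of $\pi^{\times}$ immediately before $u$ and $\mathrm{out}_v(\pi^*)$ is the $K_c^v$-endpoint of $\phi$ applied to the edge of $\pi^{\times}$ immediately after $v$. Let $\mathcal{P} \subseteq \Pi^{\times}$ be the set of product paths traversing $(u, v)$. Since $(u,v)$ pins down one base-path edge in one coordinate and a single base-graph vertex in the other, and base paths are pairwise edge-disjoint (Condition 5 of Lemma \ref{lem:base_graph}), we get $|\mathcal{P}| \le d$ (the number of base paths through a fixed vertex is at most its degree).

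Next I would run a two-case analysis driven by the fact that $e \notin E(G^*)$, which (since $(u^*,v^*)$ is the only $G^*$-edge between $K_c^u$ and $K_c^v$) forces $a \ne u^*$ or $b \ne v^*$. Suppose $a \ne u^*$. Then $\pi^{\times} \in \mathcal{P}$ contributes only if $u$ is not a first-layer node and $a = \mathrm{in}_u(\psi(\pi^{\times}))$. The key sub-claim is that the edges of the paths in $\mathcal{P}$ immediately preceding $u$ are pairwise distinct: if two product paths shared that edge, they would share both edges incident to $u$, hence three distinct nodes (one in each of three consecutive layers), contradicting Condition 5 of Lemma \ref{lem:prod_graph}. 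Distinct edges of $G^{\times}$ receive independent $\phi$-images, so over $\pi^{\times} \in \mathcal{P}$ the values $\mathrm{in}_u(\psi(\pi^{\times}))$ are independent and uniform in $V(K_c^u)$, and independent of $\phi(u,v)$, so conditioning on $a \ne u^*$ is harmless; thus the number of them equal to $a$ is stochastically dominated by $\mathrm{Bin}(d, 1/c)$, which is $O(d/c)$ except with probability $\exp(-\Omega(d/c))$ by a Chernoff bound. The remaining case $a = u^*$, $b \ne v^*$ is handled symmetrically using $\mathrm{out}_v$, the edges immediately after $v$, and the same Condition-5 argument.

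Finally I would union-bound. There are at most $c^2 \cdot |E(G^{\times})| = \mathrm{poly}(n)$ short edges, and using the standing hypothesis $c \le d^{1/2}/\log n$ together with $d = \Theta(p^{2/3}/n^{1/3}) = \Omega(n^{1/9})$ (so $d/c \ge d^{1/2}\log n = \omega(\log n)$), each failure probability $\exp(-\Omega(d/c))$ is $n^{-\omega(1)}$; hence the union bound still tends to $0$, and with high probability every short edge lies in the transitive closure of $O(d/c)$ paths of $\Pi^*$. I expect the main obstacle to be the independence step underlying the Chernoff bound: one must verify that $\mathrm{in}_u$ (resp. $\mathrm{out}_v$) over the paths of $\mathcal{P}$ are genuinely independent — which requires applying Condition 5 of Lemma \ref{lem:prod_graph} to exactly the pair of edges incident to $u$ (resp. $v$) — and that conditioning on which case we are in, an event determined solely by $\phi(u,v)$, does not disturb this independence; the rest is bookkeeping about how $\psi$ acts inside the cliques.
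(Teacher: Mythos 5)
Your proposal is correct and follows essentially the same approach as the paper: it identifies that paths in $\Pi^{\times}$ through $(u,v)$ have pairwise-distinct edges adjacent to $u$ (Condition~5 of Lemma~\ref{lem:prod_graph}), whose $\phi$-images are independent and uniform over the clique, and then applies a Chernoff bound plus a union bound over short edges. The only differences are cosmetic: the paper packages the concentration step as the separate Claim~\ref{clm:low_degree} (a vertex-degree bound on $G^*$) and argues via the outgoing edges from $K_c^v$ under the assumption $y \neq v^*$, whereas you re-derive the concentration inline and argue via the incoming edges to $K_c^u$ under the symmetric assumption $a \neq u^*$.
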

\begin{proof}
    Let $(x, y) \in TC(G^*) \setminus E(G^*)$ be a short edge. Then $(x, y) \in V(K_c^u) \times V(K_c^v)$ for nodes $u, v \in V(G^{\times})$ such that $(u, v) \in E(G^{\times})$. Moreover, since $(x, y) \not \in E(G^*)$, we must have that $(x, y) \neq \phi(u, v) \in E(G^*)$. In particular, let $\phi(u, v) = (u^*, v^*) \in E(G^*)$. We must have that either $x \neq u^*$ or $y \neq v^*$. Assume without loss of generality that $y \neq v^*$ (the other case is identical). 

    Let $\pi_1, \dots \pi_k$ denote the set of paths in $\Pi^{\times}$ that contain edge $(u, v)$. Let $(v, w_i) \in E(G^{\times})$ denote the edge that path $\pi_i$ takes after reaching node $v$ in $G^{\times}$. 
    By Property 5 of Lemma \ref{lem:prod_graph}, any two paths $\pi_i, \pi_j$ with $i \neq j$  must travel along different edges $(v, w_i)$ and $(v, w_j)$, where $w_i \neq w_j$, from node $v$. 
    This in turn implies that any two paths $\psi(\pi_i), \psi(\pi_j) \in \Pi^*$ with $i \neq j$ must travel along different edges $\phi((v, w_i))$ and $\phi((v, w_j))$ from clique $K_c^v$ in $G^*$. 
    Additionally, if edge $(x, y)$ is in the transitive closure of the image $\psi(\pi_i)$ of a path $\pi_i \in \Pi^{\times}$, then necessarily $\phi((v, w_i)) \in \{y\} \times V(K_c^{w_i})$ because $y \neq v^*$ and every path $\psi(\pi_i) \in \Pi^*$ contains at most two nodes in any clique subgraph $K_c^v$. 
    
    By Claim \ref{clm:low_degree}, we have that $\deg_{G^*}(y) = O(d/c)$ with high probability. Then at most $O(d/c)$ distinct paths in $\Pi^*$ can use edge $(x, y)$.
\end{proof}

The following claim is immediate from Definition \ref{def:path-edge} and \Cref{obs:short}. 

\begin{claim}
With high probability, for every collection  $E' \subseteq TC(G^*) \setminus E(G^*)$ of short edges in $TC(G^*)$,  the number of path-edge incidences between $\Pi^*$ and $E'$ is at most
$$
| \mathcal{I}_{E', \Pi^*}| \leq |E'| \cdot O(d/c). 
$$
\label{clm:path-edge-inc}
\end{claim}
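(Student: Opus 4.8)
The plan is to deduce this claim directly from \Cref{obs:short} by decomposing the incidence set edge-by-edge. First I would recall that, by Definition \ref{def:path-edge}, the set of path-edge incidences between $E'$ and $\Pi^*$ is
$$
\mathcal{I}_{E', \Pi^*} = \{(e, \pi) \in E' \times \Pi^* \mid e \in TC(\pi)\},
$$
so it can be written as the disjoint union over $e \in E'$ of the fibers $\{\pi \in \Pi^* \mid e \in TC(\pi)\}$. Hence
$$
|\mathcal{I}_{E', \Pi^*}| = \sum_{e \in E'} \bigl|\{\pi \in \Pi^* \mid e \in TC(\pi)\}\bigr|.
$$

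Next I would invoke \Cref{obs:short}, which guarantees that with high probability \emph{every} short edge $e \in TC(G^*) \setminus E(G^*)$ lies in the transitive closure of at most $O(d/c)$ distinct paths in $\Pi^*$. The key point to emphasize is that this is a single high-probability event over the randomness of $G^*$ that bounds the multiplicity of all short edges simultaneously; in particular, on this event the bound holds for every short edge appearing in an arbitrary set $E'$, with no additional union bound needed over the (exponentially many) choices of $E'$. Conditioning on this event, each fiber in the sum above has size $O(d/c)$, and since $E'$ consists only of short edges we obtain $|\mathcal{I}_{E', \Pi^*}| \leq |E'| \cdot O(d/c)$, as claimed.

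There is essentially no obstacle here: the statement is an immediate bookkeeping consequence of \Cref{obs:short}, and the only point requiring (minor) care is the quantifier order — that \Cref{obs:short} must be read as ``w.h.p., for all short edges $e$, \dots'' rather than ``for all $e$, w.h.p., \dots'' — which is exactly how it is stated. The remaining work in the paper (bounding $|E'|$ and combining with \Cref{lem:path_cover3}) is deferred to the subsequent argument.
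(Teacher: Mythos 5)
Your proof is correct and matches the paper's approach exactly: the paper states that the claim is ``immediate from Definition~\ref{def:path-edge} and \Cref{obs:short},'' and the edge-by-edge decomposition of $\mathcal{I}_{E',\Pi^*}$ combined with the uniform (over all short edges) high-probability bound from \Cref{obs:short} is precisely the intended bookkeeping. Your remark on the quantifier order is a fair and accurate observation about why no extra union bound over choices of $E'$ is needed.
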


We are almost ready to prove Theorem \ref{thm:lb_m}.

\paragraph{Specifying our construction parameters.}
Let $\varepsilon > 0$ be a sufficiently small constant. We will choose our construction parameters $p$ and $c$ of $G^*(n, p, c)$ to be $p = n^{1+\varepsilon/100}$ and $c =  n^{3\varepsilon/100}$.  Notice that under this choice of construction parameters, graph $G^*$ and paths $\Pi^*$ have the following properties:
\begin{itemize}
    \item $\ell = \frac{n^{2/3}}{p^{1/3}} = \Theta\left( n^{1/3 -\varepsilon/300} \right)$,  $d = \frac{p^{2/3}}{n^{1/3}} = \Theta(n^{1/3+\varepsilon/150})$, and $c \leq \frac{d^{1/2}}{\log n}$ as assumed, 
    \item $|V(G^*)| = \Theta\left(c \cdot \frac{n^2}{\ell}\right) = \Theta(n^{5/3 + \varepsilon/30})$,
    \item $|E(G^*)| = \Theta\left( \frac{n^2}{\ell} \cdot (d+c^2) \right) = \Theta(n^{2+\varepsilon/100})$, and
    \item $|\Pi^*| = \Theta(p^2) = \Theta(n^{2+\varepsilon/50})$. 
\end{itemize}
Let $N = |V(G^*)|$ and let $m = |E(G^*)|$. Let $D_1, \dots, D_k$ be an exact distance preserving DAG cover with a set $E'$ of at most $|E'| \leq m^{1+\varepsilon/300} \leq |\Pi^*|/4$ additional edges. Let $E_L'$ denote the edges in $E'$ that are long, and let $E_S'$ denote the edges in $E'$ that are short. 
We define a subset $\Pi_1^* \subseteq \Pi^*$ of paths in $\Pi^*$ as follows:
 $$\Pi_1^* = \left\{\pi^* \in \Pi^* \mid TC(\pi^*) \cap E_L' = \emptyset \text{ and } |TC(\pi^*) \cap E_S'| \leq \frac{\ell}{ n^{\varepsilon/50} }  \right\}.$$ 

 We now prove that $\Pi_1^*$ contains a constant fraction of the paths in $\Pi^*$. 

\begin{claim}
    $|\Pi_1^*| \geq |\Pi^*|/4$. 
\end{claim}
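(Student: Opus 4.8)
The plan is to bound the number of "bad" paths in $\Pi^*$ — those excluded from $\Pi_1^*$ — by separately controlling how many paths are killed by long edges and how many are killed by having too many incidences with short edges. Since $|E'| \le |\Pi^*|/4$, it suffices to show each of these two sets of bad paths has size at most $|\Pi^*|/8$ (or more simply, that the total number of bad paths is at most $3|\Pi^*|/4$). First I would recall that $E' = E_L' \cup E_S'$ with $|E'| \le m^{1+\varepsilon/300} \le |\Pi^*|/4$, which already controls the \emph{total} budget of additional edges.

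\textbf{Long edges.} By Claim \ref{obs:long_edge}, each long edge $e \in E_L'$ lies in the transitive closure of at most one path in $\Pi^*$. Hence the number of paths $\pi^* \in \Pi^*$ with $TC(\pi^*) \cap E_L' \ne \emptyset$ is at most $|E_L'| \le |E'| \le |\Pi^*|/4$.

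\textbf{Short edges.} Here I would use the path-edge incidence bound. By Claim \ref{clm:path-edge-inc}, with high probability $|\mathcal{I}_{E_S', \Pi^*}| \le |E_S'| \cdot O(d/c)$. Call a path $\pi^* \in \Pi^*$ \emph{short-heavy} if $|TC(\pi^*) \cap E_S'| > \ell / n^{\varepsilon/50}$; every short-heavy path contributes more than $\ell/n^{\varepsilon/50}$ incidences, so the number of short-heavy paths is at most
$$
\frac{|\mathcal{I}_{E_S', \Pi^*}|}{\ell / n^{\varepsilon/50}} \le \frac{|E_S'| \cdot O(d/c) \cdot n^{\varepsilon/50}}{\ell}.
$$
Now I would plug in the chosen parameters $p = n^{1+\varepsilon/100}$, $c = n^{3\varepsilon/100}$, giving $d = \Theta(n^{1/3+\varepsilon/150})$, $\ell = \Theta(n^{1/3-\varepsilon/300})$, $d/c = \Theta(n^{1/3 + \varepsilon/150 - 3\varepsilon/100})$, and also $|E_S'| \le |E'| \le m^{1+\varepsilon/300}$ with $m = \Theta(n^{2+\varepsilon/100})$, so $|E_S'| = O(n^{2+O(\varepsilon)})$. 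The key point is that the exponent of $n$ in this upper bound on the number of short-heavy paths comes out to be $2 - \Omega(1)$ (for $\varepsilon$ sufficiently small): the $|E_S'| = O(n^{2+O(\varepsilon)})$ factor is roughly $n^2$, multiplied by $d/(c\ell) \cdot n^{\varepsilon/50} = \Theta(n^{\varepsilon/150 - 3\varepsilon/100 + \varepsilon/300 + \varepsilon/50}) = n^{-\Omega(\varepsilon)}$ after collecting terms, so the whole bound is $n^{2 - \Omega(\varepsilon) + O(\varepsilon^2\text{-ish small terms})} \ll |\Pi^*| = \Theta(n^{2+\varepsilon/50})$. Thus the number of short-heavy paths is at most $|\Pi^*|/4$ (indeed far less) once $\varepsilon$ is small enough.

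Combining, the number of paths \emph{not} in $\Pi_1^*$ is at most $|E_L'| + (\text{short-heavy count}) \le |\Pi^*|/4 + |\Pi^*|/4 = |\Pi^*|/2$, hence $|\Pi_1^*| \ge |\Pi^*|/2 \ge |\Pi^*|/4$, which is even stronger than claimed. (If the constant hidden in the $O(d/c)$ of Claim \ref{clm:path-edge-inc} is an obstacle, one simply notes that it is absorbed into the $n^{-\Omega(\varepsilon)}$ slack, so it causes no trouble.) The main obstacle I anticipate is purely bookkeeping: carefully tracking the exponents of $n$ through the substitution of $p$, $c$, $d$, $\ell$, and $m$ to confirm the short-heavy bound is genuinely $n^{2 - \Omega(\varepsilon)}$ and not accidentally $n^{2 + o(1)}$ — this requires verifying that the positive contributions ($\varepsilon/150$ from $d$, $\varepsilon/300$ from $1/\ell$, $\varepsilon/50$ from the $n^{\varepsilon/50}$ factor, plus $\varepsilon/300$ from $m^{1+\varepsilon/300}$ and the $\varepsilon/100$ inside $m$) are outweighed by the negative $3\varepsilon/100$ from $1/c$ together with the fact that we are comparing against $|\Pi^*| = \Theta(n^{2+\varepsilon/50})$ rather than $n^2$. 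A clean way to finish is to observe the ratio (short-heavy count)$/|\Pi^*|$ has $n$-exponent equal to a fixed negative multiple of $\varepsilon$ minus lower-order terms, hence is $o(1)$.
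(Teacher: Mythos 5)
Your approach is structurally the same as the paper's: bound the paths killed by long edges via Claim~\ref{obs:long_edge}, bound the short-heavy paths via Claim~\ref{clm:path-edge-inc} plus counting, and add. (The paper phrases the second step as an averaging/contradiction argument on $\Pi_0^* = \{\pi^* : TC(\pi^*) \cap E_L' = \emptyset\}$, which yields the slightly sharper $|\Pi_1^*| \geq |\Pi_0^*|/3$, but this is cosmetic.) However, your exponent bookkeeping contains a concrete arithmetic error.

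You assert $d/(c\ell) \cdot n^{\varepsilon/50} = n^{-\Omega(\varepsilon)}$. Substituting the exponents in units of $\varepsilon/300$: $d$ contributes $+2$, $1/c$ contributes $-9$, $1/\ell$ contributes $+1$, and $n^{\varepsilon/50}$ contributes $+6$, for a total of $2 - 9 + 1 + 6 = 0$. So this factor is $\Theta(1)$, not $n^{-\Omega(\varepsilon)}$, and your claim that the short-heavy count is $n^{2-\Omega(\varepsilon)} \ll n^2$ is false. The actual short-heavy count is
$$
O\!\left(|E_S'| \cdot \frac{d}{c\ell} \cdot n^{\varepsilon/50}\right) = O\!\left(|E_S'|\right) = O\!\left(m^{1+\varepsilon/300}\right) = O\!\left(n^{2 + 5\varepsilon/300 + O(\varepsilon^2)}\right),
$$
which is strictly \emph{more} than $n^2$. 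The argument still closes, but for a different reason than you give: the genuine polynomial slack is that $|E'| \le m^{1+\varepsilon/300} = n^{2+5\varepsilon/300 + O(\varepsilon^2)}$ is polynomially smaller than $|\Pi^*| = \Theta(p^2) = \Theta(n^{2+6\varepsilon/300})$. That $n^{\varepsilon/300}$ gap between $|E'|$ and $|\Pi^*|$ is the entire savings, and it must be tracked carefully; if you had instead only used the weaker budget $|E_S'| \le |\Pi^*|/4$ together with the (correct) factor $\Theta(1)$, you would get short-heavy count $= O(|\Pi^*|)$ with an uncontrolled constant and no contradiction. Your closing sentence, that the ratio has $n$-exponent a fixed negative multiple of $\varepsilon$, is the right statement, but it contradicts your earlier ``$\ll n^2$'' claim and does not follow from the incorrect intermediate step; the proof should be rewritten with the corrected exponents to make the chain of inequalities consistent.
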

\begin{proof}
By Claim \ref{obs:long_edge}, at most $|E'| \leq |\Pi^*|/4$ paths in $\Pi^*$ can contain a long edge in their transitive closure, so the set
$$
\Pi_0^* = \left\{\pi^* \in \Pi^* \mid TC(\pi^*) \cap E_L' = \emptyset  \right\}
$$
has size $|\Pi_0^*| \geq 3/4 \cdot |\Pi^*|=3/4 \cdot p^2$. By Claim \ref{clm:path-edge-inc}, the number of path-edge incidences between $\Pi^*$ and $E'_S$ is at most
$$
| \mathcal{I}_{E'_S, \Pi^*}| \leq |E'_S| \cdot O(d/c) \leq |\Pi^*|/4 \cdot O(d/c) = O(d/c \cdot p^2) = O\left( \frac{\ell p^2}{n^{7\varepsilon/300}} \right), 
$$
where the final inequality follows from our choice of $c$ and the fact that $d/\ell = n^{\varepsilon/100}$.
Notice that $\Pi_1^* \subseteq \Pi_0^*$. By an averaging argument, we can show that $|\Pi_1^*| \geq |\Pi_0^*|/3$. Suppose towards contradiction that instead $|\Pi_1^*| < |\Pi_0^*|/3$. Then 
$$
| \mathcal{I}_{E'_S, \Pi^*}| \geq 2/3 \cdot |\Pi_0^*| \cdot \frac{\ell}{ n^{\varepsilon/50}} = \Omega\left( \frac{\ell p^2}{n^{\varepsilon/50}}\right),
$$
contradicting our previous upper bound on $| \mathcal{I}_{E'_S, \Pi^*}|$, since $\varepsilon/50 < 7\varepsilon/300$. We conclude that
$$
|\Pi_1^*| \geq |\Pi_0^*|/3 \geq |\Pi^*|/4,
$$
as claimed.
\end{proof}

We are now  ready to prove Theorem \ref{thm:lb_m}. 

\begin{proof}[Proof of Theorem \ref{thm:lb_m}]
For every $s \leadsto t$ path $\pi^* \in \Pi_1^*$,  we observe the following:
\begin{itemize}
    \item Path $\pi^*$ is the unique shortest $s \leadsto t$ path in $G^*$, and 
    \item By our choice of $\Pi_1^*$,  $E_L' \cap TC(\pi^*) = \emptyset$.
\end{itemize} 
Then every $s \leadsto t$ path $\pi^* \in \Pi_1^*$ can only be helped by short edges in $E'_S$. Additionally, by our choice of $\Pi_1^*$ we know that every path $\pi \in \Pi_1^*$ has at most 
$$
|\mathcal{I}_{E_S', \{\pi\}}| \leq \frac{\ell}{n^{\varepsilon/50}}
$$
path-edge incidences with the set of short edges $E_S'$.
Let $$P = \{(s, t) \in V(G^*) \times V(G^*) \mid \text{there exists an $s \leadsto t$ path $\pi \in \Pi_1^*$}\}$$ denote the set of endpoints of paths in $\Pi_1^*$. 
Then we can apply Lemma \ref{lem:path_cover3} to argue that every DAG $D_1, \dots, D_k$ preserves exact distances between at most $$ n^{1/3}p^{4/3}c \log c = O(n^{1/3+3\varepsilon}p^{4/3})$$ 
distinct pairs of vertices $(s, t) \in P$. Then  the number of DAGs in our DAG cover is at least
$$
k \geq \frac{|\Pi_1^*|}{O(n^{1/3+3\varepsilon}p^{4/3})} = \frac{n^{2+\varepsilon/50}}{O(n^{1/3+3\varepsilon}p^{4/3})} \geq n^{1/3 - 4\varepsilon} \geq N^{1/6},
$$
where the final inequality follows by taking $\varepsilon>0$ to be sufficiently small. 
\end{proof}

\section{DAG covers with no additional edges}

In this section we will prove the following theorem: \diam*

\subsubsection*{Construction}

To build the graph $G$ we begin with two vertices $s,t$. We partition the rest of the $n-2$ vertices into two groups $A=a_1,\dots,a_{n/2-1}$ and $B=b_1,\dots,b_{n/2-1}$. We add a directed path $P_A$ from $s$ to $t$ whose internal vertices are exactly $a_1,\dots,a_{n/2-1}$ (in that order). We also add a directed path $P_B$ from $t$ to $s$ whose internal vertices are exactly $b_1,\dots,b_{n/2-1}$ (in that order). 

Next we add nested ``shortcut'' edges to the paths $P_A$ and $P_B$ in the following way. We impose a binary tree structure on each path $P_A$, $P_B$. The root node, which is on level 1, represents the entire path, and the nodes in level $i$ of the tree represent edge-disjoint directed subpaths on $n/2^i$ edges (whose union is the entire path). There are $\log_2n$ levels total, and each leaf node represents a single edge of the path. Let $T_A$ and $T_B$ be the trees corresponding to $P_A$ and $P_B$ respectively. For every node in each of the two trees, we add an edge from the first to last vertex of the corresponding subpath.

\subsubsection*{Analysis}
First, we claim that the diameter of $G$ is $O(\log n)$. This is evident from the binary tree structure of the shurtcut edges. Specifically, every vertex in $A$ is reachable from $s$ and can reach $t$ by a path of length $O(\log n)$, and every vertex in $B$ is reachable from $t$ and can reach $s$ by a path of length $O(\log n)$.

It remains to show that any reachability-preserving DAG cover of $G$ with no additional edges requires $n$ DAGs. Let the \emph{predecessor} of a vertex $v$, denoted $\text{pred}(v)$ be the vertex that falls before $v$ on the cycle composed of $P_A$ and $P_B$. We will prove the following lemma:

\begin{lemma}\label{lem:diam}
    For any pair $u,v$ of vertices, the union of (1) any path from $u$ to $\text{pred}(u)$, and (2) any path from $v$ to $\text{pred}(v)$, contains a cycle.
\end{lemma}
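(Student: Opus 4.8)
The plan is to exploit the underlying Hamiltonian cycle $C$ on which $G$ is built. Writing $v_0=s,\ v_1=a_1,\dots,v_{n/2}=t,\dots,v_{n-1}=b_{n/2-1}$ in cyclic order, so that $\mathrm{pred}(v_i)=v_{i-1\bmod n}$, I would first record two structural facts about $G$. (i) Every edge — a cycle edge or a tree shortcut — advances position clockwise by between $1$ and $n/2$, and \emph{no} edge jumps strictly over $s$ or over $t$: the $P_A$-shortcuts live in positions $[0,n/2]$ and touch $s,t$ only at their endpoints, symmetrically for $P_B$, and cycle edges have empty interior. (ii) Consequently any simple path $\pi$ from $v_i$ to $v_{i-1}$ winds exactly once around $C$, hence must contain both $s$ and $t$ (each exactly once); a short computation gives that $t$ precedes $s$ on $\pi$ iff $i\in[1,n/2]$ (call $v_i$ \emph{$A$-side}) and $s$ precedes $t$ iff $i\in\{0\}\cup[n/2+1,n-1]$ (\emph{$B$-side}). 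Moreover $\pi$ decomposes into maximal segments lying in the $P_A$-region (positions $[0,n/2]$, using only $P_A$-edges) and the $P_B$-region (positions $[n/2,n]$, using only $P_B$-edges), each such segment being a forward path in the corresponding dyadic-shortcut path.

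Given these, set $H=\pi_u\cup\pi_v$. The key reduction is: if $H$ contains a path $s\leadsto t$ and a path $t\leadsto s$, then concatenating them is a closed walk of length $\ge 2$, which contains a directed cycle; so it suffices to produce these two paths. If $u,v$ lie on opposite sides this is immediate — one of $\pi_u,\pi_v$ contains $t\leadsto s$ and the other $s\leadsto t$. If both lie on the same side, say the $A$-side with $\mathrm{pos}(u)=a<b=\mathrm{pos}(v)$ (WLOG by the symmetry of the statement in $u,v$), then the $P_B$-segment of $\pi_u$ is already a forward path $t\leadsto s$, giving $t\leadsto_H s$; and for $s\leadsto_H t$ I would splice the forward path $s\leadsto v_{b-1}$ coming from $\pi_v$'s $P_A$-region with the forward path $v_a\leadsto t$ coming from $\pi_u$'s $P_A$-region — these two share a vertex by the lemma below (using $a\le b-1$), yielding an $s\leadsto t$ path inside $H$. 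The $B$-side case is symmetric, working in $P_B$; there one uses $u\ne v$ to guarantee that at least one of $u,v$ is not $s$, so that the needed forward paths exist.

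The one genuinely delicate step — and the only place the \emph{nested} (dyadic) structure of the shortcuts, rather than merely their ``clockwise'' behaviour, is used — is the following lemma about a single dyadic-shortcut path: any forward path from the left end to a vertex $v_j$, together with any forward path from a vertex $v_i$ with $i\le j$ to the right end, must share a vertex. I would prove it by observing that the first path ``passes'' position $i$: either it visits $v_i$ (done, since $v_i$ is the start of the second path), or some dyadic-interval edge $v_{i_t}\to v_{i_{t+1}}$ of it jumps over $v_i$, so $i_t<i<i_{t+1}$; then, because any dyadic interval whose left endpoint $i$ lies strictly inside a dyadic interval $[i_t,i_{t+1}]$ is itself contained in $[i_t,i_{t+1}]$, the second path (which starts at $v_i$) cannot leave $[i_t,i_{t+1}]$ without first landing on $v_{i_{t+1}}$, a vertex of the first path. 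The main obstacle is thus not the overall strategy but making this dyadic lemma — and the region-decomposition in fact (ii) — precise; everything else is a routine case analysis together with the ``every go-around path hits $s$ and $t$ exactly once'' bookkeeping.
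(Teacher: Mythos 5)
Your proof is correct, and it reaches the same conclusion via a closely related but distinguishable argument. Both your proof and the paper's ultimately hinge on the laminar (nested) structure of the dyadic shortcut intervals, and both produce a cycle of the form ``$s\to w\to t\to s$'' where $w$ is some vertex lying strictly between $u$ and $v$ on $P_A$ (or the symmetric $P_B$ statement). The difference is in how that shared vertex is located. The paper is \emph{constructive}: it takes the least common ancestor subpath of the two edges $(\text{pred}(u),u)$, $(\text{pred}(v),v)$ in the binary tree $T_A$, lets $z$ be the vertex where its two children meet, and shows via Observation~\ref{obs:stgen} that $\pi_u$ visits $z,t,s$ in that order while $\pi_v$ visits $t,s,z$ in that order; concatenating gives the cycle. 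You instead isolate a reduction (``it suffices to exhibit both an $s\leadsto t$ walk and a $t\leadsto s$ walk in $H=\pi_u\cup\pi_v$''), obtain $t\leadsto s$ for free from the $P_B$-segment of $\pi_u$, and then get $s\leadsto t$ by \emph{non-constructively} arguing that the $P_A$-segment $s\leadsto v_{b-1}$ of $\pi_v$ and the $P_A$-segment $v_a\leadsto t$ of $\pi_u$ must share some vertex, via your dyadic crossing lemma. The two mechanisms are really the same laminarity fact in different guises --- indeed your crossing-lemma proof (``an edge whose left endpoint is strictly inside a dyadic interval $[p_k,p_{k+1}]$ cannot escape it except by landing on $p_{k+1}$'') is exactly the argument needed to make the paper's Observation~\ref{obs:stgen} rigorous, which the paper leaves implicit with the phrase ``follows from the nested structure of the shortcut edges.'' In that sense your write-up is somewhat more careful about the one genuinely delicate point. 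The tradeoff is that the paper's LCA phrasing pins down a specific vertex $z$ and avoids the extra ``reduction to a closed walk'' step, making the final assembly of the cycle shorter. Your handling of the boundary cases (``$a\le b-1$ via $u\ne v$''; the $B$-side case using $u\ne s$) is correct.
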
 

Proving \cref{lem:diam} completes the analysis because it implies that any DAG cover requires a separate DAG to capture the reachability from each vertex to its predecessor, for a total of $n$ DAGs. 

We first make the following observation, which is true by the construction of $G$:

\begin{observation}\label{obs:st}
    For every vertex $v$, any path from $v$ to $\text{pred}(v)$ contains both $s$ and $t$. Specifically, if $v\in A \cup \{t\}$, the path visits $t$ before $s$, and if $v\in B\cup \{s\}$, the path visits $s$ before $t$.
\end{observation}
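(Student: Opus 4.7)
The plan is to exploit the fact that, by construction, every edge of $G$ respects the cyclic orientation of the underlying cycle formed by $P_A$ and $P_B$. Specifically, I will first set up the cyclic order $\sigma = (s, a_1, a_2, \ldots, a_{n/2-1}, t, b_1, b_2, \ldots, b_{n/2-1})$, and observe that the vertex set splits naturally into an ``A-segment'' $\{s\} \cup A \cup \{t\}$ and a ``B-segment'' $\{t\} \cup B \cup \{s\}$, glued together at $s$ and $t$. Every edge of $P_A$, as well as every shortcut edge in $T_A$, goes from some vertex in $\{s\} \cup A$ to a later vertex of the A-segment; symmetrically, every edge of $P_B$ or $T_B$ goes from $\{t\} \cup B$ to a later vertex of the B-segment. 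This is because each binary-tree shortcut is defined on a contiguous directed subpath, so its endpoints lie inside that subpath and respect its direction.

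From this structural fact, I would extract the key exit lemma: \emph{any directed edge leaving the A-segment (i.e.\ having its tail in $A$ and its head outside $A$) has head $t$, and symmetrically any directed edge leaving the B-segment has head $s$.} Consequently, any directed walk starting inside $A$ can only leave $A$ by first arriving at $t$, and any walk starting inside $B$ can only leave $B$ by first arriving at $s$. The same applies to walks starting at $t$ (which must traverse the B-segment forward to escape via $s$) and at $s$ (which must traverse the A-segment forward to escape via $t$).

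I will then apply this to a path $\pi$ from $v$ to $\text{pred}(v)$ by case analysis on which segment $v$ lies in. If $v \in A \cup \{t\}$, then $\pi$ starts in the A-segment but must reach $\text{pred}(v)$, which lies strictly earlier in $\sigma$ and is therefore \emph{not} reachable from $v$ within the A-segment using only forward A-edges (since the A-segment is a DAG in its internal order). Hence $\pi$ must leave the A-segment, which by the exit lemma forces it to visit $t$; once in the B-segment it must in turn leave via $s$ to re-enter the A-segment and arrive at $\text{pred}(v)$. This gives the visit order $t$ before $s$. The case $v \in B \cup \{s\}$ is fully symmetric, yielding $s$ before $t$.

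The main obstacle, such as it is, is being careful about the boundary cases $v \in \{s, t\}$ and $v \in \{a_1, b_1\}$, where $\text{pred}(v)$ lies on the opposite side of the cycle gluing point and one has to check that ``leaving the segment'' is genuinely forced rather than being achievable by a direct shortcut; this is handled simply by noting that no shortcut in $T_A$ or $T_B$ crosses between the A- and B-segments, so the only bridges between segments are the forward edges of $P_A$ ending at $t$ and of $P_B$ ending at $s$.
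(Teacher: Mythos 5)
Your proof is correct, and it fills in details the paper leaves implicit (the paper simply asserts that \cref{obs:st} ``is true by the construction of $G$''). The structural decomposition you use is exactly the right one: every edge of $G$ is a forward edge within one of the two linearly ordered segments $\{s\}\cup A\cup\{t\}$ or $\{t\}\cup B\cup\{s\}$, because the original path edges are forward by definition and every shortcut in $T_A$ (resp.\ $T_B$) joins the first vertex of a contiguous subpath of $P_A$ (resp.\ $P_B$) to the last, so it is also forward; the two segments meet only at $s$ and $t$. From this it follows that out-edges of vertices in $\{s\}\cup A$ land in $A\cup\{t\}$ and out-edges of vertices in $\{t\}\cup B$ land in $B\cup\{s\}$, which is what forces the $v\to t\to s\to\text{pred}(v)$ ordering when $v\in A\cup\{t\}$ (since $\text{pred}(v)$ sits strictly earlier in the A-segment and is therefore unreachable by forward A-edges alone), and symmetrically for $v\in B\cup\{s\}$.

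One small wording issue: your ``exit lemma'' as stated --- ``any directed edge leaving the A-segment (i.e.\ having its tail in $A$ and its head outside $A$) has head $t$'' --- conflates leaving $A$ with leaving the A-segment; since $t$ belongs to the A-segment, such an edge does not actually leave the A-segment. The precise statement you want is that no edge has tail in $\{s\}\cup A$ and head outside $A\cup\{t\}$, i.e.\ the only way a walk starting in $\{s\}\cup A$ can ever reach a vertex of $B\cup\{s\}$ is by first reaching $t$. This is clearly what you mean, and the rest of the argument uses it correctly, so it is a matter of phrasing rather than a gap.
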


Since the union of a path from $s$ to $t$ and a path from $t$ to $s$ form a cycle, \cref{obs:st} implies that \cref{lem:diam} is true if $u\in A \cup \{t\}$ and $v\in B\cup \{s\}$. It remains to consider the case where both $u$ and $v$ are in $A \cup \{t\}$; the case where both $u$ and $v$ are in $B\cup \{s\}$ is symmetric. Fix $u,v\in A \cup \{t\}$, where $u$ comes before $v$ on $P_A$.

The following notation will be useful. For any vertex $w\in A \cup \{t\}$, let $\ell_w$ be the leaf node in $T_A$ corresponding to the edge $(\text{pred}(w), w)$. For any ancestor of $\ell_w$ in the tree $T_A$, call the corresponding subpath an \emph{ancestor subpath} of the edge $(\text{pred}(w), w)$. We will refer to the \emph{least common ancestor subpath} of $(\text{pred}(u), u)$ and $(\text{pred}(v), v)$. (For these definitions a tree node is said to be its own ancestor.)

We now state a strengthening of \cref{obs:st}, which follows from the nested structure of the shortcut edges:

\begin{observation}\label{obs:stgen}
    For any vertex $w\in A \cup \{t\}$, any path $P$ from $w$ to $\text{pred}(w)$ contains both the start and end vertex of every ancestor subpath of $(\text{pred}(w), w)$. The ordering of vertices on $P$ is as follows: each subpath end vertex (besides $t$) is before $t$, which is before $s$, which is before each subpath start vertex (besides $s$).
\end{observation}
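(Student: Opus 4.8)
The plan is to first establish the coarse ``shape'' of any directed path $P$ from $w$ to $\text{pred}(w)$, and then bootstrap this into the full ordering claim using the laminar (binary‑tree) structure of the shortcut edges. For the coarse shape, note that $G$ is the union of two gadgets: $G_A$, whose vertex set is $V(P_A)=\{s\}\cup A\cup\{t\}$ and whose edges are the edges of $P_A$ together with the shortcut edges of $T_A$, and $G_B$, defined analogously from $P_B$ and $T_B$; these share only the vertices $s$ and $t$. If we order $V(P_A)$ as $s,a_1,\dots,a_{n/2-1},t$, then every edge of $G_A$ points strictly forward in this order (path edges advance by one, shortcut edges go from the first to the last vertex of a subpath), so $G_A$ is acyclic with unique source $s$ and unique sink $t$, and from any vertex $v$ one reaches within $G_A$ exactly the vertices at or after $v$; the symmetric statement holds for $G_B$ (source $t$, sink $s$). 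Take $w\in A\cup\{t\}$, so $w\in V(P_A)\setminus\{s\}$ and $\text{pred}(w)\in V(P_A)\setminus\{t\}$ lies strictly before $w$ on $P_A$. A path $P$ from $w$ to $\text{pred}(w)$ starts in $G_A$ but cannot reach $\text{pred}(w)$ from $w$ within $G_A$; the only vertices through which it can leave $G_A$ are $s$ and $t$, and $s$ is unreachable from $w$ inside $G_A$, so $P$ must first reach $t$, then (since every edge out of $t$ lies in $G_B$) travel inside $G_B$ and, by the same reasoning, reach $s$, and only then re‑enter $G_A$ to finish at $\text{pred}(w)$. Thus $P$ decomposes as $P^{(1)}\cdot P^{(2)}\cdot P^{(3)}$, where $P^{(1)}$ runs from $w$ to $t$ in $G_A$, $P^{(2)}$ from $t$ to $s$ in $G_B$, and $P^{(3)}$ from $s$ to $\text{pred}(w)$ in $G_A$ (the last possibly trivial). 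In particular $P$ visits $t$ before $s$, the vertex $s$ lies on neither $P^{(1)}$ nor $P^{(2)}$, the vertex $t$ does not lie on $P^{(3)}$, and positions along $P_A$ strictly increase along $P^{(1)}$ and along $P^{(3)}$.

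Next, enumerate the ancestor subpaths of the edge $(\text{pred}(w),w)$ in $T_A$ as $R_0\supsetneq R_1\supsetneq\dots\supsetneq R_h$, where $R_0=P_A$ is the root subpath, $R_h$ is the single edge $(\text{pred}(w),w)$, and each $R_{i+1}$ is the child of $R_i$ containing that edge. Write $R_i=[x_i,y_i]$ for its first and last vertices, so that along $P_A$ we have $s=x_0\preceq x_1\preceq\dots\preceq x_h=\text{pred}(w)\prec w=y_h\preceq\dots\preceq y_1\preceq y_0=t$. The heart of the proof is the claim, proved by strong induction on $i$, that $P^{(1)}$ contains $y_i$ and $P^{(3)}$ contains $x_i$. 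The base case $i=0$ is exactly the decomposition above. For the inductive step, suppose the claim holds for all indices at most $i$ and that $R_{i+1}$ is the left child $[x_i,m_i]$ of $R_i$, where $m_i$ is its midpoint (the right‑child case is symmetric, with $P^{(1)}$ and $P^{(3)}$, and $t$ and $s$, interchanged). Then $x_{i+1}=x_i$ and $y_{i+1}=m_i$, so it suffices to show $m_i\in P^{(1)}$. Since $R_{i+1}$ contains the edge $(\text{pred}(w),w)$ we have $w\preceq m_i\prec y_i$; as $P^{(1)}$ starts at $w$ and, by the induction hypothesis, passes through $y_i$, and positions increase along it, there is a first vertex $q$ of $P^{(1)}$ lying strictly beyond $m_i$, with predecessor $q'$ on $P^{(1)}$ at a position at most that of $m_i$. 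If $q'\to q$ is a path edge it advances by one position, forcing $q'=m_i$. If it is a shortcut edge, it corresponds to a subpath $[q',q]$ of $P_A$; since $[q',q]$ and $R_{i+1}$ are both nodes of $T_A$ and share the vertex $m_i$ (an endpoint of $R_{i+1}$), laminarity forces $R_{i+1}\subseteq[q',q]$, and then either $q'=m_i$, or $[q',q]$ strictly contains $R_{i+1}$ and is therefore itself an ancestor subpath of $(\text{pred}(w),w)$, say $[q',q]=R_j$ with $j\le i$ and $q'=x_j$ — which is impossible, because $q'$ lies on $P^{(1)}$ while the induction hypothesis places $x_j$ on $P^{(3)}$ for $j\ge 1$, and $x_0=s$ never lies on $P^{(1)}$. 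This closes the induction.

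Finally, the $y_i$ are exactly the end vertices and the $x_i$ exactly the start vertices of the ancestor subpaths of $(\text{pred}(w),w)$, so $P$ contains all of them. Since $P^{(1)}$ ends at $t=y_0$ with the remaining $y_i$ appearing earlier along it, $P^{(2)}$ then runs from $t$ to $s$, and $P^{(3)}$ begins at $s=x_0$ with the remaining $x_i$ appearing later along it, the order of these vertices along $P$ is: every subpath end vertex other than $t$, then $t$, then $s$, then every subpath start vertex other than $s$ — which is exactly the assertion of Observation \ref{obs:stgen} (and Observation \ref{obs:st} is the special case of the outermost subpath $R_0$). The step I expect to be the main obstacle is precisely the inductive step ruling out that $P$ ``jumps over'' a midpoint $m_i$ via a long shortcut edge of $T_A$: this is where the nested binary‑tree structure of the shortcuts must be combined with the already‑established positions of the endpoints of the larger ancestor subpaths. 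The three‑segment decomposition and the monotonicity of positions within each gadget are routine once the gadget picture is made explicit.
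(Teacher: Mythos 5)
Your proof is correct. The paper itself gives no argument for Observation~\ref{obs:stgen}---it simply asserts that the claim ``follows from the nested structure of the shortcut edges''---so yours is a faithful and complete expansion of what the authors leave implicit. The two key ingredients you isolate are exactly the right ones: (i) any $w\to\text{pred}(w)$ path decomposes into three position-monotone segments $P^{(1)}\subseteq G_A$, $P^{(2)}\subseteq G_B$, $P^{(3)}\subseteq G_A$ meeting only at $t$ and $s$, so that $P^{(1)}$ and $P^{(3)}$ occupy disjoint position ranges of $P_A$; and (ii) the only nodes of $T_A$ that contain the midpoint $m_i$ of $R_i$ strictly in their interior are the ancestors $R_0,\dots,R_i$, which pins down the endpoints of any shortcut edge ``jumping over'' $m_i$ and lets the induction hypothesis deliver the contradiction. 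One small wording nit: where you invoke laminarity, the vertex sets of $T_A$-nodes are not literally laminar (siblings share an endpoint); what you actually use is the fact that $q'<m_i<q$ places $m_i$ strictly in the interior of $[q',q]$, which is the case where containment does follow. Since you have already reduced to $q'\le m_i<q$ and handled $q'=m_i$ separately, the argument is sound; I would just replace the generic appeal to ``laminarity'' with the explicit observation that any $T_A$-subpath having $m_i$ as an interior vertex is one of $R_0,\dots,R_i$.
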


Consider the least common ancestor subpath $P_{anc}$ of $(\text{pred}(u), u)$ and $(\text{pred}(v), v)$. Let $P_1$ and $P_2$ be the two children subpaths of $P_{anc}$. That is, the vertex $z$ that is both the end of $P_1$ and the start of $P_2$ falls between $u$ and $v$ on $P_A$. Furthermore, $z$ is the end vertex of an ancestor subpath of $(\text{pred}(u), u)$ and the start vertex of an ancestor subpath of $(\text{pred}(v), v)$. By \cref{obs:stgen}, any path from $u$ to $\text{pred}(u)$ contains $z$, $t$, and $s$, in that order, and any path from $v$ to $\text{pred}(v)$ contains $t$, $s$, and $z$ in that order. Together, these $z$-to-$t$-to-$s$ and $t$-to-$s$-to-$z$ paths form a cycle. Thus, the union of (1) any path from $u$ to $\text{pred}(u)$, and (2) any path from $v$ to $\text{pred}(v)$, contains a cycle. This completes the proof of \cref{lem:diam}.

\section{Open Problems}

The main problem left open by our work is closing the gap between our upper and lower bounds for DAG covers with $\tilde{O}(m)$ additional edges. Our upper bound is polylogarithmic in both the distortion and the number of DAGs, while our lower bound is for exact distances and a polynomial number of DAGs. We conjecture that the lower bound can be improved to handle approximate distances, as well as a larger polynomial number of DAGs.

Another open problem is to find concrete applications of our DAG cover algorithm. We suspect that such applications exist, given the wide-reaching applications of the analogous undirected constructions.
Another open problem is to extend our DAG cover algorithm to various settings such as the distributed, parallel, online, dynamic, and streaming settings, since such settings have been fruitful for the analogous undirected constructions. 
Lastly, it would be interesting to consider the ``Steiner'' version of this problem where vertices, in addition to edges, are allowed to be added.

\section{Acknowledgements}
We would like to thank Aaron Bernstein, Shimon Kogan, and Merav Parter for the conversation that initiated this work, as well as additional fruitful conversations about how to define the problem. We would like to thank Arnold Filtser for offering detailed comments and questions on an earlier version of this manuscript. 

\bibliographystyle{alpha}
\bibliography{ref.bib}

\newcommand{\etalchar}[1]{$^{#1}$}
\begin{thebibliography}{VWXX24}

\bibitem[AB18]{MR3775910}
Amir Abboud and Greg Bodwin.
\newblock Reachability preservers: new extremal bounds and approximation algorithms.
\newblock In {\em Proceedings of the {T}wenty-{N}inth {A}nnual {ACM}-{SIAM} {S}ymposium on {D}iscrete {A}lgorithms}, pages 1865--1883. SIAM, Philadelphia, PA, 2018.

\bibitem[AB24]{MR4716717}
Amir Abboud and Greg Bodwin.
\newblock Reachability preservers: new extremal bounds and approximation algorithms.
\newblock {\em SIAM J. Comput.}, 53(2):221--246, 2024.

\bibitem[ABN08]{abraham2008nearly}
Ittai Abraham, Yair Bartal, and Ofer Neiman.
\newblock Nearly tight low stretch spanning trees.
\newblock In {\em 2008 49th Annual IEEE Symposium on Foundations of Computer Science}, pages 781--790. IEEE, 2008.

\bibitem[ABS{\etalchar{+}}20]{MR4103322}
Reyan Ahmed, Greg Bodwin, Faryad~Darabi Sahneh, Keaton Hamm, Mohammad~Javad Latifi~Jebelli, Stephen Kobourov, and Richard Spence.
\newblock Graph spanners: a tutorial review.
\newblock {\em Comput. Sci. Rev.}, 37:100253, 30, 2020.

\bibitem[ACE{\etalchar{+}}20]{abraham2020ramsey}
Ittai Abraham, Shiri Chechik, Michael Elkin, Arnold Filtser, and Ofer Neiman.
\newblock Ramsey spanning trees and their applications.
\newblock {\em ACM Transactions on Algorithms (TALG)}, 16(2):1--21, 2020.

\bibitem[ADM{\etalchar{+}}95]{arya1995euclidean}
Sunil Arya, Gautam Das, David~M Mount, Jeffrey~S Salowe, and Michiel Smid.
\newblock Euclidean spanners: short, thin, and lanky.
\newblock In {\em Proceedings of the twenty-seventh annual ACM symposium on Theory of computing}, pages 489--498, 1995.

\bibitem[AKP91]{awerbuch1991efficient}
Baruch Awerbuch, Shay Kutten, and David Peleg.
\newblock Efficient deadlock-free routing.
\newblock In {\em Proceedings of the tenth annual ACM symposium on Principles of distributed computing}, pages 177--188, 1991.

\bibitem[AKP94]{awerbuch1994buffer}
Baruch Awerbuch, Shay Kutten, and David Peleg.
\newblock On buffer-economical store-and-forward deadlock prevention.
\newblock {\em IEEE transactions on communications}, 42(11):2934--2937, 1994.

\bibitem[AKPW95]{MR1313480}
Noga Alon, Richard~M. Karp, David Peleg, and Douglas West.
\newblock A graph-theoretic game and its application to the {$k$}-server problem.
\newblock {\em SIAM J. Comput.}, 24(1):78--100, 1995.

\bibitem[AN12]{abraham2012using}
Ittai Abraham and Ofer Neiman.
\newblock Using petal-decompositions to build a low stretch spanning tree.
\newblock In {\em 44th Annual ACM Symposium on Theory of Computing, STOC'12}, pages 395--405, 2012.

\bibitem[AP92]{MR1157580}
Baruch Awerbuch and David Peleg.
\newblock Routing with polynomial communication-space trade-off.
\newblock {\em SIAM J. Discrete Math.}, 5(2):151--162, 1992.

\bibitem[AP95]{awerbuch1995online}
Baruch Awerbuch and David Peleg.
\newblock Online tracking of mobile users.
\newblock {\em Journal of the ACM (JACM)}, 42(5):1021--1058, 1995.

\bibitem[AW20]{amiri2020disjoint}
Saeed~Akhoondian Amiri and Julian Wargalla.
\newblock Disjoint shortest paths with congestion on dags.
\newblock {\em arXiv preprint arXiv:2008.08368}, 2020.

\bibitem[AW23]{MR4640631}
Shyan Akmal and Nicole Wein.
\newblock A local-to-global theorem for congested shortest paths.
\newblock In {\em 31st annual {E}uropean {S}ymposium on {A}lgorithms}, volume 274 of {\em LIPIcs. Leibniz Int. Proc. Inform.}, pages Art. No. 8, 17. Schloss Dagstuhl. Leibniz-Zent. Inform., Wadern, 2023.

\bibitem[Bal22]{balzotti2022non}
Lorenzo Balzotti.
\newblock Non-crossing shortest paths are covered with exactly four forests.
\newblock {\em arXiv preprint arXiv:2210.13036}, 2022.

\bibitem[Bar96]{MR1450616}
Yair Bartal.
\newblock Probabilistic approximation of metric spaces and its algorithmic applications.
\newblock In {\em 37th {A}nnual {S}ymposium on {F}oundations of {C}omputer {S}cience ({B}urlington, {VT}, 1996)}, pages 184--193. IEEE Comput. Soc. Press, Los Alamitos, CA, 1996.

\bibitem[Bar98]{MR1731572}
Yair Bartal.
\newblock On approximating arbitrary metrices by tree metrics.
\newblock In {\em S{TOC} '98 ({D}allas, {TX})}, pages 161--168. ACM, New York, 1998.

\bibitem[BBW24]{MR4695168}
Aaron Bernstein, Greg Bodwin, and Nicole Wein.
\newblock Are there graphs whose shortest path structure requires large edge weights?
\newblock In {\em 15th {I}nnovations in {T}heoretical {C}omputer {S}cience {C}onference}, volume 287 of {\em LIPIcs. Leibniz Int. Proc. Inform.}, pages Art. No. 12, 22. Schloss Dagstuhl. Leibniz-Zent. Inform., Wadern, 2024.

\bibitem[BCF23a]{MR4720280}
Karl Bringmann, Alejandro Cassis, and Nick Fischer.
\newblock Negative-weight single-source shortest paths in near-linear time: now faster!
\newblock In {\em 2023 {IEEE} 64th {A}nnual {S}ymposium on {F}oundations of {C}omputer {S}cience---{FOCS} 2023}, pages 515--538. IEEE Computer Soc., Los Alamitos, CA, 2023.

\bibitem[BCF23b]{bringmann2023negative}
Karl Bringmann, Alejandro Cassis, and Nick Fischer.
\newblock Negative-weight single-source shortest paths in near-linear time: Now faster!
\newblock In {\em 2023 IEEE 64th Annual Symposium on Foundations of Computer Science (FOCS)}, pages 515--538. IEEE, 2023.

\bibitem[BDHK06]{MR2249395}
Dietmar Berwanger, Anuj Dawar, Paul Hunter, and Stephan Kreutzer.
\newblock D{AG}-width and parity games.
\newblock In {\em S{TACS} 2006}, volume 3884 of {\em Lecture Notes in Comput. Sci.}, pages 524--536. Springer, Berlin, 2006.

\bibitem[BFHL25]{bringmann2025near}
Karl Bringmann, Nick Fischer, Bernhard Haeupler, and Rustam Latypov.
\newblock Near-optimal directed low-diameter decompositions.
\newblock {\em arXiv preprint arXiv:2502.05687}, 2025.

\bibitem[BFN19]{MR3984837}
Yair Bartal, Nova Fandina, and Ofer Neiman.
\newblock Covering metric spaces by few trees.
\newblock In {\em 46th {I}nternational {C}olloquium on {A}utomata, {L}anguages, and {P}rogramming}, volume 132 of {\em LIPIcs. Leibniz Int. Proc. Inform.}, pages Art. No. 20, 16. Schloss Dagstuhl. Leibniz-Zent. Inform., Wadern, 2019.

\bibitem[BFN22]{MR4446771}
Yair Bartal, Ora~Nova Fandina, and Ofer Neiman.
\newblock Covering metric spaces by few trees.
\newblock {\em J. Comput. System Sci.}, 130:26--42, 2022.

\bibitem[BFT24]{bhore2024online}
Sujoy Bhore, Arnold Filtser, and Csaba~D T{\'o}th.
\newblock Online duet between metric embeddings and minimum-weight perfect matchings.
\newblock In {\em Proceedings of the 2024 Annual ACM-SIAM Symposium on Discrete Algorithms (SODA)}, pages 4564--4579. SIAM, 2024.

\bibitem[BFU20]{MR4141277}
Yair Bartal, Nova Fandina, and Seeun~William Umboh.
\newblock Online probabilistic metric embedding: a general framework for bypassing inherent bounds.
\newblock In {\em Proceedings of the 2020 {ACM}-{SIAM} {S}ymposium on {D}iscrete {A}lgorithms}, pages 1538--1557. SIAM, Philadelphia, PA, 2020.

\bibitem[BGS17]{MR3685766}
Guy~E. Blelloch, Yan Gu, and Yihan Sun.
\newblock Efficient construction of probabilistic tree embeddings.
\newblock In {\em 44th {I}nternational {C}olloquium on {A}utomata, {L}anguages, and {P}rogramming}, volume~80 of {\em LIPIcs. Leibniz Int. Proc. Inform.}, pages Art. No. 26, 14. Schloss Dagstuhl. Leibniz-Zent. Inform., Wadern, 2017.

\bibitem[BGT12]{blelloch2012parallel}
Guy~E Blelloch, Anupam Gupta, and Kanat Tangwongsan.
\newblock Parallel probabilistic tree embeddings, k-median, and buy-at-bulk network design.
\newblock In {\em Proceedings of the twenty-fourth annual ACM symposium on Parallelism in algorithms and architectures}, pages 205--213, 2012.

\bibitem[BH23]{MR4720288}
Greg Bodwin and Gary Hoppenworth.
\newblock Folklore sampling is optimal for exact hopsets: confirming the {$\sqrt n$} barrier.
\newblock In {\em 2023 {IEEE} 64th {A}nnual {S}ymposium on {F}oundations of {C}omputer {S}cience---{FOCS} 2023}, pages 701--720. IEEE Computer Soc., Los Alamitos, CA, 2023.

\bibitem[BLMN05]{MR2183280}
Yair Bartal, Nathan Linial, Manor Mendel, and Assaf Naor.
\newblock On metric {R}amsey-type phenomena.
\newblock {\em Ann. of Math. (2)}, 162(2):643--709, 2005.

\bibitem[BNWN22]{MR4537239}
Aaron Bernstein, Danupon Nanongkai, and Christian Wulff-Nilsen.
\newblock Negative-weight single-source shortest paths in near-linear time.
\newblock In {\em 2022 {IEEE} 63rd {A}nnual {S}ymposium on {F}oundations of {C}omputer {S}cience---{FOCS} 2022}, pages 600--611. IEEE Computer Soc., Los Alamitos, CA, 2022.

\bibitem[Bod17]{bodwin2017linear}
Greg Bodwin.
\newblock Linear size distance preservers.
\newblock In {\em Proceedings of the Twenty-Eighth Annual ACM-SIAM Symposium on Discrete Algorithms}, pages 600--615. SIAM, 2017.

\bibitem[Bod19]{MR3909594}
Greg Bodwin.
\newblock On the structure of unique shortest paths in graphs.
\newblock In {\em Proceedings of the {T}hirtieth {A}nnual {ACM}-{SIAM} {S}ymposium on {D}iscrete {A}lgorithms}, pages 2071--2089. SIAM, Philadelphia, PA, 2019.

\bibitem[BRR10]{berman2010finding}
Piotr Berman, Sofya Raskhodnikova, and Ge~Ruan.
\newblock Finding sparser directed spanners.
\newblock In {\em IARCS Annual Conference on Foundations of Software Technology and Theoretical Computer Science (FSTTCS 2010)}. Schloss-Dagstuhl-Leibniz Zentrum f{\"u}r Informatik, 2010.

\bibitem[BW23]{MR4538206}
Aaron Bernstein and Nicole Wein.
\newblock Closing the gap between directed hopsets and shortcut sets.
\newblock In {\em Proceedings of the 2023 {A}nnual {ACM}-{SIAM} {S}ymposium on {D}iscrete {A}lgorithms ({SODA})}, pages 163--182. SIAM, Philadelphia, PA, 2023.

\bibitem[CCGG99]{MR1731567}
Moses Charikar, Chandra Chekuri, Ashish Goel, and Sudipto Guha.
\newblock Rounding via trees: deterministic approximation algorithms for group {S}teiner trees and {$k$}-median.
\newblock In {\em S{TOC} '98 ({D}allas, {TX})}, pages 114--123. ACM, New York, 1999.

\bibitem[CCL{\etalchar{+}}23]{chang2023covering}
Hsien-Chih Chang, Jonathan Conroy, Hung Le, Lazar Milenkovic, Shay Solomon, and Cuong Than.
\newblock Covering planar metrics (and beyond): O (1) trees suffice.
\newblock In {\em 2023 IEEE 64th Annual Symposium on Foundations of Computer Science (FOCS)}, pages 2231--2261. IEEE, 2023.

\bibitem[CE06]{coppersmith2006sparse}
Don Coppersmith and Michael Elkin.
\newblock Sparse sourcewise and pairwise distance preservers.
\newblock {\em SIAM Journal on Discrete Mathematics}, 20(2):463--501, 2006.

\bibitem[CGMZ16]{MR3549603}
T.-H.~Hubert Chan, Anupam Gupta, Bruce~M. Maggs, and Shuheng Zhou.
\newblock On hierarchical routing in doubling metrics.
\newblock {\em ACM Trans. Algorithms}, 12(4):Art. 55, 22, 2016.

\bibitem[CGR16]{MR3478403}
Massimo Cairo, Roberto Grossi, and Romeo Rizzi.
\newblock New bounds for approximating extremal distances in undirected graphs.
\newblock In {\em Proceedings of the {T}wenty-{S}eventh {A}nnual {ACM}-{SIAM} {S}ymposium on {D}iscrete {A}lgorithms}, pages 363--376. ACM, New York, 2016.

\bibitem[CKM{\etalchar{+}}14]{MR3238960}
Michael~B. Cohen, Rasmus Kyng, Gary~L. Miller, Jakub~W. Pachocki, Richard Peng, Anup~B. Rao, and Shen~Chen Xu.
\newblock Solving {SDD} linear systems in nearly {$m\log^{1/2}n$} time.
\newblock In {\em S{TOC}'14---{P}roceedings of the 2014 {ACM} {S}ymposium on {T}heory of {C}omputing}, pages 343--352. ACM, New York, 2014.

\bibitem[CL22]{MR4430290}
Daniel Cizma and Nati Linial.
\newblock Geodesic geometry on graphs.
\newblock {\em Discrete Comput. Geom.}, 68(1):298--347, 2022.

\bibitem[CL23]{MR4520059}
Daniel Cizma and Nati Linial.
\newblock Irreducible nonmetrizable path systems in graphs.
\newblock {\em J. Graph Theory}, 102(1):5--14, 2023.

\bibitem[CPF{\etalchar{+}}22]{MR4481424}
Sabine Cornelsen, Maximilian Pfister, Henry F\"{o}rster, Martin Gronemann, Michael Hoffmann, Stephen Kobourov, and Thomas Schneck.
\newblock Drawing shortest paths in geodetic graphs.
\newblock {\em J. Graph Algorithms Appl.}, 26(3):353--361, 2022.

\bibitem[EEST08]{elkin2008lower}
Michael Elkin, Yuval Emek, Daniel~A Spielman, and Shang-Hua Teng.
\newblock Lower-stretch spanning trees.
\newblock {\em SIAM Journal on Computing}, 38(2):608--628, 2008.

\bibitem[EN19]{ElkinN19}
Michael Elkin and Ofer Neiman.
\newblock Linear-size hopsets with small hopbound, and constant-hopbound hopsets in {RNC}.
\newblock In Christian Scheideler and Petra Berenbrink, editors, {\em The 31st {ACM} on Symposium on Parallelism in Algorithms and Architectures, {SPAA} 2019, Phoenix, AZ, USA, June 22-24, 2019}, pages 333--341. {ACM}, 2019.

\bibitem[FGH21]{MR4262508}
Sebastian Forster, Gramoz Goranci, and Monika Henzinger.
\newblock Dynamic maintenance of low-stretch probabilistic tree embeddings with applications.
\newblock In {\em Proceedings of the 2021 {ACM}-{SIAM} {S}ymposium on {D}iscrete {A}lgorithms ({SODA})}, pages 1226--1245. [Society for Industrial and Applied Mathematics (SIAM)], Philadelphia, PA, 2021.

\bibitem[Fil22]{filtser2022hop}
Arnold Filtser.
\newblock Hop-constrained metric embeddings and their applications.
\newblock In {\em 2021 IEEE 62nd Annual Symposium on Foundations of Computer Science (FOCS)}, pages 492--503. IEEE, 2022.

\bibitem[Fin24]{MR4764788}
Jeremy~T. Fineman.
\newblock Single-source shortest paths with negative real weights in {$\widetilde{O}(mn^{8/9})$} time.
\newblock In {\em S{TOC}'24---{P}roceedings of the 56th {A}nnual {ACM} {S}ymposium on {T}heory of {C}omputing}, pages 3--14. ACM, New York, 2024.

\bibitem[FL18]{MR3882586}
Stephan Friedrichs and Christoph Lenzen.
\newblock Parallel metric tree embedding based on an algebraic view on {M}oore-{B}ellman-{F}ord.
\newblock {\em J. ACM}, 65(6):Art. 43, 55, 2018.

\bibitem[FL22]{MR4490062}
Arnold Filtser and Hung Le.
\newblock Locality-sensitive orderings and applications to reliable spanners.
\newblock In {\em S{TOC} '22---{P}roceedings of the 54th {A}nnual {ACM} {SIGACT} {S}ymposium on {T}heory of {C}omputing}, pages 1066--1079. ACM, New York, 2022.

\bibitem[FRT04]{MR2087946}
Jittat Fakcharoenphol, Satish Rao, and Kunal Talwar.
\newblock A tight bound on approximating arbitrary metrics by tree metrics.
\newblock {\em J. Comput. System Sci.}, 69(3):485--497, 2004.

\bibitem[GKR00]{MR1783249}
Naveen Garg, Goran Konjevod, and R.~Ravi.
\newblock A polylogarithmic approximation algorithm for the group {S}teiner tree problem.
\newblock volume~37, pages 66--84. 2000.
\newblock Ninth Annual ACM-SIAM Symposium on Discrete Algorithms (San Francisco, CA, 1998).

\bibitem[GKR05]{gupta2005traveling}
Anupam Gupta, Amit Kumar, and Rajeev Rastogi.
\newblock Traveling with a pez dispenser (or, routing issues in mpls).
\newblock {\em SIAM Journal on Computing}, 34(2):453--474, 2005.

\bibitem[GL14]{MR3284084}
Mohsen Ghaffari and Christoph Lenzen.
\newblock Near-optimal distributed tree embedding.
\newblock In {\em Distributed computing}, volume 8784 of {\em Lecture Notes in Comput. Sci.}, pages 197--211. Springer, Heidelberg, 2014.

\bibitem[Hes03]{hesse2003directed}
William Hesse.
\newblock Directed graphs requiring large numbers of shortcuts.
\newblock In {\em SODA}, pages 665--669. Citeseer, 2003.

\bibitem[HHR03]{harrelson2003polynomial}
Chris Harrelson, Kirsten Hildrum, and Satish Rao.
\newblock A polynomial-time tree decomposition to minimize congestion.
\newblock In {\em Proceedings of the fifteenth annual ACM symposium on Parallel algorithms and architectures}, pages 34--43, 2003.

\bibitem[HHZ21]{MR4398848}
Bernhard Haeupler, D.~Ellis Hershkowitz, and Goran Zuzic.
\newblock Tree embeddings for hop-constrained network design.
\newblock In {\em S{TOC} '21---{P}roceedings of the 53rd {A}nnual {ACM} {SIGACT} {S}ymposium on {T}heory of {C}omputing}, pages 356--369. ACM, New York, 2021.

\bibitem[HJQ24]{huang2024faster}
Yufan Huang, Peter Jin, and Kent Quanrud.
\newblock Faster single-source shortest paths with negative real weights via proper hop distance.
\newblock {\em arXiv preprint arXiv:2407.04872}, 2024.

\bibitem[HP19]{HuangP19}
Shang{-}En Huang and Seth Pettie.
\newblock Thorup-zwick emulators are universally optimal hopsets.
\newblock {\em Inf. Process. Lett.}, 142:9--13, 2019.

\bibitem[HXX25]{hoppenworth2025new}
Gary Hoppenworth, Yinzhan Xu, and Zixuan Xu.
\newblock New separations and reductions for directed hopsets and preservers.
\newblock In {\em Proceedings of the 2025 Annual ACM-SIAM Symposium on Discrete Algorithms (SODA)}, pages 4405--4443. SIAM, 2025.

\bibitem[Kar89]{karp19892k}
Richard~M Karp.
\newblock A 2k-competitive algorithm for the circle.
\newblock {\em Manuscript, August}, 5:11, 1989.

\bibitem[KKM{\etalchar{+}}08]{khan2008efficient}
Maleq Khan, Fabian Kuhn, Dahlia Malkhi, Gopal Pandurangan, and Kunal Talwar.
\newblock Efficient distributed approximation algorithms via probabilistic tree embeddings.
\newblock In {\em Proceedings of the twenty-seventh ACM symposium on Principles of distributed computing}, pages 263--272, 2008.

\bibitem[KLMS22]{kahalon2022can}
Omri Kahalon, Hung Le, Lazar Milenkovi{\'c}, and Shay Solomon.
\newblock Can't see the forest for the trees: Navigating metric spaces by bounded hop-diameter spanners.
\newblock In {\em Proceedings of the 2022 ACM Symposium on Principles of Distributed Computing}, pages 151--162, 2022.

\bibitem[KLS13]{MR3042131}
Marek Karpinski, Andrzej Lingas, and Dzmitry Sledneu.
\newblock Optimal cuts and partitions in tree metrics in polynomial time.
\newblock {\em Inform. Process. Lett.}, 113(12):447--451, 2013.

\bibitem[KN04]{MR2095359}
I.~Krasikov and S.~D. Noble.
\newblock Finding next-to-shortest paths in a graph.
\newblock {\em Inform. Process. Lett.}, 92(3):117--119, 2004.

\bibitem[KP22]{MR4415092}
Shimon Kogan and Merav Parter.
\newblock New diameter-reducing shortcuts and directed hopsets: breaking the {$O(\sqrt n)$} barrier.
\newblock In {\em Proceedings of the 2022 {A}nnual {ACM}-{SIAM} {S}ymposium on {D}iscrete {A}lgorithms ({SODA})}, pages 1326--1340. [Society for Industrial and Applied Mathematics (SIAM)], Philadelphia, PA, 2022.

\bibitem[KS22]{MR4399708}
Ken-ichi Kawarabayashi and Anastasios Sidiropoulos.
\newblock Embeddings of planar quasimetrics into directed {$\ell_1$} and polylogarithmic approximation for {D}irected {S}parsest-{C}ut.
\newblock In {\em 2021 {IEEE} 62nd {A}nnual {S}ymposium on {F}oundations of {C}omputer {S}cience---{FOCS} 2021}, pages 480--491. IEEE Computer Soc., Los Alamitos, CA, 2022.

\bibitem[KT02]{MR2145145}
Jon Kleinberg and \'{E}va Tardos.
\newblock Approximation algorithms for classification problems with pairwise relationships: metric labeling and {M}arkov random fields.
\newblock {\em J. ACM}, 49(5):616--639, 2002.

\bibitem[Loc21]{MR4262445}
Willian Lochet.
\newblock A polynomial time algorithm for the {$k$}-disjoint shortest paths problem.
\newblock In {\em Proceedings of the 2021 {ACM}-{SIAM} {S}ymposium on {D}iscrete {A}lgorithms ({SODA})}, pages 169--178. [Society for Industrial and Applied Mathematics (SIAM)], Philadelphia, PA, 2021.

\bibitem[MN07]{MR2293956}
Manor Mendel and Assaf Naor.
\newblock Ramsey partitions and proximity data structures.
\newblock {\em J. Eur. Math. Soc. (JEMS)}, 9(2):253--275, 2007.

\bibitem[MS09]{mendel2009fast}
Manor Mendel and Chaya Schwob.
\newblock Fast ckr partitions of sparse graphs.
\newblock {\em Chicago Journal OF Theoretical Computer Science}, 2:1--18, 2009.

\bibitem[R{\"{a}}c02]{racke2002minimizing}
Harald R{\"{a}}cke.
\newblock Minimizing congestion in general networks.
\newblock In {\em The 43rd Annual IEEE Symposium on Foundations of Computer Science, 2002. Proceedings.}, pages 43--52. IEEE, 2002.

\bibitem[R{\"{a}}c08]{MR2582666}
Harald R{\"{a}}cke.
\newblock Optimal hierarchical decompositions for congestion minimization in networks.
\newblock In {\em S{TOC}'08}, pages 255--263. ACM, New York, 2008.

\bibitem[ST83]{szemeredi1983extremal}
Endre Szemer{\'e}di and William~T. Trotter.
\newblock Extremal problems in discrete geometry.
\newblock {\em Combinatorica}, 3:381--392, 1983.

\bibitem[TZ05]{thorup2005approximate}
Mikkel Thorup and Uri Zwick.
\newblock Approximate distance oracles.
\newblock {\em Journal of the ACM (JACM)}, 52(1):1--24, 2005.

\bibitem[VWXX24]{MR4699339}
Virginia Vassilevska~Williams, Yinzhan Xu, and Zixuan Xu.
\newblock Simpler and higher lower bounds for shortcut sets.
\newblock In {\em Proceedings of the 2024 {A}nnual {ACM}-{SIAM} {S}ymposium on {D}iscrete {A}lgorithms ({SODA})}, pages 2643--2656. SIAM, Philadelphia, PA, 2024.

\end{thebibliography}

\end{document}